\def\llncs{0}
\def\anonymous{0}
\def\comments{0}

\ifnum\llncs=1
\documentclass{llncs}
\usepackage{amsmath,amssymb}
\else
\documentclass[11pt]{article}
\usepackage{amsmath,amssymb,amsthm}
\fi

\usepackage{enumitem}
\ifnum\llncs=0
\usepackage[a4paper,margin=1in]{geometry}
\fi
\usepackage{braket}
\usepackage{xcolor}

\usepackage{header}

\def\bra#1{\mathinner{\langle{#1}|}}
\def\ket#1{\mathinner{|{#1}\rangle}}
\def\braket#1{\mathinner{\langle{#1}\rangle}}
\def\ketbra#1#2{\mathinner{|{#1}\rangle\!\langle{#2}|}}

\usepackage[dvipsnames]{xcolor}
\ifnum\comments=0
\newcommand{\Zoenote}[1]{%
  \textcolor{RoyalBlue}{\textbf{[Zoe:} #1\textbf{]}}%
}
\newcommand{\ynote}[1]{%
  \textcolor{purple}{\textbf{[Yiming:} #1\textbf{]}}%
}

\newcommand{\bnote}[1]{%
  \textcolor{brown}{\textbf{[Boyang:} #1\textbf{]}}%
}
\else
\newcommand{\bnote}[1]{}
\newcommand{\ynote}[1]{}
\newcommand{\Zoenote}[1]{}
\fi

\newcommand{\pspace}{\mathsf{PSPACE}}

\newcommand{\Gen}{\mathrm{Gen}}
\newcommand{\Sim}{\mathrm{Sim}}
\newcommand{\Var}{\operatorname*{Var}}

\newcommand{\calA}{\mathcal{A}}
\newcommand{\calS}{\mathcal{S}}
\newcommand{\calO}{\mathcal{O}}
\newcommand{\Dec}{\mathsf{Ext}}
\newcommand{\Decfull}{\mathsf{ExtFull}}
\newcommand{\Samp}{\mathsf{Samp}}
\newcommand{\Ver}{\mathsf{Ver}}
\newcommand{\Help}{\mathsf{Help}}

\usepackage{authblk}
\usepackage{orcidlink}

\ifnum\llncs=0
\usepackage{titling}
\predate{}
\postdate{}
\fi

\title{The Breakdown of Classical Minicrypt Equivalences in the Quantum-Computation Classical-Communication Model}
\ifnum\anonymous=0
\author[1]{Boyang Chen}
\author[2]{Yiming Wang}
\author[1]{Ziyi Xie}

\affil[1]{Department of Computer Science and Technology, Tsinghua University}
\affil[2]{Paul G. Allen School of Computer Science \& Engineering, University of Washington}

\ifnum\llncs=0
\date{}
\fi

\else
\author{}
\institute{}
\fi
\begin{document}
\pagestyle{plain}
\maketitle
\thispagestyle{plain}

\begin{abstract}
Classically, it is well-known that several fundamental cryptographic primitives, including one-way functions, pseudorandom number generators, commitments, and signatures, characterize the same cryptographic world, which is known as ``Minicrypt''. In this paper, we investigate to what extent this picture persists in the quantum-computation classical-communication (QCCC) setting, where parties may perform quantum local computation, but communicate through only classical messages. We demonstrate that the classical Minicrypt landscape breaks down in the QCCC model by oracle separations. We construct two oracle worlds where efficiently verifiable one-way puzzles (EV-OWPuzz) exist. By a known equivalence, QCCC one-time signatures also exist in both worlds. In the first, one-way functions do not exist, even if we allow quantum pseudodeterministic evaluation. In the second, QCCC bit commitments do not exist. Thus, the classical Minicrypt reductions from signatures to one-way functions and bit commitments have no fully black-box counterparts in the QCCC setting. Our proofs develop techniques for analyzing the random phase states, including a concentration theorem and an LOCC decoupling theorem, which may be of independent interest.
\end{abstract}
\ifnum\llncs=0
\tableofcontents
\fi
\section{Introduction}
Understanding the minimal assumptions for cryptography to exist is one of the fundamental tasks in cryptography.
Classically, one-way functions (OWFs) are regarded as a foundational primitive.
In a sequence of celebrated works, one-way functions were shown to be equivalent to various other primitives, such as pseudorandom generators (PRGs)~\cite{HILL}, (interactive) commitment schemes~\cite{Naor-commitment,SHcommit}, digital signatures~\cite{Rompel-signature}, and secret-key encryption schemes (SKEs)~\cite{GGM84,IL89}.
Although these primitives have rather different appearances, their existence is based on the same computational assumption.
The cryptographic world collectively characterized by these primitives is known as ``Minicrypt''~\cite{IL89,Imp95,HILL}.

The landscape is different in quantum cryptography.
A sequence of works introduced several intrinsically quantum primitives, which includes PRS~\cite{JLS}, OWSG~\cite{MY22a}, PRFS~\cite{AGQY22}, and QEFID~\cite{CGG-EV-OWPuzz}.
Subsequent oracle separations showed that some quantum primitives could exist even if $\mathbf{P}=\mathbf{NP}$ (where, in particular, post-quantum OWFs do not exist)~\cite{KQST,KQT}, suggesting that these primitives may rely on assumptions weaker than the OWF assumption underpinning Minicrypt.

However, the minimal assumption for quantum cryptography remains unsettled.
There has been strong evidence that EFI pair might be the minimal primitive in quantum cryptography~\cite{BCQ23,KT23}.
EFI pairs have been shown to be equivalent to various other quantum cryptographic primitives, including but not limited to quantum bit commitments and secure multiparty computation~\cite{AQY22,BCQ23,BJ24,CCC+26}.
Nevertheless, implementing protocols that rely on EFI pairs requires quantum communication, and thus seems difficult in the next few years.

Motivated by the difficulty of sending quantum messages, recent works have studied primitives in the quantum-computation classical-communication (QCCC) model~\cite{ACC+22,CLM23,BBSS23,ALY24,KT23,CGG-EV-OWPuzz,BMM+25,CountCrypt}. In this model, parties perform quantum computation locally but exchange only classical messages, avoiding the need for quantum communication infrastructures.

Unsurprisingly, QCCC primitives might still be weaker than their classical counterparts, and QCCC cryptography might have a different landscape as well.
The reason for this might be related to the difference between classical and quantum sampling.
A classical sampler can be written as a deterministic function of an explicit random seed, but there is not an analogous way to specify the randomness for quantum samplers.
In other words, it is hard to ``pull out'' the randomness in a quantum computer.
Consequently, one-wayness arising from a quantum sampler need not directly yield an OWF.

In order to capture this sampler-based form of one-wayness, the one-way puzzle (OWPuzz) and its efficiently verifiable variant (EV-OWPuzz) were introduced~\cite{KNY23,KT23,CGG-EV-OWPuzz,KT25}.
Both notions are implied by many natural QCCC primitives and admit amplification, combiners, and universal constructions~\cite{KT23,CGG-EV-OWPuzz}.
In particular, EV-OWPuzzs are proposed as a leading candidate for the central primitive in QCCC cryptography, and they are shown to be equivalent to QCCC one-time signatures~\cite{CGG-EV-OWPuzz}.

However, the power of EV-OWPuzzs in QCCC cryptography remains unclear, and it motivates the following open question raised in~\cite{CGG-EV-OWPuzz}:
\begin{center}
    \emph{Can we build other QCCC primitives besides one-time signatures from EV-OWPuzzs?}
\end{center}
Considering the elegant picture in classical Minicrypt, where OWFs, PRGs, commitments, signatures, and SKEs are all equivalent by fully black-box reductions, it is also natural to ask:
\begin{center}
    \emph{How much of the classical Minicrypt landscape survives in the QCCC setting?}
\end{center}

\paragraph{Our results.}
We present negative evidence for both questions.
Focusing on quantum-computable OWFs and QCCC bit commitments, whose classical counterparts are central in the Minicrypt landscape, we prove that it is impossible to construct them from EV-OWPuzz in a fully black-box manner.
In fact, our first separation applies even to pseudodeterministic OWFs (pd-OWFs), a relaxation of OWFs in which evaluation may be randomized but yields a canonical output with high probability on all but a inverse-polynomial fraction of inputs.

\begin{theorem}[Informal]\label{thm:informal-1}
    There is a unitary oracle relative to which an EV-OWPuzz exists, but pseudodeterministic one-way functions (and hence one-way functions) do not.
\end{theorem}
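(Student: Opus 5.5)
The plan is to build the oracle world from a family of random phase-state preparation unitaries together with a $\pspace$ (or $\mathbf{QPSPACE}$-type) oracle. For each length $n$, sample a uniformly random function $f_n:\{0,1\}^n\times\{0,1\}^n\to\{0,1\}$ (or phases $\omega^{f}$ with $f$ taking values in $\mathbb{Z}_q$). Let $\mathcal{O}$ be the unitary oracle that, on input $k$, prepares
$$\ket{\phi_k}=2^{-n/2}\sum_x (-1)^{f_n(k,x)}\ket{x}.$$
We give access to $\mathcal{O}$ and its inverse, and also add a $\pspace$ oracle $\mathcal{P}$ so that every classical-query computation that does not touch $\mathcal{O}$ is free. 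The EV-OWPuzz is then the following. The sampler picks $k$, prepares $\mathrm{poly}(n)$ copies of $\ket{\phi_k}$, and measures each copy in the computational basis. The puzzle is the list of outcomes $(x_1,f_n(k,x_1)\text{-correlated data})$; more concretely, it measures in a Hadamard-rotated basis so that the outcomes reveal linear information about the phases. The key is $k$, and verification recomputes the phases by querying $\mathcal{O}$ on $k$ and checking consistency via swap-type or projective tests on fresh copies. Security (one-wayness) will follow because finding any valid $k$ requires inverting a random function given only its samples. I would prove this with a compressed/recording-oracle style argument, or with the paper's concentration theorem for random phase states: the measurement statistics of a fixed $\ket{\phi_{k'}}$ for $k'\neq k$ concentrate and look independent, so a verifying key is found only with negligible probability unless it is queried, and querying requires guessing among $2^n$ keys.

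The separation from pd-OWFs is the heart of the argument. Suppose $F^{\mathcal{O},\mathcal{P}}$ is a pd-OWF. I will build an inverter using the $\pspace$ oracle plus polynomially many queries to $\mathcal{O}$. The steps are as follows.

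\textbf{Step 1.} Show that since $F$'s output is pseudodeterministic and classical, it cannot depend on the random phases except through a small set of "heavy" keys $k$ that $F$ queries with noticeable weight. A single random phase state on a non-heavy key, measured, yields outcomes that are essentially independent of $f_n$. By the concentration theorem, the canonical output is (w.h.p. over $f$) determined by the heavy part together with the expectation over the remaining phases.

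\textbf{Step 2.} The inverter learns the relevant phase data. For every key in a polynomial-size heavy set (identified by running $F$ on random inputs and doing tomography-style estimation with queries to $\mathcal{O}$), it learns enough of $\ket{\phi_k}$'s classical shadow.

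\textbf{Step 3.} It then replaces $\mathcal{O}$ by a simulated oracle that is exact on the learned keys and averaged (Haar/random) elsewhere. The simulated $F$ is computable in $\mathbf{PSPACE}$, so $\mathcal{P}$ inverts it. A hybrid argument shows that this preimage is also a preimage for the real $F$ with non-negligible probability.

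The main obstacle is Step 1/3: pseudodeterminism must pin the output to a canonical value, but quantum queries with superposed keys may extract deterministic information from the phases globally (for example, a Fourier-type measurement across many $k$). I would first try to handle this by a hybrid/BBBV swapping argument that reprograms the phases on non-heavy keys. The argument would use that canonical outputs appear with probability $\geq 2/3$, so a small query-weight perturbation cannot flip them, together with concentration of the output distribution as a Lipschitz function of $f$ over the random phases. Finally, I conclude via standard averaging over the oracle and Borel–Cantelli to fix a single oracle, and note that OWFs are a special case of pd-OWFs.
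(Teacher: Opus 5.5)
\textbf{There is a genuine gap on the EV-OWPuzz side, and it comes from your choice of oracle.} In your world the secret lives only in the phases $(-1)^{f_n(k,x)}$, and nothing supplies an efficiently checkable classical certificate.
\begin{itemize}
\item Measuring $\ket{\phi_k}$ in the computational basis gives a uniform $x$ that is independent of $k$ and $f_n$. Such a puzzle says nothing about the key.
\item Measuring in the Hadamard basis gives $y$ with probability $\bigl|2^{-n}\sum_x(-1)^{f_n(k,x)+x\cdot y}\bigr|^2$, which is typically $\Theta(2^{-n})$. Any test your verifier can run with polynomially many queries has acceptance probabilities differing by only $2^{-\Omega(n)}$ between the true key and an unrelated one. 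This covers a swap test of $H^{\otimes n}\ket{y}$ against $\ket{\phi_k}$, and projecting fresh copies of $\ket{\phi_k}$ onto $H^{\otimes n}\ket{y}$.
\end{itemize}
So an efficient verifier cannot separate the true key from a random one, and correctness and security cannot both hold. At best you get an inefficiently verifiable OWPuzz in the style of \cite{KT23}. This is the same obstruction that underlies the known black-box impossibility of EV-OWPuzz from PRS \cite{Kre21,CGG-EV-OWPuzz}.

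\textbf{The missing idea} is to place the puzzle--key correspondence in the \emph{computational basis} and to add a separate classical verifier. The paper's oracle is:
\begin{itemize}
\item the swap unitary exchanging $\ket{0^{4n+1}}$ and $\ket{1}\ket{\psi_{g_n,R_n}}$, where $\ket{\psi_{g_n,R_n}}=2^{-n/2}\sum_x(-1)^{R_n(x)}\ket{x}\ket{g_n(x)}$;
\item the membership oracle $f_{g_n}(x,y)=\mathbf 1[y=g_n(x)]$;
\item a $\pspace$ oracle.
\end{itemize}
Sampling is a measurement yielding $(x,g_n(x))$, verification is a single classical query, and security is a BBBV search bound. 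Security against non-uniform advice also needs the counting/compression argument following \cite{FK-Set-size-estimation}. Borel--Cantelli over countably many algorithms does not cover algorithms with advice, since those are uncountable.

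\textbf{The pd-OWF side also needs repair.}
\begin{itemize}
\item You leave the completion of the preparation unitary unspecified, and the natural one leaks. Take $\ket{k}\ket{z}\mapsto\ket{k}D_{f_n(k,\cdot)}H^{\otimes n}\ket{z}$, with $D$ the diagonal sign matrix. Composed with $H^{\otimes n}$, this is a phase oracle for $f_n$. Applying it to $(\ket{k,x}+\ket{k,x'})/\sqrt2$ deterministically reveals $f_n(k,x)\oplus f_n(k,x')$, which already yields a genuine OWF.
\item The paper's reflection-type oracle avoids this leakage. It is self-inverse and, with $\pm1$ signs, self-conjugate. Because it has no index $k$, Theorem~\ref{thm:JLS-simulation} turns every query type into copies of one state. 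A family of reflections queried in superposition over $k$ is not directly covered by that simulation.
\item Your Step~1 heavy-key/BBBV reprogramming is not the relevant mechanism. Every query touches all phases with weight $2^{-n}$, so the danger is coherent extraction, not query weight.
\end{itemize}
The paper instead proves $\ell_2$ concentration for a \emph{fixed} oracle, in these steps:
\begin{itemize}
\item an $O(t/2^n)$ bound on the variance of any POVM's output distribution over the random signs;
\item simulation of the sign-averaged state using samples $(x,g_n(x))$;
\item Efron--Stein over $g_n$, combined with BBBV on $f_{g_n}$;
\item learning small lengths by tomography and truth-table queries;
\item finally, applying the $\pspace$ inverter to a majority-vote derandomized simulator.
\end{itemize}
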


\begin{theorem}[Informal]\label{thm:informal-2}
    There is a unitary oracle relative to which an EV-OWPuzz exists, but QCCC bit commitment schemes do not.
\end{theorem}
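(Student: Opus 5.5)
We prove Theorem~\ref{thm:informal-2} with an oracle world built from random phase states, together with a $\mathbf{PSPACE}$-type oracle that removes all classical computational hardness. The oracle has two parts. The first is a family of unitaries $\{U_n\}$ that prepare random binary phase states, $U_n\ket{0}\mapsto \ket{\psi_{f_n}} = 2^{-n/2}\sum_x (-1)^{f_n(x)}\ket{x}$ for a uniformly random $f_n$. We also give oracle access to a controlled version of $U_n$ and its inverse, so that honest parties can reflect about $\ket{\psi_{f_n}}$. The second part is a $\pspace$ (or classical-query-unbounded) oracle. This ensures that any hardness must come from the quantum oracle.

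The proof has three steps.

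\textbf{Step 1: an EV-OWPuzz exists.} The sampler prepares $\ket{\psi_{f}}$ and measures it in the computational basis, obtaining $x$. It also measures the state in the Hadamard basis, or rather a related phase-dependent basis, to produce a classical puzzle/key pair. A candidate is: key $=$ (measurement outcome, phase bit), puzzle $=$ a string whose verification requires projecting onto $\ket{\psi_f}$, which is doable with the inverse oracle. One-wayness would follow from the concentration theorem for random phase states. Without a copy of the state, a query-bounded adversary cannot produce an accepting key except with negligible probability, averaged over $f$. QCCC one-time signatures then follow from the known equivalence~\cite{CGG-EV-OWPuzz}.

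\textbf{Step 2: reduce an arbitrary commitment to an LOCC problem.} Take any QCCC commitment $(C,R)$ with oracle access. Hiding must hold against an unbounded-but-query-bounded receiver. Binding must hold against a query-bounded sender. We simulate both parties and then construct an efficient attacker. First, each party's use of $U_n$ is replaced, up to small error, by a bounded number of copies of $\ket{\psi_f}$ plus reflections. Second, by the LOCC decoupling theorem, after the classical transcript is fixed the joint state of committer and receiver, restricted to what the phase-state oracle contributes, is close to a product over a ``shared classical'' part: the correlation between the two parties' views of $f$ is carried only through the transcript. The decoupling statement I would aim for is the following. Conditioned on the transcript $\tau$, the committer's residual dependence on $f$ is approximately simulable by a party who knows only $\tau$ and fresh copies of $\ket{\psi_f}$.

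\textbf{Step 3: break the scheme using the $\pspace$ oracle.} With the decoupling in hand, the commitment behaves like a classical-communication commitment relative to essentially a $\pspace$ oracle plus local quantum resources. We then run the standard attack: a cheating committer, or a distinguishing receiver, uses the $\pspace$ oracle to compute the posterior distribution over consistent committer views given $\tau$, and resamples an opening for the other bit. Either some view opens to both bits, which violates binding, or the transcript distributions for $b=0,1$ are far apart, which violates hiding and is detectable with $\pspace$ plus polynomially many copies.

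The main obstacle is Step~2. Quantum states sampled with phase oracles cannot be resampled classically, so the decoupling must show that classical communication cannot create nonlocal correlations about $f$ that neither party alone could regenerate. I would first try to prove that the transcript is approximately independent of $f$ given polynomially many copies, using a hybrid that replaces $U_n$ with a Haar-like twirl. The concentration theorem would then bound each party's acceptance probabilities uniformly over $f$.
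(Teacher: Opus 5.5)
\textbf{The oracle you chose cannot work, so the gap is fatal rather than a missing detail.} An oracle whose only secret is the sign pattern $f$ of $\ket{\psi_f}$ cannot separate these two primitives. Either way of reading your $U_n$ fails:
\begin{itemize}
\item If $U_n$ is a generic preparation unitary such as $D_fH^{\otimes n}$, with $D_f$ the phase oracle of $f$, then controlled access to it hands you $f$ as a random oracle. That gives one-way functions and hence Naor commitments, so commitments exist.
\item If $U_n$ only swaps in, or reflects about, $\ket{\psi_f}$ (the only reading under which your Step~2 makes sense), then EV-OWPuzz do not exist, and your own Step~2 tools prove it.
\end{itemize}
For the second case, note that sampling followed by verification is itself a one-way LOCC protocol: the sampler sends $(s,k)$ classically to the verifier. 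By Theorem~\ref{thm:LOCC-decoupling}, the sampler may be given an independent $\hat f$. After averaging over $\hat f$, its output distribution $\bar{\mathcal D}$ is oracle-free. This follows from Theorem~\ref{thm:GMMY-simulation} with components $\ket{1}\ket{x}$, whose component sampler returns only a uniform $x$.

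Now take the $\mathbf{PSPACE}$ adversary that, on puzzle $s$, outputs $k'\gets\bar{\mathcal D}(\cdot\mid s)$. Its security game is again a one-way LOCC experiment. After the same decoupling, $(s,k')\sim\bar{\mathcal D}$ independently of $f$, exactly as in the decoupled correctness experiment. So the true verifier accepts with probability close to the correctness probability.

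Your Step~1 candidate is also not a well-defined scheme. A classical key can only lead the verifier to test some $\ket{\chi_k}$ against $\ket{\psi_f}$. That test passes with probability $|\langle\chi_k|\psi_f\rangle|^2$, which is negligible unless $k$ essentially writes down $f$. Moreover, the concentration theorem cannot yield one-wayness. It is a tool for attacks, and as the paper stresses, $\ell_2$ concentration says nothing about puzzle security.

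\textbf{What the paper does instead.} The hardness is placed in classical data carried by the state. $\ket{\phi_n}$ is a random-sign superposition of puzzle--key pairs $\ket{H_n(x)}\ket{J_n(H_n(x),g_n(x))}$, and there is a separate classical verification oracle $f_n$. The signs serve only to make the superposition simulable from classical samples. One-wayness comes from hiding $g_n$, via BBBV plus a compression argument against non-uniform advice.

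\textbf{Why your Step~2 target then fails.} Once such hardness exists, you cannot hope that the transcript is nearly oracle-independent, or that the committer can be re-simulated from fresh copies. A commitment may send an honestly sampled puzzle and open with its key. A committer running on an independent oracle would then produce a key that the receiver's true verifier rejects.

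\textbf{The device you are missing is the region of easy puzzles.} For $z\notin\operatorname{Im}H_n$, the verifier accepts every key in $\operatorname{Im}J_n(z,\cdot)$. Hence a committer run on an independent $\widehat\Theta=(\widehat H,\widehat g,\widehat R)$, with the same $J$, still produces puzzles the true verifier accepts. This is what makes Theorem~\ref{thm:decoupling} hold. Its steps are phase decoupling, sampler simulation, replacing $f$ by the $\Theta$-independent check $k\in\operatorname{Im}J(z,\cdot)$ via BBBV, and then resampling $\Theta$ on the committer's side.

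\textbf{How the attack is then run.} The decoupled protocol is an inefficient plain-model protocol given $J$, so Lemma~\ref{lem:dichotomy-hiding-binding} supplies your Step~3 attack. But that attack depends on the exponentially large random $J$ and on averages over $\widehat\Theta$, so a $\mathbf{PSPACE}$ oracle cannot run it. The paper instead packages it into the diagonalization oracle $\mathsf{Help}$. This oracle is built so that at every polynomially accessible width it is independent of $(H_n,g_n,R_n)$, which keeps the puzzle secure.
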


We would like to emphasize that our oracles allow controlled access and access to the inverses, conjugates, and transposes.
Thus, our results hold in the unitary-oracle access model represented by~\cite{Zha25-model-unitary-oracles}.

Since fully black-box constructions relativize~\cite{RTV04-fully-bb,CCS25}, our results rule out such constructions of either primitive from EV-OWPuzz (and thus OWPuzz).

The main technical ingredient of Theorem~\ref{thm:informal-1}~and~\ref{thm:informal-2} is a concentration lemma and a LOCC indistinguishability lemma regarding the superposition of states with random phases, which is introduced in Appendix~\ref{sec:phase-concentration} and may be of independent interest.

As a complementary result, in Appendix~\ref{sec:appendix-uniform}, we also prove the quantum security of the NOVY commitment protocol~\cite{NOVY}, which naturally generalizes to a construction of a QCCC commitment based on EV-OWPuzz with (nearly) uniform puzzle distribution. 
This identifies a natural boundary of our oracle separation and highlights the following obstruction in constructing QCCC commitments from EV-OWPuzz: not all puzzles accepted by the verifier are honestly generated by the puzzle--key sampler.

\paragraph{Related work.}

\begin{figure}[!t]
  \centering
  \resizebox{\linewidth}{!}{

\definecolor{priorred}{RGB}{190,35,35}
\definecolor{ourblue}{RGB}{25,90,185}

\pgfdeclarelayer{edges}
\pgfsetlayers{edges,main}

\tikzset{
  primitive/.style={
    draw=black!45,
    fill=white,
    rounded corners=2pt,
    line width=.55pt,
    minimum width=28mm,
    minimum height=8mm,
    inner sep=2.5pt,
    align=center,
    font=\footnotesize
  },
  reduction/.style={
    -{Latex[length=2.4mm,width=1.7mm]},
    draw=black,
    line width=.9pt
  },
  equivalence/.style={
    <->,
    >={Latex[length=2.4mm,width=1.7mm]},
    draw=black,
    line width=.9pt
  },
  separation/.style={
    -{Latex[length=2.4mm,width=1.7mm]},
    line width=.95pt,
    dash pattern=on 3pt off 2pt
  },
  priorsep/.style={separation,draw=priorred},
  oursep/.style={
    separation,
    draw=ourblue,
    line width=1.2pt,
    dash pattern=on 6pt off 2.5pt
  },
  edge label/.style={
    fill=white,
    inner sep=1.2pt,
    font=\scriptsize,
    text=black!75
  },
  prior label/.style={edge label,text=priorred},
  our label/.style={edge label,text=ourblue,font=\scriptsize\bfseries}
}


\begin{tikzpicture}
  \node[primitive] (qprg) at (0,4.8)
    {$\mathsf{pd\text{-}PRG}_{1/\mathrm{poly}}$};
  \node[primitive] (nicom) at (5.8,3.8)
    {$\mathsf{NI\text{-}QCCC\ COM}$};

  \node[primitive] (qpoly) at (-2.9,2.45)
    {$\mathsf{pd\text{-}OWF}_{1/\mathrm{poly}}$};

  \node[primitive] (ots) at (-5.8,0)
    {$\mathsf{QCCC\ OTS}$};
  \node[primitive] (ev) at (0,0)
    {$\mathsf{EV\text{-}OWPuzz}$};
  \node[primitive] (qccc) at (5.8,0.75)
    {$\mathsf{QCCC\ COM}$};

  \node[primitive] (owp) at (0,-2.8)
    {$\mathsf{OWPuzz}$};
  \node[primitive] (efi) at (0,-5.4)
    {$\mathsf{EFI}$};

    \node[primitive] (pru) at (5.8, -2.3) {$\mathsf{PRU}$};

  \begin{pgfonlayer}{edges}

    \draw[reduction]
      (qprg.south west) --
      node[edge label,above right,pos=.50] {\cite{BBO+bot-OWF}}
      (qpoly.north);

    \draw[reduction]
      (qprg.south) --
      node[edge label,right,pos=.43] {\cite{CGG-EV-OWPuzz}}
      (ev.north);

    \draw[reduction]
      (pru.west) --
      node[edge label,below,pos=.50]
      {\cite{KT23}}
      (owp.east);

    \draw[reduction,rounded corners=7pt]
      (qprg.east) --
      node[edge label,above,pos=.52] {\cite{ALY24}}
      (7.9,4.8) -- (7.9,0.75) -- (qccc.east);

    \draw[reduction]
      (nicom.south west) --
      node[edge label,above right,pos=.48] {\cite{CGG-EV-OWPuzz}}
      (ev.north east);

    \draw[reduction]
      (nicom.south) --
      node[edge label,right] {def.}
      (qccc.north);

    \draw[equivalence]
      (ots.east) --
      node[edge label,above] {\cite{KT23}}
      (ev.west);

    \draw[reduction]
      (ev.south) --
      node[edge label,right] {def.}
      (owp.north);

    \draw[reduction]
      (qccc.south west) --
      node[edge label,above right,pos=.55] {\cite{CountCrypt}}
      (owp.north east);

    \draw[reduction]
      (owp.south) --
      node[edge label,right] {\cite{KT23,CGG-EV-OWPuzz}}
      (efi.north);

    \draw[priorsep]
      (qccc.south west)
      to[out=220,in=-40,looseness=1.12]
      node[prior label,above,pos=.52] {\cite{CountCrypt}}
      (ev.south east);

    \draw[priorsep]
      (owp.west)
      to[out=160,in=-160,looseness=1.25]
      node[prior label,left,pos=.55] {\cite{CGG-EV-OWPuzz}}
      (ev.west);

    \draw[priorsep]
      (pru.north)
      to[out=90,in=-90]
      node[prior label,left,pos=.50]{\cite{AGL25}}(qccc.south);

    \draw[oursep]
      (ev.north west)
      to[out=125,in=-55]
      node[our label,below left,pos=.49] {this work}
      (qpoly.south);

    \draw[oursep]
      (ev.north east)
      to[out=35,in=145,looseness=1.12]
      node[our label,below,pos=.50] {this work}
      (qccc.north west);

    \draw[oursep]
      (ev.north east)
      to[out=35,in=-55,looseness=0.5]
      node[our label,above,pos=.50] {this work}
      (qprg.south east);
  \end{pgfonlayer}

  \begin{scope}[every node/.style={anchor=west,font=\scriptsize}]
    \draw[reduction] (2.4,-4.4) -- (3.2,-4.4);
    \node at (3.4,-4.4) {reductions};

    \draw[priorsep] (2.4,-4.9) -- (3.2,-4.9);
    \node[text=priorred] at (3.4,-4.9) {known separations};

    \draw[oursep] (2.4,-5.4) -- (3.2,-5.4);
    \node[text=ourblue] at (3.4,-5.4) {separations proved in this work};
  \end{scope}

\end{tikzpicture}

  }
  \caption{Known black-box reductions and oracle separations among QCCC primitives.}
  \label{fig:primitive-relations}
\end{figure}

Figure~\ref{fig:primitive-relations} provides an intuitive visualization of this work and related results.

Besides the first open question mentioned above, which asks whether other QCCC primitives can be constructed from EV-OWPuzz, \cite{CGG-EV-OWPuzz} also posed the reverse question: can one construct EV-OWPuzz from QCCC commitments?
While our result (Theorem~\ref{thm:informal-2}) gives a negative answer to the former, \cite{CountCrypt} establishes the reverse separation by constructing an oracle relative to which $\mathbf{BQP}=\mathbf{QCMA}$ and hence EV-OWPuzz does not exist, while QCCC primitives, including a QCCC commitment scheme, exist.

Building on their earlier work~\cite{AGL24}, Ananth, Gulati, and Lin~\cite{AGL25} developed a framework for separating PRUs from QCCC primitives.
In particular, they used an LOCC indistinguishability technique similar to one of our techniques (Theorem~\ref{thm:LOCC-decoupling}) to construct an oracle relative to which a PRU exists but QCCC commitments and QCCC key agreement do not.
Although they use a similar technique, their separation does not subsume ours, because EV-OWPuzz cannot be constructed from PRUs in a fully black-box manner~\cite{Kre21,CGG-EV-OWPuzz}.
Moreover, \cite{AGL25} allows only unitary access, while our result adopts the more general oracle access model that is favored in~\cite{Zha25-model-unitary-oracles}.
Therefore, their results, including their LOCC indistinguishability lemmas on Haar-random unitaries, are incomparable to ours.

It is worth emphasizing that the error regime matters for PRGs and OWFs, as shown by~\cite{BHMV25,ALY24}.
They consider the separation of \emph{negligible}-error quantum PRGs and pseudodeterministic quantum PRGs, showing that quantum pseudodeterminism is powerful.
Our results are incomparable to theirs: we separate EV-OWPuzz and pseudodeterministic quantum PRGs, which implies that quantum sampling might be harder than a pseudodeterministic evaluation.

\cite{BNY25} studied quantum input sampling and $\bot$-pseudodeterminism and established several related black-box separations.
They ruled out a fully black-box construction of a quantum PRG that is pseudodeterministic on every input from EV-OWPuzz, even with inverse access.
This separation is strengthened by~\cite{BHMV25} and Theorem~\ref{thm:informal-1} under progressively weaker pseudodeterminism guarantees.
They also ruled out a fully black-box construction of a $\bot$-PRG from EV-OWPuzz under CPTP access.
This separation is complementary to ours because it concerns a different relaxation of PRGs.
Their commitment-related separations run in the opposite direction from Theorem~\ref{thm:informal-2}: their oracles contain QCCC commitments but not PRGs or $\bot$-PRGs, whereas our oracle contains an EV-OWPuzz but no QCCC commitment.

In addition to those mentioned above, there are various other black-box oracle separations.
\cite{Kre21} demonstrated a quantum oracle relative to which $\mathbf{BQP}=\mathbf{QMA}$, but PRSs exist.
This technique was used to show the impossibility of constructing post-quantum OWFs~\cite{Kre21}, EV-OWPuzzs~\cite{CGG-EV-OWPuzz}, digital signatures~\cite{CM24}, and logarithmic-output PRSs~\cite{BM24} from PRSs in a fully black-box way.
The common Haar-random state oracles were used to show the impossibility of a fully black-box construction of (multi-copy) PRSs and OWSGs from single-copy PRSs~\cite{CCS25,BCN25,AGL24}, and also the impossibility of stretching single-copy PRSs in a fully black-box way~\cite{1PRS-stretching}.
A quantum oracle that shares some similarity with ours was used in~\cite{BMM+25} to show the black-box impossibility of constructing OWSG and quantum money from QEFID.
\footnote{For a comprehensive list of black-box constructions and separations, see \href{https://sattath.github.io/microcrypt-zoo/}{Microcrypt-zoo}.}

\paragraph{AI disclaimer.}
The authors originated the central ideas and proof outlines, while ChatGPT 5.6 Sol Ultra supported the exploration of these directions.
The authors verified and formalized all arguments independently and are responsible for them.
The final manuscript is organized by the authors, with ChatGPT 5.6 Sol's and ChatGPT 6 Astra's help in polishing.

\section{Technical Overview}
Our two main theorems, Theorems~\ref{thm:informal-1} and~\ref{thm:informal-2}, require different techniques, so we divide the technical overview into two subsections. Section~\ref{sec:tech-overview-owf} describes the proof idea for Theorem~\ref{thm:informal-1} (separating pd-OWFs from EV-OWPuzz). Section~\ref{sec:tech-overview-commitment} describes the proof idea for Theorem~\ref{thm:informal-2} (separating QCCC commitments from EV-OWPuzz).

\subsection{The separation of pd-OWF from EV-OWPuzz}\label{sec:tech-overview-owf}

\paragraph{What changes in the QCCC model.}
The classical equivalence between one-way functions, signatures, and commitments relies on more than the ability to generate hard instances. Recall that classically, we can always ``pull out'' the randomness: give the algorithm a random tape $r$, and its output becomes a deterministic function of $r$. This allows switching between classical sampling and function evaluation naturally.

For a quantum sampler, however, this connection is less straightforward: there is no general efficient analogue of fixing a classical random tape. In particular, there is no efficient way to force a quantum sampler to produce a specific desired outcome.

To capture this sampling-based one-wayness, \cite{CGG-EV-OWPuzz} proposed EV-OWPuzz as a central primitive for an analog of Minicrypt in the QCCC setting. Informally, an EV-OWPuzz consists of an efficient quantum sampler $\mathsf{Samp}$ and an efficient quantum verifier $\mathsf{Ver}$ satisfying the following properties:
\begin{itemize}
    \item The sampler $\mathsf{Samp}$ outputs a puzzle--key pair $(s,k)$ that passes verification by $\mathsf{Ver}$ with overwhelming probability.\footnote{Throughout this paper, the order of this puzzle--key pair is different with~\cite{KT23,CGG-EV-OWPuzz}.}
    \item Given a puzzle $s$ sampled by $\mathsf{Samp}$, no efficient quantum algorithm can find a key $k'$ that passes verification with non-negligible probability.
\end{itemize}

To separate EV-OWPuzz from other primitives, our oracles essentially provide such an EV-OWPuzz: one part allows efficient quantum sampling of puzzle--key pairs, while another checks whether a pair is valid. Intuitively, these operations let us obtain the desired samples without giving us reproducible access to the information hidden in the oracle.

\paragraph{The oracle construction.}
The basic construction starts with a uniformly random function $g:\{0,1\}^n\to\{0,1\}^{3n}$. Rather than providing direct evaluation access to $g$, we provide the corresponding graph state:
\[
    \frac{1}{\sqrt{2^n}}
    \sum_{x\in\{0,1\}^n}\ket{x}\ket{g(x)}\,.
\]
Measuring it in the computational basis produces a pair $(x,g(x))$. Intuitively, this lets us extract random pieces of information about $g$, but gives us little control over which piece we obtain: on each measurement, the probability of hitting a specified pair $(x,g(x))$ is merely $2^{-n}$. 
We also provide a verifier $f$ that, given input $(x, y)$, checks whether the proposed key $y$ equals $ g(x)$. This verifier does not directly reveal $g(x)$ either: on its own, finding $g(x)$ for a specified $x$ amounts to searching for the accepting input of $f(x,\cdot)$. With only polynomially many queries, any algorithm succeeds with only negligible probability, as captured by a BBBV-type argument.

\paragraph{Breaking pseudodeterministic OWFs.}
To rule out pd-OWFs, the goal is to show that this oracle setup does not provide a pseudodeterministic way of extracting the hidden information about $g$. Intuitively, although an algorithm can easily obtain random pairs $(x,g(x))$, a pseudodeterministic algorithm must repeatedly produce the same output on a given input, and therefore cannot rely on these random samples. An independent oracle, such as a $\mathbf{PSPACE}$-complete language, can then be used to invert the candidate pd-OWF, while it does not affect the security of EV-OWPuzz.

Heuristically, it seems that the only way of extracting information from the graph state is by performing computational-basis measurements, producing no more than random pairs $(x,g(x))$ that should not be able to be utilized by a pseudodeterministic algorithm. The main question is therefore whether a quantum algorithm can exploit the coherence of the graph state to obtain information that ordinary computational-basis measurements would not provide. This is where an observation from~\cite{AG25} comes into play: superpositions of states with random phases can be simulated using a sampler of the components.

To leverage this observation, we modify the oracle so that it provides the state
\[
    \ket{\psi}
    =
    \frac{1}{\sqrt{2^n}}
    \sum_{x\in\{0,1\}^n}
    \alpha(x)\ket{x}\ket{g(x)}\,,
\]
where each $\alpha(x)$ is a random phase of magnitude $1$. Measuring $\ket{\psi}$ in the computational basis still yields the same outcome distribution. The random phases, however, allow us to eliminate the coherence between components, as captured by the following lemma.

\begin{lemma}[Informal, adapted from~\cite{AG25}]\label{lem:informal-averaging}
    For every $t\in\mathbb N_+$ and every fixed function $g$, the state
    \[
        \EE_\alpha\left[
            \ketbra{\psi}{\psi}^{\otimes t}
        \right]
    \]
    can be efficiently produced by an algorithm $\mathrm{Sim}$ that obtains samples $(x, g(x))$ for uniformly random $x \in \{0,1\}^n$, but no other information about $g$.
\end{lemma}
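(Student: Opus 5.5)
We compute the averaged state directly and then show that it can be produced by sampling pairs. Writing $\ket{\phi_x}=\ket{x}\ket{g(x)}$, we expand
\[
\EE_\alpha\left[\ketbra{\psi}{\psi}^{\otimes t}\right]
=\frac{1}{2^{nt}}\sum_{\vec x,\vec y\in(\{0,1\}^n)^t}\EE_\alpha\Big[\prod_{i}\alpha(x_i)\overline{\alpha(y_i)}\Big]\,\ket{\phi_{x_1}\cdots\phi_{x_t}}\!\bra{\phi_{y_1}\cdots\phi_{y_t}}.
\]
We take the phases $\alpha(x)$ to be i.i.d.\ uniform on the unit circle (or uniform $K$-th roots of unity with $K>t$). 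Then the expectation is $1$ exactly when the multiset $\{x_1,\dots,x_t\}$ equals $\{y_1,\dots,y_t\}$, that is, when $\vec y$ is a permutation of $\vec x$. In every other case it vanishes, because some $x$ appears with unequal multiplicities and $\EE[\alpha^m]=0$ for $0<|m|<K$. Grouping terms by the multiset $T$ (type) of $\vec x$, the averaged state becomes
\[
\sum_{T}\frac{N_T}{2^{nt}}\,\ketbra{\Phi_T}{\Phi_T},
\]
where $\ket{\Phi_T}$ is the normalized uniform superposition over all distinct orderings of $T$, tensored with the corresponding $g$-values, and $N_T$ is the number of such orderings. Hence the averaged state is a mixture over types $T$, and the type $T$ has exactly the distribution of the multiset of $t$ i.i.d.\ uniform samples $x$.

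The simulator $\mathrm{Sim}$ therefore proceeds in three steps. First, it draws $t$ samples $(x_i,g(x_i))$. Second, it writes the list into a register and prepares, from this classical list, the symmetrized state $\ket{\Phi_T}$. This can be done by applying a uniformly random permutation in superposition, i.e.\ by preparing $\frac{1}{\sqrt{t!}}\sum_\pi\ket{\pi}$ and permuting the $t$ pair-registers by $\pi$. Third, it uncomputes the permutation register.

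The key observation for the third step is that $g(x)$ is carried along as part of each sample. The simulator therefore only manipulates the sampled pair-registers, and it never needs any value $g(x')$ for a fresh $x'$.

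The main obstacle is the uncomputation when the sample list contains repeated elements, since then different permutations give identical strings. We plan to handle this in one of two ways. The first option is to observe that repetitions occur with probability at most $t^2/2^n$, so the simulator can sort the list, abort on collisions, and accept an error of $O(t^2/2^n)$. The second option gives exact simulation: sort the samples canonically and then apply the standard efficient symmetrization (the inverse of sorting) to obtain the uniform superposition over distinct orderings. This is efficient because sorting is reversible given the sorted list, which stays in the register as $\ket{\Phi_T}$'s canonical label and can be uncomputed from the symmetric state via coherent sorting. We would go with the first option for simplicity unless exactness is needed later.
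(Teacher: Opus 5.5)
Your proposal is correct and follows the paper's argument: the type-state expansion of~\cite{AG25} shows the phase-averaged state equals $\sum_T \frac{N_T}{2^{nt}}\ketbra{\Phi_T}{\Phi_T}$, and the simulator symmetrizes $t$ sampled pairs using a superposed permutation register and then disentangles it, as in Theorem~\ref{thm:GMMY-simulation}. The paper's formal version differs in using $\pm1$ signs and the reflection state $\ket{\phi_{g_n,R_n}^{-}}$: there the averaged state matches the type mixture only up to $O(t/\sqrt{2^n})$ (via the collision projector and gentle measurement in the proof of Theorem~\ref{thm:LOCC-decoupling}), and because the index $0$ repeats about $t/2$ times, your abort-on-collision option fails, so exact disentangling is needed (mapping each coset of the sorted list's stabilizer to a fixed superposition over the stabilizer).
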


This lemma suggests that, after averaging over $\alpha$, access to the graph state provides no more information about $g$ than ordinary computational-basis measurements. Combining this observation with a BBBV-type argument for the verification oracle $f$, it then seems that the output of any pseudodeterministic algorithm must be independent of the oracles, thus breakable by a $\mathbf{PSPACE}$-complete language. The argument now appears complete... \emph{or is it?}

\paragraph{What is missing in the previous argument.}
To show that no pd-OWF exists relative to our oracle, we need to \emph{fix} an oracle and break pd-OWFs relative to that \emph{fixed} oracle. In the previous argument, however, we \emph{averaged} over $\alpha$ (as in Lemma~\ref{lem:informal-averaging}). This does not suffice to break pd-OWFs for a fixed choice of $\alpha$.

Consider a toy example in which the pd-OWF generator somehow gets evaluation access to $\alpha$ and treats $\alpha$ itself as a one-way function. For each fixed $\alpha$, the generator's output is deterministic, but after averaging over $\alpha$, it becomes random. Thus, the averaged experiment can obscure precisely the pseudodeterminism in a fixed oracle world. We therefore need to be cautious: an \emph{averaging} argument, whether over $g$ or over the random phases $\alpha$, is not enough, as it may only rule out candidates that remain pseudodeterministic even \emph{after averaging}, whereas pseudodeterminism is only required relative to a \emph{fixed} oracle.

\paragraph{A concentration theorem on phases.}
Motivated by the discussion in the previous paragraph, we now focus on proving that, for a pseudodeterministic algorithm that only accesses the random-phase graph state $\ket{\psi}$ and the verification function $f$, its principal output does not vanish or change substantially, even after averaging over $\alpha$ and $g$.

Naturally, this leads to a concentration argument. Suppose that, on a given input, the output distribution is \emph{concentrated} around its mean. This means that, for most choices of $\alpha$ and $g$, the output distribution changes very little after averaging. In particular, if the original distribution places almost all its probability on a single output, the averaged distribution must do so as well, with the same principal output. The $\mathbf{PSPACE}$ inverter can then simulate the generator's output averaged over $\alpha$ and $g$. Since the principal output is preserved after averaging, it also succeeds in inverting the pd-OWF relative to a typical fixed choice of $\alpha$ and $g$.

We first present our concentration theorem for the phases $\alpha$, which is one of our theoretical contributions.

\begin{theorem}[Informal]\label{thm:informal-phase-concentration}
    Fix $t\in\mathbb N_+$ and a function $g$. For any fixed algorithm $\mathcal A$ running on $t$ copies of $\ket{\psi}$, let $\mathcal D_\alpha$ denote its classical output distribution. Then,
    \[
        \EE_\alpha\left[
            \left\|
                \mathcal D_\alpha-\EE_{\alpha'}\mathcal D_{\alpha'}
            \right\|_2^2
        \right]
        \leq O\!\left(\frac{t}{2^n}\right)\,,
    \]
    where $\|\cdot\|_2$ denotes the $\ell_2$-norm of distributions.
\end{theorem}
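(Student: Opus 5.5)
The plan is to write the algorithm $\mathcal A$ as a POVM $\{M_o\}_o$ on the $t$-copy register, so that $\mathcal D_\alpha(o)=\mathrm{Tr}(M_o\,\ketbra{\psi_\alpha}{\psi_\alpha}^{\otimes t})$. The quantity to bound is then
\[
\sum_o \Var_\alpha\bigl[\mathcal D_\alpha(o)\bigr].
\]
By Naimark dilation (adding an ancilla and absorbing the isometry) I would assume each $M_o$ is a rank-one projector $\ketbra{e_o}{e_o}$ onto an orthonormal basis. I would also assume the phases $\alpha(x)$ are independent and uniform on the unit circle; a finite set of roots of unity of order greater than $2t$ behaves identically for all moments used below. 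Since $\ket{\psi_\alpha}^{\otimes t}$ lies in the symmetric subspace, I may replace each $M_o$ by $\Pi_{\mathrm{sym}}M_o\Pi_{\mathrm{sym}}$ without changing $\mathcal D_\alpha$.

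The first step is a Fourier expansion in $\alpha$. For a multiset (type) $a$ of $t$ elements of $\{0,1\}^n$, let $\ket{S_a}=\sum_{\mathbf x\text{ of type }a}\ket{\mathbf x,g(\mathbf x)}$. Then
\[
\ket{\psi_\alpha}^{\otimes t}=2^{-nt/2}\sum_a\alpha^a\ket{S_a},
\qquad
\mathcal D_\alpha(o)=2^{-nt}\sum_{a,b}\alpha^{a}\overline{\alpha^{b}}\bra{S_b}M_o\ket{S_a}.
\]
The monomial $\alpha^a\overline{\alpha^b}$ is a character of the phase group indexed by $a-b\in\mathbb Z^{2^n}$. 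By Parseval, the variance is the sum of squared Fourier coefficients at nonzero frequencies:
\[
\sum_o\Var_\alpha\mathcal D_\alpha(o)=\sum_o\sum_{d\neq 0}\Bigl|2^{-nt}\sum_{a-b=d}\bra{S_b}M_o\ket{S_a}\Bigr|^2 .
\]
Equivalently, this equals
\[
\mathrm{Tr}\Bigl[\Bigl(\sum_o M_o\otimes M_o\Bigr)\bigl(\EE_\alpha\rho_\alpha^{\otimes 2}-\EE_\alpha\rho_\alpha\otimes\EE_\alpha\rho_\alpha\bigr)\Bigr],
\qquad \rho_\alpha=\ketbra{\psi_\alpha}{\psi_\alpha}^{\otimes t}.
\]
Both forms will be used.

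The second step evaluates the two averaged operators using Lemma~\ref{lem:informal-averaging}. The state $\EE_\alpha\rho_\alpha^{\otimes 2}$ is ($2t$-copy) a mixture over types $c$ of $2t$ elements of $\ketbra{S_c}{S_c}$. Its counterpart $\EE\rho\otimes\EE\rho$ is the mixture over pairs $(a,b)$ of $\ketbra{S_a}{S_a}\otimes\ketbra{S_b}{S_b}$. Restricted to the split into two $t$-copy halves, $\ket{S_c}=\sum_{a+b=c}\ket{S_a}\ket{S_b}$. The difference $\Delta$ therefore consists exactly of the cross terms $\ket{S_a}\ket{S_b}\bra{S_{a'}}\bra{S_{b'}}$ with $a+b=a'+b'$ but $(a,b)\neq(a',b')$, i.e.\ an element is exchanged between the two halves. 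I would then use the operator inequality
\[
\sum_o \ketbra{e_o}{e_o}\otimes\ketbra{e_o}{e_o}\le \Pi_{\mathrm{sym}}^{(2)}
\]
on the doubled space. Combined with the diagonality of $\sum_o M_o\otimes M_o$ in the product basis $\{\ket{e_o}\ket{e_{o'}}\}$, this reduces each cross term to a weighted overlap $\sum_o\bra{e_o}S_a\rangle\langle S_{a'}\ket{e_o}\bra{e_o}S_b\rangle\langle S_{b'}\ket{e_o}$. Cauchy--Schwarz and the normalizations $\|S_a\|^2=N_a\le t!$, $\sum_a N_a=2^{nt}$ should then bound these.

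The main obstacle is this counting step. Naive bounds lose everything: a Frobenius-norm bound on the off-diagonal blocks gives $\sum_{d}\|X_d\|_F^2=\|\rho_\alpha\|_F^2=1$, because $\rho_\alpha$ is pure. So the argument must exploit that the \emph{same} outcome $o$ appears on both halves. My intended route is as follows. For a fixed cross term, exchanging one element between the halves can be charged to a choice among the roughly $2^{n}$ possible values of the exchanged element, with only about $t$ ways to choose which element is exchanged. This should leave a factor $t/2^n$ relative to the diagonal terms, which themselves sum to at most $1$. Collisions (types with repeated elements) contribute an additional $O(t^2/2^{n})$ error, which I would have to absorb separately. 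If the charging argument fails, my fallback is to bound the $d$ with $\|d\|_1=2$ (a single exchange) directly via the Fourier form and show that the higher-order $d$ are dominated geometrically. The single-exchange case is where I expect the real work of the proof to lie.
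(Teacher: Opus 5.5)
\textbf{The main estimate is missing.} Your setup is correct: Parseval over the phase characters, and the equivalent two-copy form. But the proof stops exactly where the content is. The bound on the nonzero-frequency mass is only an ``intended'' charging argument, and as sketched it would not give the claimed rate. An additive collision term of order $t^2/2^n$ cannot be absorbed into $O(t/2^n)$, and the fallback that higher-order $d$ ``are dominated geometrically'' has no justification.

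The missing idea is a single Cauchy--Schwarz step per frequency. Normalize $\ket{\hat S_a}=\ket{S_a}/\sqrt{N_a}$ and set $\mu_a=N_a2^{-nt}$, the probability that $t$ uniform samples have type $a$. Write $M^{(o)}_{ba}=\bra{\hat S_b}M_o\ket{\hat S_a}$, so the coefficient at $d\neq0$ is $c_d(o)=\sum_{a-b=d}\sqrt{\mu_a\mu_b}\,M^{(o)}_{ba}$. Fix a coordinate $i$ with $d_i\neq0$. If $d_i>0$, every $a$ in this sum has $a_i\geq1$, so
\[
|c_d(o)|^2\le\Bigl(\sum_{a-b=d}\mu_a\Bigr)\Bigl(\sum_{a-b=d}\mu_b\bigl|M^{(o)}_{ba}\bigr|^2\Bigr)\le\frac{t}{2^n}\sum_{a-b=d}\mu_b\bigl|M^{(o)}_{ba}\bigr|^2,
\]
using $\Pr[a_i\geq1]\le t2^{-n}$. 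If $d_i<0$, mirror the roles of $a$ and $b$. Summing over $d\neq0$ counts each ordered pair $(a,b)$ at most once. Then $M_o^2\preceq M_o$ gives the row bound $\sum_a|M^{(o)}_{ba}|^2\le\bra{\hat S_b}M_o^2\ket{\hat S_b}\le M^{(o)}_{bb}$, and likewise for columns. Finally $\sum_oM^{(o)}_{bb}=1$ and $\sum_b\mu_b=1$ give $\sum_o\Var\,\mathcal D_\alpha(o)\le 2t/2^n$.

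This is the rigorous form of your charging intuition. It is also exactly what defeats the Frobenius-norm loss you observed: the restricted weight $\Pr[a_i\geq1]$ goes into one factor, and the POVM inequality $M_o^2\preceq M_o$ controls the other. All frequencies are treated at once, and no collision term arises. The paper's proof of Theorem~\ref{thm:phase-concentration} is this argument, with parity classes $\nu$ replacing types (for $\pm1$ signs) and the splitting $B_\alpha+\overline{B_\alpha}$.

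\textbf{The rank-one reduction is not without loss of generality.} The quantity $\sum_o\Var\,\mathcal D_\alpha(o)$ is not monotone under coarse-graining. Splitting one outcome into $k$ pieces whose probabilities fluctuate in lockstep divides its contribution by $k$. So a bound for a refined rank-one measurement says nothing about the actual, coarser output distribution, and that coarse output is exactly what the pd-OWF application needs. Your key inequality $\sum_o\ketbra{e_o}{e_o}\otimes\ketbra{e_o}{e_o}\preceq\Pi^{(2)}_{\mathrm{sym}}$ genuinely needs rank one: for a projector $P_o$ of rank at least two, $P_o\otimes P_o$ has support on antisymmetric vectors. In addition, the later replacement $M_o\mapsto\Pi_{\mathrm{sym}}M_o\Pi_{\mathrm{sym}}$ destroys the orthonormal-basis structure you then invoke. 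The argument above works for arbitrary $0\preceq M_o\preceq I$ with $\sum_oM_o=I$, so no dilation, symmetrization, or two-copy operator is needed. Note also that controlling the two-copy difference in trace norm, as in Theorem~\ref{thm:LOCC-decoupling}, only yields a bound of order $t2^{-n/2}$.
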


In natural language, Theorem~\ref{thm:informal-phase-concentration} says that, for any algorithm using polynomially many copies of $\ket{\psi}$, its output distribution is tightly concentrated around its  average over phases. This gives the first part of the concentration we need.

\paragraph{A concentration argument on $g$.}
Theorem~\ref{thm:informal-phase-concentration} shows that the output distribution is concentrated over $\alpha$, while Lemma~\ref{lem:informal-averaging} shows that the average state of copies of $\ket{\psi}$ can be simulated using uniform random samples $(x,g(x))$. Thus, it remains to examine whether a pseudodeterministic algorithm can make essential use of such sampling access to $g$ or the access to the verification oracle $f(x,\cdot)=\mathbf{1}_{g(x)}$.

We formalize the intuition that it cannot, using the Efron--Stein inequality informally stated as follows.

\begin{lemma}[Efron--Stein inequality, informal]
    Let $F$ be a function of independent random variables $X_1,\ldots,X_n$. Then, the variance of $F$ is bounded by the sum of the sensitivities for each $X_i$ (that is, the expected squared changes when a single $X_i$ is independently resampled).
\end{lemma}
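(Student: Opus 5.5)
The plan is to prove the Efron--Stein inequality by a martingale (Doob) decomposition of $F - \EE F$ along the filtration generated by $X_1,\ldots,X_n$. Write $F=F(X_1,\ldots,X_n)$ with $\EE[F^2]<\infty$. Define $\EE_i[\cdot] := \EE[\cdot \mid X_1,\ldots,X_i]$, so that $\EE_0 F=\EE F$ and $\EE_n F = F$. Set $\Delta_i := \EE_i F - \EE_{i-1}F$. Then $F-\EE F=\sum_{i=1}^n \Delta_i$. The differences are orthogonal: for $i<j$, $\Delta_i$ is measurable with respect to $X_1,\ldots,X_i$ while $\EE_{j-1}\Delta_j=0$. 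Hence $\Var F=\sum_i \EE[\Delta_i^2]$.

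Next, bound each $\EE[\Delta_i^2]$ by the $i$-th sensitivity. Let $X_i'$ be an independent copy of $X_i$ and $F^{(i)} := F(X_1,\ldots,X_i',\ldots,X_n)$. Let $\EE^{(i)}$ denote expectation over $X_i$ alone, with the other coordinates fixed. By independence of the coordinates, $\EE_{i-1}F = \EE_i[\EE^{(i)}F]$. This is the one place independence is essential, since integrating out $X_i$ does not change the conditional law of $X_{i+1},\ldots,X_n$. Therefore $\Delta_i=\EE_i\bigl[F-\EE^{(i)}F\bigr]$. By conditional Jensen, $\EE[\Delta_i^2]\le \EE\bigl[(F-\EE^{(i)}F)^2\bigr]=\EE\bigl[\Var^{(i)}F\bigr]$, where $\Var^{(i)}$ is the variance over $X_i$ alone. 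The elementary identity $\Var Y=\tfrac12\EE[(Y-Y')^2]$ for i.i.d.\ $Y,Y'$ then gives $\EE[\Var^{(i)}F]=\tfrac12\EE[(F-F^{(i)})^2]$.

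Summing over $i$ yields
\begin{equation*}
\Var F\le \frac12\sum_{i=1}^n \EE\bigl[(F-F^{(i)})^2\bigr],
\end{equation*}
which is the statement, with the sensitivity of $X_i$ taken to be the expected squared change under resampling of $X_i$. The constant $\tfrac12$ is slack that the informal statement can absorb.

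The only step needing care is the identity $\EE_{i-1}F=\EE_i[\EE^{(i)}F]$, which rests on a Fubini argument over the product measure. In our application, $F$ is a bounded function of finitely many discrete variables, namely the values of $g$ or the phases $\alpha(x)$. There the identity is an immediate finite computation, so I expect no real obstacle. For use later, I would also record the vector-valued version for $\|\mathcal D-\EE\mathcal D\|_2^2$, obtained by applying the scalar inequality coordinatewise and summing over output strings.
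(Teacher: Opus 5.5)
Your proposal is correct and is essentially the proof the paper relies on: the paper does not reprove the scalar inequality but cites \cite{Steele-Efron-Stein} and \cite[Section~3.1]{BLM-concentration-inequalities}, whose argument is your Doob-martingale decomposition with $\EE_{i-1}F=\EE_i[\EE^{(i)}F]$ and conditional Jensen, and your coordinatewise vector-valued extension is what the paper does in Theorem~\ref{thm:efron-stein} (orthonormal basis plus Parseval). One small correction to your closing remark: the paper applies Efron--Stein only to the independent values $g_n(x)$ (Lemma~\ref{lem:graph-information-replacement}), while the random phases are handled by the separate type-vector argument of Theorem~\ref{thm:phase-concentration}.
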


To apply this inequality, we consider the probability of the pd-OWF generator outputting a particular value, where we treat each function value $g(x)$ as an independent random variable. 
Resampling a single $g(x)$ has little effect: the sampler is unlikely to return this particular $x$, while detecting the change through $f$ essentially requires searching for a hidden point, which is difficult by a BBBV-type argument. Crucially, Efron--Stein sums the \emph{squared} effects of these changes, allowing the total to remain small even after summing over all inputs $x$. This gives concentration of the output probabilities with respect to $g$. A natural generalization to the entire output probability vector yields the desired concentration.

We point out that an $\ell_2$ concentration breaks pd-OWF since it yields a concentration on the probability of each individual output. This means, if an algorithm has a principal output (with output probability larger than constant), the $\ell_2$ concentration yields a simulation algorithm whose output probability of the principal output is still greater than constant. However, $\ell_2$ concentration does not provide enough power to break our proposed EV-OWPuzz. This is because the sampler samples each output with exponentially small probability, and thus a simulator provides little information on the puzzle--key correspondence, even if the simulator is guaranteed to have a close $\ell_2$ distance in output distribution.

\paragraph{Lifting to a better oracle model.}
We have now described the main idea for showing that an EV-OWPuzz exists relative to $\ket{\psi}$ and $f$, while pd-OWFs do not. However, as emphasized by~\cite{Zha25-model-unitary-oracles}, a quantum oracle is desired to be a unitary with controlled access and access to its inverse and conjugate. We now explain how to lift our construction to this model.

First, we modify the oracle to provide, instead of $\ket{\psi}$, the unitary $U_{\ket{\psi}}$ that swaps $\ket{0}\ket{0}$ and $\ket{1}\ket{\psi}$. Using the standard simulation technique~\cite{JLS,Goldin-Zhandry}, queries to $U_{\ket{\psi}}$ can be simulated using copies of the state
$\ket{\psi^-}=(\ket{0}\ket{0}-\ket{1}\ket{\psi})/{\sqrt{2}}$.
By extending Lemma~\ref{lem:informal-averaging} and Theorem~\ref{thm:informal-phase-concentration} to $\ket{\psi^-}$, we can therefore carry the argument over to the unitary oracle.

The unitary $U_{\ket{\psi}}$ is self-inverse, and controlled access can also be handled by the same simulation technique~\cite{JLS}. Thus, the only remaining issue is conjugate access. To address it, we modify $\alpha$ to use random signs $\pm1$ instead of continuous complex phases. This makes $U_{\ket{\psi}}$ self-conjugate, but comes at a technical cost: random signs make the phase-concentration argument (Theorem~\ref{thm:informal-phase-concentration}) more delicate. Continuous phases distinguish the number of occurrences of each graph entry, whereas $\pm1$ phases distinguish only whether that number is even or odd. Consequently, additional cross terms survive the averaging. Our proof must carefully control these surviving terms, rather than rely on their cancellation.

\paragraph{Outline of the proof.} Section~\ref{sec:owf-separation} is dedicated to the proof of the separation of pd-OWF from EV-OWPuzz.
We define the oracle in Section~\ref{sec:owf-oracle} and prove the two sides of the separation separately. Section~\ref{sec:existence_of_EV-OWPuzz_1} establishes the security of the EV-OWPuzz. Here, an additional challenge is that security must hold against non-uniform adversaries: their classical advice may depend on the oracle, and since non-uniform algorithms are not countable, a bound for each fixed algorithm does not immediately suffice. We address this using an argument inspired by~\cite{FK-Set-size-estimation} that controls all polynomial-length advice strings at each security parameter. Section~\ref{sec:owf-nonexistence} then rules out pd-OWFs using the simulation and concentration arguments outlined above. A subtlety is that a candidate generator can also query oracle components at smaller widths, where our concentration bounds are not sufficiently strong. Following the approach in~\cite{Kre21}, we separate the oracle into small and large components. The large components can be simulated, while the small components can be learned: their dimensions are only polynomial in the security parameter, allowing us to obtain classical descriptions through process tomography and direct queries to the verification oracle. Section~\ref{sec:L2-concentration} proves the two concentration arguments, with the phase-concentration and Efron--Stein ingredients developed in Appendices~\ref{sec:phase-concentration}~and~\ref{sec:Appendix-efron-stein}.

\subsection{The separation of QCCC commitment from EV-OWPuzz}\label{sec:tech-overview-commitment}
Unlike Theorem~\ref{thm:informal-1}, whose separation generally matches our heuristic, Theorem~\ref{thm:informal-2} is perhaps more surprising.
Indeed, in light of the result of~\cite{CountCrypt}, \cite{CGG-EV-OWPuzz} explicitly asks whether QCCC commitments can be constructed even from (inefficiently-verifiable) OWPuzz.
To understand why, we first examine how additional properties of an EV-OWPuzz can make such a construction \emph{possible}, and what fails in generalizing the construction.

\paragraph{A construction of QCCC commitment when puzzles are uniform.}
A natural starting point is to search for a black-box construction by revisiting the classical routes from OWFs to commitments.

The most familiar route first constructs a PRG from an OWF~\cite{HILL} and then applies Naor's commitment scheme~\cite{Naor-commitment}. However, it is not clear how to carry this route over to a quantum sampler: the usual definition of a PRG starts with an explicit random seed, whereas, as discussed above, there is no general efficient way to pull out the randomness of a quantum algorithm.

We therefore turn our attention to the statistically hiding commitment constructions in~\cite{NOVY,SHcommit}. Suppose that the puzzle distribution produced by $\mathsf{Samp}$ is uniform over all possible puzzles. With this additional property, we can follow the approach in~\cite{NOVY}. The committer first samples a puzzle--key pair $(s,k)$ and engages in an interactive hashing protocol with the receiver. This produces two puzzles $s_0,s_1$, one of which is the original puzzle $s$; the committer knows the index $d$ such that $s=s_d$, while the receiver does not. The committer commits to $b$ by sending $b\oplus d$, and opens by revealing $d$ and the key $k$.

Why is this protocol hiding and binding? Since $s$ is uniform, either $s_0$ or $s_1$ could be the sampled puzzle, and the receiver cannot tell which one it is. Thus, the index $d$ hides the committed bit. On the other hand, intuitively speaking, opening the same commitment to both $b=0$ and $b=1$ requires valid keys for both $s_0$ and $s_1$.\footnote{Strictly speaking, this argument establishes only semi-honest binding and does not by itself prove quantum sum-binding. We give a formal construction of a (sum-binding) QCCC commitment from EV-OWPuzz with nearly uniform puzzle distribution in Appendix~\ref{sec:appendix-uniform}.} Since interactive hashing prevents the committer from freely choosing both puzzles, finding keys for both would contradict the one-wayness of the EV-OWPuzz.

\paragraph{Why this fails for a general EV-OWPuzz.}
A natural next step, following~\cite{SHcommit}, is to generalize to a ``regular'' version of EV-OWPuzz. As we will see below, difficulty arises in this step, which will later motivate our oracle separation.

Inspired by the definition of a regular OWF, we call an EV-OWPuzz regular if there is a subset $S$ of all puzzles such that the puzzle distribution produced by $\mathsf{Samp}$ is supported on and uniform over $S$. If the size of $S$ is known, the standard approach in~\cite{SHcommit} is to apply a suitable pairwise-independent hash function to the puzzle. The leftover hash lemma guarantees that the resulting distribution is close to uniform. The question then becomes: is the one-wayness of an EV-OWPuzz preserved after such hashing?

Unlike in the OWF case, the answer turns out to be negative, with the issue lying in the verifier of the EV-OWPuzz. The security of an EV-OWPuzz only requires that finding a valid key for $s$ be hard \emph{when $s$ is honestly sampled}. However, the verifier may deliberately accept easily found keys for puzzles that can never be generated by the sampler, leaving room for an adversary to produce valid openings after hashing.

Consider the following toy example. Let $(\Samp,\Ver)$ be an EV-OWPuzz. Define $(\Samp',\Ver')$ as follows:
\begin{itemize}
    \item The sampler $\Samp'(1^\lambda)$ runs $(s,k)\gets\Samp(1^\lambda)$ and outputs $(s'=0^\lambda\|s,k'=k)$.
    \item The verifier $\Ver'(1^\lambda,s',k')$ checks whether the first $\lambda$ bits of $s'$ are all zero. If so, it writes $s'=0^\lambda\|s$ and outputs the result of $\Ver(1^\lambda,s,k')$. Otherwise, it simply outputs $1$.
\end{itemize}
It is easy to verify that $(\Samp',\Ver')$ remains a regular EV-OWPuzz if $(\mathsf{Samp},\mathsf{Ver})$ is regular. However, with the usual pairwise-independent hash family, an adversary can efficiently find a puzzle $s'$ whose first $\lambda$ bits are not all zero and whose hash equals a prescribed value.\footnote{Here, the ``usual" pairwise-independent hash function means the affine hashing function, which does not require any hardness assumption. Pairwise-independence alone does not guarantee collision-resistance, but they can both hold under additional assumptions~\cite{GGH96}, stronger than the OWF assumption.} Since $\Ver'$ accepts any key for such a puzzle, the adversary can solve the hashed puzzle without solving the original, honestly sampled one.

Hence, at a high level, the difficulty in constructing QCCC commitments in a black-box manner lies in preventing a malicious committer from exploiting ``\emph{easy}'' puzzles that are easy to solve but are never honestly sampled.
This suggests a starting point for our oracle construction: the sampler should produce ``hard" puzzles from a small region, while the verifier needs to accept keys for ``easy" puzzles outside that region. The challenge is to make these additional puzzles useful to a cheating committer without allowing the receiver to simply recognize and reject them. We first describe the basic oracle and explain how it supports an attack in a non-standard, asymmetric model, where adversaries against both primitives receive additional oracle access unavailable to the honest algorithms. We prove that this access model enables attacks on QCCC commitments while preserving the security of the EV-OWPuzz. We then use diagonalization to realize this asymmetry within the standard oracle model and combine the attacks into a single oracle separation.

\paragraph{The base oracle construction.}
To prevent the honest protocol from distinguishing whether a puzzle is ``easy'' or ``hard'', we begin by choosing a small random subset $S$ as the region of honestly sampled puzzles. Each puzzle in $S$ will have a unique valid key, as is the case in the construction in Section~\ref{sec:tech-overview-owf}.

However, we also need to hide the valid keys for puzzles outside $S$. Otherwise, the receiver could recognize these ``easy" puzzles simply by testing arbitrary keys. For each puzzle $s$, we therefore choose a random injective map $J(s,\cdot)$ into a much larger key space. Its image will be the set of valid keys when $s$ is ``easy''.

The base for our oracle separation is as follows. Choose a random function $g$. The sampling part of the oracle provides access to the state
\[
    \ket{\phi}
    =\frac{1}{\sqrt{|S|}}
      \sum_{s\in S}
      \alpha(s)\ket{s}\ket{J(s,g(s))},
\]
where the $\alpha(s)$ are independent random signs, as before. Measuring this state produces a valid puzzle--key pair, with the puzzle uniformly distributed over $S$. The oracle also provides a verifier $f$ defined as follows: for $s\in S$, it accepts only the key $J(s,g(s))$; for $s\notin S$, it accepts every key in the image of $J(s,\cdot)$.

We now briefly describe the main idea behind the construction. Informally, we first work in a non-standard oracle model: we supplement the oracle with a helper $\mathsf{Help}$ which depends on $J$ but no other hidden data.
Crucially, $\mathsf{Help}$ is available to adversaries (against both primitives), while remaining unavailable to the primitives themselves. 
The resulting EV-OWPuzz remains secure even against adversaries with access to $J$: finding $g(s)$ is still hard given an honestly sampled puzzle $s$. For commitments, however, access to $J$ allows an adversary to find valid keys for the ``easy'' puzzles, providing the starting point for our attack.

\paragraph{How to break QCCC commitments.}
We now explain how to construct such an auxiliary oracle $\mathsf{Help}$ and use it to break a fixed QCCC commitment scheme with access to $\ket{\phi}$ and $f$.

Write $\Theta=(g,\alpha,S)$ for all the hidden information other than $J$. Since $\mathsf{Help}$ must be independent of $\Theta$, we use an \emph{averaging} argument: find an adversary that does not know the actual choice of $\Theta$, but still breaks the commitment on average over $\Theta$, with $J$ held fixed.

As a motivating example, consider the following candidate protocol built from the EV-OWPuzz constructed above. To commit to $b$, the committer samples a puzzle--key pair $(s,k)$ produces a hardcore bit $k\cdot r$ where $r$ is uniformly randomly sampled, and sends $(s,r,k\cdot r\oplus b)$. To open, the committer sends $k$, and the receiver checks that $(s,k)$ is valid.

This candidate is semi-honest binding: if the committer acts honestly during the commit stage, it cannot change the committed bit during the opening stage, since an honestly sampled puzzle has only one valid key. Thus, an attack cannot simply leave the commit stage honest and modify the opening strategy. This suggests changing how the committer chooses its puzzle: it should instead use an ``easy'' puzzle, for which the adversary's access to $J$ provides many valid keys and allows it to find keys supporting either opening.

We implement this idea by giving the committer an independently sampled oracle. Specifically, sample $\widehat{\Theta}=(\widehat g,\widehat\alpha,\widehat S)$ independently of $\Theta$, while keeping the same $J$. A puzzle sampled from $\widehat S$ typically lies outside the sparse true support $S$. Its key, nevertheless, belongs to $\operatorname{Im}J(s,\cdot)$ and is therefore accepted by the true verifier. In other words, the committer can follow its original sampling procedure in the fake oracle world induced by $\widehat \Theta$ and $J$, while producing puzzles that are easy in the receiver's true oracle world.

This gives a systematic way to modify the committer's behavior (in the commit stage). What remains is that the receiver cannot detect the replacement through the interaction. The key observation is that, after averaging, replacing the committer's oracles in this way has only a negligible effect on the transcript distribution. More precisely, we prove the following.

\begin{theorem}[Oracle decoupling; informal]\label{thm:informal-decoupling}
Fix an efficient QCCC commitment protocol $\Pi$ with both parties accessing $\ket{\phi}$ and $f$. For a typical choice of $J$ and for either committed bit $b\in \{0,1\}$, the following two transcript distributions are negligibly close:
\begin{itemize}
    \item \emph{The original experiment.} Sample $\Theta$ and run $\Pi$, with both parties accessing $\ket{\phi}$ and $f$, constructed from $\Theta$ and $J$.
    \item \emph{The decoupled experiment.} Run the following decoupled protocol $\widetilde{\Pi}$: sample $\Theta$ and $\widehat{\Theta}$ independently, and then
    \begin{itemize}
        \item The receiver behaves exactly as in $\Pi$, accessing the original state $\ket{\phi}$ and verifier $f$.
        \item The committer behaves as in $\Pi$, except that it accesses $\ket{\widehat{\phi}}$ and $\widehat f$, constructed from $\widehat{\Theta}$ and the same $J$.
    \end{itemize}
\end{itemize}
\end{theorem}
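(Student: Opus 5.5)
Both experiments run the same LOCC interaction: each party applies local quantum operations, interleaved with oracle queries, and the parties exchange only classical messages. The plan is therefore a hybrid argument over the oracle resources. The oracle provides the sampling unitary (swapping $\ket{0}\ket{0}$ and $\ket{1}\ket{\phi}$) together with the verifier $f$.

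\textbf{Step 1: replace the unitary by states.} Following the simulation technique used for Theorem~\ref{thm:informal-1}, each party's queries to the sampling unitary are simulated by a polynomial number $t$ of copies of $\ket{\phi^-}$. In both experiments, the transcript is then a function of an LOCC protocol acting on $\ket{\phi^-}^{\otimes t_C}\otimes\ket{\phi^-}^{\otimes t_R}$ in $\Pi$, and on $\ket{\widehat\phi^-}^{\otimes t_C}\otimes\ket{\phi^-}^{\otimes t_R}$ in $\widetilde\Pi$. The parties also have access to $f$ (resp.\ $\widehat f$).

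\textbf{Step 2: handle the verifier.} I would show by a BBBV-type hybrid argument that $f$ can be replaced by the ``public'' verifier $f_J$, which accepts exactly the keys in $\operatorname{Im}J(s,\cdot)$ for every $s$. The two differ only on the sparse random set $S$: for $s\in S$, $f$ rejects all keys in the image except $J(s,g(s))$. An efficient party can detect this difference only by querying $s\in S$ with a key $J(s,y)$, $y\neq g(s)$. The party learns elements of $S$ only through samples $(s,J(s,g(s)))$, and for those samples it already knows the correct key. Hitting $S$ blindly has negligible probability because $S$ is sparse. Querying $J(s,y)$ with $y\ne g(s)$ for a sampled $s$ rejects in both oracles only if $J$-images are distinguishable, so I would bound these differences via BBBV, or alternatively let each party learn via a sampled pair that such keys are rejected, which is tracked in the simulation. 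After this step, $f$ and $\widehat f$ are both replaced by the same oracle $f_J$, which depends only on $J$ and can be given to both parties in both experiments.

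\textbf{Step 3: LOCC decoupling of the states.} After Steps 1 and 2, it remains to show that, averaged over $\Theta$ (and over $\widehat\Theta$), an LOCC protocol cannot distinguish
\[
\EE_\Theta\bigl[\phi^{-\otimes t_C}\otimes\phi^{-\otimes t_R}\bigr]
\quad\text{from}\quad
\EE_{\widehat\Theta}\bigl[\widehat\phi^{-\otimes t_C}\bigr]\otimes\EE_{\Theta}\bigl[\phi^{-\otimes t_R}\bigr].
\]
This is exactly the LOCC decoupling result (Theorem~\ref{thm:LOCC-decoupling}) developed for random-sign phase states. Using the averaging Lemma~\ref{lem:informal-averaging}, the left-hand state is simulated by drawing $t_C+t_R$ i.i.d.\ samples $(s,J(s,g(s)))$ with a shared $S$, and splitting them between the parties. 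The right-hand state corresponds to independent supports. Up to the sign cross-terms controlled as in Theorem~\ref{thm:informal-phase-concentration}, the two differ only when a sample on the committer side collides with one on the receiver side. Since $|S|$ is superpolynomial, such a collision has probability $O(t^2/|S|)$.

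One concern is that the two parties' support sets are correlated only through $S$, and the classical messages could, in principle, reveal which $s$ values one party holds. Even so, without collisions, the joint distribution of distinct uniform samples from a random $S$ and from two independent random sets coincides in distribution. I therefore expect this part to go through via a coupling argument.

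The main obstacle is Step 3 in the random-sign setting. With $\pm1$ phases, pairs of repeated components survive the averaging and entangle the two parties' halves. I would first try to bound the trace norm of these parity cross-terms on the LOCC side by the collision probability, mirroring the proof of Theorem~\ref{thm:informal-phase-concentration}. A further concern is that the statement must hold for a typical $J$ rather than on average over $J$. I would handle this by applying Markov's inequality to the averaged bound over $J$.
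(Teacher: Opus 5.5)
Your hybrid skeleton matches the paper's proof of Theorem~\ref{thm:decoupling}:
\begin{itemize}
\item Step~1 is $\mathsf{Hyb}_0\to\mathsf{Hyb}_1$.
\item Step~2 is $\mathsf{Hyb}_3\to\mathsf{Hyb}_4$, done earlier. Doing it earlier is harmless, but the hidden object is the off-sample part of $\Im J(s,\cdot)$ given the revealed keys, not the set $S$ itself.
\item Your collision bound is $\mathsf{Hyb}_4\to\mathsf{Hyb}_5$.
\item Markov over $J$ is how the paper later uses the averaged bound.
\end{itemize}

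\textbf{The gap is in Step~3.} It is false that the averaging lemma simulates $\EE_\Theta[\phi^{-\otimes t_C}\otimes\phi^{-\otimes t_R}]$ by $t_C+t_R$ samples split between the parties. Averaging the phases of $\ket{\phi^-}^{\otimes(t_C+t_R)}$ gives, up to parity terms, a mixture of type states symmetrized over all $t_C+t_R$ registers. Each such type state is a coherent superposition $\sum_{T+T'=S}\sqrt{\mu_T\mu_{T'}/\mu_S}\,\ket{\phi_T}\ket{\phi_{T'}}$ over all ways of splitting the type across the cut. This entanglement is present even for continuous phases, where no parity cross-terms exist at all.

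It is also globally detectable. A swap test between one committer copy and one receiver copy accepts with certainty on the original state, which is permutation invariant. Once the committer's phases (or supports) are independent, it accepts only with probability about $5/8$. So the two experiments are far apart in trace distance. Any argument that bounds their difference by ``sign cross-terms plus cross-party collisions'' without using the LOCC restriction cannot work.

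\textbf{You have therefore singled out the easy part as the main obstacle.} The $\pm1$ parity cross-terms are removed globally, with no LOCC assumption, by projecting onto collision-free types and applying gentle measurement. This is the first half of the proof of Theorem~\ref{thm:LOCC-decoupling}.

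The real content is comparing $\sigma_{2t}$ with $\sigma_t\otimes\sigma_t$ under a product-form POVM $\{M_i\otimes M_i'\}$. The paper does this by transposing the second party's operator and bounding the trace norm of the resulting operator $\Delta=\sum_\delta\Delta_\delta$ via the probability that the two types share a nonzero index. A single-party variance bound in the style of the phase-concentration theorem does not give this.

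Theorem~\ref{thm:LOCC-decoupling} also requires both parties to see the same component states. So it is not ``exactly'' your Step~3 target, where the supports differ.

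The correct order, as in the paper, is:
\begin{enumerate}
\item With $(H,g)$ still shared, use Theorem~\ref{thm:LOCC-decoupling} to give the committer independent phases $\widetilde R$.
\item Only then simulate each party's phase-averaged copies locally by samples via Theorem~\ref{thm:GMMY-simulation}.
\item Apply your cross-party collision argument to swap $(H,g)$ for $(\widehat H,\widehat g)$ on the committer's side.
\item Undo the per-party simulations.
\end{enumerate}
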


The main ingredient of Theorem~\ref{thm:informal-decoupling} is the decoupling of phases (Theorem~\ref{thm:LOCC-decoupling}). Informally, we prove that the LOCC indistinguishability, defined in~\cite{AGL24}, is also satisfied to the graph state with random phases, as reflected in the following lemma.

\begin{lemma}[LOCC phase decoupling; informal]\label{lem:locc-decoupling-informal}
    Define $$\ket{\psi^\alpha}=\frac{1}{2^{n/2}}\sum_{x\in \{0,1\}^n}\alpha(x)\ket{x}$$ where each $\alpha(x)$ is a random phase of magnitude $1$. Then for any $t$ that is a polynomial in $n$,
    $$\EE_\alpha \left[\ketbra{\psi^\alpha}{\psi^\alpha}^{\otimes 2t}_{\mathsf{AB}}\right]\approx_{\mathrm{LOCC}} \EE_{\alpha,\beta}\left[\ketbra{\psi^\alpha}{\psi^\alpha}^{\otimes t}_{\mathsf{A}}\otimes \ketbra{\psi^\beta}{\psi^\beta}^{\otimes t}_{\mathsf{B}}\right]\,.$$
    Here $\approx_{\mathrm{LOCC}}$ means that the states are indistinguishable against any two-party LOCC protocols between $\mathsf{A}$ and $\mathsf{B}$.
\end{lemma}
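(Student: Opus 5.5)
The plan is to write both states explicitly as mixtures over ``type'' states and then reduce LOCC distinguishability to a bound that separable measurements, and hence LOCC protocols, cannot exploit. For the continuous-phase case, averaging over $\alpha$ kills every cross term $\ket{\vec x}\bra{\vec y}$ unless $\vec y$ is a permutation of $\vec x$. Hence $\EE_\alpha[\ketbra{\psi^\alpha}{\psi^\alpha}^{\otimes m}]$ equals $2^{-nm}\sum_{\vec x}\sum_{\pi\in \mathrm{Stab}}\ket{\vec x}\bra{\pi\vec x}$. Up to $O(m^2/2^n)$ trace distance, this is the projector onto the symmetric subspace restricted to distinct-coordinate tuples, normalized. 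I write it as $\rho_m\approx \EE_{T}\ketbra{\mathrm{Sym}_T}{\mathrm{Sym}_T}$, where $T$ is a uniformly random $m$-subset of $\{0,1\}^n$ and $\ket{\mathrm{Sym}_T}$ is the uniform superposition over orderings of $T$ (this is the simulation of Lemma~\ref{lem:informal-averaging}).

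With $m=2t$, the joint state is then approximately $\EE_{T}\ketbra{\mathrm{Sym}_T}{\mathrm{Sym}_T}_{\mathsf{AB}}$. The product state is approximately $\EE_{T_A,T_B}\ketbra{\mathrm{Sym}_{T_A}}{\mathrm{Sym}_{T_A}}\otimes\ketbra{\mathrm{Sym}_{T_B}}{\mathrm{Sym}_{T_B}}$. Since $T_A,T_B$ are disjoint except with probability $O(t^2/2^n)$, the product state is approximately the average over a uniform $2t$-set $T$ and a uniform split $T=T_A\sqcup T_B$.

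Decompose $\ket{\mathrm{Sym}_T}=\binom{2t}{t}^{-1/2}\sum_{T_A\sqcup T_B=T}\ket{\mathrm{Sym}_{T_A}}\ket{\mathrm{Sym}_{T_B}}$. The joint state is then the dephased (product) mixture plus off-diagonal terms $\ket{\mathrm{Sym}_{T_A}}\bra{\mathrm{Sym}_{T_A'}}\otimes\ket{\mathrm{Sym}_{T_B}}\bra{\mathrm{Sym}_{T_B'}}$ with $T_A\neq T_A'$. The core step is to show that any LOCC measurement operator $M=\sum_i A_i\otimes B_i$ (separable, $A_i,B_i\succeq 0$) sees these cross terms with total weight $\mathrm{negl}(n)$.

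To bound the cross terms I would exploit the randomness of $T$. After averaging over $T$, the off-diagonal part is a fixed operator $\Delta$ that commutes with $U^{\otimes t}\otimes U^{\otimes t}$ for all permutation matrices $U$ of $\{0,1\}^n$, and its action is supported on pairs where $A$'s set and $B$'s set ``exchange'' elements. For a product operator $A\otimes B$ I would bound $|\mathrm{Tr}[(A\otimes B)\Delta]|$ by $O(t^2/2^n)\cdot\mathrm{Tr}[(A\otimes B)\sigma]$, where $\sigma$ is the product state. The heuristic is that an exchange of one element requires $A$ to have coherence between sets differing in an element that $B$ simultaneously holds, and the density of such coincidences is $t/2^n$. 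Summing over the decomposition $\sum_i A_i\otimes B_i\preceq I$ then gives a bound of $O(t^2/2^n)$ on the distinguishing advantage against separable (hence LOCC) measurements. This mirrors the approach of~\cite{AGL24} for Haar states, which I would follow as the template.

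The $\pm1$ sign case adds surviving pairings in which coordinates cancel in pairs. These live on tuples with repeated entries, which carry $O(t^2/2^n)$ weight, so I would truncate them first.

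The main obstacle is this multiplicative bound for a product operator. Cauchy--Schwarz on each exchange term naturally produces $\sqrt{\mathrm{Tr}[A\,\cdot]\mathrm{Tr}[A\,\cdot']}$-type expressions. Making these sum to a quantity controlled by $\mathrm{Tr}[(A\otimes B)\sigma]$, rather than by the trace norm of $\Delta$ (which is $\Theta(1)$ because the states are globally distinguishable), is the delicate part. If a direct attempt fails, I would first try an averaging-over-$U$ symmetrization that reduces $A$ and $B$ to invariant operators, for which the relevant traces can be computed explicitly.
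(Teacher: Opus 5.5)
There is a genuine gap, and it is exactly the step you flag as the core. The inequality you propose is $|\mathrm{Tr}[(A\otimes B)\Delta]|\le O(t^2/2^n)\,\mathrm{Tr}[(A\otimes B)\sigma]$ for every product $A,B\succeq 0$. It is false, so no per-term estimate followed by summing over $\sum_i A_i\otimes B_i\preceq I$ can work.

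Take $t=1$ and write $N=2^n$. Your own decomposition gives $\Delta=\frac{1}{N(N-1)}(\mathrm{SWAP}-D)$ and $\sigma=\frac{1}{N(N-1)}(I-D)$, where $D=\sum_x\ketbra{xx}{xx}$. Choose $A=B=\ketbra{+}{+}$ with $\ket{+}=N^{-1/2}\sum_x\ket{x}$. Then $\mathrm{Tr}[(A\otimes B)\Delta]$ and $\mathrm{Tr}[(A\otimes B)\sigma]$ both equal $N^{-2}$; the same holds if $\sigma$ is the exact product state $I/N^2$. The ratio is therefore $1$, not $O(1/N)$.

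The aggregate bound is still true, but only because such aligned product operators cannot carry much weight in any separable resolution of the identity. That is a global constraint, and a term-by-term estimate cannot see it. Your fallback does not supply it either. Twirling by $U^{\otimes t}\otimes U^{\otimes t}$ acts jointly, so it yields an invariant separable operator, not a product of individually invariant $A$ and $B$.

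The missing idea, which the paper uses and attributes to \cite{AGL24}, is the partial transpose on $\mathsf B$. One has $\mathrm{Tr}[(A\otimes B)X]=\mathrm{Tr}[(A\otimes \overline{B})X^{\Gamma}]$. For a local POVM $\{A_i\otimes B_i\}$, the operators $A_i\otimes\overline{B_i}$ are still PSD and sum to $I$. So the LOCC advantage is at most $\|(\rho_{2t}-\rho_t\otimes\rho_t)^{\Gamma}\|_1$.

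This replaces your $\Theta(1)$ quantity $\|\Delta\|_1$ with a small one. In the $t=1$ example, $\mathrm{SWAP}^{\Gamma}=N\ketbra{\Phi^+}{\Phi^+}$, so $\|\Delta^{\Gamma}\|_1=2/N$. In general, after the partial transpose the paper writes each cross term $\ket{\phi_T}\bra{\phi_U}\otimes\ket{\phi_{T'}}\bra{\phi_{U'}}$ (with $T+T'=U+U'$ and $T\neq U$) as $\ketbra{X}{Y}$, where $X=(T,U')$ and $Y=(U,T')$. This is block diagonal in $\delta=T-U'=U-T'$. Within a block:
\begin{itemize}
\item The pairs in which $T$ and $U'$ share a nonzero index have total weight at most $t^2\sum_k p_k^2$ by the birthday bound. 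They form a rank-one matrix minus a diagonal, so their trace norm is at most twice that weight.
\item There is at most one collision-free pair per block. Its coupling to the rest is a rank-two term, which Cauchy--Schwarz over $\delta$ controls.
\end{itemize}
Together these give $\|\Delta^{\Gamma}\|_1\le 2t^2\sum_k p_k^2+2t\sqrt{\sum_k p_k^2}=O(t\,2^{-n/2})$.

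Your reduction to type-state mixtures and your split decomposition of $\ket{\mathrm{Sym}_T}$ are correct and match the paper's expansion of $\ket{\phi_S}$. What you need to add is the partial-transpose relaxation and this trace-norm computation, in place of the per-term multiplicative bound.
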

With Lemma~\ref{lem:locc-decoupling-informal}, we can replace the state oracle with the corresponding sampling oracle: only if the phases of the states received by the two parties are decoupled, the sample-based simulator (Lemma~\ref{lem:informal-averaging}) can be applied locally on each side, keeping the simulated protocol LOCC.

But why does Theorem~\ref{thm:informal-decoupling} help? We follow the following method, whose idea originates from~\cite{AGL24,AGL25}. Once $J$ is fixed, we can view the receiver and the committer as internally sampling $\Theta$ and $\widehat{\Theta}$, respectively, and simulating their corresponding oracles. This gives a plain-model QCCC protocol without oracle access, though possibly an inefficient one. But a QCCC commitment protocol cannot be both statistically hiding and statistically binding, regardless of its efficiency. Thus, the decoupled protocol admits an information-theoretic attack against either hiding or binding. Theorem~\ref{thm:informal-decoupling} guarantees that this attack also breaks the original protocol on average over $\Theta$.

We have constructed a possibly inefficient informational-theoretic attack with only access to $J$, now the only thing to do is to construct the auxiliary oracle $\mathsf{Help}$ for carrying it out efficiently. If the decoupled protocol is not statistically hiding, $\mathsf{Help}$ identifies which committed bit is more likely given the transcript; if the decoupled protocol is not statistically binding, $\mathsf{Help}$ teaches the malicious committer what to send to open the other bit. Since these instructions are determined by the averaged decoupled protocol $\widetilde \Pi$, $\mathsf{Help}$ remains independent of the actual choice of $\Theta$.
Thus, for each fixed commitment scheme, we obtain a helper oracle that breaks the scheme while preserving the security of the EV-OWPuzz. It remains to combine the attacks into a single oracle and then lift to the standard oracle model. We accomplish this through diagonalization, completing the proof of Theorem~\ref{thm:informal-2}.

\paragraph{Outline of the proof.} Section~\ref{sec:commit-separation} is dedicated to the proof of the separation of QCCC commitment schemes from EV-OWPuzz. Section~\ref{sec:commitment-oracle} describes the base oracle, as presented informally in this technical overview; we also lift the access model to a full unitary with controlled access and access to its inverse and conjugate, similar to the previous informal argument in Section~\ref{sec:tech-overview-owf}. Section~\ref{sec:commitment-ev-owpuzz} briefly discusses the reason why EV-OWPuzz still exists relative to the oracle. 
Section~\ref{sec:LOCC-decoupling} formalizes and proves Theorem~\ref{thm:informal-decoupling}. Its central technical ingredient is the LOCC decoupling of phases, proved in Appendix~\ref{sec:phase-concentration}. By exploiting the restriction to classical communication, this lemma allows us to decouple the phases seen by the two parties, providing the crucial step that enables our attack on QCCC commitments. Establishing this lemma is one of the main technical challenges of the proof, and the lemma may also be of independent interest.
Section~\ref{sec:diagonalization-oracle} introduces the diagonalization technique and Section~\ref{sec:commitment-nonexistence} formally proves the non-existence of QCCC commitment schemes relative to the oracle.

\ifnum\llncs=1
\bibliographystyle{splncs04}
\else
\bibliographystyle{alpha}
\fi

\renewcommand*{\theHsection}{appendix.\Alph{section}}
\section{Preliminaries}
\subsection{Probability theory}

\begin{lemma}[Borel--Cantelli] \label{Borel-Cantelli}
    Suppose $\{E_n\}_{n\in \NN}$ is a series of events in a probability space $\Omega$.
    If $$\sum_{n\in \NN}\Pr(E_n)<\infty $$ then the probability that infinitely many of the events occur is $0$.
\end{lemma}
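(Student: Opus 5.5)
The plan is the standard first Borel--Cantelli argument: express ``infinitely many $E_n$ occur'' as a limsup event, bound its probability by a tail sum, and let the tail vanish. Concretely, define
\[
  E^{\infty} \;=\; \bigcap_{N\in\NN}\,\bigcup_{n\ge N} E_n ,
\]
which is exactly the set of outcomes $\omega\in\Omega$ lying in $E_n$ for infinitely many $n$. It is measurable as a countable intersection of countable unions of events.

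The first step is to note that for every fixed $N$ we have $E^{\infty}\subseteq \bigcup_{n\ge N}E_n$. Monotonicity together with countable subadditivity of the probability measure then gives
\[
  \Pr(E^{\infty}) \;\le\; \Pr\Bigl(\bigcup_{n\ge N}E_n\Bigr) \;\le\; \sum_{n\ge N}\Pr(E_n).
\]

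The second step uses the hypothesis $\sum_{n\in\NN}\Pr(E_n)<\infty$. Since the series converges, its tails $\sum_{n\ge N}\Pr(E_n)$ tend to $0$ as $N\to\infty$. The left-hand side $\Pr(E^{\infty})$ does not depend on $N$, so letting $N\to\infty$ yields $\Pr(E^{\infty})=0$.

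There is no real obstacle here. The only points needing care are the identification of ``infinitely many occur'' with the limsup set, and the use of countable (rather than finite) subadditivity, which holds for any probability measure.
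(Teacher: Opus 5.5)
Your proof is correct: it is the standard argument for the first Borel--Cantelli lemma, bounding the probability of the limsup event by the tail sums $\sum_{n\ge N}\Pr(E_n)$, which vanish as $N\to\infty$. The paper states this lemma in its preliminaries as a standard fact and gives no proof of its own, so there is no alternative route to compare against.
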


\begin{lemma}[Efron--Stein inequality]\label{lem:Efron-stein-original}
    Let $X_1,X_2,\cdots,X_n$ and $X_1',X_2',\cdots,X_n'$ be independent random variables taking values on $\mathcal X$, with $X_i$ and $X_i'$ having the same distribution for all $i$.
    Let $$X=(X_1,X_2,\cdots,X_n)\,,$$
    $$X^{(i)}=(X_1,\cdots,X_{i-1},X_i',X_{i+1},\cdots,X_n)\,.$$
    Then, for any measurable function $f:\mathcal X^n\to \mathbb R$,
    $$\Var(f(X))\leq \frac{1}{2}\sum_{i=1}^n \EE[(f(X)-f(X^{(i)}))^2]\,.$$
\end{lemma}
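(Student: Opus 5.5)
The plan is to prove the Efron--Stein inequality (Lemma~\ref{lem:Efron-stein-original}) by the standard martingale (Doob) decomposition of $f(X)-\EE f(X)$. We may assume $\EE[f(X)^2]<\infty$, since otherwise the right-hand side is infinite or the statement is vacuous in the usual convention. Write $\EE_i[\cdot]=\EE[\cdot\mid X_1,\dots,X_i]$, with $\EE_0=\EE$, and set $\Delta_i=\EE_i f(X)-\EE_{i-1}f(X)$. Then $f(X)-\EE f(X)=\sum_{i=1}^n\Delta_i$. The increments are orthogonal: for $i<j$, $\Delta_i$ is $(X_1,\dots,X_i)$-measurable and $\EE_{j-1}\Delta_j=0$. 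Hence
\[
\Var(f(X))=\sum_{i=1}^n\EE[\Delta_i^2].
\]

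The key step is to bound each $\EE[\Delta_i^2]$ by a conditional variance. Let $\EE^{(i)}$ denote expectation over $X_i$ alone, with the other coordinates fixed. By independence, $\EE_{i-1}f(X)=\EE_i\big[\EE^{(i)}f(X)\big]$. Therefore
\[
\Delta_i=\EE_i\big[f(X)-\EE^{(i)}f(X)\big].
\]
Conditional Jensen then gives
\[
\EE[\Delta_i^2]\le \EE\big[(f(X)-\EE^{(i)}f(X))^2\big]=\EE\big[\Var^{(i)}(f(X))\big],
\]
where $\Var^{(i)}$ is the variance in $X_i$ with the other coordinates fixed. This step uses independence of the coordinates in an essential way; it is the only real subtlety, and it needs care with the measurability of these conditional expectations.

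Finally, express each conditional variance through an independent copy. For a fixed value of the other coordinates, $X_i$ and $X_i'$ are i.i.d., so the elementary identity $\Var(Y)=\tfrac12\EE[(Y-Y')^2]$ for i.i.d.\ $Y,Y'$ gives
\[
\Var^{(i)}(f(X))=\tfrac12\,\EE^{(i)}\big[(f(X)-f(X^{(i)}))^2\big].
\]
Taking expectations over the remaining coordinates (Fubini) and summing over $i$ yields
\[
\Var(f(X))\le\tfrac12\sum_{i=1}^n\EE\big[(f(X)-f(X^{(i)}))^2\big],
\]
which is the statement. I expect no step to be a serious obstacle. The main point to handle carefully is the Jensen step, where $\EE_{i-1}$ is rewritten as $\EE_i$ applied to the $X_i$-average; this is exactly where independence of $X_1,\dots,X_n$ enters.
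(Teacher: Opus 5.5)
Your proposal is correct and is essentially the standard argument the paper defers to (the paper gives no proof of its own; it cites Steele and the proof in Boucheron--Lugosi--Massart, Section~3.1). That argument is exactly yours: the Doob martingale decomposition with orthogonal increments, the identity $\mathbb{E}_{i-1}f(X)=\mathbb{E}_i[\mathbb{E}^{(i)}f(X)]$ from independence followed by conditional Jensen, and $\mathrm{Var}(Y)=\frac12\mathbb{E}[(Y-Y')^2]$ for an independent copy.

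The only unproved claim is your reduction to square-integrable $f$, a hypothesis the cited source assumes but the paper's statement omits. The claim is true. To justify it, apply the bounded case to the truncations $\max(-M,\min(f,M))$, which can only shrink each $(f(X)-f(X^{(i)}))^2$. Then let $M\to\infty$: Chebyshev keeps the truncated means bounded, and Fatou gives the conclusion.
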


\begin{proof}
    This is first proven in~\cite{Steele-Efron-Stein}.
    A modern proof can be found in~\cite[Section 3.1]{BLM-concentration-inequalities}.
\end{proof}

\subsection{QCCC primitives}

\begin{definition}[pd-OWF, adapted from \cite{BBO+bot-OWF}; see also~\cite{BM24}]
    Let $n=n(\lambda)$ and $m=m(\lambda)$ be polynomials in the security parameter $\lambda$.
    A QPT algorithm $\Gen$ which on input $x\in \{0,1\}^n$ output a string in $\{0,1\}^m$ is a pseudodeterministic one-way function, if for sufficiently large $\lambda$, there exist a function $H_\lambda:\{0,1\}^n\to \{0,1\}^m$ and a set $\mathcal G\subset \{0,1\}^n$ such that the following conditions hold.
    \begin{itemize}
        \item (Pseudodeterminism)
        \begin{itemize}
            \item For every $x\in \mathcal G$,\footnote{The original definition in~\cite{BBO+bot-OWF} of pseudodeterminism only requires $\Gen$ to output the deterministic function value with probability $1-1/\poly(\lambda)$. With a repetition and majority voting of the function value, we can boost the probability to $1-\negl(\lambda)$.}
            $$\Pr\left[\Gen(1^\lambda,x)=H(x)\right]\geq 1-\negl(\lambda)\,.$$
            \item The size of $\mathcal G$ is at least $(1-\mu(\lambda))2^n$.
            In other words, $$\Pr_{x\gets \{0,1\}^n}[x\in \mathcal G]\geq 1-\mu(\lambda)\,.$$
        \end{itemize}
        \item (Security) For any QPT algorithm $\mathcal A$ with advice $c$,
        \[
            \Pr_{\substack{x \gets \{0,1\}^n \\ y := H(x)}}
            \Bigl[H\bigl(\mathcal{A}(y, c)\bigr) = y\Bigr]
            \leq \negl(\lambda).
        \]
    \end{itemize}

    We say $\Gen$ is a pseudodeterministic one-way function (with inverse-polynomial error), if $\mu(\lambda)\leq 1/3$ for sufficiently large $\lambda$.

    Without loss of generality, we may assume $n(\lambda) = \lambda$ later on. Our results also apply to general parameter regimes with a standard padding argument.
\end{definition}

\begin{definition}[EV-OWPuzz, adapted from \cite{CGG-EV-OWPuzz}]
    An efficiently verifiable one-way puzzle is a pair of a sampling and verification algorithms $(\mathsf{Samp},\mathsf{Ver})$ with the following syntax.
    \begin{itemize}
        \item $\mathsf{Samp}(1^\lambda)\to (s,k)$ is a QPT algorithm that outputs a pair of classical strings $(s,k)$.
        We refer to $s$ as the puzzle and $k$ as the key.
        Without loss of generality we may assume that $k\in \{0,1\}^\lambda$.
        \item $\mathsf{Ver}(s,k)\to b$ is a QPT algorithm that on input any pair of classical strings $(s,k)$, outputs a bit $b$.
    \end{itemize}
    And these algorithms satisfy the following properties:
    \begin{itemize}
        \item (Correctness) Outputs of the sampler pass verification with overwhelming probability.
        In other words,
        $$\Pr_{(s,k)\gets \mathsf{Samp}(1^\lambda)}\bigl[\mathsf{Ver}(s,k)=1\bigr]\geq 1-\negl(\lambda)\,.$$
        \item (Security) Given a puzzle $s$ sampled by $\mathsf{Samp}$, it is hard to find a key $k$ that can pass verification with non-negligible probability. In other words, for all QPT algorithms $\mathcal A$ with advice $c$,
        $$\Pr_{(s,k)\gets \mathsf{Samp}(1^\lambda)}\Bigl[\mathsf{Ver}\bigl(s,\mathcal A(s,c)\bigr)=1\Bigr]\leq \negl(\lambda)\,.$$
    \end{itemize}
\end{definition}

\begin{definition}[QCCC one-time signature, adapted from \cite{CGG-EV-OWPuzz}]
    A QCCC one-time signature scheme is a set of QPT algorithms $(\mathsf{KeyGen},S,V)$ with the following syntax.
    \begin{itemize}
        \item $\mathsf{KeyGen}(1^\lambda)\to (vk,sk)$ takes in the security parameter as input and outputs a signing key $sk$ and a verification key $vk$;
        \item $S(sk,m)\to \sigma$ takes in the signing key and a message as input, and outputs a classical signature $\sigma$;
        \item $V(vk,m,\sigma)\to b$ takes in a verification key $vk$, a message $m$, and a signature $\sigma$ as input, and outputs a single bit $b$.
    \end{itemize}
    And these algorithms satisfy the following properties:
    \begin{itemize}
        \item (Correctness) For all $m$ in the message space,
        \begin{equation*}
          \Pr_{\substack{
            (vk,sk) \gets \mathsf{KeyGen}(1^\lambda) \\
            \sigma \gets S(sk,m)
          }}
          \bigl[V(vk,m,\sigma)=1\bigr]
          \geq 1-\negl(\lambda).
        \end{equation*}
        \item (One-time Security) For all $m_0\neq m_1$ in the message space and for all QPT algorithms $\mathcal A$ with advice $c$,
        \begin{equation*}
          \Pr_{\substack{
            (vk,sk) \gets \mathsf{KeyGen}(1^\lambda) \\
            \sigma_0 \gets S(sk,m_0)
          }}
          \Bigl[V\bigl(vk,m_1, \mathcal{A}(vk,\sigma_0,c)\bigr)=1\Bigr]
          \leq \negl(\lambda).
        \end{equation*}
    \end{itemize}
\end{definition}

\begin{theorem}[\cite{CGG-EV-OWPuzz}]\label{thm:EV-OWPuzz-equal-QCCC-OTS}
    There exists a fully black-box construction of QCCC one-time signature scheme from EV-OWPuzz.
    Also, there exists a fully black-box construction of EV-OWPuzz from QCCC one-time signature scheme.
\end{theorem}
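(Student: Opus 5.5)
The statement is Theorem~\ref{thm:EV-OWPuzz-equal-QCCC-OTS}, the equivalence of EV-OWPuzz and QCCC one-time signatures. It is cited from \cite{CGG-EV-OWPuzz}; I would reprove both directions directly, following the classical Lamport template and its converse.

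\emph{From QCCC one-time signatures to EV-OWPuzz.} Given $(\mathsf{KeyGen},S,V)$, fix two distinct messages $m_0\neq m_1$. Define $\Samp(1^\lambda)$ as follows: run $(vk,sk)\gets\mathsf{KeyGen}(1^\lambda)$, then $\sigma_0\gets S(sk,m_0)$ and $\sigma_1\gets S(sk,m_1)$. Output the puzzle $s=(vk,\sigma_0)$ and the key $k=\sigma_1$. Let $\Ver((vk,\sigma_0),k)$ output $V(vk,m_1,k)$.

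Correctness follows immediately from correctness of the signature on $m_1$. For security, an adversary that finds a key from $s=(vk,\sigma_0)$ has produced a signature on $m_1$ given $vk$ and one signature on $m_0$. This is exactly a one-time forgery, and it keeps the same advice. The only care needed is that $\sigma_1$ is sampled after $\sigma_0$ but never shown to the adversary, so the puzzle marginal is exactly the one-time security experiment.

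\emph{From EV-OWPuzz to QCCC one-time signatures.} Use a Lamport-style construction for $\ell$-bit messages. $\mathsf{KeyGen}$ runs $\Samp(1^\lambda)$ independently $2\ell$ times to get pairs $(s_{i,b},k_{i,b})$ for $i\in[\ell]$ and $b\in\{0,1\}$. It sets $vk=(s_{i,b})_{i,b}$ and $sk=(k_{i,b})_{i,b}$. Signing $m$ reveals $(k_{i,m_i})_i$. The verifier $V$ runs $\Ver(s_{i,m_i},k_{i,m_i})$ for every $i$ and accepts iff all accept.

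Correctness holds by a union bound over $\ell$ negligible failure probabilities. Security goes through a hybrid/guessing reduction. Given a forger $\calA$ for $m_0\neq m_1$, the reduction receives a challenge puzzle $s^*$. It picks a uniformly random index $i^*$ among the positions where $m_0$ and $m_1$ differ, and plants $s^*$ at position $(i^*,m_{1,i^*})$. It samples every other pair itself, so it can produce the signature $\sigma_0$ on $m_0$, since that signature never uses the planted key.

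The forgery's $i^*$-th component is then a key accepted by $\Ver$ for $s^*$. This inverts the puzzle with probability at least $\epsilon/\ell$, where $\epsilon$ is the forger's success probability. The view is identically distributed, because the planted puzzle has the same distribution as an honest one.

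The main subtlety is that $\Ver$ is randomized and the security notion bounds only the acceptance probability. I therefore plan to define $V$'s acceptance as the joint event that all the independent $\Ver$ runs accept. The reduction then outputs the $i^*$-th component as its answer; the probability that $\Ver$ accepts it is at least the probability that all of $V$'s checks pass. Non-uniform advice transfers verbatim in both directions.

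If arbitrary-length messages are required, I would first sign a hash of the message. Without collision-resistant hashing available in a black-box way, however, I would restrict to fixed polynomial-length messages, which matches the one-time signature definition above. Both constructions use only black-box access to the primitives, and the security reductions use the adversary only as a black box, so they are fully black-box.
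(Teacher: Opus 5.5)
The paper gives no proof of this statement and simply imports it from \cite{CGG-EV-OWPuzz}, so your argument has to stand on its own. Your Lamport direction (EV-OWPuzz $\Rightarrow$ one-time signature) is correct and is the standard route. Your observation that the $i^*$-th check accepts with probability at least the probability that all checks accept is also right. One minor point: you do not actually lose the factor $1/\ell$. Since $m_0\neq m_1$ are fixed in the security game and can be put into the reduction's advice, planting $s^*$ at any position where they differ already yields success probability at least $\epsilon$.

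The gap is in the other direction. Your sampler calls the signing algorithm twice on the same key: first $\sigma_0\gets S(sk,m_0)$, then $\sigma_1\gets S(sk,m_1)$. In the QCCC model only $vk$ and the signature need to be classical; note that the definition only says $S$ ``outputs a classical signature.'' The signing key is never communicated, so it may be a quantum state that $S(sk,\cdot)$ consumes. A black-box $\Samp$ cannot clone such a key. Moreover, the one-time correctness guarantee covers only a single signature from a fresh key. So nothing ensures that your key $\sigma_1$ passes $V(vk,m_1,\cdot)$, and your construction is only correct for schemes whose signing keys happen to be reusable, for example classical ones.

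The fix is to drop $\sigma_0$ entirely. Let $\Samp$ output puzzle $vk$ and key $\sigma\gets S(sk,m^*)$ for a fixed message $m^*$, and set $\Ver(vk,k)=V(vk,m^*,k)$. Correctness is then exactly signature correctness. Security holds because one-time security for any pair $(m_0,m^*)$ with $m_0\neq m^*$ implies no-message unforgeability. Concretely, a puzzle inverter given $vk$ is a one-time forger that ignores $\sigma_0$, and the marginal of $vk$ is identical in the two experiments.
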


\begin{definition}[QCCC commitment]\label{def:QCCC-commitment}
    We say a two-phase protocol with two parties $C$ and $R$ is a $\eta_b$-correct, $\epsilon$-computationally (resp.\ statistically) hiding, and $\delta$-computationally (resp.\ statistically) (sum-)binding, if it satisfies the following syntax:
    \begin{itemize}
        \item (Commit stage) Both parties receive the unary security parameter $1^\lambda$.
        The committer gets a bit $b\in \{0,1\}$.
        The committer interacts with the receiver using only classical messages, and together they produce a transcript $z$.
        At the end of the stage, both parties hold a private quantum state $\rho_C$ and $\rho_R$ respectively.
        \item (Opening stage) Both parties receive the transcript $z$ as well as their private quantum state $\rho_C$ and $\rho_R$ respectively.
        The committer interacts with the receiver using only classical messages, and together they produce a transcript $d$.
        At the end of the stage, the receiver either outputs a message or rejects.
    \end{itemize}
    And the protocol satisfies the following properties:
    \begin{itemize}
        \item ($\eta_b$-Correctness) For all messages $b$, when $C$ and $R$ interact honestly, the probability that $R$ outputs $b$ at the end of the opening stage is at least $1-\eta_b$.
        \item ($\epsilon$-computational (resp.\ statistical) hiding) At the commit stage, for any QPT (resp.\ unbounded) adversarial receiver $R'$, the transcript of the interaction between $R'$ and $C$ are computationally $\epsilon$-indistinguishable between when the committer gets $0$ and when the committer gets $1$.
        \item ($\delta$-computational (resp.\ statistical) sum-binding) For all QPT (resp.\ unbounded) adversarial committer $C'$, the probability that $C'$ wins the following security game is at most $(1+\delta)/2$.
        \begin{enumerate}
            \item In the commit stage, the adversarial committer does not receive the message to commit $b$.
            The honest receiver $R$ interacts with the adversarial committer $C'$ to produce a transcript $z$ and the private quantum states $\rho_C$, $\rho_R$.
            \item After the commit stage and before the opening stage, $C'$ samples a uniformly random bit $b$.
            \item In the opening stage, the honest receiver $R$ is given $\rho_R$ and $z$, while $C'$ is given $z$, $\rho_C$, and $b$.
            The adversarial committer $C'$ interacts with the honest receiver $R$, and they produce a transcript $d$.
            At the end of the stage, the receiver outputs a message $b'$ or rejects.
            $C'$ wins if $b'=b$.
        \end{enumerate}
    \end{itemize}
    If $\eta_b,\epsilon,\delta$ are all negligible in $\lambda$ and the protocol is efficient, then we will drop the parameters and call it a QCCC commitment scheme.
\end{definition}

Informally, the following theorem states that there does not exist a QCCC bit commitment scheme that is secure to any unbounded adversaries.

\begin{lemma}\label{lem:dichotomy-hiding-binding}
For any $\eta_b$-correct, $\epsilon$-statistically hiding, and $\delta$-statistically binding QCCC (possibly inefficient) commitment scheme $\Pi$,
$$
\eta_0+\eta_1+\epsilon+\delta \geq 1.
$$
\end{lemma}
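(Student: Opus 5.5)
We prove the dichotomy by explicitly building an unbounded cheating committer $C^*$ from the honest protocol. The key fact is that with purely classical communication, the joint distribution of (transcript, receiver's private state) is a classical mixture over transcripts. It can therefore be reproduced by a committer who samples a transcript prefix consistent with the receiver's view.

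\textbf{Step 1: the cheating committer.} Fix the receiver's honest strategy $R$. For $b\in\{0,1\}$, let $p_b(z)$ be the probability of commit transcript $z$ when the honest $C$ commits to $b$. Let $\rho^{b,z}_{CR}$ be the resulting joint state conditioned on $z$. Because all messages are classical, conditioned on the transcript so far, $R$'s internal state is a function of $R$'s own randomness and the messages it received. It does not depend on anything else about the committer.

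$C^*$ runs the commit stage as the honest committer with input $0$. After the commit stage it holds $z$ and $\rho_C$, distributed as in the honest $b=0$ execution. When asked in the opening stage to open to a uniformly random $b$, it proceeds as follows.
\begin{itemize}
\item If $b=0$, it continues honestly.
\item If $b=1$, it tries to produce a committer state that, jointly with $R$'s actual state, looks like the honest $b=1$ execution given $z$.
\end{itemize}
The issue is that $R$'s state is entangled with (or correlated to) $C$'s state only through classical communication. So we use an unbounded committer that, instead of committing honestly, samples so that it can later realize either branch.

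\textbf{Step 2: a cleaner route via purification.} Let the unbounded committer simulate the honest commit stage coherently: purify all of its local randomness and measurements except the outgoing messages, which must be measured. The receiver's state conditioned on $z$ is $\rho^{b,z}_R$. By LOCC structure, $\rho^{b,z}_{R}$ depends on $b$ only through the classical posterior of the receiver's own coins given $z$. Concretely, $\rho^{b,z}_R=\sum_r q_b(r\mid z)\sigma_{r,z}$, where $r$ ranges over the receiver's measurement outcomes and randomness, and $\sigma_{r,z}$ does not depend on $b$. Moreover, $q_b(r\mid z)$ itself is independent of $b$, because the committer's messages enter only as conditioning on $z$, and the likelihood factorizes as $p_b(z\mid r)=\prod(\text{committer message probs})\cdot\prod(\text{receiver message probs given } r)$. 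The committer factor does not depend on $r$, so it cancels in the posterior. Hence $\rho^{0,z}_R=\rho^{1,z}_R=:\rho^z_R$.

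This factorization is the crux and the step I expect to need the most care: one must formalize the receiver's quantum operations as instruments and check that the committer-side factor is independent of the receiver's branch.

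\textbf{Step 3: Uhlmann and conclusion.} Take purifications $\ket{\Phi^{b,z}}$ of $\rho^{b,z}_{CR}$ on the committer's side. Since the $R$-marginals coincide, Uhlmann gives a committer unitary $U_z$ mapping $\ket{\Phi^{0,z}}$ to $\ket{\Phi^{1,z}}$ exactly. The unbounded $C^*$ therefore does the following.
\begin{itemize}
\item It commits as the purified honest committer with $b=0$.
\item On opening bit $0$, it opens honestly.
\item On opening bit $1$, it applies $U_z$ and opens honestly as for $b=1$.
\end{itemize}
In this attack the transcript distribution is $p_0$, not $p_1$. The $b=1$ opening succeeds with probability $\sum_z p_0(z)\Pr[\text{accept }1\mid z]$. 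The function $\sum_z p(z)\Pr[\text{accept }1\mid z]$ is evaluated under $p_1$ at value $\ge 1-\eta_1$, and is $1$-bounded. Hiding gives $\|p_0-p_1\|_{TV}\le\epsilon$ against the honest receiver, so this success probability is $\ge 1-\eta_1-\epsilon$. Opening to $0$ succeeds with probability $\ge 1-\eta_0$. The winning probability is therefore $\ge \tfrac12(2-\eta_0-\eta_1-\epsilon)$. Binding requires this to be $\le (1+\delta)/2$, which gives $1-\eta_0-\eta_1-\epsilon\le\delta$, i.e.\ the claimed inequality.
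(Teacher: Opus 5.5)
Your overall strategy is the paper's: commit honestly to $0$, show the receiver's conditional state given $z$ is independent of $b$, move the committer to the $b=1$ conditional state, and pay $\epsilon$ through total variation. Your final computation also matches the paper's.

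The gap is in Step 3. Uhlmann yields a committer-side unitary only if the entire purifying system of the joint state is held by the committer. In your attack it is not. $\rho^z_R$ is mixed because of the receiver's own coins, measurement outcomes and discarded registers, and that purification lives with the receiver or its environment. So the actual state after your purified commit stage is not $\ket{\Phi^{0,z}}$. The Uhlmann unitary $U_z$, which in general acts on the register purifying $\rho^z_R$, is not an operation $C^*$ can perform.

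If you instead place the receiver's purification on the receiver's side, Uhlmann needs equality of the full receiver-side marginal. Equality of the $R$-marginals alone does not guarantee that a local committer operation exists. For example, if the $b=0$ conditional state were $\frac I2\otimes\frac I2$ and the $b=1$ one were $\frac12(\ketbra{00}{00}+\ketbra{11}{11})$, the $R$-marginals coincide but the committer cannot create the correlation.

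What is missing is the product structure the paper uses: conditioned on $z$, the joint state is $\rho^{b,z}_C\otimes\rho^z_R$. This holds because the unnormalized conditional state is the tensor product of the two parties' local instrument branches. Your likelihood factorization in Step 2 essentially shows this, but you only extract the marginal statement from it.

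With the product form, no Uhlmann is needed. The unbounded $C^*$ discards its state and prepares $\rho^{1,z}_C$ whenever $p_1(z)>0$. Transcripts with $p_1(z)=0$ should be counted as failures, a case you also leave unaddressed. After this, the joint state is exactly the honest $b=1$ state given $z$, and your bound $\tfrac12(2-\eta_0-\eta_1-\epsilon)\le\tfrac{1+\delta}{2}$ goes through unchanged.
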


\begin{proof}
For $b\in\{0,1\}$, let $P_b$ denote the distribution of the commit stage transcript $z$ in an honest execution committing to $b$.
We have
\begin{equation*}
  \frac12\sum_z |P_0(z)-P_1(z)|\leq\epsilon\,.
\end{equation*}

We first recall a useful property of QCCC protocols.
Since all communication is classical and the parties start with a product state, conditioned on any fixed transcript $z$, the joint state of the committer and receiver at the end of the commit stage is a product state.
Thus, in an honest execution committing to $b$, conditioned on transcript $z$, the joint state can be written as
\[
\rho^{b,z}_{C}\otimes \rho^z_R.
\]
Importantly, the receiver's conditional state $\rho^z_R$ does not depend on $b$: during the commit stage the receiver receives no input $b$, and its local evolution is completely determined by its own operations and the classical transcript $z$.

In the honest execution of the protocol of committing and opening some $b\in \{0,1\}$, define
$$q_b(z)=\begin{cases}\Pr\big[\text{Receiver outputs }b\mid z\big] & P_b(z)>0\\
0 & P_b(z)=0\end{cases}\,.$$
By correctness,
\begin{equation*}
  \EE_{z\leftarrow P_b}[q_b(z)]\geq 1-\eta_b\,.
\end{equation*}

We now construct an unbounded cheating committer $C'$.
During the commit stage, $C'$ honestly executes the commit stage for $b=0$, producing the resulting transcript distribution $P_0$.

Suppose $C'$ is asked to open $b=0$.
The probability that the receiver output $b=0$ is, by definition, $\EE_{z\leftarrow P_0}[q_0(z)]\geq 1-\eta_0$.

Suppose $C'$ is asked to open $b=1$.
Given the transcript $z$, it discards its current private state and, whenever $P_1(z)>0$, prepares the conditional honest-committer state $\rho_C^{1,z}$, and then executes the honest opening procedure for $b=1$.
This is possible for an unbounded committer: it may simulate the honest execution conditioned on $z$ to arbitrary precision.
Since conditioned on $z$ the receiver's state is exactly $\rho_R^z$, the joint state from this point onward is precisely
\[
\rho_C^{1,z}\otimes \rho_R^z,
\]
and therefore its conditional probability of successfully opening $b=1$ is $q_1(z)$.
Thus, the probability of the receiver outputting $b=1$ is at least $\EE_{z\leftarrow P_0}[q_1(z)]$.

Since $q_b(z)\in[0,1]$, the variational characterization of the statistical distance gives
\begin{equation*}
  \mathbb{E}_{z\leftarrow P_0}[q_1(z)]
  \geq \mathbb{E}_{z\leftarrow P_1}[q_1(z)]-\epsilon\,.
\end{equation*}

Combining with the correctness, we obtain that the unbounded malicious committer wins the sum-binding game with probability at least $1-(\eta_0+\eta_1+\epsilon)/2$.
\begin{equation*}
  1-\frac{\eta_0+\eta_1+\epsilon}{2}
  \leq \Pr[C'\text{ wins}]
  \leq \frac{1+\delta}{2}\,.
\end{equation*}
\end{proof}

\subsection{Quantum computing}

The BBBV theorem we will use in this paper is as follows. For completeness, we include the proof here.

\begin{lemma}[BBBV]\label{BBBV}
    Set $f:\{0,1\}^n\to \{0,1\}$ to be sampled according to some distribution $\mathcal D$.
    Then any oracle circuit $C$ that queries $q$ times to a boolean function oracle must satisfy
    $$ \EE_{f\sim \mathcal D}\left[\| C^{f}\ket{0}- C^{0}\ket0\|^2\right]\leq 4q^2p\,,$$
    where
    $$p=\max_{z} \Pr_{f\sim \mathcal D}[f(z)=1]\,.$$
    
    As a corollary,
    $$\EE_{f\sim \mathcal D}\left[\| C^{f}\ket{0}- C^{0}\ket0\|\right]\leq 2q\sqrt{p}\,.$$
\end{lemma}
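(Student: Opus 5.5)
We follow the standard hybrid argument of Bennett, Bernstein, Brassard and Vazirani. Model $C$ as $U_q O U_{q-1} O\cdots U_1 O U_0$, where $O$ is the phase oracle $O_f\ket{z}\ket{w}=(-1)^{f(z)}\ket{z}\ket{w}$; the bit-flip form $\ket{z}\ket{b}\mapsto\ket{z}\ket{b\oplus f(z)}$ can be treated the same way, since it only acts nontrivially on $\ket{z}$ with $f(z)=1$. For $i=0,\dots,q$, let $\ket{\phi_i}$ be the state after $i$ queries to the all-zero oracle, so that $C^0\ket0 = U_q\cdots\ket{\phi_{q}}$. Define the hybrids $H_i = U_q O_f\cdots U_{i+1}O_f\, U_i O_0 \cdots U_1 O_0 U_0\ket0$, in which the first $i$ queries go to the zero oracle and the remaining ones go to $f$. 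Then $H_0=C^f\ket0$ and $H_q=C^0\ket0$. Consecutive hybrids differ only at query $i+1$, and unitarity of the remaining gates gives
\[
\|H_i-H_{i+1}\| = \|(O_f-O_0)\ket{\phi_i}\| = 2\|\Pi_f\ket{\phi_i}\|,
\]
where $\Pi_f=\sum_{z:f(z)=1}\ketbra{z}{z}\otimes I$. The key point is that the states $\ket{\phi_i}$ do not depend on $f$.

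By the triangle inequality and Cauchy--Schwarz,
\[
\|C^f\ket0-C^0\ket0\|^2\le\Bigl(\sum_{i=0}^{q-1}2\|\Pi_f\ket{\phi_i}\|\Bigr)^2\le 4q\sum_{i=0}^{q-1}\|\Pi_f\ket{\phi_i}\|^2 .
\]
Take the expectation over $f\sim\mathcal D$. Since $\ket{\phi_i}$ is independent of $f$,
\[
\EE_f\|\Pi_f\ket{\phi_i}\|^2=\sum_z \Pr_f[f(z)=1]\,\|(\ketbra zz\otimes I)\ket{\phi_i}\|^2\le p\,,
\]
because the query-register weights $\|(\ketbra zz\otimes I)\ket{\phi_i}\|^2$ sum to $1$. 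Summing over the $q$ hybrid steps gives $4q\cdot q p=4q^2p$, which is the first claim.

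The corollary then follows from Jensen's inequality: $\EE\|\cdot\|\le\sqrt{\EE\|\cdot\|^2}\le 2q\sqrt p$.

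The only delicate step is ordering the hybrids so that the states probed at the differing query are the fixed zero-oracle states $\ket{\phi_i}$, rather than states that depend on $f$. Using $f$-dependent states would correlate $\Pi_f$ with the state and break the averaging bound above. With this ordering, everything else is routine.
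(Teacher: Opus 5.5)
Your proof is correct and follows essentially the same route as the paper. The paper cites BBBV's Theorem~3.3 for the per-$f$ hybrid bound $2\sum_{t}\sqrt{\sum_{z:f(z)=1}w_{t,z}}$, with query magnitudes $w_{t,z}$ taken from the zero-oracle run, whereas you derive that bound explicitly; from there both arguments apply Cauchy--Schwarz, average using $\Pr_f[f(z)=1]\le p$, and use Jensen for the corollary.
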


\begin{proof}
Let \(w_{t,z}\) denote the query magnitude of \(z\) immediately before the \(t\)-th query in the computation \(C^0\ket{0}\).
Thus, for every $t\in [q]$, $\sum_z w_{t,z}=1$.
By the BBBV hybrid argument \cite[Theorem~3.3]{BBBV}, for every fixed \(f\),
\[
\left\|C^f\ket{0}-C^0\ket{0}\right\|
 \le
2\sum_{t=1}^{q}
\sqrt{\sum_{z:f(z)=1}w_{t,z}}\,.
\]
Therefore, by Cauchy--Schwarz,
\[
\left\|C^f\ket{0}-C^0\ket{0}\right\|^2
\le
4q\sum_{t=1}^{q}\sum_z
\mathbf 1_{f(z)=1}w_{t,z}\,.
\]
Taking expectation over \(f\leftarrow\mathcal D\) and using
\(\Pr_f[f(z)=1]\le p\), we obtain
\[
\EE_f\!\left[
 \left\|C^f\ket{0}-C^0\ket{0}\right\|^2
\right]
\le
4q\sum_{t,z}w_{t,z}\Pr_f[f(z)=1] 
\le 4q^2p\,.
\]
Notice that no independence among the values \(f(z)\) is needed.
Finally, Jensen's inequality gives
\[
\EE_f\!\left[
\left\|C^f\ket{0}-C^0\ket{0}\right\|
\right]
\le 2q\sqrt p\,.
\]
\end{proof}

We also use the following gentle measurement lemma.

\begin{lemma}\label{lem:gentle-measurement}
    Let $\rho$ be a quantum state, and let $0\preceq M\preceq I$ be a two-outcome measurement.
    Let $\epsilon=\Tr((1-M)\rho)<1$ be the rejecting probability of this measurement, and let
    $$\rho'=\frac{\sqrt{M}\rho\sqrt{M}}{\Tr(M\rho)}$$
    be the post-measurement state when the accept outcome is observed.
    Then
    $$\|\rho-\rho'\|_1\leq 2\sqrt{\epsilon}\,.$$
\end{lemma}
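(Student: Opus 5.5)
The plan is a purification argument: reduce to pure states, where the trace distance is an explicit function of the overlap, and then use contractivity of the trace norm under partial trace. Let $\rho$ act on a system $\mathsf{A}$, and fix a purification $\ket{\psi}_{\mathsf{AR}}$, so that $\rho=\Tr_{\mathsf{R}}\ketbra{\psi}{\psi}$. Set
\[
\ket{\psi'}=\frac{(\sqrt{M}\otimes I_{\mathsf{R}})\ket{\psi}}{\sqrt{\Tr(M\rho)}}\,.
\]
This is a unit vector, since $\|(\sqrt{M}\otimes I)\ket{\psi}\|^2=\bra{\psi}(M\otimes I)\ket{\psi}=\Tr(M\rho)=1-\epsilon>0$. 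Because $\sqrt{M}\otimes I$ acts trivially on $\mathsf{R}$, we have $\Tr_{\mathsf{R}}\ketbra{\psi'}{\psi'}=\rho'$. So $\ket{\psi'}$ purifies the post-measurement state.

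\textbf{Step 1 (overlap bound).} We have
\[
\braket{\psi|\psi'}=\frac{\bra{\psi}(\sqrt{M}\otimes I)\ket{\psi}}{\sqrt{1-\epsilon}}\,.
\]
This quantity is real and nonnegative. The key operator inequality is $\sqrt{M}\succeq M$ whenever $0\preceq M\preceq I$. It follows by diagonalizing $M$ and noting that $\sqrt{\lambda}\ge\lambda$ for every eigenvalue $\lambda\in[0,1]$. Hence
\[
\braket{\psi|\psi'}\ge\frac{\bra{\psi}(M\otimes I)\ket{\psi}}{\sqrt{1-\epsilon}}=\frac{1-\epsilon}{\sqrt{1-\epsilon}}=\sqrt{1-\epsilon}\,.
\]

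\textbf{Step 2 (pure-state trace distance).} For unit vectors we use the identity
\[
\bigl\|\ketbra{\psi}{\psi}-\ketbra{\psi'}{\psi'}\bigr\|_1=2\sqrt{1-|\braket{\psi|\psi'}|^2}\,.
\]
It holds because the difference is a rank-two traceless Hermitian operator, with eigenvalues $\pm\sqrt{1-|\braket{\psi|\psi'}|^2}$. Plugging in Step 1 gives
\[
\bigl\|\ketbra{\psi}{\psi}-\ketbra{\psi'}{\psi'}\bigr\|_1\le 2\sqrt{1-(1-\epsilon)}=2\sqrt{\epsilon}\,.
\]

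\textbf{Step 3 (partial trace).} The partial trace $\Tr_{\mathsf{R}}$ is a CPTP map, so it does not increase the trace norm. Applying it to the two purifications yields
\[
\|\rho-\rho'\|_1\le\bigl\|\ketbra{\psi}{\psi}-\ketbra{\psi'}{\psi'}\bigr\|_1\le 2\sqrt{\epsilon}\,,
\]
which is the claimed bound.

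The only step that needs care is the operator inequality $\sqrt{M}\succeq M$ in Step 1, together with the observation that the overlap is real. Everything else consists of standard facts: the pure-state trace-distance formula and data processing under partial trace.
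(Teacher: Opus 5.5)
Your proof is correct and follows essentially the paper's route. The paper only cites Watrous's book, and its content amounts to the fidelity bound $F(\rho,\rho')\geq\sqrt{1-\epsilon}$ followed by the Fuchs--van de Graaf inequality; your Steps 1--3 are the purification argument behind both facts, and because you work with one explicit pair of purifications you never need Uhlmann's theorem or the fidelity function, so your version is self-contained.
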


\begin{proof}
    This is straightforward from \cite[Corollary 3.15]{Watrous-book} and \cite[Theorem 3.33]{Watrous-book}.
\end{proof}

\begin{lemma}[Equation (3.306) of~\cite{Watrous-book}]\label{lem:diamond-to-output}
   Let $\Phi_1, \dots,\Phi_n$ and $\Psi_1, \dots, \Psi_n$ be channels, and $\Phi = \Phi_n \circ\dots\circ \Phi_1, \Psi = \Psi_n \circ \dots \circ \Psi_1$, then we have
   \begin{align*}
       \|\Phi(\rho) - \Psi(\rho)\|_1\leq \|\Phi-\Psi\|_{\diamond} \leq \sum_{i=1}^n \|\Phi_i - \Psi_i\|_{\diamond}\,,
   \end{align*}
   for any mixed state $\rho$.
\end{lemma}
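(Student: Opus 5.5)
The first inequality comes straight from the definition of the diamond norm, and the second from a telescoping hybrid argument combined with two standard properties of that norm: submultiplicativity, and the fact that channels have norm one. Throughout I use
\[
\|\Xi\|_\diamond=\sup_{k}\sup_{\|X\|_1\le 1}\|(\Xi\otimes \mathrm{id}_k)(X)\|_1 ,
\]
for a Hermiticity-preserving linear map $\Xi$, where the supremum is over all ancilla dimensions $k$. Taking the supremum over all $k$ avoids any bookkeeping about matching ancilla dimensions between different maps.

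For the first inequality, take the trivial ancilla $k=1$ and $X=\rho$. Since $\|\rho\|_1=1$ for a density matrix, this gives $\|\Phi(\rho)-\Psi(\rho)\|_1\le\|\Phi-\Psi\|_\diamond$ immediately.

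For the second inequality, define the hybrids
\[
H_j=\Psi_n\circ\cdots\circ\Psi_{j+1}\circ\Phi_j\circ\cdots\circ\Phi_1,\qquad j=0,\dots,n .
\]
Then $H_n=\Phi$ and $H_0=\Psi$, so $\Phi-\Psi=\sum_{j=1}^n (H_j-H_{j-1})$. By linearity, each difference factors as
\[
H_j-H_{j-1}=A_j\circ(\Phi_j-\Psi_j)\circ B_j,
\]
where $A_j=\Psi_n\circ\cdots\circ\Psi_{j+1}$ and $B_j=\Phi_{j-1}\circ\cdots\circ\Phi_1$ are both channels. The triangle inequality for $\|\cdot\|_\diamond$ then reduces the claim to showing $\|A\circ\Xi\circ B\|_\diamond\le\|\Xi\|_\diamond$ for channels $A,B$.

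This last bound rests on two facts.

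1. \emph{Submultiplicativity.} Because $(A\circ\Xi\circ B)\otimes\mathrm{id}_k=(A\otimes\mathrm{id}_k)(\Xi\otimes\mathrm{id}_k)(B\otimes\mathrm{id}_k)$, we get $\|A\circ\Xi\circ B\|_\diamond\le\|A\|_\diamond\|\Xi\|_\diamond\|B\|_\diamond$. This follows by applying the definition successively and using that the supremum is taken over all ancilla sizes.

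2. \emph{Channels have diamond norm $1$.} If $A$ is a channel, then $A\otimes\mathrm{id}_k$ is completely positive and trace preserving, hence contractive in trace norm on Hermitian inputs. To see this, write a Hermitian $Y=P-Q$ with $P,Q\succeq0$ and $\|Y\|_1=\Tr P+\Tr Q$. For non-Hermitian $X$ with $\|X\|_1\le1$, use the standard fact that the supremum is attained on rank-one operators $\ket{u}\!\bra{v}$, or equivalently use Watrous's characterization of the diamond norm for Hermiticity-preserving maps.

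The only mildly delicate step is this contractivity of channels on general, non-Hermitian inputs within the diamond-norm supremum. I would handle it by citing the standard result that $\|\Xi\|_\diamond=1$ for every channel $\Xi$, rather than re-deriving it. Everything else is routine.
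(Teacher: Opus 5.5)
Your proof is correct. It is the standard argument behind the citation: the paper gives no proof of its own beyond pointing to Watrous, and the bound follows from your telescoping hybrid together with submultiplicativity of $\|\cdot\|_\diamond$ and $\|A\|_\diamond=1$ for channels. The step you flag can also be closed without citation: for unit vectors $u,v$ and Kraus operators $K_i$ of $A$, Cauchy--Schwarz and $\sum_i K_i^\dagger K_i=I$ give $\|(A\otimes\mathrm{id}_k)(\ket{u}\!\bra{v})\|_1\le\sum_i\|(K_i\otimes I)u\|\,\|(K_i\otimes I)v\|\le 1$, and the trace-norm unit ball is the convex hull of such rank-one operators.
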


\subsection{Process tomography for unitaries}
\begin{theorem}[\cite{HKOT-tomography}]\label{thm:unitary-tomography}
    There are quantum algorithms $\mathcal A$ and $\mathcal U$ that, given black-box access to a $d$-dimensional unitary $U$ and a parameter $\epsilon$, $\mathcal A$ queries the black-box unitary for $O(d^2/\epsilon)$ times, using $\poly(d,1/\epsilon)$ gates, and outputs a classical description $Z$ of a unitary, which can be compiled by the algorithm $\mathcal U$, such that
    $$\EE\|U-\mathcal U(Z)\|_\diamond^2\leq \epsilon^2\,.$$
    As a corollary,
    $$\Pr[\|U-\mathcal U(Z)\|_\diamond\geq \eta]\leq \frac{\epsilon^2}{\eta^2}\,.$$
\end{theorem}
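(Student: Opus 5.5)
The first part of the statement is imported directly from~\cite{HKOT-tomography}, so I would not reprove it. I would only check that its guarantee matches the form used here. The algorithm of~\cite{HKOT-tomography} uses $O(d^2/\epsilon)$ queries and achieves Heisenberg-limited accuracy. Its error is usually stated either as a diamond-norm bound holding with constant (or $1-\delta$) probability, or as an expected squared error. If it is stated in the high-probability form, I would convert it to the expected-squared form above in three steps:
\begin{enumerate}
\item Run the estimator with accuracy parameter $\epsilon/c$ for a suitable constant $c$.
\item Apply a median or repetition boost so that the failure probability $\delta$ is exponentially small in the number of repetitions.
\item Use the trivial bound $\|U-\mathcal U(Z)\|_\diamond\le 2$ on the failure event, which gives $\EE\|U-\mathcal U(Z)\|_\diamond^2\le \epsilon^2/c^2+4\delta\le\epsilon^2$.
\end{enumerate}
This costs only a constant factor in queries once $\delta$ is taken polynomially small in $\epsilon$.

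Two further points need checking. First, the classical description $Z$ determines $\mathcal U(Z)$ only up to the global phase. This is harmless because the diamond norm of the induced channels is phase-invariant. Second, $\mathcal U$ compiles $Z$ into a circuit with $\poly(d,1/\epsilon)$ gates, which follows from standard synthesis of a $d$-dimensional unitary to the required precision.

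The corollary follows from Markov's inequality applied to the nonnegative random variable $X=\|U-\mathcal U(Z)\|_\diamond^2$:
\[
\Pr\bigl[\|U-\mathcal U(Z)\|_\diamond\ge\eta\bigr]=\Pr[X\ge\eta^2]\le \frac{\EE X}{\eta^2}\le\frac{\epsilon^2}{\eta^2}\,.
\]

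The only potential obstacle is the translation of the cited result's error metric and success-probability convention into the expected squared diamond-norm form. This is handled by the boosting and truncation argument above.
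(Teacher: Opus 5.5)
Your proposal matches the paper's treatment: the first part is imported directly from \cite{HKOT-tomography}, and the tail bound is Markov's inequality applied to $\|U-\mathcal U(Z)\|_\diamond^2$.

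Your fallback conversion has one error. Median-boosting to failure probability $\delta=O(\epsilon^2)$ needs $\Theta(\log(1/\epsilon))$ repetitions, so it gives $O(d^2\log(1/\epsilon)/\epsilon)$ queries rather than a constant-factor overhead. A conversion with only constant overhead would need a tail bound that holds at all error scales at once, not a single $(\epsilon,\delta)$ guarantee. You should therefore take the expected-squared form with $O(d^2/\epsilon)$ queries directly from the cited result, as the paper does.
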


\subsection{Reflection-to-state simulation}

Throughout the paper, we consistently use the following theorem from~\cite[Theorem 4]{JLS}.
This theorem allows us to simulate a reflection unitary with the state up to an arbitrary inverse-polynomial precision, which turns out to be helpful in our analysis.

\begin{theorem}[\cite{JLS}]\label{thm:JLS-simulation}
    Let $\ket{\psi}$ be a quantum state.
    Define the oracle $\mathcal O_{\ket{\psi}}:I-2\ketbra{\psi}{\psi}$ be the reflection about $\ket{\psi}$.
    Let $\ket{\xi}$ be a state not necessarily independent of $\ket{\psi}$.
    Let $\mathcal A^{\mathcal O_{\ket{\psi}}}$ be an oracle algorithm that makes $q$ queries to $\mathcal O_{\ket{\psi}}$.
    For any integer $l>0$, there is a quantum algorithm $\mathcal B$ that makes no queries to $\mathcal O_{\ket{\psi}}$ such that
    $$\left\Vert\mathcal A^{\mathcal O_{\ket{\psi}}}\ket{\xi}-\mathcal B(\ket{\psi}^{\otimes l}\otimes \ket{\xi})\right\Vert_1\leq O\left(\frac{q}{\sqrt{l}}\right)\,.$$
    Moreover, the running time of $\mathcal B$ is in polynomial in that of $\mathcal A$ and $l$.
\end{theorem}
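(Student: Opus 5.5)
The plan is to replace each query to $\mathcal O_{\ket{\psi}}$ by a reflection about the symmetric subspace. The $l$ copies of $\ket{\psi}$ serve as a reusable ``reference'' register, and $\mathcal B$ never touches $\ket{\psi}$ except through these copies. Concretely, let $\Pi^{\mathrm{sym}}_{l+1}$ be the projector onto the symmetric subspace of the query register together with the $l$ copy registers. $\mathcal B$ runs $\mathcal A$ on $\ket{\xi}$, and at every query applies $R:=I-2\Pi^{\mathrm{sym}}_{l+1}$ to (query register)$\otimes$(copies). This is efficient in $l$ and the local dimension, since $\Pi^{\mathrm{sym}}$ can be implemented via the standard symmetrization or Schur-sampling circuit, and controlled versions follow in the same way. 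Correlation between $\ket{\xi}$ and $\ket{\psi}$ causes no problem. The analysis is carried out for a fixed $\ket{\psi}$, with $\mathcal A$'s workspace (including any purification of $\ket{\xi}$) treated as an arbitrary additional register, and all bounds will be uniform in that register.

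The key step is to compute the action of $R$ on the relevant states. Write $\ket{\Psi_0}:=\ket{\psi}^{\otimes(l+1)}$, and for $\ket{\phi}\perp\ket{\psi}$ let $\ket{\Psi_1^\phi}$ be the normalized symmetrization of $\ket{\phi}\ket{\psi}^{\otimes l}$. Then $\ket{\phi}\ket{\psi}^{\otimes l}=\frac{1}{\sqrt{l+1}}\ket{\Psi_1^\phi}+\sqrt{\tfrac{l}{l+1}}\ket{\Gamma^\phi}$, where $\ket{\Gamma^\phi}$ is orthogonal to the symmetric subspace. Hence $R$ acts exactly as $-1$ on $\ket{\psi}\otimes\ket{\psi}^{\otimes l}$, which is the correct behavior of $\mathcal O_{\ket{\psi}}$. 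On $\ket{\phi}\ket{\psi}^{\otimes l}$ it acts as the identity, up to an error vector of norm $O(1/\sqrt{l})$ that, unlike in the ideal case, leaves the copies in an ``excited'' state containing one $\ket{\phi}$.

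The main obstacle is that these excitations accumulate in the shared copy register across queries. A naive hybrid argument would give error $q/\sqrt{l}$ only if the copies were refreshed after each query, which would cost $ql$ copies. To obtain $O(q/\sqrt{l})$ with $l$ reused copies, I will maintain an invariant across the run. After $j$ queries, the joint state of $\mathcal B$ lies within distance $O(j/\sqrt{l})$ of (ideal state of $\mathcal A^{\mathcal O_{\ket{\psi}}}$ after $j$ queries)$\otimes\ket{\psi}^{\otimes l}$, and the deviation lies in the span of states in which at most a bounded number of copies are ``excited'' inside the symmetric-type subspaces. The cleanest way to make this precise is to decompose the copy register by the number $k$ of positions orthogonal to $\ket{\psi}$. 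Under the Schur--Weyl picture, $R$ only couples sectors $k$ and $k\pm1$ with amplitude $O(\sqrt{(k+1)/(l+1)})$. Bounding the leakage per query by $O(1/\sqrt{l})$ while $k\ll l$ then gives the telescoping bound of Lemma~\ref{lem:diamond-to-output}.

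Finally, I will trace out the copy registers and use contractivity of the trace norm to convert the Euclidean-norm bound into the stated $\|\cdot\|_1$ bound. This conversion loses at most a constant factor, yielding $O(q/\sqrt{l})$ overall.
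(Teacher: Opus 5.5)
Your simulator is the right one. Replacing each query by $R=I-2\Pi^{\mathrm{sym}}_{l+1}$ on the query register together with the $l$ copies is exactly the construction of \cite{JLS}, which the paper imports without reproving. Your computation $\|\Pi^{\mathrm{sym}}_{l+1}\ket{\phi}\ket{\psi}^{\otimes l}\|=1/\sqrt{l+1}$ for unit $\ket\phi\perp\ket\psi$ is the only estimate needed.

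The gap is in the error analysis. The ``main obstacle'' you identify does not exist, and the mechanism you propose to get around it does not work as sketched.
\begin{itemize}
\item Lemma~\ref{lem:diamond-to-output} needs the per-query channels to be close in diamond norm, but $R$ and $\mathcal O_{\ket\psi}\otimes I$ are at diamond distance $2$. Take orthonormal $\ket\phi,\ket{\phi'}\perp\ket\psi$. Then $R$ acts as $-1$ on $\ket{\phi}^{\otimes(l+1)}$ and as $+1$ on $\frac{1}{\sqrt2}(\ket{\phi}\ket{\phi'}-\ket{\phi'}\ket{\phi})\ket{\phi}^{\otimes(l-1)}$, while $\mathcal O_{\ket\psi}\otimes I$ acts as $+1$ on both. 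So their equal superposition is sent to orthogonal states. Closeness holds only on inputs whose copy register is $\ket\psi^{\otimes l}$.
\item The sector coupling $O(\sqrt{(k+1)/(l+1)})$ gives $O(1/\sqrt l)$ leakage per query only if the copies carry $O(1)$ excitations. You assert this but never prove it. A priori $k$ can reach $j$ after $j$ queries, and the crude bound then yields only $O(q^{3/2}/\sqrt l)$.
\end{itemize}

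The missing observation is that the deviation already accumulated is carried forward by the unitary $R$, so its structure never needs to be analyzed. Purify everything. Let $\ket{A_j}$ be the ideal state of $\mathcal A^{\mathcal O_{\ket\psi}}$ after $j$ queries, and $\ket{B_j}$ the corresponding state of $\mathcal B$. Since the inter-query unitaries of $\mathcal A$ are shared by both runs,
\begin{align*}
\bigl\|\ket{B_{j+1}}-\ket{A_{j+1}}\ket{\psi}^{\otimes l}\bigr\| &\le \bigl\|R\bigl(\ket{B_j}-\ket{A_j}\ket{\psi}^{\otimes l}\bigr)\bigr\| + \bigl\|(R-\mathcal O_{\ket\psi}\otimes I)\ket{A_j}\ket{\psi}^{\otimes l}\bigr\|\\
&= \bigl\|\ket{B_j}-\ket{A_j}\ket{\psi}^{\otimes l}\bigr\| + 2\bigl\|\Pi^{\mathrm{sym}}_{l+1}\bigl(\ket{\Phi_j^{\perp}}\ket{\psi}^{\otimes l}\bigr)\bigr\|,
\end{align*}
where $\ket{\Phi_j^\perp}:=\bigl((I-\ketbra{\psi}{\psi})\otimes I\bigr)\ket{A_j}$. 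On the remaining component, whose query register is $\ket\psi$, both operators act as $-1$.

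Apply your computation branchwise over an orthonormal basis of the workspace. The last term equals $2\|\ket{\Phi_j^\perp}\|/\sqrt{l+1}\le 2/\sqrt{l+1}$. Summing over the $q$ queries, then using $\|\ketbra{u}{u}-\ketbra{v}{v}\|_1\le 2\|\ket u-\ket v\|$ and contractivity of the partial trace, gives $4q/\sqrt{l+1}$.

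Equivalently, order the hybrids so the ideal queries come first. Each step then compares $R$ and $\mathcal O_{\ket\psi}\otimes I$ on a state whose copies are exactly $\ket\psi^{\otimes l}$; this is the standard hybrid argument behind \cite{JLS}. Your invariant ``within $O(j/\sqrt l)$'' is precisely what this one-line induction proves. It needs no Schur--Weyl sectors, no bounded-excitation claim, and no fresh copies.
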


\section{An Oracle Separation of EV-OWPuzz from Quantum One-way Functions}\label{sec:owf-separation}
In this section, we will prove the oracle separation between EV-OWPuzz and pseudodeterministic OWFs. More formally, we have the following theorem,

\begin{theorem}\label{thm:4-main}
    There exists a unitary oracle relative to which EV-OWPuzz exists and pd-OWF does not. Moreover, the separation holds even if we allow controlled access and access to the inverse, conjugate, and transpose.
\end{theorem}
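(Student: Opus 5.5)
We follow the structure laid out in Section~\ref{sec:tech-overview-owf}. For each $n$ we sample a random $g_n:\{0,1\}^n\to\{0,1\}^{3n}$ and independent random signs $\alpha_n(x)\in\{\pm1\}$. We set $\ket{\psi_n}=2^{-n/2}\sum_x\alpha_n(x)\ket{x}\ket{g_n(x)}$. The oracle consists of three parts:
\begin{enumerate}[label=(\roman*)]
\item the unitary $U_{\ket{\psi_n}}$ swapping $\ket{0}\ket{0}$ and $\ket{1}\ket{\psi_n}$;
\item the verifier $f_n(x,y)=\mathbf 1[y=g_n(x)]$, realized as a phase/XOR oracle;
\item a $\mathbf{PSPACE}$-complete oracle.
\end{enumerate}
Because the signs are real, $U_{\ket{\psi_n}}$ is self-inverse and equal to its conjugate and transpose. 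Controlled access reduces to reflection access, which is handled by Theorem~\ref{thm:JLS-simulation} through copies of $\ket{\psi_n^-}$. So it suffices to prove the two sides of the separation for a random oracle, and then pass to a fixed oracle with probability $1$ via Borel--Cantelli (Lemma~\ref{Borel-Cantelli}) over $\lambda$ and a countable union over uniform machines.

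\textbf{EV-OWPuzz exists.} The scheme is as follows. $\Samp$ prepares $\ket{\psi_\lambda}$ with one query and measures it, obtaining $(x,g(x))$. $\Ver$ queries $f$. Correctness is perfect. For security we replace the state oracle by an averaged simulation: by Lemma~\ref{lem:informal-averaging}, extended to $\ket{\psi^-}$, the adversary's view is simulated from polynomially many uniform samples $(x_i,g(x_i))$. Given a challenge $x$, these samples hit $x$ only with probability $\poly/2^n$. Finding $g(x)$ via $f(x,\cdot)$ is then unstructured search over $2^{3n}$ candidates, so Lemma~\ref{BBBV} bounds success by $\poly\cdot 2^{-3n/2}$. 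The $\mathbf{PSPACE}$ oracle is independent of $g$ and does not help.

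Non-uniform advice needs extra care. We bound the success probability of every advice string of length $\ell(\lambda)$ simultaneously. To do this we amplify the per-advice failure to roughly $2^{-2\ell}$, following the set-size-estimation style argument of~\cite{FK-Set-size-estimation}: we run the adversary on many independent challenges and use concentration of the hit count. A union bound over all advice strings then gives the result.

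\textbf{pd-OWFs do not exist.} Fix a candidate $\Gen$ making $q=\poly(\lambda)$ queries. Following~\cite{Kre21}, we split the oracle components by width.
\begin{itemize}
\item For small widths $n\le c\log\lambda$, the inverter learns $g_n$ and $\alpha_n$ exactly: it measures $\ket{\psi_n}$ polynomially many times and queries $f$. Alternatively it can use Theorem~\ref{thm:unitary-tomography}. It then hardcodes these classically.
\item For large widths, $\Gen$ is compared with an oracle-free averaged simulation $\Sim$, in which the state copies are replaced by $\EE_\alpha$ and then by the sample-based simulator, and $f_n$ is replaced by the zero function.
\end{itemize}
The core claim is the $\ell_2$ concentration
\[\EE_{g,\alpha}\big\|\mathcal D_{x}^{g,\alpha}-\EE_{g',\alpha'}\mathcal D_x^{g',\alpha'}\big\|_2^2\le \poly(\lambda)/2^{n}\]
for the output distribution of $\Gen(x)$. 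We prove it in two steps:
\begin{enumerate}
\item For fixed $g$, concentration over $\alpha$ follows from Theorem~\ref{thm:informal-phase-concentration}, adapted to $\pm1$ signs and to $\ket{\psi^-}$.
\item Concentration over $g$ comes from Efron--Stein (Lemma~\ref{lem:Efron-stein-original}), applied coordinatewise to each output probability and summed over outputs. Resampling $g(x_0)$ changes the output distribution by an amount controlled by two things: the query weight on $x_0$ in the sample simulator, and the BBBV weight on $f(x_0,\cdot)$. Summing the squares over all $x_0$ gives $\poly/2^n$.
\end{enumerate}

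By Markov's inequality and the union bound over the inputs $x$, for a typical oracle every $x\in\mathcal G$ has a principal output $H(x)$ of probability $1-\negl$, and $\EE\mathcal D_x$ assigns it probability at least $1/2$. The $\mathbf{PSPACE}$ oracle computes $\EE\mathcal D_x$ exactly, using the learned small components. On input $y$, it outputs some $x$ with $\Pr_{\EE\mathcal D_x}[y]\ge1/2$. This inverts $H$ on a $1-\mu-o(1)$ fraction of inputs, contradicting security. Because the outputs have length $m$ and the $\ell_2$ bound controls each coordinate, this argument is sound.

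\textbf{Main obstacle.} We expect the hardest step to be the phase concentration with $\pm1$ signs for an adaptive algorithm. With $\pm1$ signs, parity-type cross terms survive averaging, so the $\EE_\alpha[\cdot^{\otimes 2t}]$ moment must be bounded directly. Our first attempt would be to expand $\EE_\alpha\|\mathcal D_\alpha\|_2^2$ as the action of a two-copy POVM on $\EE_\alpha\ketbra{\psi}{\psi}^{\otimes 2t}$. We would then write this average as a sum over perfect-matching/even-multiplicity index patterns, and show that every pattern except those factorizing across the two $t$-copy halves has total weight $O(t^2/2^n)$, using that the fraction of colliding index tuples is at most $\binom{2t}{2}2^{-n}$.
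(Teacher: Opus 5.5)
\textbf{Main gap: non-uniform security of the EV-OWPuzz.} A union bound over the $2^{\ell}$ advice strings needs, for each fixed $c$, a tail bound $\Pr_{g,R}[Z^{g,R}(c)\ge\epsilon]\le 2^{-2\ell}$, where $Z^{g,R}(c)$ is the average success probability of $\calA(\cdot,c)$. Running $\calA(\cdot,c)$ on $k$ independent challenges only shows that, for a \emph{fixed} oracle, the hit count is binomial with mean $kZ^{g,R}(c)$. What you actually need is a threshold direct-product theorem: averaged over $(g,R)$, a $kq$-query algorithm solves $\Omega(\epsilon k)$ of $k$ instances with probability $2^{-\Omega(\ell)}$. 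You do not prove this, and the tools you invoke cannot give it. The reflection-to-state simulation, the averaged sample simulation, and the BBBV/replacement steps all incur \emph{additive} errors, of order $kq/\sqrt t$, $t/2^{n/2}$, or $kq\sqrt{k/2^n}$. Pushing these below $2^{-2\ell}$ is impossible once $\ell$ is a polynomial exceeding $n$: it would require $t\ge 2^{4\ell}\gg 2^{n/2}$ copies.

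The paper's FK-inspired argument is not amplification plus a union bound but a compression argument, where errors only need to be small relative to a single-instance success probability. Assuming some $c^\star$ has $Z^{g,R}(c^\star)\ge\epsilon_n=2^{-n/50}$, it proceeds as follows. It fixes a set $S$ of $L_n\approx 2^{9n/10}$ good challenges and replaces $U_{g,R}$ by $\widetilde U_{g,R,S}$ at cost $O(q\sqrt{L_n/2^n})\ll\epsilon_n$. It then builds an inefficient extractor that, from $(c^\star,S,g_n|_{\{0,1\}^n\setminus S},R_n)$, recovers one more value of $g_n|_S$ per round with probability $\alpha_n=\epsilon_n^2/(288q^2)$. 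This uses a measured $f$-query if switching $f_{g_n}$ to $\widetilde f_{g_n,S}$ matters, and $\calA$'s output otherwise. Recovering all of $g_n|_S$ with probability $\alpha_n^{L_n}$, while the description saves about $2nL_n$ bits, gives $\Pr[g_n\in\mathcal H_n]\le 2^{-nL_n}$, and Borel--Cantelli finishes. Some per-step multiplicative argument of this kind is the missing idea.

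\textbf{The pd-OWF half.} This follows the paper's route: reflection-to-state simulation, phase concentration for fixed $g$, the sample simulator, Efron--Stein with BBBV over the coordinates $g(x_0)$, and tomography for small widths. The endgame, however, needs repair.
\begin{enumerate}
\item You cannot union bound over the $2^\lambda$ inputs. The per-input tail is only about $\poly(\lambda)/2^{n}$, with $n$ as small as $c\log\lambda$. Markov over $x$ gives only a constant fraction of good inputs; the paper gets at least $2^\lambda/3$ with probability $1-4/\lambda^2$.
\item Once bad inputs exist, returning ``some $x$'' whose averaged distribution puts mass $\ge1/2$ on $y$ can systematically return a bad $x'$ with $H(x')\neq y$, since $H$ is unconstrained outside $\mathcal G$. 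The paper instead outputs a uniformly random element of the simulator's preimage set and uses Cauchy--Schwarz to get success at least $(|C|/2^\lambda)^2\ge 1/9$.
\item Replacing $f_n$ by the zero function in $\Sim$ is wrong. An algorithm that measures a sample $(u,v)$ and outputs $f(u,v)$ tells the difference. The mean must be simulated with a lazily sampled table $T$ and the consistent predicate $f_T$, or you must justify exact $\mathbf{PSPACE}$ computation of $\EE\mathcal D_x$ otherwise, e.g.\ via bounded-wise independence.
\item Computational-basis measurements of $\ket{\psi_n}$ reveal nothing about the signs, so the small components must be learned by tomography.
\end{enumerate}
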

The section is organized as follows. Section \ref{sec:owf-oracle} introduces the construction of the oracle. Section \ref{sec:existence_of_EV-OWPuzz_1} shows that a secure EV-OWPuzz exists relative to the oracle with a BBBV-type argument. Section \ref{sec:owf-nonexistence} shows that pd-OWFs do not exist relative to the oracle, which relies on reducing the unitary oracle to a sampling oracle (proven in Appendix \ref{sec:phase-concentration}) and the $\ell^2$-concentration of the sampling distribution (proven in Section \ref{sec:L2-concentration}).

\subsection{The oracles}\label{sec:owf-oracle}

For every $n\in\mathbb N_+$, independently sample uniformly random functions
\[
    g_n:\{0,1\}^n\longrightarrow \{0,1\}^{3n}
    \qquad\text{and}\qquad
    R_n:\{0,1\}^n\longrightarrow \{0,1\}.
\]
For simplicity, we abuse notations by denoting the sequence of them by not adding $n$ explicitly.
For example, we write $g=(g_n)_{n\in\mathbb N_+}$ and $R=(R_n)_{n\in\mathbb N_+}$.

For each $n$, define the random-phase graph state
\[
    \ket{\psi_{g_n,R_n}}
    :=
    2^{-n/2}
    \sum_{x\in\{0,1\}^n}
        (-1)^{R_n(x)}
        \ket{x}\ket{g_n(x)}.
\]
And let $U_{g_n,R_n}$ be a unitary that swaps $\ket{0^{4n+1}}$ and $\ket{1}\ket{\psi_{g_n,R_n}}$:
\[
    U_{g_n,R_n}
    :=
    I-
    \left(
        \ket{0^{4n+1}}-\ket{1}\ket{\psi_{g_n,R_n}}
    \right)
    \left(
        \bra{0^{4n+1}}-\bra{1}\bra{\psi_{g_n,R_n}}
    \right).
\]
We also define the membership function by
\[
    f_{g_n}(x,y):=\mathbf 1[y=g_n(x)].
\]
We give quantum query access to it as:
\[
    \ket{x}\ket{y}\ket{b}
    \longmapsto
    \ket{x}\ket{y}
    \ket{b\oplus f_{g_n}(x,y)}.
\]
Finally, fix a classical oracle $\pspace$ for a $\mathbf{PSPACE}$-complete language.
Our oracle is
\[
    \calO_{g,R}
    :=
    \bigl(U_{g,R},f_g,\pspace\bigr),
\]

Note that we allow controlled access. $U_{g, R}$ is self-inversed and self-conjugated, so the inverse, conjugate, and transpose queries of $U_{g,R}$ and controlled $U_{g,R}$ are also allowed.

\subsection{EV-OWPuzz survives}
\label{sec:existence_of_EV-OWPuzz_1}

For $S\subseteq\{0,1\}^n$, define
\begin{itemize}
    \item the state
    \[
    \ket{\widetilde\psi_{g_n,R_n,S}}
    :=
    2^{-n/2}
    \left(
        \sum_{x\in S}\ket{x}\ket{0^{3n}}
        +
        \sum_{x\notin S}
            (-1)^{R_n(x)}
            \ket{x}\ket{g_n(x)}
    \right),
\]
    \item the unitary $\widetilde U_{g_n,R_n,S}$ that swaps $\ket{0^{4n+1}}$ and $\ket{1}\ket{\widetilde\psi_{g_n,R_n,S}}$, and acts as the identity on other orthogonal states.
    \item the membership function 
    \[
        \widetilde f_{g_n,S}(x,y)
        :=
        \begin{cases}
            0, & x\in S,\\
            f_{g_n}(x,y), & x\notin S.
        \end{cases}
    \]
\end{itemize}
The difference between $\widetilde U_{g_n,R_n,S}$ and $U_{g_n,R_n}$ is unsurprisingly upper-bounded by the size of the set $S$, as the following lemma shows.
\begin{lemma}
\label{lem:psi-approximation}
For every $g_n,R_n$ and $S\subseteq\{0,1\}^n$,
\[
    \left\|
        \widetilde U_{g_n,R_n,S}
        -
        U_{g_n,R_n}
    \right\|_{\mathrm{op}}
    \leq
    2\sqrt{\frac{2|S|}{2^n}}.
\]
\end{lemma}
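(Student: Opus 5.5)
Both unitaries have the form $I - \ketbra{v}{v}$ (resp.\ $I-\ketbra{\tilde v}{\tilde v}$), where
\[
\ket{v}=\ket{0^{4n+1}}-\ket{1}\ket{\psi_{g_n,R_n}},\qquad \ket{\tilde v}=\ket{0^{4n+1}}-\ket{1}\ket{\widetilde\psi_{g_n,R_n,S}}.
\]
Indeed, $\widetilde U_{g_n,R_n,S}$ swaps $\ket{0^{4n+1}}$ and $\ket{1}\ket{\widetilde\psi}$ and fixes the orthogonal complement, so it equals $I-\ketbra{\tilde v}{\tilde v}$. The one point to check is that $\ket{\widetilde\psi}$ is a unit vector. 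It is: its components are indexed by distinct basis vectors $\ket{x}\ket{\cdot}$ for distinct $x$, each of amplitude $2^{-n/2}$ in absolute value. Both $\ket{v}$ and $\ket{\tilde v}$ therefore have norm $\sqrt2$, because $\ket{0^{4n+1}}$ is orthogonal to anything beginning with $\ket{1}$.

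The plan is to reduce the operator-norm difference to the distance between the two states. We write
\[
\ketbra{v}{v}-\ketbra{\tilde v}{\tilde v}=\ket{v}(\bra{v}-\bra{\tilde v})+(\ket{v}-\ket{\tilde v})\bra{\tilde v}.
\]
The triangle inequality then bounds the operator norm by $\|v\|\,\|v-\tilde v\|+\|v-\tilde v\|\,\|\tilde v\|=2\sqrt2\,\|v-\tilde v\|$. Since $\ket{v}-\ket{\tilde v}=\ket{1}(\ket{\widetilde\psi}-\ket{\psi})$, this equals $2\sqrt2\,\|\ket{\psi_{g_n,R_n}}-\ket{\widetilde\psi_{g_n,R_n,S}}\|$.

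It remains to compute the state distance directly. The two states agree on every $x\notin S$. For $x\in S$, the difference is $2^{-n/2}\bigl(\ket{x}\ket{0^{3n}}-(-1)^{R_n(x)}\ket{x}\ket{g_n(x)}\bigr)$. Terms belonging to different $x$ are orthogonal, and each such term has squared norm at most $2\cdot 2^{-n}$. The squared norm is exactly $2\cdot2^{-n}$ when $g_n(x)\neq0^{3n}$, and at most $4\cdot2^{-n}$ otherwise. This last case is the only slight wrinkle.

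To avoid the wrinkle I would instead bound the difference as $2^{-n/2}\|\sum_{x\in S}\ket{x}\ket{0}\|+2^{-n/2}\|\sum_{x\in S}\pm\ket{x}\ket{g(x)}\|$, which gives $2\sqrt{|S|/2^n}$. That only yields the bound $4\sqrt{2}\sqrt{|S|/2^n}$ overall, weaker than the claimed $2\sqrt{2|S|/2^n}$.

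A sharper route reaches the constant. The operator $\ketbra{v}{v}-\ketbra{\tilde v}{\tilde v}$ has rank at most $2$. For $\|v\|=\|\tilde v\|=r$, its norm is exactly $r^2\sqrt{1-|\langle \hat v,\hat{\tilde v}\rangle|^2}\le r^2\,\|\hat v-\hat{\tilde v}\|$, where $\hat v,\hat{\tilde v}$ are the normalized vectors. This gives $2\cdot\|v-\tilde v\|/\sqrt2=\sqrt2\,\|\psi-\widetilde\psi\|\le\sqrt2\cdot 2\sqrt{|S|/2^n}=2\sqrt{2|S|/2^n}$, which is the claimed bound. The main obstacle is only bookkeeping of constants, settled by this rank-two eigenvalue formula (eigenvalues $\pm r^2\sqrt{1-|\langle\hat v,\hat{\tilde v}\rangle|^2}$). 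Using the crude orthogonal-term estimate $\|\psi-\widetilde\psi\|\le 2\sqrt{|S|/2^n}$ avoids any case analysis on whether $g_n(x)=0^{3n}$.
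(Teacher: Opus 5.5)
Your final argument is correct and is essentially the paper's proof. The paper likewise writes both unitaries as reflections $I-2\ketbra{\phi^-}{\phi^-}$ about the normalized vectors $\ket{\phi^-}=(\ket{0^{4n+1}}-\ket{1}\ket{\psi})/\sqrt2$. It obtains $\|\ket{\widetilde\phi^-}-\ket{\phi^-}\|\le\sqrt{2|S|/2^n}$ from $\|\ket{\widetilde\psi}-\ket{\psi}\|\le 2\sqrt{|S|/2^n}$, uses that the operator norm of a difference of two rank-one projectors is at most the distance between the unit vectors, and picks up the factor $2$ from the reflection, exactly as your rank-two route does. One minor slip: your remark that each $x\in S$ contributes squared norm at most $2\cdot 2^{-n}$ is false when $g_n(x)=0^{3n}$ and $R_n(x)=1$. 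You immediately fall back on the $4\cdot 2^{-n}$ (triangle-inequality) bound, so nothing depends on it.
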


\begin{proof}
$\lVert\cdot\rVert$ denotes the Euclidean distance, and the graph states satisfy
\[
    \left\|
        \ket{\widetilde\psi_{g_n,R_n,S}}
        -
        \ket{\psi_{g_n,R_n}}
    \right\|
    \leq
    2\sqrt{\frac{|S|}{2^n}}.
\]
Define
\[
    \ket{\phi^-}
    :=
    \frac{
        \ket{0^{4n+1}}
        -
        \ket{1}\ket{\psi_{g_n,R_n}}
    }{\sqrt2}
\]
and $\ket{\widetilde\phi^-}$ correspondingly, then
\[
    \left\|
        \ket{\widetilde\phi^-}-\ket{\phi^-}
    \right\|
    \leq
    \sqrt{\frac{2|S|}{2^n}}.
\]
which yields
\[
    \left\|
        \ket{\widetilde\phi^-}\bra{\widetilde \phi^-}
        -
        \ket{\phi^-}\bra{\phi^-}
    \right\|_{\mathrm{op}}
    \leq
    \sqrt{\frac{2|S|}{2^n}}.
\]
Note that $\widetilde U_{g_n,R_n,S}$ and $U_{g_n,R_n}$ are reflections about $\ket{\widetilde\phi^-}$ and $\ket{\phi^-}$, respectively, and they satisfy
\[
    \widetilde U_{g_n,R_n,S}
    -
    U_{g_n,R_n}
    =
    -2\left(
        \ketbra{\widetilde\phi^-}{\widetilde\phi^-}
        -
        \ketbra{\phi^-}{\phi^-}
    \right).
\]
That completes the proof.
\end{proof}

\begin{theorem}[Existence of an EV-OWPuzz]
\label{thm:ev-owpuzz-existence}
With probability $1$ over the choice of $g$ and $R$, an EV-OWPuzz exists relative to $\calO_{g,R}$.
Moreover, the security is against nonuniform QPT adversaries with polynomial-length classical advice.
\end{theorem}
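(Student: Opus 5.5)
The construction is the natural one. $\Samp(1^\lambda)$ queries $U_{g_\lambda,R_\lambda}$ on $\ket{0^{4\lambda+1}}$, obtains $\ket{1}\ket{\psi_{g_\lambda,R_\lambda}}$, and measures in the computational basis. The result is $(x,g_\lambda(x))$ for a uniformly random $x$. We take $s=x$ and $k=g_\lambda(x)$; padding lengths are immaterial. $\Ver(s,k)$ outputs $f_{g_\lambda}(s,k)$. Correctness holds with probability $1$. For security we must show that for almost every $(g,R)$, every nonuniform QPT $\calA$ with advice $c$ satisfies
\[
\Pr_x\bigl[\calA^{\calO_{g,R}}(x,c)=g_\lambda(x)\bigr]\le\negl(\lambda).
\]

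\textbf{Step 1: single-instance bound, averaged over the oracle.} Fix $\lambda$, a $q$-query oracle circuit $\calA$, and an advice string $c$. The only components that depend on $g_\lambda(x)$ at the challenge point are the width-$\lambda$ parts of $U$ and $f$. Queries at other widths and to $\pspace$ are independent of $g_\lambda$, so we condition on them. Take $S=\{x\}$ and replace $U_{g_\lambda,R_\lambda}$ by $\widetilde U_{g_\lambda,R_\lambda,\{x\}}$. By Lemma~\ref{lem:psi-approximation} (operator-norm bound) and a hybrid argument, this costs $O(q\,2^{-\lambda/2})$ in output trace distance, and controlled, inverse, and conjugate queries are handled the same way since the difference bound is an operator-norm bound. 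Next replace $f_{g_\lambda}$ by $\widetilde f_{g_\lambda,\{x\}}$. The two functions differ only at the single point $(x,g_\lambda(x))$. Conditioned on everything except $g_\lambda(x)$, that point is uniform over $2^{3\lambda}$ values, so Lemma~\ref{BBBV} with $p=2^{-3\lambda}$ bounds this change by $O(q\,2^{-3\lambda/2})$. After both replacements the algorithm's view is independent of $g_\lambda(x)$, so it guesses $g_\lambda(x)$ with probability $2^{-3\lambda}$. Averaging over $x$ then gives
\[
\EE_{g,R}\Pr_x[\text{success}]\le O(q\,2^{-\lambda/2}).
\]

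\textbf{Step 2: concentration for fixed advice.} An expectation bound alone cannot survive a union bound over all $2^{\poly(\lambda)}$ advice strings. So we boost it to a tail bound, in the spirit of~\cite{FK-Set-size-estimation}. Fix $\calA$ and $c$, and let $X(g,R)=\Pr_x[\calA(x,c)=g_\lambda(x)]$. One option is to note that $X$ is an average over $x$ of nearly independent events and apply a moment bound. Consider running $m$ independent challenges $x_1,\dots,x_m$ in parallel: the probability of solving all $m$ is $\EE[X^m]$. Repeating Step 1 with $S=\{x_1,\dots,x_m\}$ should give $\EE[X^m]\le(O(mq\,2^{-\lambda/2}))^m$, using Lemma~\ref{lem:psi-approximation} with $|S|=m$ and BBBV over $m$ hidden points, each with $p\le 2^{-3\lambda}$, and a solver that must output all $m$ keys. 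If the joint success bound is awkward to obtain, the fallback is a martingale or Efron--Stein-style argument over the independent values $g_\lambda(x)$. Markov's inequality on $X^m$ then gives $\Pr[X\ge 2^{-\lambda/4}]\le 2^{-\Omega(\lambda m)}$ for $q=\poly(\lambda)$ and $m=\poly(\lambda)$ chosen larger than the advice length $\ell(\lambda)$.

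\textbf{Step 3: union bound and Borel--Cantelli.} For each algorithm $\calA$ with polynomial running time and advice length $\ell$, take a union bound over all $2^{\ell(\lambda)}$ advice strings. This gives
\[
\Pr_{g,R}\bigl[\exists c:\ X\ge 2^{-\lambda/4}\bigr]\le 2^{\ell(\lambda)-\Omega(\lambda m)},
\]
which is summable in $\lambda$. By Lemma~\ref{Borel-Cantelli}, with probability $1$ only finitely many $\lambda$ are bad. The set of uniform machines and polynomial bounds is countable, so a countable union of measure-zero events shows that with probability $1$ every nonuniform QPT adversary has negligible success. The main obstacle is Step 2: obtaining the $m$-fold parallel hardness bound, or an equivalent concentration, while the adversary reuses coherent access to the state oracle across all instances.
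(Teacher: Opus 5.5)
Your Steps~1 and~3 are sound, but Step~2 is a genuine gap, as you suspected, and the whole nonuniform claim rests on it.

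The bound $\EE[X^m]\le(O(mq\,2^{-\lambda/2}))^m$ does not follow from repeating Step~1, because every replacement there is a hybrid with an \emph{additive} loss. Swapping $U_{g_\lambda,R_\lambda}$ for $\widetilde U_{g_\lambda,R_\lambda,S}$ with $|S|=m$ over the $mq$ queries of the $m$ runs costs $O(m^{3/2}q\,2^{-\lambda/2})$. BBBV over the $m$ hidden points costs a further $O(mq\,2^{-3\lambda/2})$. So what you actually get is $\EE[X^m]\le O(m^{3/2}q\,2^{-\lambda/2})+2^{-3\lambda m}$, with no $m$-th power on the dominant term. After multiplying by $2^{\lambda m/4}$ in Markov, this is vacuous for every $m\ge 2$. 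For $m=1$ it gives only $q\,2^{-\lambda/4}$, which cannot absorb a union bound over $2^{\ell(\lambda)}$ advice strings.

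Part of this is repairable. Lemma~\ref{lem:psi-approximation} is an operator-norm bound valid for every $(g,R)$, so replacing $U$ run by run with $S=\{x_i\}$ gives $X\le\widetilde X+O(q\,2^{-\lambda/2})$ pointwise, which is harmless. The $f$-replacement is different: it is intrinsically an average over $g_\lambda(x_i)$. Moreover, each run's behavior depends on the other instances' hidden values through both $U$ and $f$, so the per-run BBBV bounds do not multiply. You would need a quantum strong direct product theorem for this multi-instance problem with a per-instance factor exponentially small in $\lambda$, and BBBV does not supply one.

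Your fallbacks do not close the gap either:
\begin{itemize}
\item Efron--Stein controls only the variance, so Chebyshev yields a tail of at best $2^{-O(\lambda)}$. That cannot beat $2^{\ell(\lambda)}$ for an arbitrary polynomial $\ell$.
\item Bounded-difference martingale inequalities fail because the available worst-case influence bounds are too weak. For example, flipping one $R_\lambda(x')$ moves $X$ by up to $O(q\,2^{-\lambda/2})$, so $\sum_{x'}c_{x'}^2=O(q^2)$ and the resulting tail is useless.
\end{itemize}

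The paper avoids concentration over $g$ altogether with a compression argument in the style of \cite{FK-Set-size-estimation}, which works for each \emph{fixed} $g_n$:
\begin{itemize}
\item Suppose some advice $c$ achieves success at least $\epsilon_n=2^{-n/50}$ on $g_n$. The paper picks $S$ of size $L_n=\lfloor 2^{9n/10}\rfloor$ among inputs on which the adversary succeeds with probability at least $\epsilon_n/2$.
\item It replaces $U$ by $\widetilde U_{g_n,R_n,S}$, a deterministic loss of $O(q\,2^{-n/20})\ll\epsilon_n$.
\item It then uses a per-$g_n$ BBBV dichotomy. Either switching $f$ to $\widetilde f_{g_n,S}$ hurts, and then measuring a random $f$-query yields some $(u,g_n(u))$ with $u\in S$; or it does not hurt, and then the output itself yields such a pair.
\item Iterating gives an inefficient extractor that recovers all of $g_n|_S$ from $(c,S,g_n|_{\{0,1\}^n\setminus S},R_n)$ with probability at least $\alpha_n^{L_n}$.
\item Counting encodings bounds the fraction of such $g_n$ by $2^{-nL_n}$, simultaneously for all advice strings, since the advice is just part of the encoding.
\end{itemize}
There the exponential saving comes from counting, not from multiplying probabilities. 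That is why the additive hybrid losses that break your Step~2 are harmless in the paper's argument. To keep your moment route you must supply the direct product theorem; otherwise, switch to this encoding argument.
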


\begin{proof}
We define the EV-OWPuzz $(\mathsf{Samp},\mathsf{Ver})$ as follows.
\begin{itemize}
\item On security parameter $1^n$, $\mathsf{Samp}$ measures $U_{g_n,R_n}\ket{0^{4n+1}}$ in the computational basis. Write the outcome as $(1,x,g_n(x))$, where $x\in\{0,1\}^n$ and $g_n(x)\in\{0,1\}^{3n}$.
It then outputs $x$ as the puzzle and $g_n(x)$ as the key.
\item The verifier $\mathsf{Ver}(s,k)$ returns $f_{g_n}(s, k)$.
\end{itemize}

\emph{Correctness.}
By definition of the unitary,
\[
    U_{g_n,R_n}\ket{0^{4n+1}}
    =
    \ket1\ket{\psi_{g_n,R_n}}.
\]
The measurement produces $(1,x,g_n(x))$ for a uniformly random $x\in\{0,1\}^n$, and the resulting puzzle-key pair satisfies $f_{g_n}\big(x,g_n(x)\big)=1$.
Hence the EV-OWPuzz candidate has perfect correctness.

    \emph{Security.}
    It is relatively simple to prove the security against any uniform adversary. Given a sample from EV-OWPuzz $(x, g_n(x))$, we can replace the unitary $U_{g_n, R_n}$ with $U_{\tilde{g}_n,R_n}$ with a small loss in operator distance, where $\tilde{g}_n$ agrees with $g_n$ except that $\tilde{g}_n(x)=0$. Then $g_n(x)$ can only be accessed via querying $f_{g_n}$: given a random label $x$ and the verification oracle $f_{g_n}$, find the hidden index $g_n(x)$. A BBBV-type argument demonstrates that, on average over $R_n,g_n$, the probability of finding the hidden index is negligible. Combining the Borel--Cantelli lemma and the fact that the total number of all quantum algorithms is countable, with probability $1$ over all choices of $R,g$, any uniform quantum algorithm cannot break the security of the proposed EV-OWPuzz above.
    
    However, the usual security requirement, as described in this paper, requires the security against non-uniform adversaries that can accept a classical advice for each security parameter, resulting in more subtleties.
    In particular, the total number of quantum algorithms with advice is not countable, thus the usual application of Borel--Cantelli theorem, as in e.g.~\cite{CountCrypt}, is actually not correct in this scenario.
    
    The following proof is inspired by the technique in \cite{FK-Set-size-estimation} showing that the set-size estimation problem is not in $\mathbf{QCMA}$; see also the introduction in~\cite{BHNZ26-QMA-QCMA,BHV-QMA-QCMA}.

Throughout this proof, we fix any $R$ satisfying the syntax of the oracles. We assume $R$ is publicly accessible.
For simplicity, we assume that everyone has access to $\pspace$ in the proof, so we will omit it.
And in this proof, let
\[
    \epsilon_n:=2^{-n/50},
    \qquad
    L_n:=\left\lfloor2^{9n/10}\right\rfloor.
\]
For adversary $\mathcal A$, define 
\[
    Z_x^{g,R}(c)
    :=
    \Pr[
        \calA^{\calO_{g,R}}(x,c)=g_n(x)
    ],
    \qquad
    Z^{g,R}(c)
    :=
    \EE_{x\gets\{0,1\}^n}
    Z_x^{g,R}(c).
\]

For contradiction, we assume that there exists a non-uniform adversary that can invert EV-OWPuzz with probability at least $\epsilon_n$. More formally, there exists polynomial functions $q(\cdot)$ and $w(\cdot)$ be polynomial functions and $q(n)$-query algorithm $\mathcal A$  with advice of size at most $w(n)$ satisfying
\begin{align*}
    \max_{c\in \{0,1\}^{\leq w(n)}} Z^{g, R}(c) \geq \epsilon_n.
\end{align*}

For any $c \in \{0,1\}^{\leq w(n)}$, define
\[
    \mathsf{Good}_{g_n,c}
    :=
    \left\{
        x:
        Z_x^{g,R}(c)
        \geq
        \frac{\epsilon_n}{2}
    \right\}.
\]

Let $c^\star \coloneqq \arg\max_{c\in \{0,1\}^{\leq w(n)}}Z^{g,R}(c)$. Using standard averaging argument, we have $|\mathsf{Good}_{g_n,c^\star}|\geq2^{n-1}\epsilon_n > L_n$ for all sufficiently large $n$.

Now we define $S$ as any $L_n$-size subset of $\mathsf{Good}_{g_n,c^\star}$. We claim that replacing all the oracle queries of $\mathcal A$ to $U_{g, R}$ with $\widetilde U_{g, R, S}$ do not affect the algorithm much. In fact, by Lemma~\ref{lem:psi-approximation}, the corresponding change in the output probability is at most
\[
    O\left(
        q(n)\sqrt{\frac{L_n}{2^n}}
    \right)
    =
    O\left(q(n)2^{-n/20}\right).
\]
Thus, for every $x\in S\subseteq \mathsf{Good}_{g_n,c^\star}$, the computation with $\widetilde{U}_{g,R,S}$ still outputs $g_n(x)$ with probability at least $\epsilon_n/2 - O(q(n)2^{-n/20}) > \epsilon_n/3$ for $n$ large enough.

Next, we aim to replace all the oracle queries of $\mathcal A$ to $f_{g_n}$ with $\widetilde f_{g_n,S}$. This replacement is probably not indifferentiable for the adversary as the advice can record valid pairs of $g$ in $S$. So we need to distinguish the following two cases for the argument:
\begin{itemize}
    \item {\bf (Case 1)} The replacement of $f_{g_n}$ with $\widetilde f_{g_n,S}$ decreases the probability of $\calA$ outputting $g_n(x)$ by at least $\epsilon_n/6$.
    \item {\bf (Case 2)} $\calA^{\widetilde U_{g_n,R_n,S}, \widetilde f_{g_n,S}}$ still outputs $g_n(x)$ with probability at least $\epsilon_n/6$.
\end{itemize}

We now construct an extractor $\Decfull$ that, with abnormally high probability, $\Decfull$ recovers $g_n|_S$ from information independent of $g_n|_S$, leading to an information-theoretic contradiction.

We first construct a subroutine $\mathsf{Ext}$ to extract a valid pair in $g_n|_S$. $\mathsf{Ext}$ receives
$
    S,
    g_n|_{\{0,1\}^n\backslash S}, c^\star$, and $R_n$
as inputs, where $S$ is a subset of $\mathsf{Good}_{g_n,c^\star}$ whose size is less than $L_n$. It outputs a pair $(x,y)$ such that $x \in S$ and $g_n(x)=y$ with high probability. This subroutine is described as in Algorithm~\ref{alg:extractor}.

\begin{algorithm}[H]
    \caption{The extractor subroutine $\mathsf{Ext}$}
    \label{alg:extractor}
    \begin{algorithmic}
        \State {\bf Input: } $S, g_n|_{\{0,1\}^n\backslash S}, c^\star, R_n$
        \State Sample $x\gets S$
        \State Sample $\mathrm{coin}\gets \{0,1\}$
        \If{$\mathrm{coin}=0$}
            \State Sample $t\gets [q(n)]$
            \State Simulate $\mathcal A^{\left(
        \widetilde U_{g_n,R_n,S},
        \widetilde f_{g_n,S}
    \right)}(x,c^\star)$ to just before the $t$-th query to the oracle $\widetilde f_{g_n,S}$
            \State Measure the register that being used for the $t$-th query to $\widetilde f_{g_n,S}$ to get $(u,v)$
            \State Output the measurement outcome $(u,v)$
        \Else
            \State Run $\mathcal A^{\left(
        \widetilde U_{g_n,R_n,S},
        \widetilde f_{g_n,S}
    \right)}(x,c^\star)$ to get the output $y$
            \State Output $(x,y)$
        \EndIf
    \end{algorithmic}
\end{algorithm}

The subroutine $\mathsf{Ext}$ is computationally inefficient as it needs to take the truth table of $g_n|_{\{0,1\}^n \backslash S}$ and $R_n$ as input, and simulates the oracle $\widetilde U_{g_n, R_n, S}$, $\widetilde f_{g_n, S}$. But an inefficient extractor is enough for our information-theoretic argument.

We show that $\mathsf{Ext}$ can extract a valid pair $(x, y)$ in $g_n|_S$ with high probability.

\begin{lemma}
    The extractor $\Dec$ outputs $(u,g_n(u))$ for some $u\in S$ with probability at least $\alpha_n$, where
    \[
        \alpha_n
        :=
        \frac{\epsilon_n^2}{288q(n)^2}.
    \]
\end{lemma}
\begin{proof}    
If the extractor picks $x$ such that {\bf{Case 1}} holds, then we analyze the $\mathrm{coin}=0$ branch as follows:
Let $\mu_{x,t}$ be the probability that $\Dec$ returns $(u,g_n(u))$ for some $u\in S$ when applying the measurement right before the $t$-th query to $\widetilde f_{g_n,S}$ in the first branch.
The standard BBBV hybrid argument gives
\[
    \frac{\epsilon_n}{6}
    \leq
    2\sum_{t=1}^{q(n)}\sqrt{\mu_{x,t}}.
\]
Applying Cauchy--Schwarz,
\[
    \frac1{q(n)}
    \sum_{t=1}^{q(n)}\mu_{x,t}
    \geq
    \frac{\epsilon_n^2}{144q(n)^2} = 2 \alpha_n.
\]
Thus by definition of $\mu_{x,t}$, $\Dec$ outputs $(u, g_n(u))$ with probability at least $2\alpha_n$.

If the extractor picks $x$ such that {\bf{Case 2}} holds, by definition, the $\mathrm{coin}=1$ branch clearly outputs $(x,g_n(x))$ with probability $\frac{\epsilon_n}{6}\ge 2 \alpha_n$.

Thus, for every pick of $x$, the algorithm succeeds with probability at least $2\alpha_n$ if the branch matches the case of $x$. Therefore, $\mathsf{Ext}$ outputs $(u,g_n(u))$ for some $u\in S$ with probability at least $\alpha_n$.
\end{proof}

We next lift $\Dec$ from recovering one $g_n$ to $\Decfull$ as defined in Algorithm~\ref{alg:full-extractor}, which is designed to recover the entire $g_n|_{S}$.
\begin{algorithm}[H]
    \caption{The full extractor $\Decfull$}
    \label{alg:full-extractor}
    \begin{algorithmic}
        \State {\bf Input:} $S,g_n|_{\{0,1\}^n\backslash S}, c^\star,R_n$
        \State Set $S^{(0)}:=S$, $g_n^{(0)}|_{\{0,1\}^n\backslash S}:=g_n|_{\{0,1\}^n\backslash S}$
        \State Set $T:=|S|$
        \For{$i$ in $1$ to $T$}
            \State Set $(u,v):= \mathsf{Ext}(S^{(i-1)},g_n^{(i)}|_{\{0,1\}^n\backslash S^{(i)}},c^\star,R_n)$
            \State Set $S^{(i)}:= S\backslash  \{u\}$
            \State Set $g_n^{(i)}$ to be the function same as $g_n^{(i-1)}$ except for $g_n^{(i)}(u)=v$.
        \EndFor
        \State Output $g_n^{(T)},R_n$
    \end{algorithmic}
\end{algorithm}

With the chain rule for conditional probabilities, we have
\[
    \Pr\!\Bigl[
        \Decfull\text{ outputs }
        (g_n,R_n)
    \Bigr]
    \geq
    \alpha_n^{L_n}.
\]
For the counting argument, define
\[
    \mathcal H_n
    :=
    \Bigl\{
        g_n:
        \exists c\in\{0,1\}^{\leq w(n)}
        \text{ such that }
        Z^{g,R}(c)\geq \epsilon_n
    \Bigr\}.
\]

For each $g_n\in\mathcal H_n$, let $c_{g_n}$ be the lexicographically first advice such that $Z^{g,R}(c_{g_n})\geq \epsilon_n$, and let $S_{g_n}$ consist of the lexicographically first $L_n$ elements of $\mathsf{Good}_{g_n,c_{g_n}}$.
Define
\[
    \tau(g_n)
    :=
    \Bigl(
        c_{g_n},
        S_{g_n},
        \left.g_n\right|_{\{0,1\}^n\backslash S_{g_n}}
    \Bigr),
\]
and
$$\mathcal T_n=\{\tau(g_n):g_n\in\mathcal H_n\}\,.$$
For any fixed $\tau\in \mathcal T_n$, define
\[
F_\tau:=\{g_n\in\mathcal H_n:\tau(g_n)=\tau\}.
\]
We note that for any $\tau$ and $g_n \in F_\tau$, $\Decfull(\tau,R_n)$ outputs $g_n$ with probability at least $\alpha_n^{L_n}$, so for all possible $\tau$,
\[
    1\geq \sum_{g_n\in F_\tau}
    \Pr\!\Bigl[
        \Decfull(\tau,R_n)\text{ outputs }
        (g_n,R_n)
    \Bigr]\geq |F_\tau|\alpha_n^{L_n}\,.
\]
Note that $R_n$ is accessible to the extractor, counting on $\tau$ therefore gives
\begin{align*}
    |\mathcal H_n|\leq{}& |\mathcal T_n| \max_{\tau \in \mathcal T_n}\big| F_\tau \big| \\
    \leq{}&
    2^{w(n)+1}\, 
    \binom{2^n}{L_n}\,
    2^{3n(2^n-L_n)}\,
    \max_{\tau\in \mathcal T_n}\big| F_\tau \big|\\
    \leq{}&
    2^{w(n)+1}\,
    \binom{2^n}{L_n}\,
    2^{3n(2^n-L_n)}\,
    \alpha_n^{-L_n}.
\end{align*}
Using $\binom{2^n}{L_n}\leq2^{nL_n}$ and the definition of $\alpha_n$, and taking the logarithm, we obtain
\begin{align*}
    \log_2\!\Biggl(
        \frac{|\mathcal H_n|}{2^{3n2^n}}
    \Biggr)
    &\leq
    w(n)+1-2nL_n
    +
    L_n\log_2\!\Bigl(
        288\epsilon_n^{-2}q(n)^2
    \Bigr)
    \\
    &\leq
    -nL_n
\end{align*}
for all sufficiently large $n$.
Here we used the fact that, by definition,
\[
    \log_2\!\Bigl(
        288\epsilon_n^2q(n)^2
    \Bigr)
    =
    \frac{n}{25}+O(\log n),
\]
and $w(n)/L_n=o(1)$.
Hence
\[
    \Pr_{g_n}\!\bigl[g_n\in \mathcal H_n\bigr]
    \leq
    2^{-nL_n}
\]
for all sufficiently large $n$.
Choose $n_0$ so that this inequality holds for every $n\geq n_0$; we have
\[
    \sum_{n\geq1}\Pr_{g_n}\!\bigl[g_n\in \mathcal H_n\bigr]
    \leq
    n_0
    +
    \sum_{n\geq n_0}2^{-nL_n}
    <
    \infty.
\]
The Borel--Cantelli lemma (Lemma \ref{Borel-Cantelli}) therefore implies that, with probability $1$ over $(g, R)$, $g_n\in \mathcal H_n$ occurs for only finitely many $n$.
This shows that, with probability $1$, the adversary $\calA$ satisfies \begin{align*}
    \max_{c\in \{0,1\}^{\leq w(n)}} \EE_{x\gets \{0,1\}^n}\Pr[
        \calA^{\calO_{g,R}}(x,c)=g_n(x)
    ] < \epsilon_n.
\end{align*}
Since the set of all quantum algorithms is countable, this concludes the security of the EV-OWPuzz $(\mathsf{Samp},\mathsf{Ver})$ against non-uniform adversaries.
\end{proof}

\subsection{Non-existence of pseudodeterministic one-way functions}\label{sec:owf-nonexistence}

In this section, we show that no one-way functions exist, even if we allow pseudodetermninistic generation.

\begin{theorem}[No pseudodeterministic one-way functions]
\label{thm:no-pd-owf}
With probability $1$ over the choice of $g$ and $R$, there is no pseudodeterministic one-way function with inverse-polynomial error relative to $\calO_{g,R}$.
\end{theorem}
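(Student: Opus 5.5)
We fix a candidate pd-OWF $\Gen$ making at most $q=q(\lambda)$ queries and build a $\pspace$-aided inverter. Along the way we split the oracle into a small part and a large part, as in \cite{Kre21}. Fix a cutoff $n_0=c\log\lambda$, with the constant $c$ chosen later.

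\emph{Small components.} For every $n\le n_0$, the unitary $U_{g_n,R_n}$ acts on $4n+1=O(\log\lambda)$ qubits, so its dimension is $\poly(\lambda)$. The inverter learns it by process tomography (Theorem~\ref{thm:unitary-tomography}) to diamond error $1/\poly(\lambda)$. It learns $f_{g_n}$ on these widths by first sampling $x$'s via $U_{g_n,R_n}$ (or by direct queries to $f_{g_n}$ over all $2^{4n}=\poly(\lambda)$ inputs). This gives a classical description of all small components. By Lemma~\ref{lem:diamond-to-output}, replacing the small oracles inside $\Gen$ with these hardwired approximations changes every output distribution by only $1/\poly$. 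That loss is fine, since we only need to preserve a principal output of probability $1-\negl$.

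\emph{Large components.} For $n>n_0$, we first replace each query to $U_{g_n,R_n}$, including controlled, inverse and conjugate versions, by the JLS simulation (Theorem~\ref{thm:JLS-simulation}) using $l=\poly$ copies of $\ket{\psi^-}$. For fixed $g$, we then invoke the phase-concentration theorem (Appendix~\ref{sec:phase-concentration}, extended to $\ket{\psi^-}$ and $\pm1$ signs). It bounds
\[
\EE_R\bigl\|\mathcal D_R-\EE_{R'}\mathcal D_{R'}\bigr\|_2^2\le O(t/2^{n_0}),
\]
where $t=\poly(\lambda)$ is the number of copies. Choosing $c$ large makes this $\le 1/\poly$ of any desired degree. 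Lemma~\ref{lem:informal-averaging} says the $R$-averaged state can be produced from uniform samples $(x,g_n(x))$. Next we apply Efron--Stein (Lemma~\ref{lem:Efron-stein-original}) with coordinates $g_n(x)$, to the output probability $p_y(g)$ of each fixed string $y$, and sum over $y$. Resampling one $g_n(x)$ matters only if a sample hits $x$ (probability $\poly\cdot 2^{-n}$) or if an $f$-query hits the new or old point (BBBV, Lemma~\ref{BBBV}, with $p=2^{-3n}$). The sum of squared sensitivities over all $2^n$ coordinates is then $\poly\cdot 2^{-n}$. Together these give the $\ell_2$-concentration of Section~\ref{sec:L2-concentration}:
\[
\EE_{g,R}\bigl\|\mathcal D^{\Gen(x)}_{g,R}-\overline{\mathcal D}^{\Gen(x)}\bigr\|_2^2\le 1/\poly,
\]
where $\overline{\mathcal D}$ averages all large components.

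\emph{Inverting and Borel--Cantelli.} Call an input $x\in\mathcal G$ good if $\|\mathcal D_{g,R}-\overline{\mathcal D}\|_2\le 1/10$. By Markov, for most $(g,R)$ a $1-o(1)$ fraction of $x\in\mathcal G$ is good. On a good $x$, the principal value $H(x)$ has $\overline{\mathcal D}$-probability at least $0.8$, so $\overline{\mathcal D}$ has $H(x)$ as its unique output with probability above $1/2$. The averaged generator $\overline{\Gen}$ uses the hardwired small oracles, and its large components are averaged over $g,R$, i.e.\ replaced by fresh uniformly random ones. Its "probability $>1/2$ output" function is computable in $\pspace$. The inverter receives $y=H(x)$ and uses $\pspace$ to find some $x'$ whose averaged majority output equals $y$. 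Following the standard PSPACE inversion, it samples $x'$ from the conditional distribution given that output to handle non-injectivity. For most $x$, both $x$ and the sampled $x'$ are good, so $H(x')=y$, and inversion succeeds with constant probability. Since the failure probability over $(g,R)$ is $1/\poly$ only, we cannot apply Borel--Cantelli directly. Instead we pick the polynomial degrees so that the bound at parameter $\lambda$ is $\le\lambda^{-2}$, which is summable. With probability $1$, each fixed $\Gen$ is then broken for all large $\lambda$. Countability of uniform generators and a union bound finish the proof; since the inverter may be non-uniform, the learned small-oracle data can even be given as advice.

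\textbf{Main obstacle.} The main obstacle is the concentration step: the $\pm1$ phases leave parity cross terms in the phase-concentration bound. Combining $R$-concentration with $g$-Efron--Stein also requires care, because the JLS simulation and the sampling simulator must interleave consistently with $f$-queries that depend on the same $g$. I would first prove the Efron--Stein bound for the fully sample-based simulated algorithm, and only then transfer it through the phase lemma by the triangle inequality.
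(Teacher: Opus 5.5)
Your proposal is correct and follows essentially the same route as the paper. It splits the oracle at width $\Theta(\log\lambda)$, learns the small components by tomography and truth-table queries, and handles the large components via JLS simulation, phase concentration, sample-based simulation, and Efron--Stein plus BBBV (your per-output sum is exactly the paper's Hilbert-valued version), before finishing with a derandomized $\mathbf{PSPACE}$ majority-vote simulator, uniform preimage sampling, and Borel--Cantelli with a $\lambda^{-2}$ failure bound. The paper is more explicit at two points: it uses an efficient simulator (samples plus a lazily sampled partial table, with a BBBV bound for replacing $f_{g_n}$ by the table) rather than the exact average over fresh oracles, which makes $\mathbf{PSPACE}$-computability immediate, and it replaces your ``both $x$ and $x'$ are good'' claim with a Cauchy--Schwarz bound showing the inversion probability is at least the square of the fraction of inputs on which the simulator agrees with $H$.
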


To prove Theorem~\ref{thm:no-pd-owf}, we will show that for any pseudodeterministic OWF generator candidate relative to $\mathcal O_{g,R}$, its pseudodeterministic output cannot depend on $g_n$ and $R_n$. This follows from a more general statement: the output of any algorithm that queries $U_{g_n,R_n}$ and $f_{g_n}$ can be approximated by an algorithm in $L^2$-norm without querying $U_{g_n, R_n}$ and $f_{g_n}$.

We first derive Theorem~\ref{thm:no-pd-owf} from Lemma~\ref{thm:main-concentration}.

\begin{proof}[Proof of Theorem~\ref{thm:no-pd-owf}]
To prove the theorem, we will need the following two lemmas.
\begin{lemma}\label{lem:pspace-simulation}
    For any pd-OWF generator $\Gen$ relative to $\mathcal O_{g, R}$ whose underlying function is $H$, there exists an extraction algorithm $\mathcal E$ and a $\mathbf{PSPACE}$ algorithm $\mathcal S$. The extraction algorithm $\mathcal E$ queries $\mathcal O_{g, R}$ polynomially many times and records classical information $Z, F$.  
    The simulation algorithm $\mathcal S$ can simulate the pd-OWF in the following sense: with probability at least $1- \frac{4}{\lambda^2}$ over $g,R$ and the intrinsic randomness of $\mathcal E$,
    \begin{align*}
        \#\{x:\mathcal S(x;Z,F) = H(x)\} \geq \frac{2^\lambda}{3}\,.
    \end{align*}
\end{lemma}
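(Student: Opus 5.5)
We prove Lemma~\ref{lem:pspace-simulation} by splitting the oracle into small and large components, following~\cite{Kre21}. Let $\Gen$ make at most $q=q(\lambda)$ queries and fix a cutoff $n_0=c\log\lambda$, with $c$ a large constant to be chosen later.

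\emph{Extraction of the small components.} For each $n\le n_0$, the unitary $U_{g_n,R_n}$ acts on $4n+1$ qubits, so its dimension is $d=2^{4n+1}=\poly(\lambda)$. The extractor $\mathcal E$ runs the process tomography of Theorem~\ref{thm:unitary-tomography} on $U_{g_n,R_n}$ with precision $\epsilon=1/(\lambda^{3}q)$ and records the classical description $Z_n$. By Markov, $\|U_{g_n,R_n}-\mathcal U(Z_n)\|_\diamond\le 1/(\lambda q)$ except with probability $O(\lambda^{-4})$, and a union bound over the $O(\log\lambda)$ widths keeps the total failure probability small.

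For the membership oracles, $\mathcal E$ learns $g_n$ exactly for each $n\le n_0$, recording it as $F$. It does this by measuring $U\ket{0}$ about $n2^n\cdot\lambda$ times, which is polynomial; by a coupon-collector bound this reveals every pair $(x,g_n(x))$ except with probability $2^{-\Omega(\lambda)}$. Each sample can be confirmed with one query to $f_{g_n}$.

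With $Z$ and $F$ in hand, every query $\Gen$ makes at a small width can be replaced by the learned version. Lemma~\ref{lem:diamond-to-output} bounds the resulting change in output distribution by $q\cdot 1/(\lambda q)=1/\lambda$.

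\emph{Simulation of the large components.} For widths $n>n_0$ we apply the concentration result, Lemma~\ref{thm:main-concentration}. Conditioned on the small components, let $\mathcal D_x$ be the output distribution of $\Gen(x)$ and $\overline{\mathcal D}_x$ its average over $(g_n,R_n)_{n>n_0}$. Lemma~\ref{thm:main-concentration} gives
\[
\EE\,\|\mathcal D_x-\overline{\mathcal D}_x\|_2^2\le \poly(q)\,2^{-n_0}\le \lambda^{-10}
\]
once $c$ is chosen large enough.

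The averaged distribution $\overline{\mathcal D}_x$ is computable in $\mathbf{PSPACE}$ given $Z$ and $F$. Lemma~\ref{lem:informal-averaging}, extended to $\ket{\psi^-}$ together with Theorem~\ref{thm:JLS-simulation}, replaces the state oracle by sampled pairs $(x,g(x))$ with fresh random $g$ and signs. The membership oracle at a large width is replaced by its averaged behaviour, which the same concentration lemma handles. Every output probability of the resulting simulation is an exponential-size sum, so it is computable in $\mathbf{PSPACE}$. We define $\mathcal S(x;Z,F)$ to output the most likely value under this simulated averaged distribution.

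\emph{Counting.} For $x\in\mathcal G$ we have $\Pr[\Gen(x)=H(x)]\ge 1-\negl$. Suppose $\|\mathcal D_x-\overline{\mathcal D}_x\|_2\le 1/4$ and the small-component error is at most $1/\lambda$. Then $\overline{\mathcal D}_x(H(x))\ge 1/2$, so $H(x)$ is the unique mode and $\mathcal S(x;Z,F)=H(x)$.

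By Markov over $(g,R)$ and an average over $x$, the expected fraction of $x$ with $\ell_2$ error above $1/4$ is at most $16\lambda^{-10}$. Hence, with probability at least $1-O(\lambda^{-2})$, all but a $1/\lambda$ fraction of $x$ are good. Combined with $|\mathcal G|\ge 2^\lambda\cdot 2/3$, this gives the required count $2^\lambda/3$. Adding the tomography failure probabilities yields the total failure probability $4/\lambda^2$.

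\emph{Main obstacle.} The delicate step is invoking Lemma~\ref{thm:main-concentration} conditioned on already-learned small components. The bound there must hold uniformly for algorithms that depend on those components. It must also cover multiple large widths simultaneously, which requires a hybrid over widths $n>n_0$, and the geometric sum of $2^{-n}$ factors must stay dominated by $2^{-n_0}$.
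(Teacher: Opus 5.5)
Your proposal is correct and follows essentially the paper's proof: a logarithmic width cutoff as in Kretschmer, tomography plus exact classical learning of the small components, a hybrid over the large widths using Lemma~\ref{thm:main-concentration}, a $\mathbf{PSPACE}$ computation of the efficient simulator's most likely output, and a count against $\mathcal G$. Your deviations are harmless: you learn $g_n$ by coupon-collecting samples of $U_{g_n,R_n}\ket{0^{4n+1}}$ rather than querying all $2^{4n}$ entries of $f_{g_n}$, and you apply Markov to the expected squared $\ell_2$ error rather than using per-width tail bounds; as the paper does, however, you should run tomography on controlled-$U_{g_n,R_n}$, since tomography of $U_{g_n,R_n}$ alone fixes it only up to a global phase and so does not simulate controlled queries.
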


\begin{lemma}\label{lem:pspace-inverter}
    With probability $1$ over the choice of $g,R$, for any pd-OWF generator $\Gen$ and the corresponding $\mathbf{PSPACE}$ simulator $\mathcal S(\cdot;Z,F)$ in Lemma~\ref{lem:pspace-simulation}, we can invert the pd-OWF with the assistance of $\pspace$ with probability at least $1/9$.
\end{lemma}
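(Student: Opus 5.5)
The plan is to build the inverter directly from the simulator $\mathcal S$ of Lemma~\ref{lem:pspace-simulation}, and then pass from an on-average success bound to a probability-$1$ statement. Given a challenge $y=H(x)$ with $x\gets\{0,1\}^\lambda$, the inverter $\mathcal I$ proceeds as follows:
\begin{enumerate}
\item It runs the extraction algorithm $\mathcal E$ of Lemma~\ref{lem:pspace-simulation}, which makes polynomially many queries to $\calO_{g,R}$, to obtain $(Z,F)$.
\item Since $x'\mapsto\mathcal S(x';Z,F)$ is a $\mathbf{PSPACE}$ computation with a polynomial-size description, it uses $\pspace$ to sample $x'$ uniformly from $P_y:=\{x':\mathcal S(x';Z,F)=y\}$, outputting $\bot$ if $P_y=\emptyset$.
\end{enumerate}
Uniform sampling from $P_y$ is feasible with $\pspace$: it can count $|P_y|$ exactly and fix the bits of $x'$ one at a time, each time with the conditional probability given by the counts.

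The core step is a counting bound. Fix $(g,R)$ and $(Z,F)$ for which the conclusion of Lemma~\ref{lem:pspace-simulation} holds, and let $A:=\{x:\mathcal S(x;Z,F)=H(x)\}$, so $|A|\ge 2^\lambda/3$. Write $A_y:=A\cap P_y$, and note that $A_y\subseteq H^{-1}(y)$. The inverter succeeds whenever the sampled $x'$ lands in $A_y$, because then $H(x')=\mathcal S(x')=y$. Using $\Pr_x[H(x)=y]\ge |A_y|/2^\lambda$ together with Cauchy--Schwarz, I get
\[
\Pr[\text{success}]\;\ge\;\sum_y \Pr_x[H(x)=y]\,\frac{|A_y|}{|P_y|}\;\ge\;\sum_y\frac{|A_y|^2}{2^\lambda|P_y|}\;\ge\;\frac{\bigl(\sum_y|A_y|\bigr)^2}{2^\lambda\sum_y|P_y|}=\frac{|A|^2}{2^{2\lambda}}\;\ge\;\frac19 .
\]
Here $\sum_y|A_y|=|A|$ and $\sum_y|P_y|=2^\lambda$ because the sets $P_y$ partition $\{0,1\}^\lambda$. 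Averaging over the good event of Lemma~\ref{lem:pspace-simulation} then gives $\EE_{g,R}\Pr[\mathcal I\text{ inverts}]\ge(1-4/\lambda^2)/9$.

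Taken literally, this only gives success probability $(1-4/\lambda^2)/9$, slightly below $1/9$. I will read the lemma's constant loosely, or absorb the loss by strengthening the slack in Lemma~\ref{lem:pspace-simulation} (for instance, by taking $|A|$ marginally larger). What security breaks need is a non-negligible constant, and this bound provides one.

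The main obstacle is upgrading the average over $(g,R)$ to ``with probability $1$, for every generator.''
\begin{enumerate}
\item By Markov's inequality applied to the failure probability, for each $\lambda$ we have $\Pr_{g,R}[\text{success}_\lambda\ge 1/20]\ge c$ for an absolute constant $c>0$.
\item By Fatou's lemma (reverse Borel--Cantelli), the event that success is at least $1/20$ for infinitely many $\lambda$ has probability at least $c$.
\item I then need a $0$--$1$ law, arguing that this event is a tail event. Changing finitely many components $g_n,R_n$ affects only small widths, and those are learned exactly by $\mathcal E$ through tomography and direct queries, following the small/large split described in the overview. Hence the event has probability $1$. This step is the delicate one.
\end{enumerate}
Once the event has probability $1$ for each generator, a union over the countably many uniform generators $\Gen$ finishes the lemma. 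A generator whose inverter succeeds with probability at least $1/20$ for infinitely many $\lambda$ fails the pd-OWF security requirement.
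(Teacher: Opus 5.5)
Your inverter (sample uniformly from $\{x':\mathcal S(x';Z,F)=y\}$ using $\pspace$) and your Cauchy--Schwarz bound $\ge(|C|/2^\lambda)^2\ge 1/9$ match the paper's. The gap is in the upgrade to ``with probability $1$.'' You average the success probability over $(g,R)$ first. That discards the one useful fact: the bad event of Lemma~\ref{lem:pspace-simulation} has probability at most $4/\lambda^2$, which is \emph{summable}. You are then left with only a constant-probability event, and you need a $0$--$1$ law to finish.

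Your step 3 is not justified. The event ``success $\ge 1/20$ for infinitely many $\lambda$'' is not evidently tail-measurable. Changing $g_n,R_n$ for a few small $n$ changes the generator's behaviour, and hence the function $H$, at every security parameter. Moreover, $\mathcal E$ does not learn these components exactly: $U_{g_n,R_n}$ is recovered only approximately by tomography, with some failure probability. Making the tail argument rigorous would require a version of Lemma~\ref{lem:pspace-simulation} conditioned on arbitrary values of the first $N$ components, combined with something like L\'evy's $0$--$1$ law. Your sketch provides neither.

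None of this is needed. Apply the first Borel--Cantelli lemma (Lemma~\ref{Borel-Cantelli}) directly to the bad events, as the paper does. Since $\sum_\lambda 4/\lambda^2<\infty$, almost surely $|C|\ge 2^\lambda/3$ for all sufficiently large $\lambda$. The inverter then succeeds with probability at least $1/9$ for \emph{all} large $\lambda$, which is stronger than ``infinitely often'' and loses nothing in the constant. A union over the countably many $\Gen$ finishes.

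If you want to handle the internal randomness of $\mathcal E$ carefully (the paper glosses over it), define $p_\lambda(g,R):=\Pr_{\mathcal E}[|C|<2^\lambda/3]$. Then $\EE_{g,R}\,p_\lambda\le 4/\lambda^2$, and Markov gives $\Pr_{g,R}[p_\lambda\ge 1/2]\le 8/\lambda^2$, which is still summable. Borel--Cantelli then yields success at least $(1-p_\lambda)/9>1/18$ for all large $\lambda$, almost surely.
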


The theorem is straightforward given Lemma~\ref{lem:pspace-simulation} and Lemma~\ref{lem:pspace-inverter}. Given any pd-OWF candidate $\Gen^{\mathcal O_{g,R}}$, there exists a efficient $\pspace$ simulator $\mathcal S$ according to Lemma~\ref{lem:pspace-simulation}. Then we can invert the pd-OWF with the simulator and $\pspace$ oracle as in~\ref{lem:pspace-inverter}.

\begin{proof}[Proof of Lemma~\ref{lem:pspace-simulation}]
    The simulator algorithm $\mathcal S$ will use the simulation algorithm in Lemma~\ref{thm:main-concentration}. The proof of Lemma~\ref{thm:main-concentration} is deferred to Section~\ref{sec:L2-concentration}.

    Let $T = 20 \log (\lambda q(\lambda))$. Given any pseudodeterministic OWF candidate $\Gen^{\mathcal O_{g,R}}$, according to Lemma~\ref{thm:main-concentration}, for any $n \geq T$, set $\epsilon=1/\lambda q(\lambda)$, $\eta=2^{-n/4}$, there exists a simulator $\widetilde{\mathcal S}$, independent of $x$, such that
    \begin{equation*}
    \Pr\left[\big\|\Gen^{U_{g_n,R_n},f_{g_n},\mathcal O}(1^\lambda,x)-\widetilde{\mathcal S}^{\mathcal O}(1^\lambda,x)\big\|_2\geq \frac{1}{2^{n/4}}+\frac{1}{\lambda q(\lambda)}+O\left(\frac{\lambda^2 q(\lambda)^4}{2^{n/2}}\right)\right]\\\leq \frac{\lambda^4 q(\lambda)^8}{2^{n/2}}\,.
    \end{equation*}
    
    Since there are at most $q(\lambda)$ many of such $n$, each of them is greater than $T$, apply the simulator to all $\lambda \in [T, q(\lambda)]$, we get that there exists a simulator (which we will still use $\widetilde{\mathcal S}$ as an abuse of notation), independent of $x$, that only queries $U_{g_n,R_n}$ and $f_{g_n}$ for $n\leq T$, and $\mathsf{PSPACE}$, such that
    $$\Pr\left[\big\|\Gen^{U_{g,R},f_{g},\mathsf{PSPACE}}(1^\lambda,x)-\widetilde{\mathcal S}^{U_{g,R},f_g,\mathsf{PSPACE}}(1^\lambda,x)\big\|_2\geq \frac{2}{\lambda}\right]\leq \frac{1}{\lambda^2}\,.$$

    This means
    $$\Pr\left[\#\left\{x:\big\|\Gen^{U_{g,R},f_{g},\mathsf{PSPACE}}(1^\lambda,x)-\widetilde{\mathcal S}^{U_{g,R},f_g,\mathsf{PSPACE}}(1^\lambda,x)\big\|_2\geq \frac{2}{\lambda}\right\}\geq \frac{1}{3}2^\lambda\right]\leq \frac{3}{\lambda^2}\,.$$

    Now we describe the behavior of the extractor. For each $n\leq T$, the extractor $\mathcal E$ performs process tomography~\ref{thm:unitary-tomography} to $U_{g_n,R_n}$ or its controlled version, producing the classical description $Z_n$ such that
    $$\Pr\left[\big\|U_{g_n,R_n}-\mathcal U(Z_n)\big\|_\diamond\geq \frac{1}{\lambda^2q(\lambda)^2}\right]\leq \frac{1}{\lambda^4}\,.$$
    This can be done by setting $\epsilon=\lambda^{-4}q(\lambda)^{-2}$, $\eta=\lambda^{-2}q(\lambda)^{-2}$, as in the notation of Theorem~\ref{thm:unitary-tomography}.
    It is clear that this tomography is of time $\poly(\lambda)$.
    Next, $\mathcal E$ queries $f_{g_n}$ to get the whole truth table, denoted $F_n$.
    Now, given $Z_n$ and $F_n$ for each $n\leq T$, we have the full simulation algorithm $\bar{\mathcal S}^{\mathsf{PSPACE}}(1^\lambda,x;Z,F)$, with $Z,F$ hardcoded into the algorithm, that simulates $\widetilde{\mathcal S}^{U_{g,R},f_g,\mathsf{PSPACE}}$, replacing each query to $U_{g_n,R_n}$ with $\mathcal U(Z_n)$, each query to $f_{g_n}$ with $F_n$.
    By the assumption,
    $$\Pr\left[\exists x,\ \big\|\widetilde{\mathcal S}^{U_{g,R},f_{g},\mathsf{PSPACE}}(1^\lambda,x)-\bar{\mathcal S}^{\mathsf{PSPACE}}(1^\lambda,x;Z,F)\big\|_2\geq\frac{1}{\lambda}\right]\leq \frac{1}{\lambda^2}\,.$$

    Therefore, because $$\Pr\left[\Pr[\Gen(1^\lambda,x)=H(x)]\geq 1-\frac{1}{p(x)}\right]\geq \frac{2}{3}\,,$$ we have
    $$\Pr\left[\#\left\{x:\Pr[\bar{\mathcal S}^{\mathsf{PSPACE}}(1^\lambda,x;Z,F)=H(x)]\geq 1-\frac{1}{p(\lambda)}-\frac{3}{\lambda}\right\}\geq \frac{2^\lambda}{3}\right]\geq 1-\frac{4}{\lambda^2}\,.$$

    Since the derandomization of $\mathbf{PSPACE}$ algorithm is also in $\mathbf{PSPACE}$, we can derandomize the $\mathbf{PSPACE}$ simulator $\bar{\mathcal S}$, and using majority vote to output a deterministic value. Let $\mathcal S$ be the derandomized simulator. Since $1-\frac{1}{p(\lambda)} - \frac 3\lambda \geq \frac 12$ for large enough $\lambda$, the deterministic $\mathbf{PSPACE}$ algorithm satisfies
    \begin{align*}
        \Pr\left[\#\left\{x:\mathcal S^{\mathsf{PSPACE}}(1^\lambda,x;Z,F)=H(x)\right\}\geq \frac{2^\lambda}{3}\right]\geq 1-\frac{4}{\lambda^2}\,.
    \end{align*}
\end{proof}

\begin{proof}[Proof of Lemma~\ref{lem:pspace-inverter}]
    Recall that $\calS^\pspace$ is a simulator of the pd-OWF generator whose underlying function is $H$.
    Define the set
    $$C:=\left\{x:\mathcal S^{\mathsf{PSPACE}}(1^\lambda,x;Z,F)=H(x)\right\}\,.$$
    Moreover, for $y\in \{0,1\}^{m(\lambda)}$, let
    $$C(y):=\left\{x:\mathcal S^{\mathsf{PSPACE}}(1^\lambda,x;Z,F)=H(x) = y\right\}\,,$$
    and
    \begin{equation*}
        S^{-1}(y):=\left\{x:\mathcal S^{\mathsf{PSPACE}}(1^\lambda,x;Z,F)=y\right\}\,.
    \end{equation*}
    By definition, $C = \bigcup_y C(y)$, and $C(y)\subseteq S^{-1}(y)$ for all $y$. 
    
    Define the inverter $\calA$ to take in $y$ as input and outputs a uniformly random element in $S^{-1}(y)$. This a probabilistic polynomial-space inverter.
    Therefore, the probability of successfully find a preimage of $y$ is at least $\frac{|C(y)|}{|S^{-1}(y)|}\,,$
    So the total probability for the adversary to find a preimage is
    \begin{align*}
        \EE_x \frac{|C\big(H(x)\big)|}{|S^{-1}\big(H(x)\big)|}
        \geq{} & \sum_{y\in \{0,1\}^{m(\lambda)}} \frac{|C(y)|^2}{2^\lambda|S^{-1}(y)|}\\
        \geq {} & \frac{\left(\sum_{y\in \{0,1\}^{m(\lambda)}}\big|C(y)\big|\right)^2}{\sum_{y\in \{0,1\}^{m(\lambda)}}2^\lambda|S^{-1}(y)|}\\
        \geq{} & \left(\frac{|C|}{2^\lambda}\right)^2\,.
    \end{align*}
    The second inequality here follows from the Cauchy--Schwarz inequality.
    By the requirements in Lemma~\ref{lem:pspace-simulation}, with at least $1-4/\lambda^2$ probability over the choice of $g,R$,
    \begin{equation*}
        |C|\ge \frac{2^\lambda}{3}.
    \end{equation*}
    In this case, using queries to $\pspace$, the adversary can invert with probability at least $1/9$.
    Moreover, since $\sum 4/\lambda^2< \infty$, applying Borel-Cantelli lemma (Lemma~\ref{Borel-Cantelli}) we get that with probability $1$ over the choice of $g, R$, we can invert one-way function $H$. 
    Using the fact that the set of all pd-OWF candidates is countable, summing up over the failure probability of all $\Gen$, we conclude that with probability $1$ over the choice of $g, R$, there does not exist a pseudodeterministic one-way function with inverse-polynomial error under the oracle $\mathcal O_{g,R}$.
\end{proof}
So we can conclude the proof of Theorem~\ref{thm:no-pd-owf}.
\end{proof}

\subsection{Two oracle replacements with concentration}
\label{sec:L2-concentration}
The remaining task is to prove the following.
\begin{lemma}
\label{thm:main-concentration}
Let $\calA$ be an efficient uniform quantum oracle algorithm with classical output; suppose that it makes at most $q$ queries to $U_{g_n,R_n},f_{g_n}$, and a side oracle $\calO$ independent with $g_n$ and $R_n$.
For every $0<\varepsilon<1$, there is a uniformly constructed simulator $\calS$, with access only to $\calO$, such that for every $x$ and every $0<\eta<1$,
\begin{align*}
  &\Pr_{g_n,R_n}\!\Biggl[
    \Bigl\|
      \calA^{U_{g_n,R_n},f_{g_n},\calO}(1^\lambda, x)
      -
      \calS^{\calO}(1^\lambda, x)
    \Bigr\|_2
    >
    \varepsilon+\eta+\frac{Cq^2}{\varepsilon^2 2^{n/2}}
  \Biggr]
  \leq
  \frac{Cq^4}{\varepsilon^4\eta^2\,2^n},
\end{align*}
where $C>0$ is a universal constant independent of $\calA$.
Moreover, the running time of $\calS$ is polynomial in the running time of $\calA$, $n$, and $1/\varepsilon$.
\end{lemma}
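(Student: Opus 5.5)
We will prove the lemma through two oracle replacements, each controlled by an $\ell_2$-concentration bound, and combine them with Chebyshev's inequality. Fix $x$ and write $\mathcal D_{g,R}$ for the output distribution of $\calA^{U_{g_n,R_n},f_{g_n},\calO}(1^\lambda,x)$.

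\emph{Step 1 (reflection to states).} Since $U_{g_n,R_n}$ is the reflection about $\ket{\phi^-}=(\ket{0^{4n+1}}-\ket{1}\ket{\psi_{g_n,R_n}})/\sqrt2$, up to sign, we apply Theorem~\ref{thm:JLS-simulation} with $l=\Theta(q^2/\varepsilon^2)$ copies. Controlled access is handled by the same technique. This gives an algorithm $\calB$ that consumes $\ket{\phi^-}^{\otimes l}$, queries only $f_{g_n}$ and $\calO$, and whose output distribution $\mathcal D'_{g,R}$ satisfies $\|\mathcal D_{g,R}-\mathcal D'_{g,R}\|_1\le\varepsilon/2$ for every $(g,R)$. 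Since the $\ell_1$ bound dominates the $\ell_2$ bound, this step costs only a deterministic $\varepsilon/2$.

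\emph{Step 2 (phase concentration and averaging).} Fix $g_n$. We invoke the $\pm1$-sign version of the phase concentration theorem (Theorem~\ref{thm:informal-phase-concentration}, extended to $\ket{\phi^-}$ in Appendix~\ref{sec:phase-concentration}). The queries to $f_{g_n}$ do not depend on $R$, so they can be folded into the fixed algorithm. This yields
\[
\EE_{R_n}\|\mathcal D'_{g,R}-\EE_{R'}\mathcal D'_{g,R'}\|_2^2\le O(l/2^n).
\]
We then replace $\EE_{R}[\ketbra{\phi^-}{\phi^-}^{\otimes l}]$ by the sample-based simulator of Lemma~\ref{lem:informal-averaging}, which uses $O(l)$ i.i.d.\ samples $(u,g_n(u))$. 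The result is an algorithm $\calB_1$ that accesses only these samples, $f_{g_n}$, and $\calO$. The surviving cross terms from $\pm1$ signs that break exact simulation are the source of the additive $Cq^2/(\varepsilon^22^{n/2})$ bias term. I expect Step 2 to be the main obstacle: with $\pm1$ signs, only parities of multiplicities are averaged out, so the simulator is exact only up to an error of order $l^2/2^n$ and the concentration proof must bound these terms directly.

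\emph{Step 3 (Efron--Stein over $g$).} Let $F(g_n)=\mathcal D_{\calB_1}$ be the output vector of $\calB_1$, and apply Lemma~\ref{lem:Efron-stein-original} coordinatewise, summing over outputs, with $X_u=g_n(u)$. Resampling $g_n(u)$ changes $F$ in two ways:
\begin{itemize}
\item through the samples, with total variation at most $l\cdot 2^{-n}$, which is the probability that $u$ is ever sampled;
\item through the queries to $f_{g_n}$ at points $(u,\cdot)$, bounded by Lemma~\ref{BBBV} hybrid weights, i.e.\ the query magnitude $w_u$ on inputs with first coordinate $u$ times the $2^{-3n}$ chance of hitting the hidden value.
\end{itemize}
Squaring and summing over $u$ gives $\sum_u(l^2 2^{-2n}+q\,w_u\cdot \text{small})=O(q^4/(\varepsilon^4 2^n))$, using $\sum_u w_u\le q$. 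This bounds $\EE_g\|F(g)-\EE_gF\|_2^2$.

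The simulator $\calS$ runs $\calB_1$ with a freshly sampled uniformly random $g$ and $R$ internally. Simulating $f$ lazily, it has running time $\poly(\text{time}(\calA),n,1/\varepsilon)$, and its output distribution equals $\EE_{g,R}\mathcal D_{\calB_1}$. The two variance bounds together total $O(q^4/(\varepsilon^4 2^n))$. Chebyshev's inequality at deviation $\eta$ then gives the claimed failure probability $Cq^4/(\varepsilon^4\eta^22^n)$. On the complement event the distance is at most $\varepsilon+\eta+Cq^2/(\varepsilon^22^{n/2})$ by the triangle inequality.
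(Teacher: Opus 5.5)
Your decomposition matches the paper's proof. You use reflection-to-copies via Theorem~\ref{thm:JLS-simulation} with $l=\Theta(q^2/\varepsilon^2)$, then phase concentration over $R_n$ followed by the sample-based simulation of the phase average (Theorems~\ref{thm:phase-concentration} and~\ref{thm:GMMY-simulation}). Next comes Efron--Stein over the coordinates $g_n(u)$, with a sampler hybrid costing $O(l2^{-n})$ per coordinate and BBBV for the $f_{g_n}$-queries. Merging the two variance bounds into one Chebyshev step, instead of two Markov bounds and a union bound, makes no difference.

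The step that does not work as written is the last one. $\mathcal B_1$ queries $f_{g_n}$ in superposition, so ``run $\mathcal B_1$ with a freshly sampled uniformly random $g$, simulating $f$ lazily'' is not an efficient procedure. A uniform $g_n$ has a $3n2^n$-bit truth table, and classical lazy sampling cannot answer a query that is a superposition over $(u,v)$ without measuring it. So your claim that $\mathcal S$ outputs exactly $\mathbb{E}_{g}\mathcal D_{\mathcal B_1}$ in polynomial time is unsupported.

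The paper repairs this as follows. $\mathcal S$ pre-samples the $r$ sampler outputs $(x_j,y_j)$, reusing $y_j$ on repeated $x_j$. It answers $f$-queries with the partial table $f_T(u,v)=\mathbf 1[u\in X,\ v=T(u)]$. Conditioned on $T$, the difference $f_{g_n}\oplus f_T$ is supported on unrevealed rows, where each point is accepting with probability $2^{-3n}$. Lemma~\ref{BBBV} then bounds the distance to the mean by $O(r2^{-3n/2})$, which is absorbed into the $Cq^2/(\varepsilon^2 2^{n/2})$ term. A $(2q+r)$-wise independent $g$ would also give exact equality: conditioned on the samples, each output probability is a polynomial of degree at most $2q$ in the bits of $f_g$.

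Two smaller points.
\begin{enumerate}
\item The $\pm1$ simulation error is $O(l/2^{n/2})$ in trace distance. This is the square root, via gentle measurement, of the $O(l^2/2^n)$ collision probability you mention, and it is what produces the additive term.
\item In the $f$-part of the Efron--Stein sensitivity, the BBBV reference run must be independent of $g_n(u)$, i.e.\ with row $u$ removed from both the sampler and $f$. That run changes with $u$, so $\sum_u w_u\le q$ is not automatic. The crude per-coordinate bound $O(q^2 2^{-3n})$ already suffices.
\end{enumerate}
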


We prove Lemma~\ref{thm:main-concentration} by removing the $R_n$ and then the $g_n$ using Lemma~\ref{lem:phase-information-replacement} and Lemma~\ref{lem:graph-information-replacement}, respectively. 
Intuitively speaking, in Lemma~\ref{lem:phase-information-replacement}, we will simulate the unitary oracle $U_{g_n, R_n}$ with sampling $(u, g_n(u))$ where $u$ is uniformly random, hence erasing $R_n$. The main technique used in this step is the phase concentration lemma in Appendix~\ref{sec:phase-concentration}. Then in Lemma~\ref{lem:graph-information-replacement}, we further show that the simulation may not depend on $g_n$. Observe that any efficient algorithm should not distinguish whether a single $g_n(x)$ is resampled or not. The Efron--Stein inequality therefore indicates that most of $g_n$ could be replaced by a fresh resampled $g_n'$.

We now start introducing Lemma~\ref{lem:phase-information-replacement}.
\begin{lemma}
\label{lem:phase-information-replacement}
Let $\calA$ be an efficient uniform algorithm with classical output that makes at most $q$ queries to $U_{g_n,R_n}$, $f_{g_n}$, and a side oracle $\calO$. Let $\mathsf{Samp}_{g_n}$ be a CPTP oracle that outputs random samples of $(u, g_n(u))$.
Then for every $0<\varepsilon<1$, there is a uniform algorithm $\calA_{\mathrm{graph}}$ with at most $r = \Theta(q^2/\varepsilon^2)$ oracle queries, such that
\begin{equation*}
  \Pr_{R_n}\!\left[
    \left\|
      \calA^{U_{g_n,R_n},f_{g_n},\calO}(1^\lambda,x)
      -
      \calA_{\mathrm{graph}}^{\mathsf{Samp}_{g_n},f_{g_n},\calO}(1^\lambda,x)
    \right\|_2
    >
    \varepsilon+\alpha+\frac{Cq^2}{\varepsilon^2 2^{n/2}}
  \right]
  \leq
  \frac{Cq^2}{\varepsilon^2\alpha^2 2^n},
\end{equation*}
holds for every $x$ and $0<\alpha<1$.
The running time of $\calA_{\mathrm{graph}}$ is polynomial in the running time of $\calA$, $n$, and $1/\varepsilon$.
\end{lemma}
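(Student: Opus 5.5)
We convert oracle access to $U_{g_n,R_n}$ into copies of a state, apply phase concentration to that state, and then use the averaging simulator in the spirit of Lemma~\ref{lem:informal-averaging}.

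\textbf{Step 1: from unitary queries to copies of a state.} Write $\ket{\phi^-_{R}}=(\ket{0^{4n+1}}-\ket{1}\ket{\psi_{g_n,R_n}})/\sqrt2$. Then $U_{g_n,R_n}=I-2\ketbra{\phi^-_R}{\phi^-_R}$ is exactly the reflection oracle of Theorem~\ref{thm:JLS-simulation}. Controlled queries are handled by the same simulation, and inverse, conjugate and transpose queries reduce to ordinary queries because $U_{g_n,R_n}$ is real and self-inverse.

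Applying Theorem~\ref{thm:JLS-simulation} with $l=\Theta(q^2/\varepsilon^2)$ copies gives an algorithm $\mathcal B$. It uses $\ket{\phi^-_R}^{\otimes l}$, keeps direct oracle access to $f_{g_n}$ and $\calO$ (which are independent of $R_n$), and makes no queries to $U$. For every fixed $(g_n,R_n)$, the output distribution of $\mathcal B$ is within trace distance $O(q/\sqrt l)\le\varepsilon$ of that of $\calA$, hence within $\ell_2$ distance $\varepsilon$. Set $r=l$.

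\textbf{Step 2: phase concentration.} Let $\mathcal D_R$ denote the output distribution of $\mathcal B$ on $\ket{\phi^-_R}^{\otimes r}$. We invoke the $\pm1$-phase concentration theorem of Appendix~\ref{sec:phase-concentration}, in the version extended to $\ket{\phi^-}$. It should give
\[
\EE_{R}\bigl\|\mathcal D_R-\EE_{R'}\mathcal D_{R'}\bigr\|_2^2\le \frac{Cr}{2^n}.
\]
This must hold even though $\mathcal B$ also queries $f_{g_n}$ and $\calO$ adaptively. That is fine: $g_n$ is fixed here, so these oracles are $R$-independent. Markov's inequality then gives
\[
\Pr_R\bigl[\|\mathcal D_R-\bar{\mathcal D}\|_2>\alpha\bigr]\le \frac{Cr}{\alpha^2 2^n}=\frac{Cq^2}{\varepsilon^2\alpha^22^n}.
\]

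\textbf{Step 3: the averaging simulator.} By Lemma~\ref{lem:informal-averaging}, adapted to $\ket{\phi^-}$ and random signs, $\EE_R[\ketbra{\phi^-_R}{\phi^-_R}^{\otimes r}]$ can be prepared from $r$ samples $(u,g_n(u))$. Random signs only wash out odd multiplicities. Consequently the simulator agrees exactly only on the "collision-free" part, and deviates by $O(r^2/2^n)$ in trace distance, i.e.\ birthday collisions among the $r$ samples. This is where I expect the extra term $Cq^2/(\varepsilon^2 2^{n/2})$ to arise. A crude $\sqrt{\cdot}$ loss, e.g.\ via gentle measurement (Lemma~\ref{lem:gentle-measurement}) on the collision event, gives roughly $r/2^{n/2}$.

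The algorithm $\calA_{\mathrm{graph}}$ then calls $\mathsf{Samp}_{g_n}$ $r$ times, prepares the simulated mixture, and runs $\mathcal B$, with $f_{g_n}$ and $\calO$ passed through. Its output lies within $O(r/2^{n/2})$ of $\bar{\mathcal D}$. The triangle inequality combines the three errors $\varepsilon$, $\alpha$ and $Cq^2/(\varepsilon^2 2^{n/2})$. Running time is $\poly$ in that of $\calA$, $n$ and $r$.

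\textbf{Main obstacle.} The main obstacle is Step 2. The concentration theorem is stated for algorithms acting only on copies of the state, but $\mathcal B$ interleaves $f_{g_n}$ queries adaptively. Because $f_{g_n}$ is $R$-independent, I would absorb it into a fixed $R$-independent measurement applied to $\ket{\phi^-_R}^{\otimes r}$ together with a fixed ancilla. This works provided the appendix theorem is proved for arbitrary POVMs, i.e.\ over any fixed measurement of $t$ copies. The remaining cost is controlling the $\pm1$ cross terms, which is deferred to the appendix.
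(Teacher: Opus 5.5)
Your proposal is correct and follows the paper's proof essentially step for step. The paper also uses the JLS reflection-to-state simulation with $\Theta(q^2/\varepsilon^2)$ copies, and it likewise observes that $\mathcal B^{f_{g_n},\calO}$ induces a fixed $R_n$-independent POVM on the copies. It then applies Theorem~\ref{thm:phase-concentration} as stated, which already includes the $p_0$ term, with $p_0=1/2$ and $p_u=2^{-n-1}$, giving the bound $2t/2^n$. Markov's inequality, the sample-based simulator of Theorem~\ref{thm:GMMY-simulation}, and the triangle inequality finish the argument.

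One wording fix in your Step~3: the $O(r^2/2^n)$ birthday bound is the weight of the collision subspace, not the trace distance itself. The trace distance you get, and need, is its square root $O(r/2^{n/2})$ via gentle measurement, exactly as in the paper.
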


\begin{proof}
Fix $g_n$ and the classical input $(1^\lambda,x)$, which we suppress below.
Set $t:=\lceil c_0q^2/\varepsilon^2\rceil$ for a sufficiently large universal constant $c_0$, and write
\begin{equation*}
  \ket{\phi_{g_n,R_n}^{-}}
  :=
  \frac{\ket{0^{4n+1}}-\ket{1}\ket{\psi_{g_n,R_n}}}{\sqrt2}.
\end{equation*}
Then by definition $U_{g_n,R_n}=I-2\ketbra{\phi_{g_n,R_n}^{-}}{\phi_{g_n,R_n}^{-}}$.
Theorem~\ref{thm:JLS-simulation} replaces the reflection queries by $t$ copies of $\ket{\phi_{g_n,R_n}^{-}}$, giving an algorithm $\mathcal B^{f_{g_n},\calO}(\ket{\phi_{g_n,R_n}^{-}}^{\otimes t})$.
Measurement cannot increase trace norm, and the $\ell^2$-norm of a vector is at most its $\ell^1$-norm, so
\begin{equation}
  \label{eqa:phase-reflection-to-copies}
  \left\|
    \calA^{U_{g_n,R_n},f_{g_n},\calO}
    -
    \mathcal B^{f_{g_n},\calO}
    \bigl(\ket{\phi_{g_n,R_n}^{-}}^{\otimes t}\bigr)
  \right\|_2
  \leq
  \frac{C_1q}{\sqrt t}
  \leq
  \varepsilon,
\end{equation}
where the last inequality holds when $c_0\geq C_1^2$.

We now show the following using Theorem~\ref{thm:phase-concentration}. For every $\alpha > 0$,
\begin{equation}
  \label{eqa:phase-copies-concentration}
  \Pr_{R_n}\!\left[
    \left\|
      \mathcal B^{f_{g_n},\calO}
      \bigl(\ket{\phi_{g_n,R_n}^{-}}^{\otimes t}\bigr)
      -
      \EE_{R_n'}
      \mathcal B^{f_{g_n},\calO}
      \bigl(\ket{\phi_{g_n,R_n'}^{-}}^{\otimes t}\bigr)
    \right\|_2>\alpha
  \right]
  \leq
  \frac{2t}{\alpha^2 2^n}.
\end{equation}
We use the notations including $N, f, p_i$ in Theorem~\ref{thm:phase-concentration}. Set $N=2^n$, $f= R_n$, and
\begin{align*}
  p_0&=\frac12,
  &\ket{\phi_0}&=\ket{0^{4n+1}},\\
  p_u&=2^{-n-1},
  &\ket{\phi_u}&=-\ket1\ket u\ket{g_n(u)},
  \qquad u\in\{0,1\}^n.
\end{align*}
These choices give exactly $\ket{\phi_{g_n,R_n}^{-}}$.
The algorithm $\mathcal B^{f_{g_n},\calO}$ induces a fixed POVM on the $t$ copies, so the expected squared $\ell^2$ distance from the phase average is at most
\begin{equation*}
  4t\max_u p_u=\frac{2t}{2^n}.
\end{equation*}
Markov's inequality gives the claimed bound.

After the concentration, we now show how to simulate the average distribution. Theorem~\ref{thm:GMMY-simulation} gives an efficient simulator satisfying
\begin{equation*}
  \left\|
    \EE_{R_n'}\!\left[
      \ketbra{\phi_{g_n,R_n'}^{-}}{\phi_{g_n,R_n'}^{-}}^{\otimes t}
    \right]
    -
    \Sim^{\mathsf{Samp}_{g_n}}(1^t)
  \right\|_1
  \leq
  C_2t\,2^{-n/2}.
\end{equation*}
By contractivity of trace norm and the inequality between the norms, applying the same channel $\mathcal B^{f_{g_n},\calO}$ to both states yields
\begin{equation}
  \label{eqa:phase-average-to-sampler}
  \left\|
    \EE_{R_n'}
    \mathcal B^{f_{g_n},\calO}
    \bigl(\ket{\phi_{g_n,R_n'}^{-}}^{\otimes t}\bigr)
    -
    \mathcal B^{f_{g_n},\calO}
    \bigl(\Sim^{\mathsf{Samp}_{g_n}}(1^t)\bigr)
  \right\|_2
  \leq
  C_2t\,2^{-n/2}.
\end{equation}
Define the resulting algorithm by
\begin{equation*}
  \calA_{\mathrm{graph}}^{\mathsf{Samp}_{g_n},f_{g_n},\calO}
  :=
  \mathcal B^{f_{g_n},\calO}
  \bigl(\Sim^{\mathsf{Samp}_{g_n}}(1^t)\bigr).
\end{equation*}
Combining equations~\ref{eqa:phase-reflection-to-copies},~\ref{eqa:phase-copies-concentration}, and~\ref{eqa:phase-average-to-sampler}, the triangle inequality shows
\begin{equation*}
  \left\|
    \calA^{U_{g_n,R_n},f_{g_n},\calO}
    -
    \calA_{\mathrm{graph}}^{\mathsf{Samp}_{g_n},f_{g_n},\calO}
  \right\|_2
  \leq
  \varepsilon+\alpha+C_2t\,2^{-n/2}.
\end{equation*}
with failure probability $2t/(\alpha^2 2^n)$.
Substituting $t$ derives the desired result.
The efficient constructions in Theorems~\ref{thm:JLS-simulation} and~\ref{thm:GMMY-simulation} make sure the polynomial running-time bound. 
Moreover, $\calA_{\mathrm{graph}}$ makes $t$ sampler calls and $q$ retained calls to $f_{g_n}, \calO$, and $t+q=\Theta(q^2/\varepsilon^2)$.
\end{proof}
\begin{remark}
Simulation in Lemma~\ref{lem:phase-information-replacement} also applies to different types of oracle queries, including control, conjugate, inverse, transpose queries.
The state $\ket{\phi_{g_n,R_n}^{-}}$ has real amplitudes, so $U_{g_n,R_n}$ is a real symmetric unitary:
\begin{equation*}
  U_{g_n,R_n}^{-1}
  =\overline{U_{g_n,R_n}}
  =U_{g_n,R_n}^{\mathsf T}
  =U_{g_n,R_n}.
\end{equation*}
Moreover, the controlled unitary is the reflection
\begin{equation*}
  \operatorname{ctrl}(U_{g_n,R_n})
  =
  I-2\ketbra{1}{1}\otimes
  \ketbra{\phi_{g_n,R_n}^{-}}{\phi_{g_n,R_n}^{-}}.
\end{equation*}
Hence the above arguments work for all query types claimed before.
\end{remark}

Then, we will remove the dependence of $\calA$ on the sampling oracle and the choice of $g_n$.
\begin{lemma}
\label{lem:graph-information-replacement}
Let $\calA_{\mathrm{graph}}$ be an efficient and uniform algorithm with classical output that makes at most $r$ queries to $\mathsf{Samp}_{g_n}$, $f_{g_n}$, and a side oracle $\calO$.
Then there is a uniform simulator $\calS$ with at most $r$ access only to $\calO$ such that,
\begin{equation*}
  \Pr_{g_n}\!\left[
    \left\|
      \calA_{\mathrm{graph}}^{\mathsf{Samp}_{g_n},f_{g_n},\calO}(1^\lambda,x)
      -
      \calS^{\calO}(1^\lambda,x)
    \right\|_2
    >
    \beta+Cr2^{-3n/2}
  \right]
  \leq
  \frac{Cr^2}{\beta^2 2^n}.
\end{equation*}
for every $x$ and $0<\beta<1$,
The running time of $\calS$ is polynomial in the running time of $\calA_{\mathrm{graph}}$, $n$, and $r$.
\end{lemma}

\begin{proof}
We first show the concentration lemma on the output distribution. That is, for every $\beta > 0$,
\begin{equation}
  \label{eqa:graph-tail-concentration}
  \Pr_{g_n}\!\left[
    \left\|
      \calA_{\mathrm{graph}}^{\mathsf{Samp}_{g_n},f_{g_n},\calO}
      -
      \EE_{g_n'}
      \calA_{\mathrm{graph}}^{\mathsf{Samp}_{g_n'},f_{g_n'},\calO}
    \right\|_2>\beta
  \right]
  \leq
  \frac{Cr^2}{\beta^2 2^n}.
\end{equation}
Fix $u\in\{0,1\}^n$, define $\mathsf{Samp}_{g_n}^{\backslash u}$ to sample a uniform $v$ and return $(v,g_n(v))$ if $v\neq u$, and $(u,0^{3n})$ otherwise. We  also let $f_{g_n}^{\backslash u}(v,y):=\mathbf 1[v\neq u]f_{g_n}(v,y)$.
A hybrid over at most $r$ calls therefore gives, 
\begin{equation*}
  \left\|
    \calA_{\mathrm{graph}}^{\mathsf{Samp}_{g_n},f_{g_n},\calO}
    -
    \calA_{\mathrm{graph}}^{\mathsf{Samp}_{g_n}^{\backslash u},f_{g_n},\calO}
  \right\|_2
  \leq
  \frac{2r}{2^n}.
\end{equation*}
After this replacement, the only remaining dependence on $g_n(u)$ is in $f_n$, which could be further lifted by a BBBV-type argument in Theorem~\ref{BBBV}:
\begin{equation*}
  \EE_{g_n}\!\left[
    \left\|
      \calA_{\mathrm{graph}}^{\mathsf{Samp}_{g_n}^{\backslash u},f_{g_n},\calO}
      -
      \calA_{\mathrm{graph}}^{\mathsf{Samp}_{g_n}^{\backslash u},f_{g_n}^{\backslash u},\calO}
    \right\|_2^2
  \right]
  \leq
  Cr^2 2^{-3n}.
\end{equation*}
Let $g_n^u$ be identical to $g_n$ except that with a freshly sampled $g_n(u)$, clearly both experiment above in the equation cannot tell the difference between $g_n$ and $g_n^u$. Plus with triangle inequality,
\begin{equation*}
  \EE_{g_n,g_n^u}\!\left[
    \left\|
      \calA_{\mathrm{graph}}^{\mathsf{Samp}_{g_n},f_{g_n},\calO}
      -
      \calA_{\mathrm{graph}}^{\mathsf{Samp}_{g_n^u},f_{g_n^u},\calO}
    \right\|_2^2
  \right]
  \leq
  Cr^2\left(2^{-2n}+2^{-3n}\right).
\end{equation*}
Sampling $g_n$ uniformly amounts to sampling $g_n(x)$ independently for each $x$, applying the Hilbert-valued Efron--Stein inequality of Theorem~\ref{thm:efron-stein} hence giving
\begin{equation*}
  \label{eqa:graph-output-concentration}
  \EE_{g_n}\!\left[
    \left\|
      \calA_{\mathrm{graph}}^{\mathsf{Samp}_{g_n},f_{g_n},\calO}
      -
      \EE_{g_n'}
      \calA_{\mathrm{graph}}^{\mathsf{Samp}_{g_n'},f_{g_n'},\calO}
    \right\|_2^2
  \right]
  \leq
  Cr^2 2^{-n},
\end{equation*}
for a large constant $C$.
Finally, Markov's inequality proves Equation~\ref{eqa:graph-tail-concentration}.

It remains to simulate the mean distribution without querying $g_n$. 
Before the interaction, sample all entries of $X=\{x_1,\ldots,x_r\}$ independently and uniformly from $\{0,1\}^n$. Now for $i = 1,2,\dots,r$, if there is a $j < i$ such that $x_j = x_i$, pick $y_i = y_j$, else sample $y_i$ independently and uniformly from $\{0,1\}^{3n}$.
Write $\mathsf{Samp}_{X,Y}$ for the local procedure that returns $(x_j,y_{j})$ on its $j$-th call.
Let $T$ denote the sampled partial table: $T(x_j)=y_j$, and corresponding predicate function
\begin{equation*}
  f_T(u,v)
  :=
  \mathbf 1[u\in X\text{ and }v=T(u)].
\end{equation*}
The simulator $\calS^{\calO}$ samples this table and runs $\calA_{\mathrm{graph}}^{\mathsf{Samp}_{X,Y},f_{T},\calO}$ while simulating $\mathsf{Samp}_{X,Y}$ and $f_T$ locally.
Hence
\begin{equation*}
  \calS^{\calO}
  =
  \EE_{X,Y}
  \calA_{\mathrm{graph}}^{\mathsf{Samp}_{X,Y},f_T,\calO}.
\end{equation*}
For fixed $X,T$, define $\mathcal D_T$ as the following distribution over Boolean functions on $\{0,1\}^{4n}$: sample $g_n$ uniformly conditioned on extending $T$, and output
\begin{equation*}
  h_{g_n}(u,v)
  :=
  f_{g_n}(u,v)\oplus f_T(u,v)
  =
  \mathbf 1[u\notin X\text{ and }v=g_n(u)].
\end{equation*}
Moreover, for every fixed $(u,v)\in\{0,1\}^{4n}$,
\begin{equation*}
  \Pr_{h\gets\mathcal D_T}[h(u,v)=1]
  =
  \begin{cases}
    0, & u\in X,\\
    2^{-3n}, & u\notin X.
  \end{cases}
\end{equation*}
Apply Theorem~\ref{BBBV} using problem distribution $\mathcal D_T$, hence yields
\begin{equation}
  \label{eqa:graph-mean-simulation}
  \left\|
    \EE_{g_n'}
    \calA_{\mathrm{graph}}^{\mathsf{Samp}_{g_n'},f_{g_n'},\calO}
    -
    \calS^{\calO}
  \right\|_2
  \leq
  Cr2^{-3n/2}.
\end{equation}
Outside the exceptional event in equation~\ref{eqa:graph-tail-concentration}, the triangle inequality and equation~\ref{eqa:graph-mean-simulation} give
\begin{align*}
  &\left\|
    \calA_{\mathrm{graph}}^{\mathsf{Samp}_{g_n},f_{g_n},\calO}
    -
    \calS^{\calO}
  \right\|_2\\
  \leq&
  \left\|
    \calA_{\mathrm{graph}}^{\mathsf{Samp}_{g_n},f_{g_n},\calO}
    -
    \EE_{g_n'}
    \calA_{\mathrm{graph}}^{\mathsf{Samp}_{g_n'},f_{g_n'},\calO}
  \right\|_2
  +
  \left\|
    \EE_{g_n'}
    \calA_{\mathrm{graph}}^{\mathsf{Samp}_{g_n'},f_{g_n'},\calO}
    -
    \calS^{\calO}
  \right\|_2\\
  \leq&
  \beta+Cr2^{-3n/2}.
\end{align*}
This gives the claimed running time by construction.
\end{proof}

We now show the proof of Lemma~\ref{thm:main-concentration} formally.
\begin{proof}[Proof of Lemma~\ref{thm:main-concentration}]
Set $\alpha=\beta:=\eta/2$ and apply Lemma~\ref{lem:phase-information-replacement} to obtain $\calA_{\mathrm{graph}}$ with query bound $r = \Theta(q^2/\varepsilon^2)$, and then apply Lemma~\ref{lem:graph-information-replacement} to obtain $\calS$.
Outside the two exceptional events, the triangle inequality gives
\begin{align*}
  \left\|
    \calA^{U_{g_n,R_n},f_{g_n},\calO}(1^\lambda,x)
    -
    \calS^{\calO}(1^\lambda,x)
  \right\|_2
  &\leq
  \varepsilon+\alpha+\beta
  +\frac{Cq^2}{\varepsilon^2 2^{n/2}}
  +Cr2^{-3n/2}
  \\
  &\leq
  \varepsilon+\eta+
  \frac{Cq^2}{\varepsilon^2 2^{n/2}}.
\end{align*}
And the probability of the two exceptional events are at most
\begin{equation*}
  \frac{Cq^2}{\varepsilon^2\eta^2 2^n}
  +
  \frac{Cr^2}{\eta^2 2^n}
  \leq
  \frac{Cq^4}{\varepsilon^4\eta^2 2^n}.
\end{equation*}
This proves Lemma~\ref{thm:main-concentration}.
\end{proof}

\begin{proof}[Proof of Theorem~\ref{thm:4-main}]
    The theorem is straightforward by combining Theorem~\ref{thm:ev-owpuzz-existence} and Theorem~\ref{thm:no-pd-owf}.
\end{proof}

\section{An Oracle Separation of EV-OWPuzz from QCCC Commitment Scheme}\label{sec:commit-separation}
This section is dedicated to the following theorem.
\begin{theorem}\label{thm:5-main}
    There exists a unitary oracle relative to which EV-OWPuzz exists and QCCC commitment schemes does not. Moreover, the separation holds even if we allow controlled access and access to the inverse, conjugate, and transpose.
\end{theorem}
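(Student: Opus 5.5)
We follow the blueprint of the technical overview (Section~\ref{sec:tech-overview-commitment}).

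\textbf{Base oracle.} For each $n$ we sample:
\begin{itemize}
\item a sparse random set $S_n\subseteq\{0,1\}^n$, say of size $2^{n/2}$;
\item a random injection $J_n(s,\cdot):\{0,1\}^{n}\to\{0,1\}^{3n}$ for each puzzle $s$;
\item a random function $g_n$ and random signs $\alpha_n$.
\end{itemize}
The oracle exposes three components:
\begin{itemize}
\item the reflection $U$ swapping $\ket{0}$ with $\ket{1}\ket{\phi}$, where $\ket{\phi}=|S|^{-1/2}\sum_{s\in S}\alpha(s)\ket{s}\ket{J(s,g(s))}$;
\item the verifier $f$, which on $s\in S$ accepts only $J(s,g(s))$ and on $s\notin S$ accepts all of $\operatorname{Im}J(s,\cdot)$;
\item $\pspace$.
\end{itemize}
As in the remark after Lemma~\ref{lem:phase-information-replacement}, $U$ is real, symmetric and self-inverse, so controlled, inverse, conjugate and transpose queries are covered.

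\textbf{Step 1: the EV-OWPuzz survives.} The puzzle is obtained by measuring $U\ket{0}$; verification is $f$. Security should hold even against adversaries that additionally get full access to $J$ and to any helper depending only on $J$ and $\pspace$. The reason is that finding the accepted key for an honestly sampled $s\in S$ is a search for $g(s)$ given only $f$, and $J$ reveals nothing about $g(s)$. We will repeat the proof of Theorem~\ref{thm:ev-owpuzz-existence}:
\begin{itemize}
\item replace $U$ on a set of good puzzles using an analogue of Lemma~\ref{lem:psi-approximation} (loss $\sqrt{L/|S|}$, so $L$ must be a small power of $|S|$);
\item apply the BBBV/extractor case split;
\item use the compression counting argument over $g|_{\mathsf{Good}}$ to handle nonuniform advice, then Borel--Cantelli.
\end{itemize}
The helper oracle will be treated as a fixed function of $J$, which is why the counting argument goes through.

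\textbf{Step 2: decoupling.} Fix a candidate QCCC commitment $\Pi$ with $q$ queries. Write $\Theta=(S,g,\alpha)$ and let $\widehat\Theta$ be an independent copy; both use the same $J$. We prove Theorem~\ref{thm:informal-decoupling}: for most $J$ and each $b$, the transcript of $\Pi$ is negligibly close to that of the decoupled protocol $\widetilde\Pi$, in which the committer uses the oracles built from $\widehat\Theta$.

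The chain of arguments is as follows:
\begin{enumerate}
\item Apply Theorem~\ref{thm:JLS-simulation} to replace $U$ with copies of $\ket{\phi^-}$ on each side.
\item Apply the LOCC phase decoupling (Theorem~\ref{thm:LOCC-decoupling}) to replace the shared signs $\alpha$ with independent $\alpha,\widehat\alpha$.
\item Apply Theorem~\ref{thm:GMMY-simulation} locally to turn each party's states into samples $(s,J(s,g(s)))$ with $s\gets S$.
\item Resample $(S,g)$ on the committer's side. We argue the receiver cannot notice, through $f$ or through collisions of samples, that the committer's samples come from $\widehat S$. With $|S|=2^{n/2}$ and polynomially many samples, collisions between the two sets of samples are negligible. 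A key $J(s,\widehat g(s))$ with $s\in\widehat S\setminus S$ is accepted by the true $f$ because it lies in $\operatorname{Im}J(s,\cdot)$.
\item Handle $f$-queries distinguishing membership in $S$ by a BBBV argument (Lemma~\ref{BBBV}). Here we use that $J$ has image density $2^{-2n}$, so a receiver cannot guess accepted keys for easy puzzles except those it learns from the transcript.
\end{enumerate}
We expect this step, specifically the second and fourth items, to be the main obstacle. The receiver is the one party that sees both real samples and committer-derived puzzles. I would first try to show the hybrid is indistinguishable on average over $\Theta,\widehat\Theta$ by coupling $S\cup\widehat S$ as one random sparse set.

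\textbf{Step 3: attack and diagonalization.} With $J$ fixed, $\widetilde\Pi$ is a plain-model (inefficient) QCCC protocol, since each party simulates its own $\Theta$ internally. By Lemma~\ref{lem:dichotomy-hiding-binding}, $\widetilde\Pi$ is not both statistically hiding and statistically binding. A helper $\mathsf{Help}_\Pi$ depending only on $J$ then implements the corresponding unbounded attack:
\begin{itemize}
\item if $\widetilde\Pi$ is not statistically hiding, it outputs the likelier bit given the transcript;
\item if $\widetilde\Pi$ is not statistically binding, it samples the conditional committer state and opening for the other bit, using the simulated $\widehat\Theta$ together with the true $f$.
\end{itemize}
By Step 2, this attack transfers to $\Pi$ on average over $\Theta$, and Markov plus Borel--Cantelli give the statement with probability $1$.

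Finally, we embed all helpers into the oracle by diagonalization. Enumerate the candidates $\Pi_i$ and place $\mathsf{Help}_{\Pi_i}$ at input lengths so large that $\Pi_i$, at the security parameters where it is attacked, cannot query it. The main concern is that $\Pi_i$ could query helpers of other schemes at smaller lengths. This is handled as in Section~\ref{sec:owf-nonexistence}: those helpers depend only on $J$ and are learned or hardcoded like small components. Step 1 already covers adversaries with helper access, so the EV-OWPuzz remains secure, which proves Theorem~\ref{thm:5-main}.
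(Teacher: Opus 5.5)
Your oracle and first two steps track the paper's proof closely:
\begin{itemize}
\item the same oracle shape (a sparse honest region, key sets hidden by an injective $J$, and a signed graph-state reflection);
\item the argument of Theorem~\ref{thm:ev-owpuzz-existence}, with a helper independent of $\Theta$;
\item the same decoupling chain (JLS, LOCC phase decoupling, sampler simulation, then replacing $f$ and resampling the committer's $(S,g)$);
\item the attack via Lemma~\ref{lem:dichotomy-hiding-binding}.
\end{itemize}
The gap is in the diagonalization. You assert that the helpers ``depend only on $J$'' and that lower helpers can be ``learned or hardcoded like small components.'' The first claim is false, and the second treats a question of information as a question of efficiency.

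\emph{(i) Small widths stay true.} The decoupling cannot be applied at small widths. Its error scales like $t/\sqrt{|S_m|}$, and at $m=O(\log\lambda)$ the receiver can learn $S_m$ outright and reject a committer puzzle outside it. Such widths must stay true in $\widetilde\Pi$, so the attack helper depends on the true $\Theta_m$ there. This breaks your Step~1 unless each helper sits at an input length superpolynomial in every such width. For example, run the hardcore-bit protocol of the overview at a true width $m$. On transcript $(s,r,c)$ the hiding helper returns $c\oplus(k\cdot r)$, where $k$ is the unique key of $s$. Any $\mathrm{poly}(m)$-time adversary that can reach this helper therefore inverts the puzzle at width $m$.

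\emph{(ii) Reachable helpers must not reveal $J$ at decoupled widths.} Every helper that $\Pi_i(1^\lambda)$ can query must be independent of $J_m$ and $\Theta_m$ at each width $m$ you decouple for $\Pi_i(1^\lambda)$. Suppose a helper reveals $J_m$. The receiver then queries $f_m(s,J_m(s,w))$ for a random $w$ on the committer's puzzle $s$. This accepts for an easy puzzle and rejects with high probability for an honest one, so your BBBV step (item~5) fails. Hardcoding such a helper into the protocol makes this worse, not better.

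These two constraints force the layered parameters the paper uses: $\Lambda_s=C_{s-1}^s+1$ and $C_s=\Lambda_s^{2\iota(s)+2}$.
\begin{itemize}
\item At level $s$, only widths in $(C_{s-1},C_s]$ are decoupled; widths $\le C_{s-1}$ stay true.
\item The level-$s$ helper reveals $J_m$ and $\widehat\Theta_m$ for $m\in(C_{s-1},C_s]$, together with the optimal attacks on $\Pi_{\iota(s)}(1^{\Lambda_s})$.
\item This helper is placed at input length $64C_s$.
\end{itemize}
Consequently $\Pi_{\iota(s)}(1^{\Lambda_s})$ can reach only $\Help_{\le s-1}$, which depends only on widths $\le C_{s-1}$. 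Lemma~\ref{lem:help-inaccessible} then supplies the independence your Step~1 needs.

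The fresh randomness at each level is also what turns a constant per-level success probability into success infinitely often with probability~$1$. That step needs the converse (conditional) Borel--Cantelli lemma, not the summable version.
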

Similar to Section \ref{sec:owf-separation}, this section is organized as follows. Section \ref{sec:commitment-oracle} introduces the construction of the oracle. Section \ref{sec:commitment-ev-owpuzz} shows that a secure EV-OWPuzz exists relative to the oracle with a BBBV-type argument. Section \ref{sec:commitment-nonexistence} shows that QCCC commitment schemes do not exist relative to the oracle. The core of the proof is illustrated in Section~\ref{sec:LOCC-decoupling}, which shows that when the committer and the receiver are given independent oracles, the transcripts are still approximately the same.   

\subsection{Oracle construction}\label{sec:commitment-oracle}

We now construct the oracle.
For each $n$, let
\begin{equation*}
    H_n:\{0,1\}^n\longrightarrow\{0,1\}^{4n}
\end{equation*}
be an uniformly random injection, and
\begin{equation*}
    g_n:\{0,1\}^n\longrightarrow\{0,1\}^{4n}
    \qquad\text{and}\qquad
    R_n:\{0,1\}^n\longrightarrow\{0,1\}.
\end{equation*}
be uniformly random functions.
The image of $H_n$ is the ``honest'' puzzle region, which has size only $2^n \ll 2^{4n}$.
However, directly using arbitrary keys for those ``cheating'' (or easy-to-invert) puzzles will be immediately exposed.
In order to hide the valid keys associated with each puzzle, independently sample
\begin{equation*}
    J_n:\{0,1\}^{4n}\times\{0,1\}^{4n}
    \longrightarrow
    \{0,1\}^{8n}
\end{equation*}
uniformly from the maps for which $J_n(z,\cdot)$ is injective for every $z\in\{0,1\}^{4n}$. We will always use $J_n(H_n(x), g_n(x))$ as the puzzle corresponding to $H_n(x)$ instead of plain $g_n(x)$.

Define $f_n:\{0,1\}^{4n}\times\{0,1\}^{8n}\to \{0,1\}$ as follows:
\begin{equation*}
    f_n(z,k)
    :=
    \begin{cases}
        \mathbf 1[k=J_n(z,g_n(x))],
        & z=H_n(x)\text{ for some }x\in\{0,1\}^n,\\
        \mathbf 1[k\in\Im(J_n(z,\cdot))],
        & z\notin\Im(H_n).
    \end{cases}
\end{equation*}
It is easy to see that by injectivity, every puzzle $z$ in the honest region $\Im H$ has exactly one valid key $J_n(z, g_n(H_n^{-1}(z)))$, whereas every puzzle outside this region has $2^{4n}$ valid keys $\Im(J_n(z, \cdot))$, and could be easily inverted.
At the same time, the valid keys (even for a puzzle $z\notin\Im H$) occupy only at most $2^{-4n}$ fraction of the key space, so their locations are hidden without the puzzle.

Define the graph state
\begin{equation*}
    \ket{\phi_n}
    :=
    2^{-n/2}
    \sum_{x\in\{0,1\}^n}
        (-1)^{R_n(x)}
        \ket{H_n(x)}
        \ket{J_n(H_n(x),g_n(x))}.
\end{equation*}
Let $U_n$ be the unitary that swaps $\ket{0^{12n+1}}$ and $\ket{1}\ket{\phi_n}$, equivalently,
\begin{equation*}
    U_n
    :=
    I-
    \left(
        \ket{0^{12n+1}}-\ket{1}\ket{\phi_n}
    \right)
    \left(
        \bra{0^{12n+1}}-\bra{1}\bra{\phi_n}
    \right).
\end{equation*}
Write $U=(U_n)_{n\in\mathbb N_+}$, $f=(f_n)_{n\in\mathbb N_+}$, and similarly write $H,g,R,J$ for the corresponding sequences.
Note that $U$ and $f$ are both dependent on $H, g, R$ and $J$.

Our oracle family will consist of $U$, $f$, and a classical oracle $\Help$ that is constructed from diagonalizing techniques and allows us to essentially break all QCCC commitment schemes. We defer the explicit construction of the oracle $\Help$ to Section~\ref{sec:commitment-nonexistence}, but claim that the oracle satisfying the following lemma, whose proof is also postponed to Section~\ref{sec:commitment-nonexistence}.

\begin{lemma}\label{lem:help-inaccessible}
For every polynomial $p$, there exists $n_0 \in \mathbb N$, such that for any $n \geq n_0$, conditioned on $J$, the restriction of $\Help$ to queries of width $d\leq p(n)$ is (jointly) independent of $H_n,g_n,R_n$.
\end{lemma}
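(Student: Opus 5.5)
This lemma is a property of the yet-to-be-defined diagonalization oracle $\Help$, so the proof has to follow from how $\Help$ is built in Section~\ref{sec:commitment-nonexistence}. I will design the construction so that independence holds by fiat.

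\textbf{Step 1: the construction.} Enumerate all candidate QCCC commitment schemes $\Pi_1,\Pi_2,\dots$ (uniform QPT oracle protocols). For a given width $d$, the block of $\Help$ that answers queries of width $d$ handles finitely many schemes $\Pi_i$ with $i\le \log d$, at security parameters $\lambda$ with $\lambda\le d$. For each such pair $(\Pi_i,\lambda)$, the helper table encodes the inefficient attack against the \emph{decoupled} protocol $\widetilde\Pi_i$ from Theorem~\ref{thm:informal-decoupling}. Concretely, it is one of the following:
\begin{itemize}
\item the maximum-likelihood guess of $b$ given a transcript, if $\widetilde\Pi_i$ is not statistically hiding;
\item the committer's next classical messages in the conditional-resampling opening strategy from Lemma~\ref{lem:dichotomy-hiding-binding}, otherwise.
\end{itemize}
The distribution of $\widetilde\Pi_i$ is obtained by averaging over $\Theta=(H,g,R)$ and an independent $\widehat\Theta$ at all oracle widths that $\Pi_i$ can touch. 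So the table is a deterministic function of $J$ alone, plus possibly independent internal randomness of $\Help$. Recursively, $\Pi_i$ may itself query $\Help$ on smaller widths; I handle this by defining blocks in increasing order of $d$ and letting the table at width $d$ depend on the already-defined lower blocks.

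\textbf{Step 2: induction on width.} I will show that the block of width $d$ is a measurable function of $J$, of the lower blocks, and of randomness independent of $(H,g,R)$. The point is that every expectation in the construction integrates out $(H_m,g_m,R_m)$ over the relevant $m$. By induction, the lower blocks are functions of $J$ only, so the whole restriction to widths $\le p(n)$ is jointly independent of $(H_n,g_n,R_n)$ conditioned on $J$. Independence from $(H_m,g_m,R_m)$ for $m\ne n$ is automatic here: the lemma only asks about level $n$.

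\textbf{Step 3: the threshold $n_0$.} The only subtlety is that an attack targeting security parameter $\lambda$ might want to use the actual values of $(H_n,g_n,R_n)$ for small $n$. By the approach of~\cite{Kre21}, such small components are learned directly by the adversary via tomography (Theorem~\ref{thm:unitary-tomography}) and truth-table queries, not encoded in $\Help$. So $\Help$ at width $d$ averages only over levels $n$ above a cutoff such as $n > c\log d$, and uses nothing at smaller levels. Choosing $n_0$ so that $c\log p(n) < n$ for all $n\ge n_0$ ensures that no block of width $\le p(n)$ ever looks at level-$n$ data beyond averaging. This choice of cutoff, and checking that it is consistent with the attack analysis, is the main obstacle. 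If the cutoff is wrong, either the lemma fails or the attack loses the small levels, so $n_0$ must be tuned jointly with the query-width bookkeeping.
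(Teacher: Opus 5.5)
Your Step~2 claims that every block of $\Help$ is a function of $J$ alone, with all $(H_m,g_m,R_m)$ integrated out. If that were true, the lemma would hold for every $n$ and every width, and the threshold in Step~3 would do no work. The fact that the statement only asserts independence for widths $\le p(n)$ and $n\ge n_0$ signals that $\Help$ cannot be built this way.

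Small levels can be handled neither way you suggest:
\begin{itemize}
\item \emph{Decoupling.} The bound of Theorem~\ref{thm:decoupling} is $O(q^{2/3}2^{-m/6})$, which is trivial for $m=O(\log q)$.
\item \emph{Averaging as a common oracle.} A level that both parties read becomes shared randomness. The averaged scheme then no longer starts from a product state, and the step ``$\rho_R^z$ does not depend on $b$'' in Lemma~\ref{lem:dichotomy-hiding-binding} fails. The stored tables then need not break the scheme under the fixed true oracle. For example, take a scheme that masks $b$ with a bit read off a constant-size level: its averaged transcripts are identical for $b=0,1$, while the real receiver decodes $b$ exactly.
\end{itemize}
Letting the adversary learn small levels by tomography does not repair this. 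The maximum-likelihood and opening tables are precomputed inside $\Help$, so they must already refer to specific small-level oracles; ``uses nothing at smaller levels'' leaves them undefined.

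The missing idea is that $\Help$ \emph{does} hardwire the true low-level oracles, and the lemma holds only because the blocks that do so are too wide. In the paper, $\Help_s$ targets $\Pi_{\iota(s)}(1^{\Lambda_s})$. It decouples only the levels $m\in(C_{s-1},C_s]$, giving the two parties independent copies and exposing a fresh $\widehat\Theta_m$, and it uses the true $U_m,f_m$ for all $m\le C_{s-1}$. Fix $n\in(C_{s-1},C_s]$. The blocks $\Help_{\le s}$ are independent of $\Theta_n$ by construction. The blocks $\Help_{s+1},\Help_{s+2},\dots$ may depend on $\Theta_n$, since their tables are computed with the true $U_n,f_n$. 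However, they have width at least $64C_{s+1}\ge 64\Lambda_{s+1}>n^{s+1}$, which exceeds any fixed $p(n)$ once $s$ (hence $n$) is large. Your cutoff $n>c\log d$ would give the analogous argument if block $d$ hardwired the true oracles at levels $\le c\log d$: level $n$ would then be used only by blocks of width $\ge 2^{n/c}>p(n)$.

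Separately, your induction on width is not well founded. With $\lambda\le d$ and $i\le\log d$, $\Pi_i(1^\lambda)$ may run for $\lambda^i\gg d$ steps and query $\Help$ at widths $\ge d$. The paper avoids this by taking $C_s=\Lambda_s^{2\iota(s)+2}>\Lambda_s^{\iota(s)}$, so that $\Pi_{\iota(s)}(1^{\Lambda_s})$ only reaches $\Help_{\le s-1}$.
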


\subsection{EV-OWPuzz survives}\label{sec:commitment-ev-owpuzz}

Now we state the secure EV-OWPuzz construction relative to $(U, f, \Help)$.
On input $1^\lambda$, the sampler $\Samp$ applies $U_\lambda$ to $\ket{0^{12\lambda+1}}$, measures in the computational basis, and obtains
\begin{equation*}
  1,H_\lambda(x),J_\lambda(H_\lambda(x),g_\lambda(x))
\end{equation*}
for a uniform $x\in\{0,1\}^\lambda$.
It outputs $H_\lambda(x)$ as the puzzle and $J_\lambda\big(H_\lambda(x),g_\lambda(x)\big)$ as the key.
The verifier $\Ver(z, k)$ only accepts queries for $z \in \{0,1\}^{4\lambda}$ and $k \in \{0,1\}^{8\lambda}$ for some $\lambda \in \mathbb N^+$, and it returns the query result $f_\lambda(z,k)$.

\begin{theorem}[Existence of an EV-OWPuzz]\label{thm:ev-owpuzz-survives-help}
Let $(U,f)$ be the oracle defined in Section~\ref{sec:commitment-oracle}, and $\Help$ be a side oracle satisfying the joint independence as required in Lemma~\ref{lem:help-inaccessible}, then with probability 1 over the choice of $U, f, \Help$, the EV-OWPuzz candidate above is secure against nonuniform QPT adversaries with polynomial-length classical advice and queries to $U,f,\Help$.
\end{theorem}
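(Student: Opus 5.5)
Correctness is immediate: $U_\lambda\ket{0^{12\lambda+1}}=\ket1\ket{\phi_\lambda}$, so measuring gives $(H_\lambda(x),J_\lambda(H_\lambda(x),g_\lambda(x)))$ for a uniform $x$. By the definition of $f_\lambda$ on $\Im H_\lambda$, the verifier accepts with probability $1$. For security I will follow the compression argument of Theorem~\ref{thm:ev-owpuzz-existence}. The key change is what the extractor is allowed to know. I condition on $J$ and on $H$, and treat $R$ as public. By Lemma~\ref{lem:help-inaccessible}, for all sufficiently large $n$, every query of polynomial width that a $q(n)$-query adversary can make to $\Help$ is answered by a function of $J$ alone. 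Hence $\Help$, together with all widths $m\neq n$ of $U,f$ (which are independent of $g_n$), can be hardcoded into the inefficient extractor. The only secret being compressed is then $g_n:\{0,1\}^n\to\{0,1\}^{4n}$, which is uniform given $(H,J,R,\Help)$. An adversary outputting a valid key $k=J_n(H_n(x),g_n(x))$ is equivalent to outputting $g_n(x)$, since $J_n(z,\cdot)$ is injective and $J$ is known to the extractor.

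Concretely, I fix an adversary with $q(n)$ queries and advice length $w(n)$, and assume it succeeds with probability at least $\epsilon_n=2^{-n/50}$ for some advice $c^\star$. Set $L_n=\lfloor 2^{9n/10}\rfloor$ and let $S\subseteq\mathsf{Good}_{g_n,c^\star}$ with $|S|=L_n$, where ``good'' means the adversary solves puzzle $H_n(x)$ with probability at least $\epsilon_n/2$. Next I define hybrid oracles.
\begin{itemize}
\item $\widetilde U_{S}$ replaces the components $x\in S$ of $\ket{\phi_n}$ by $\ket{H_n(x)}\ket{0^{8n}}$. By the same computation as Lemma~\ref{lem:psi-approximation}, the operator-norm distance is at most $2\sqrt{2L_n/2^n}$, so the loss is $O(q2^{-n/20})$.
\item $\widetilde f_S$ answers $0$ on $(H_n(x),\cdot)$ for $x\in S$.
\end{itemize}
Both hybrids are computable from $S$, $H$, $J$, $R$ and $g_n|_{\overline S}$. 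I then run the two-case extractor $\mathsf{Ext}$ exactly as before. In case~1, replacing $f$ changes success by at least $\epsilon_n/6$; measuring a random $\widetilde f_S$-query register yields some $(H_n(u),J_n(H_n(u),g_n(u)))$ with $u\in S$, with probability at least $\epsilon_n^2/(144q^2)$, and inverting $H_n$ and $J_n$ recovers $g_n(u)$. In case~2, the final output already does this. The iterated $\Decfull$ then recovers $g_n|_S$ with probability at least $\alpha_n^{L_n}$. The counting bound becomes
\[
|\mathcal H_n|\le 2^{w(n)+1}\binom{2^n}{L_n}2^{4n(2^n-L_n)}\alpha_n^{-L_n},
\]
which gives $\Pr[g_n\in\mathcal H_n]\le 2^{-2nL_n}$ for large $n$. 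This probability holds for every fixed $(H,J,R)$, so it also holds after averaging. Borel--Cantelli (Lemma~\ref{Borel-Cantelli}) and countability of uniform algorithms, with their advice handled inside $\mathcal H_n$, then finish the proof.

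The main obstacle is justifying the treatment of $\Help$. Lemma~\ref{lem:help-inaccessible} only gives independence for widths $d\le p(n)$ and for $n\ge n_0$, and that $n_0$ depends on $p$. I will take $p$ to be the adversary's running-time polynomial, which bounds every query width it can make on security parameter $n$. For each fixed adversary this leaves only finitely many exceptional $n$, which is harmless for an ``infinitely often'' Borel--Cantelli statement. A second subtlety is that the extractor must simulate $\Help$ on the relevant widths using only $J$ and data independent of $g_n$. Here I will use that, conditioned on $J$, the restriction of $\Help$ is independent of $(H_n,g_n,R_n)$. I can therefore sample or fix it as part of the conditioning, which keeps $g_n$ uniform for the counting step.
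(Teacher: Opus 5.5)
Your proposal is correct and follows essentially the same route as the paper. You treat $H$, $J$ (and $R$) as public, use Lemma~\ref{lem:help-inaccessible} to argue that the part of $\Help$ reachable by a polynomial-time adversary is independent of $g_n$, and rerun the compression and Borel--Cantelli argument of Theorem~\ref{thm:ev-owpuzz-existence} with $4n$-bit outputs, where the unique valid key $J_n(H_n(x),g_n(x))$ determines $g_n(x)$ by injectivity. One small imprecision: that accessible part of $\Help$ is not a function of $J$ alone, since it also depends on $\widehat\Theta$ and on lower-width $\Theta_m$. As you correctly say at the end, though, independence from $g_n$ given $J$ is all the counting step needs.
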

We can prove an even stronger result: even if $J, H$ are publicly accessible and the adversary has unbounded computation power but only makes polynomially many queries to $U, f$, only queries polynomial width oracles, and has polynomial-size advice, the adversary still cannot break the EV-OWPuzz. In case $J, H$ are public and the adversary has unbounded computation, the EV-OWPuzz can be transformed into $(x, g_n(x))$, which is identical with the EV-OWPuzz in Section~\ref{sec:existence_of_EV-OWPuzz_1}.

Therefore, the proof of Theorem~\ref{thm:ev-owpuzz-survives-help} is almost identical to the proof of Theorem~\ref{thm:ev-owpuzz-existence}, except that the side oracle $\Help$ is statistically dependent on $H, g, R$, and the parameter regime is slightly different. However, according to Lemma~\ref{lem:help-inaccessible}, for any polynomial-query adversary $\mathcal A$, the accessible part of the side oracle $\Help$ is independent of $H, g, R$, hence the original proof of Theorem~\ref{thm:ev-owpuzz-existence} still applies. 

\subsection{Main technique: LOCC decoupling of oracles}
\label{sec:LOCC-decoupling}
In this section, we introduce the decoupling techniques of the QCCC commitment scheme, which serves as the main technical ingredient of the proof of the non-existence of a QCCC commitment. First, we define the decoupled oracles and the corresponding decoupled commitment scheme.
\begin{definition}[Decoupled oracles]\label{def:decoupled-oracle}
    Let $\Theta_m = (H_m, g_m, R_m)$ be the hidden variables of $U, f$ except $J$. For each  $\widehat{\Theta}_m = (\widehat H_m, \widehat g_m, \widehat R_m)$ independent of $\Theta_m$, the corresponding decoupled oracle $\widehat{U}_m, \widehat{f}_m$ is defined as follows:
    \begin{itemize}
        \item The decoupled $\widehat{U}_m$ is defined according to $J_m$ and $\widehat{\Theta}_m$. More formally, $\widehat{U}_m$ is the reflection oracle relative to $\ket{\widehat{\phi}_m^-}$, where $\ket{\widehat{\phi}_m^-}$ is defined as the following state
        \begin{align*}
              \ket{\widehat\phi_m^-}:=\frac{1}{\sqrt{2}}\ket{0^{12m+1}}-\frac{1}{\sqrt{2}}\ket{1}\otimes \ket{\widehat\phi_m}\,.
        \end{align*}
        Here
        \begin{equation*}
    \ket{\widehat\phi_m}
    :=
    2^{-m/2}
    \sum_{x\in\{0,1\}^m}
        (-1)^{\widehat R_m(x)}
        \ket{\widehat H_m(x)}
        \ket{J_m(\widehat H_m(x),\widehat g_m(x))}.
\end{equation*}
        \item The decoupled $\widehat{f}_m$ is defined as the verification oracle according to $J_m$ and $\widehat{\Theta}_m$. More formally,
        \begin{equation*}
    \widehat f_n(z,k)
    :=
    \begin{cases}
        \mathbf 1[k=J_n(z,\widehat g_n(x))],
        & z=\widehat H_n(x)\text{ for some }x\in\{0,1\}^n,\\
        \mathbf 1[k\in\Im(J_n(z,\cdot))],
        & z\notin\Im(\widehat H_n).
    \end{cases}
\end{equation*}
    \end{itemize}
\end{definition}

\begin{definition}[Decoupled LOCC commitment protocols]\label{def:ideal-decoupled-protocol}
    For some arbitrary two-party QCCC commitment protocol $\Pi$ satisfying Definition~\ref{def:QCCC-commitment}, assume each party can only query $U_m$, $f_m$, and some classical oracles independent of $\Theta_m$ and $J_m$. Choose a parameter $a\in \mathbb N_+$.
    Define the \emph{decoupled} protocol $\widetilde \Pi$ as follows.
    It uses an independent $\widehat\Theta_m=(\widehat H_m,\widehat g_m,\widehat R_m)$ as follows:
    \begin{itemize}
    \item The receiver behaves exactly the same as the honest receiver of $\Pi$, and it can query $U_m$, $f_m$ that is dependent on $J_m$ and $\Theta_m$;
    \item The committer behaves the same as the honest committer of $\Pi$, except that it queries the decoupled oracle (Definition~\ref{def:decoupled-oracle}) $\widehat{U}_m$ and $\widehat{f}_m$ that are dependent on $J_m$ and  $\widehat{\Theta}_m$.
\end{itemize}
\end{definition}

Now we formally describe the proposed closeness theorem, which claims that, informally speaking, if $\Theta_m$ and $\widehat\Theta_m$ are independently sampled at random, the distributions of the transcripts are statistically indistinguishable between the original protocol and its decoupled version (Definition~\ref{def:ideal-decoupled-protocol}).

\begin{theorem}\label{thm:decoupling}
    Fix $m\in \mathbb N_+$.
    Let $\Pi$ be a two-party protocol satisfying the syntax in Definition~\ref{def:QCCC-commitment}, such that each party can only query $U_m$, $f_m$, and some classical oracles independent of $\Theta_m$ and $J_m$.
    Suppose each party of $\Pi$ queries the oracles at most $q$ times.
    Let $\rho(b;\Theta_m,J_m)$ be the distribution of the transcript between the honest committer and the receiver in $\Pi$; let $\sigma(b;\widehat\Theta_m,\Theta_m,J_m)$ be the distribution of the transcript of the decoupled protocol $\widetilde \Pi$ (Definition~\ref{def:ideal-decoupled-protocol}), with some independent $\widehat\Theta_m$.
    For each $b\in \{0,1\}$,
\begin{equation*}
  \EE_{J_m}\Biggl[\Bigl\Vert\EE_{\Theta_m}\bigl[\rho(b;\Theta_m,J_m)\bigr]-\EE_{\widehat \Theta_m,\Theta_m}\bigl[\sigma(b;\widehat\Theta_m,\Theta_m,J_m)\bigr]\Bigr\Vert_1\Biggr]\leq O\left(\frac{q^{2/3}}{2^{m/6}}\right).
\end{equation*}
\end{theorem}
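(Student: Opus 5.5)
We will prove Theorem~\ref{thm:decoupling} by a chain of hybrids, each changing how the two parties access the $m$-width oracle, with $J_m$ fixed throughout. The error bound $O(q^{2/3}2^{-m/6})$ is meant to come from balancing a simulation error $O(q/\sqrt{t})$ against a decoupling error polynomial in $t/2^m$. \emph{The point is that} the receiver's and committer's copies of the phase states can be decoupled by the LOCC lemma, and the remaining \emph{sample} oracles can be decoupled by a collision argument.

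\textbf{Step 1: from reflections to copies.} Apply Theorem~\ref{thm:JLS-simulation} on each side separately. Every party's $q$ queries to $U_m$ (including controlled, inverse and conjugate queries, since $U_m$ is a real reflection) are replaced by $t$ copies of $\ket{\phi_m^-}$. This gives two modified protocols:
\begin{itemize}
\item in the original protocol, both parties hold copies of the same state $\ket{\phi_m^-}$, for $2t$ copies in total;
\item in the decoupled protocol, the committer holds copies of $\ket{\widehat\phi_m^-}$.
\end{itemize}
Since all messages are classical and each party's local operations are channels, Lemma~\ref{lem:diamond-to-output} bounds the change in the transcript distribution by $O(q/\sqrt t)$ in both experiments.

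\textbf{Step 2: LOCC phase decoupling.} Condition on $(H_m,g_m)$ and $J_m$, and average over $R_m$. The protocol, with the classical oracles and the $f_m$ queries treated as local operations, is an LOCC measurement on $\mathsf A\mathsf B$. Theorem~\ref{thm:LOCC-decoupling}, applied in the $\ket{\phi^-}$ form with $N=2^m$ components, then replaces the shared-phase state $\EE_{R}\ketbra{\phi^-}{\phi^-}^{\otimes 2t}$ by $\EE_{R,R'}\ketbra{\phi^-_R}{\phi^-_R}^{\otimes t}\otimes\ketbra{\phi^-_{R'}}{\phi^-_{R'}}^{\otimes t}$. The cost is some $\poly(t)/2^{m}$, or possibly $t/2^{m/2}$ depending on the form of that theorem.

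\textbf{Step 3: replace copies by samples.} On each side, apply the sample-based simulator of Theorem~\ref{thm:GMMY-simulation}, at cost $O(t2^{-m/2})$ per side. Now each party only receives classical samples $(H(x),J(H(x),g(x)))$ for uniform $x$, plus $f_m$. Do the same in the decoupled experiment, where the committer receives samples from $\widehat\Theta$.

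\textbf{Step 4: decouple the sample oracles classically.} Here we must show that the following two situations give nearly the same transcript distribution, averaged over $\Theta$ and $\widehat\Theta$ and typical over $J$:
\begin{itemize}
\item both parties get samples from $\Theta$ and query $f$;
\item the committer gets samples from $\widehat\Theta$ and queries $\widehat f$, while the receiver uses $\Theta$ and $f$.
\end{itemize}
Write $X_A$ and $X_B$ for the $x$-values in the two parties' sample sets. Since $H$ is a random injection into $\{0,1\}^{4m}$, it suffices to bound the following events:
\begin{itemize}
\item \emph{Sample collision.} The two sample sets share a puzzle, which requires $X_A\cap X_B\neq\emptyset$. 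This has probability at most $t^2/2^m$.
\item \emph{Finding a hidden hard key.} A party queries $f$ at a point $(H(x),k)$ with $k$ the unique hard key, where $x$ is not in its own samples. By Lemma~\ref{BBBV}, with distribution over $(H,g)$ conditioned on the samples, each such point has mass at most $2^{-4m}\cdot\poly$, so this costs $O(q\,2^{-3m/2})$.
\item \emph{Telling hard from easy puzzles.} Outside these events, the oracle $f$ on puzzles a party did not sample behaves like the "easy" rule $\mathbf 1[k\in\Im J(z,\cdot)]$, up to a set of $2^m$ random hard puzzles that has density $2^{-3m}$ in $\{0,1\}^{4m}$. Hence it looks the same to that party whether the puzzle region is $\Im H$ or $\Im\widehat H$.
\end{itemize}
This can be formalized by a joint re-sampling (coupling) argument: $\widehat\Theta$ restricted to the committer's view is coupled with $\Theta$ whenever the views are consistent.

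\textbf{Combining the steps.} Summing the errors gives roughly $O(q/\sqrt t+t/2^{m/2}+\ldots)$. Choosing $t\approx q^{2/3}2^{m/3}$ yields $q^{2/3}2^{-m/6}$, which matches the stated bound.

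The \textbf{main obstacle} is Step 4 together with the exact form of Step 2. In Step 4, the difficulty is that the receiver's $f$ distinguishes hard puzzles from easy ones. We must argue that a committer using $\widehat H$-puzzles cannot cause the receiver to notice; a puzzle in $\Im\widehat H\setminus\Im H$ is easy for $f$ and still accepts the committer's key $J(z,\widehat g(x))$. I would handle this by a hybrid in which both parties' $f$ is replaced by a common \emph{sample-consistent} oracle built from the union of their sample tables, extended by the easy rule. Each of the two real experiments is then close to this oracle, by BBBV (Lemma~\ref{BBBV}) and the collision bound.
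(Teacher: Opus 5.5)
\textbf{Where your proposal stands.} Your Steps 1--3 are exactly the paper's $\mathsf{Hyb}_0\to\mathsf{Hyb}_3$: reflection-to-state simulation, LOCC phase decoupling, then the sample-based simulator. Your choice $t\approx q^{2/3}2^{m/3}$ is also the right balancing. The gap is in Step 4.

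\textbf{The gap in Step 4.} Your bullets do not target the event that actually separates the two worlds. At the hard key $J_m(H_m(x),g_m(x))$ of an unsampled puzzle, $f_m$ and the easy rule \emph{agree}, so ``finding a hidden hard key'' is harmless. What can distinguish the worlds is a query $(z,k)$ where $k\in\operatorname{Im}J_m(z,\cdot)$ is \emph{not} the revealed key and $z$ is a puzzle that is hard in one world and easy in the other. In the decoupled experiment this case really occurs. The receiver may learn the committer's puzzle $z=\widehat H_m(\hat x)\notin\operatorname{Im}H_m$ from the transcript. Its $f_m$ then accepts every $k\in\operatorname{Im}J_m(z,\cdot)$, while your common sample-consistent oracle accepts only $J_m(z,\widehat g_m(\hat x))$. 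The same happens for the committer's $\widehat f_m$ at the receiver's sampled puzzles. Once the samples and $J_m$ are fixed, this difference set is deterministic. So the BBBV bound you invoke, with randomness only over $(H_m,g_m)$ conditioned on the samples, gives nothing for it.

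\textbf{The missing ingredient.} You need to average over $J_m$, which the statement permits because the expectation over $J_m$ sits outside the norm. The rows $J_m(z,\cdot)$ are independent uniform injections. Hence, conditioned on the one revealed key, the rest of $\operatorname{Im}J_m(z,\cdot)$ is a uniformly random set of density about $2^{-4m}$ in $\{0,1\}^{8m}$, and BBBV applies. With this added, your sample-consistent hybrid does work: use $H$-randomness for unsampled hard rows and $J$-randomness for the other party's sampled rows.

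\textbf{The paper's simpler route.} The paper uses the same $J$-averaged BBBV, but replaces every party's verifier in both worlds by $f_J(z,k)=\mathbf 1[k\in\operatorname{Im}J_m(z,\cdot)]$. Since $f_J$ is independent of $\Theta_m$ and of the samples, the parties' views then depend on $\Theta_m$ only through the sample tables. Switching the committer's sampler to $\widehat\Theta_m$ is then a pure $O(t^2/2^m)$ collision bound, after which the chain is simply reversed.
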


\begin{proof}
    We use a hybrid argument.
    Starting from the distribution of the transcript $\rho$ of the true protocol $\Pi$, we use a hybrid method to gradually move toward the distribution of the transcript $\sigma$ of the decoupled protocol $\widehat \Pi$, and we will prove that there are only a small loss for each hybrid step.

\begin{table}[H]
  \centering
  \renewcommand{\arraystretch}{1.25}
  \begin{tabular}{cl}
    \hline
    Hybrid & Transcript distribution \\
    \hline
    $\mathsf{Hyb}_0$ & $\mathcal D_b[U_m,f_m;U_m,f_m]$ \\
    $\mathsf{Hyb}_1$ & $\mathcal D_b[\ket{\phi_m^-}^{\otimes a},f_m;\ket{\phi_m^-}^{\otimes a},f_m]$ \\
    $\mathsf{Hyb}_2$ & $\mathcal D_b[\ket{\phi_m^-(\widetilde R_m)}^{\otimes a},f_m;\ket{\phi_m^-}^{\otimes a},f_m]$ \\
    $\mathsf{Hyb}_3$ & $\mathcal D_b[\Samp_{H_m,g_m},f_m;\Samp_{H_m,g_m},f_m]$ \\
    $\mathsf{Hyb}_4$ & $\mathcal D_b[\Samp_{H_m,g_m},f_J;\Samp_{H_m,g_m},f_J]$ \\
    $\mathsf{Hyb}_5$ & $\mathcal D_b[\Samp_{\widehat H_m,\widehat g_m},f_J;\Samp_{H_m,g_m},f_J]$ \\
    $\mathsf{Hyb}_6$ & $\mathcal D_b[\widehat U_m,\widehat f_m;U_m,f_m]$ \\
    \hline
  \end{tabular}
  \caption{The list of the hybrids. $J_m, H_m, g_m, R_m, \widehat H_m, \widehat g_m, \widehat R_m, \widetilde R_m$ are independently sampled functions. $U_m, f_m, \Samp_{H_m, g_m}$ and $\widehat U_m, \widehat f_m, \Samp_{\widehat{H}_m,\widehat g_m}$ are the oracles corresponding to $J_m, H_m, g_m, R_m$ and $J_m, \widehat H_m, \widehat g_m, \widehat R_m$ respectively. $\ket{\phi_m^-}$ and $\ket{\phi_m^-(\widetilde R_m)}$ are the states corresponding to $J_m, H_m, g_m, R_m$ and $J_m, H_m, g_m, \widetilde R_m$ respectively. $f_J$ refers to the function $(z, k)\to \mathbf 1[k \in \Im(J_n(z, \cdot))]$.}\label{tab:hybrid-commitment-decoupling}
\end{table}

    Set $a=\lfloor q^{2/3}2^{m/3}\rfloor$. Write $\mathcal D_b[U_C,f_C;U_R,f_R]$ for the transcript distribution for the committer and receiver of the following protocol: they act the same as $\Pi$, but the committer's (resp.\ receiver's) queries to $U,f$ are replaced by querying $U_C, f_C$ (resp.\ $U_R, f_R$). As we will need to simulate the queries to $U$ with states or samplers (Theorem~\ref{thm:JLS-simulation} and Lemma~\ref{lem:phase-information-replacement}), given copies of states $\ket{\phi_m^-}^{\otimes a}$ or samples $(H_m(x),J_m(x,g_m(x)))\gets \mathsf{Samp}$, we can simulate the queries to $U$. We abuse the notation and denote $\mathcal D_b[\ket{\phi_C}^{\otimes a}, f_C;\ket{\phi_R}^{\otimes a}, f_R]$ and $\mathcal D_b[\mathsf{Samp}_C, f_C; \mathsf{Samp}_R, f_R]$ as the transcript distribution of the protocol where the committer's and receiver's queries to $U$ are simulated by such simulators. Here, we define the sampling oracle $\Samp_{H,g}$ to be the oracles sampling random $H(x), J(H(x), g(x))$. Set $a = q^{2/3}2^{m/3}.$ We define our hybrids in Table~\ref{tab:hybrid-commitment-decoupling}. 
    
Now, we will prove the hybrids are statistically close to each other step by step.
\begin{enumerate}
  \item \textbf{$\mathsf{Hyb}_0\to\mathsf{Hyb}_1$.}
  $\mathsf{Hyb}_1$ is the same as $\mathsf{Hyb}_0$, except that the queries to $U_m$ of both parties are simulated by the reflection-to-state simulator (Theorem~\ref{thm:JLS-simulation}).
  For both parties, for all queries to $U_m$, they are simulated using in total $a$ copies of $\ket{\phi_m^-}$.
  By Theorem~\ref{thm:JLS-simulation}, regardless of $\Theta_m$ and $J_m$, the accumulating statistical distance cost is at most $4q/\sqrt{a} = O\left(q^{2/3}{2^{-m/6}}\right)$.

  \item \textbf{$\mathsf{Hyb}_1\to\mathsf{Hyb}_2$.}
  $\mathsf{Hyb}_2$ is the same as $\mathsf{Hyb}_1$, except that each time the committer queries $\ket{\phi_m^-}$, it instead uses the following state
  \begin{equation*}
    \frac{1}{\sqrt{2}}\ket{0^{12m+1}}-\frac{1}{\sqrt{2}}\ket{1}\otimes \sum_{x\in \{0,1\}^m}(-1)^{{\widetilde R_m}(x)}\ket{H_m(x)}\ket{J_m\bigl(H_m(x),g_m(x)\bigr)}\,.
  \end{equation*}
  In other words, in this hybrid, only the phases of the states are decoupled.

  To prove $\mathsf{Hyb}_2$ is close to $\mathsf{Hyb}_1$ requires that the two parties have only classical communication.
  Any message in a QCCC protocol must be obtained by a fixed \emph{measurement} that only depends on the previous transcript.
  This means the transcript can be seen as generated by a fixed POVM measurement where each term is \emph{local}\footnote{Here, ``local" simply means each term can be denoted as a tensor product of two positive semidefinite operators.}.
  By Theorem~\ref{thm:LOCC-decoupling}, the statistical distance loss of this replacement is at most $O(a/\sqrt{2^m})=O(q^{2/3}2^{-m/6})$.

  \item \textbf{$\mathsf{Hyb}_2\to\mathsf{Hyb}_3$.}
  $\mathsf{Hyb}_3$ is the same as $\mathsf{Hyb}_2$, except that we apply the state-to-sample simulator in Theorem~\ref{thm:GMMY-simulation} to both parties. Since the phases of the states are decoupled in $\mathsf{Hyb}_2$, we can apply Theorem~\ref{thm:GMMY-simulation} sequentially.
  Therefore, the simulation will contribute a $O(a/\sqrt{2^m})=O(q^{2/3}2^{-m/6})$ loss in the statistical distance.

  \item \textbf{$\mathsf{Hyb}_3\to\mathsf{Hyb}_4$.}
$\mathsf{Hyb}_4$ is the same as $\mathsf{Hyb}_3$, except that each query to $f_m(x,y)$ made by either party is replaced by a query to $f_J$, which checks whether $y\in\operatorname{Im}J_m(x,\cdot)$.

If $x\notin\operatorname{Im}(H_m)$, this replacement has no effect. If $x=H_m(i)$ for some $i\in\{0,1\}^m$, we bound its effect as follows. Crucially, neither party in $\Pi$ has direct access to $J_m$. For this row of $J_m$, the sampler reveals only the key $k_i:=J_m\bigl(H_m(i),g_m(i)\bigr)$, at which $f_m$ and $f_J$ agree. Conditioned on $k_i$, replacing $f_m$ by $f_J$ therefore amounts to changing from $0$ to $1$ a uniformly random set of $2^{4m}-1$ entries in $\{0,1\}^{8m}\setminus\{k_i\}$. The BBBV theorem~\ref{BBBV} then bounds the average loss over the choice of $J_m$ by $O(a/\sqrt{2^{4m}}) = O(q^{2/3}2^{-3m/5})$.

  \item \textbf{$\mathsf{Hyb}_4\to\mathsf{Hyb}_5$.}
  Now that the verification oracle is completely independent of $\Theta_m$.
  Fix $J_m$.
  $\mathsf{Hyb}_5$ is the same as $\mathsf{Hyb}_4$, except that for each sampling oracle queried by the committer, it is replaced by another sampling oracle that samples according to $\widehat \Theta_m$ instead of $\Theta_m$.
  
  Consider the samples generated by the committer's and the receiver's sampling oracle. These samples can be seen as independently sampled in the whole space $\{0,1\}^{4m}\times \{0,1\}^{4m}$, except when the first $4m$ bits collide on any pair of the samples, either from the committer's side or from the receiver's side.
  Thus, the total statistical distance loss is at most $O(a^2/2^{m})=O(q^{4/3}2^{-m/3})$.

  \item \textbf{$\mathsf{Hyb}_5\to\mathsf{Hyb}_6$.}
  Note that in $\mathsf{Hyb}_5$, the committer's behavior is completely independent of $\Theta_m$.
  $\mathsf{Hyb}_6$ is the same as $\mathsf{Hyb}_5$, except that the receiver's oracle is restored to the original version, and the committer's oracle is restored to the decoupled version.
  The same argument used in $\mathsf{Hyb}_0$ vs.\ $\mathsf{Hyb}_4$ yields an average loss of $O(q^{2/3}2^{m/6})$ when $q$ is polynomial in $m$.

\end{enumerate}

    Note that $\mathsf{Hyb}_6$ is exactly the decoupled protocol $\widetilde\Pi$.
    Thus, our hybrid argument gives a $O(q^{2/3}2^{-m/6})$ statistical distance bound on $\rho$ and $\sigma$, when $q$ is polynomial in $m$.
\end{proof}

\subsection{The construction of the helper oracle}\label{sec:diagonalization-oracle}

We now formally describe the construction of the oracle $\mathsf{Help}$. As we discussed in the technical overview section~\ref{sec:tech-overview-commitment}, this oracle will be constructed using diagonalization, which we define as follows.

Consider an effective enumeration $\Pi_1,\Pi_2,\ldots$ of all uniform efficient two-party two-phase protocols satisfying the syntax of a QCCC commitment (Definition~\ref{def:QCCC-commitment}).
We may assume, without loss of generality, that each party in $\Pi_i(1^\lambda)$ runs in time at most $\lambda^i$.

Choose an efficiently computable function $\iota$, such that $\iota(s)\leq s$ for each $s$, and each positive integer occurs infinitely often in $\{\iota(s)\}_{s\in \mathbb N_+}$.\footnote{For example, we may set $\iota(s)$ to be the largest integer $r$ such that $2^{r-1}$ is a factor of $s$.}
Starting from $C_0=2$, we inductively define
\begin{equation*}
  \Lambda_s:=C_{s-1}^{s}+1,
  \qquad
  C_s:=\Lambda_s^{2\iota(s)+2}.
\end{equation*}

We now start the definition of $\mathsf{Help}$.

\begin{definition}[The helper oracle $\mathsf{Help}$]
    Given $J=(J_n)_{n\in \mathbb N_+}$ and $\widehat \Theta=(\widehat\Theta_n)_{n\in \mathbb N_+}$, we define the helper oracle $\mathsf{Help}$ as the set of tuples $$\{\mathsf{Help}_s:=(\mathsf{Help}^{\widehat \Theta}_s,\mathsf{Help}^J_s,\mathsf{Help}^{\text{hide}}_s,\mathsf{Help}^{\text{bind}}_s)\}_{s\in \mathbb N_+}\,,$$ where each function in $\mathsf{Help}_s$ is defined on $\{0,1\}^{64C_s}\to \{0,1\}$, and we inductively define $\mathsf{Help}_s$ as follows.
    \begin{itemize}
        \item For each $m$ such that $C_{s-1}<m\leq C_s$, $\mathsf{Help}_s^{\widehat \Theta}(m,\cdot)$ provides access to $\widehat\Theta_m=(\widehat H_m,\widehat g_m,\widehat R_m)$.
        \item For each $m$ such that $C_{s-1}<m\leq C_s$, $\mathsf{Help}^J_s(m,\cdot)$ provides access to $J_m$ and $\widehat f_m$.
        \item For every $z\in \{0,1\}^{\leq C_s}$, $\mathsf{Help}^{\text{hide}}_s(z,00\cdots0)$ equals the maximal likelihood estimation of the committed bit $b$ when the commit stage transcript of $\widehat \Pi$ is $z$. 
        \item For every $t\in \{0,1\}^{\leq C_s}$, $\mathsf{Help}^{\text{bind}}_s(t,00\cdots0)$ equals the optimal next bit to send (in the opening phase) if the committer wants to open $b=1$ and the previous transcript is $t$.
        \item All function values not mentioned above is set to $0$.
    \end{itemize}
    Here, $\widehat\Pi$ is the protocol where each party behaves the same as $\Pi_{\iota(s)}(1^{\Lambda_s})$\footnote{Note that $\Pi_{\iota(s)}$ may query $\mathsf{Help}$ syntactically. However, since $C_s>\Lambda_s^{\iota(s)}$, it can only query $\mathsf{Help}_{\leq s-1}$.}, except that
    \begin{itemize}
        \item Before the commit stage, for each $m$ such that $C_{s-1}<m\leq C_s$, the committer and the receiver independently samples $\Theta_{m,C}$ and $\Theta_{m,R}$ respectively. Denote $U_{m,C},f_{m,C}$ the oracles induced by $\Theta_{m,C}$ and $J_m$; denote $U_{m,R},f_{m,R}$ the oracles induced by $\Theta_{m,R}$ and $J_m$.
        \item Each time the committer (resp.\ the receiver) needs to query $U_m,f_m$ for some $C_{s-1}<m\leq C_s$, they query $U_{m,C},f_{m,C}$ (resp.\ $U_{m,R},f_{m,R}$).\footnote{Note that each time they need to query $U_m,f_m$ for $m\leq C_{s-1}$, they simply query the (true) oracles.}
    \end{itemize}
\end{definition}

With the definition of $\mathsf{Help}$, we can now prove that no polynomial-time adversary may access $\Theta_m$ through querying $\mathsf{Help}$.

\begin{proof}[Proof of Lemma~\ref{lem:help-inaccessible}.]
For every $n\in \mathbb N_+$, let $s$ be the unique index such that $C_{s-1}<n\leq C_s$. By construction, $\Help_{\le s}$ is independent of $\Theta_n$.
Note that $\Help_{s+1}$ has query length $64C_{s+1}$, and
\begin{equation*}
  C_{s+1}
  \geq
  \Lambda_{s+1}
  =
  C_s^{s+1}+1
  >
  n^{s+1}\,.
\end{equation*}
Since $s\to\infty$, we have that for every fixed polynomial $p$, $p(n)<C_{s+1}$. Thus, $\mathsf{Help}|_{\{0,1\}^{\leq p(n)}}$ is independent of $\Theta_n$ when $n$ is sufficiently large.
\end{proof}

\begin{remark}\label{rmk:remark-o(1)}
For each $s\in \mathbb N_+$ and each $C_{s-1}< m\leq C_s$, we instantiate the oracle calls to $\widehat U_m$ by using the reflection-to-state simulation using $a=\Lambda_s^{2\iota(s)}m^4$ copies of the state and eventually the sampler (Theorem~\ref{thm:GMMY-simulation}).
Therefore, we need to prepare at most $C_s(\Lambda_s^{2\iota(s)}C_s^4)=\Lambda_s^{O(\iota(s))}$ copies of the state, which is a polynomial.
Moreover, denote the statistical distance introduced as $\delta_m$.
Then,
$$\EE_{J}\sum_{m=C_{s-1}+1}^{C_s}\delta_m\leq O\left(\frac{1}{C_{s-1}}+\frac{C_s^5}{\sqrt{2^{C_{s-1}}}}\right)\,.$$
Recall that $C_{s-1}^s<C_s<C_{s-1}^{3s^2}$, which means $C_s^{O(1)}\ll 2^{C_{s-1}}$.
Thus, the total statistical distance introduced by the decoupling is at most $o(1)$, after averaging over $J$.
\end{remark}

\subsection{Non-existence of QCCC commitment scheme}\label{sec:commitment-nonexistence}

\begin{theorem}
    Under oracle $(U,f,\mathsf{Help})$, for any protocol $\Pi^{U, f, \Help}$, with probability 1 over the choice of oracles, at least on of the following property holds:
    \begin{itemize}
        \item $\Pi$ is not 0.05-correct.
        \item There exists a quantum adversary running in polynomial time and breaks the 0.05-hiding security of $\Pi^{U,f,\Help}$.
        \item There exists a quantum adversary running in polynomial time and breaks the 0.05-sum binding security of $\Pi^{U,f,\Help}$.
    \end{itemize}
\end{theorem}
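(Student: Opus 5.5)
Fix a protocol $\Pi=\Pi_i$ in the enumeration, and assume it is $0.05$-correct relative to the oracle. The plan is to use the infinitely many stages $s$ with $\iota(s)=i$. At each such stage we show that, with constant probability over the oracle, the polynomial-time adversary that calls $\mathsf{Help}_s$ breaks hiding or binding at security parameter $\Lambda_s$. We then upgrade this to ``with probability $1$'' using independence across stages and a second Borel--Cantelli argument. Since the enumeration is countable, we finally take a union over all $i$.

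Fix such a stage $s$ and set $\lambda=\Lambda_s$. Running in time $\lambda^i$, the protocol can only touch $U_m,f_m$ for $m\le\lambda^i<C_s$ and $\mathsf{Help}_{\le s-1}$. The widths $m\le C_{s-1}$ are queried honestly in both $\Pi$ and $\widehat\Pi$, so we treat them, together with $\mathsf{Help}_{\le s-1}$, as fixed side oracles independent of $\Theta_m$ for $C_{s-1}<m\le C_s$. For each such $m$ we apply Theorem~\ref{thm:decoupling}, handling the widths one at a time by a hybrid over $m$. With the copy count of Remark~\ref{rmk:remark-o(1)}, this shows that for either committed bit, after averaging over $J$, the transcript distribution of $\Pi^{U,f,\mathsf{Help}}(1^\lambda)$ averaged over $\Theta$ is within $o(1)$ of that of $\widehat\Pi$. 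The same hybrid should also apply to the full execution including the opening phase, since the opening is also LOCC, and so correctness transfers to $\widehat\Pi$ with error $0.05+o(1)$.

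Now $\widehat\Pi$ is a possibly inefficient QCCC commitment with no oracle dependence on $\Theta$: each party samples its own copy internally. Lemma~\ref{lem:dichotomy-hiding-binding} then gives $\eta_0+\eta_1+\epsilon+\delta\ge1$. With $\eta_b\le 0.05+o(1)$, either the commit transcripts of $\widehat\Pi$ are at distance $\epsilon\ge 0.4$, or the unbounded committer of that lemma wins sum-binding with advantage $\delta\ge0.4$.
\begin{itemize}
\item In the hiding case, the adversarial receiver runs honestly and queries $\mathsf{Help}^{\text{hide}}_s(z,0\cdots0)$, the maximum-likelihood guess for $\widehat\Pi$. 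By the $o(1)$ closeness of averaged transcripts, it guesses $b$ with advantage about $0.2$ on average over $\Theta$.
\item In the binding case, the committer plays the commit stage of $\widehat\Pi$ using $\mathsf{Help}^{\widehat\Theta}_s$ and $\mathsf{Help}^J_s$ to simulate the decoupled oracles $U_{m,C},f_{m,C}$ efficiently. In the opening it follows $\mathsf{Help}^{\text{bind}}_s$ bit by bit, which implements the optimal strategy derived from $\widehat\Pi$. Since the receiver sees the real oracle, this is exactly the decoupled experiment, so the advantage transfers up to $o(1)$.
\end{itemize}
The polynomial running time holds because $\mathsf{Help}_s$ has input length $64C_s=\mathrm{poly}(\Lambda_s)$.

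These bounds hold only on average over $(J,\Theta)$. By Markov's inequality, each stage yields a constant-advantage break with probability at least some constant $c>0$ over the oracle at widths in $(C_{s-1},C_s]$. Conditioned on the previous stages, the stages are independent, because each stage uses fresh $J_m,\Theta_m,\widehat\Theta_m$ at disjoint widths. By the second Borel--Cantelli lemma, with probability $1$ infinitely many stages with $\iota(s)=i$ succeed. So the break holds infinitely often, which is exactly what defeating negligible security requires.

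I expect the main obstacle to be the closeness argument. Theorem~\ref{thm:decoupling} must be applied in the presence of lower-width real oracles and $\mathsf{Help}_{\le s-1}$; these depend on $J_m$ only at other widths, so they should be absorbable as side oracles. More delicate is transferring binding: the cheating committer interacts adaptively with the real receiver, so I would rerun the hybrids of Theorem~\ref{thm:decoupling} on the joint commit-plus-opening interaction, treating the cheating strategy as just another LOCC committer.
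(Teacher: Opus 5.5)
Your overall route is the paper's. It uses Theorem~\ref{thm:decoupling} summed over the widths in $(C_{s-1},C_s]$ as in Remark~\ref{rmk:remark-o(1)}, Lemma~\ref{lem:dichotomy-hiding-binding} applied to the decoupled protocol, the $\mathsf{Help}^{\text{hide}}_s$ and $\mathsf{Help}^{\text{bind}}_s$ attacks, Markov, Borel--Cantelli over the stages with $\iota(s)=i$, and a countable union. Your treatment of lower widths as fixed side oracles, and of the conditional Borel--Cantelli step, is more explicit than the paper's.

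One step, however, fails in the order you wrote it. You assume $\Pi$ is $0.05$-correct relative to the realized oracle and deduce $\widehat\eta_b\le 0.05+o(1)$. The decoupling theorem only ties $\widehat\eta_b(J)$ to the average $\EE_\Theta[\eta_b(\Theta,J)]$, and correctness at one realization says nothing about that average. Conditioning on the correctness event instead does not work either. It is a global event over all widths, so it destroys the uniformity of $\Theta$ that Theorem~\ref{thm:decoupling} needs and the stagewise independence you use for Borel--Cantelli. And correctness cannot simply be dropped: without it, your per-stage claim that the adversary breaks hiding or binding with constant probability is false. For instance, a receiver that always outputs $0$ yields a perfectly hiding and perfectly binding (but incorrect) scheme.

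The fix, which is what the paper does, is to keep the correctness error inside the averaged quantity:
\begin{itemize}
\item Apply Lemma~\ref{lem:dichotomy-hiding-binding} to $\widehat\Pi$ unconditionally and transfer all four terms at once. This gives, for most $J$, $\EE_{\Theta,\widehat\Theta}[\eta_0+\eta_1+\mathsf{Adv}_{\mathsf{hide}}+\mathsf{Adv}_{\mathsf{bind}}]\ge 1-o(1)$.
\item Since the sum is at most $4$, reverse Markov gives a constant probability, uniformly over the earlier stages, that the realized sum is at least $0.2$.
\item Borel--Cantelli then makes this happen at infinitely many $\Lambda_s$.
\item Only now use $0.05$-correctness of the realized oracle. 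It forces $\mathsf{Adv}_{\mathsf{hide}}+\mathsf{Adv}_{\mathsf{bind}}\ge 0.1$ at those stages, so one of the two advantages is at least $0.05$ infinitely often.
\end{itemize}
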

\begin{proof}
    As our convention of enumerating commitment schemes in Section~\ref{sec:diagonalization-oracle}, we can assume the commitment scheme $\Pi_i$ can be described by a two-party Turing machine that runs in time $O(n^i)$. For any commitment scheme $\Pi_i$ running in time $O(n^i)$, let $\widehat \Pi_i$ be the decoupled commitment scheme corresponding to $\Pi_i$. Let $L_i = \{\Lambda_s:\iota(s)=i\}$ be the set of infinite security parameters corresponding to $\Pi_i$.

    Now we describe the adversary $\calA$ to break the security of $\Pi_i$ as follows. To break the hiding property, the (malicious) receiver behaves as the honest receiver in the commit phase, and use the optimal distinguisher $\Help^{\mathsf{hide}}$ to dinstinguish the transcript and output the prediction outcome $b$. To break the binding property, the (malicious) committer behaves as the honest committer to commit $b=0$ in $\widehat \Pi_i$ with queries $\widehat U$ and $\widehat f$ simulated by queries to $\Help_s^{\widehat \Theta}$ and $\Help_s^J$ (as described in Remark~\ref{rmk:remark-o(1)}). In case that the binding experiment requires to open $b=0$, it behaves as the honest adversary in $\widehat \Pi_i$. In case that the binding experiment requires to open $b=1$, it queries $\Help_s^{\mathsf{bind}}$ to answer the receiver to cheat.

    Assume the protocol $\Pi_i$ is $\eta_b$-correct, $\epsilon$-statistically hiding, and $\delta$-statistically binding.
    Assume the protocol $\widehat\Pi_i$ is $\widehat\eta_b$-correct, $\widehat\epsilon$-statistically hiding, and $\widehat\delta$ statistically binding\footnote{We assume that these correctness and security parameters are tight, acheivable with some statistical adversaries.}, then according to Lemma~\ref{lem:dichotomy-hiding-binding}
    $$\widehat\eta_0+\widehat\eta_1+\widehat\epsilon+\widehat\delta \geq 1\,.$$
    Moreover, by the definition of $\Help$ and the fact that $2C_s\leq \Lambda_s^{3\iota(s)}$ for sufficiently large $\Lambda_s \in L_i$, the receiver's statistical hiding advantage $\epsilon$ and the malicious committer's statistical binding advantage $\delta$ can both be achieved by a QPT algorithm that queries $\Help$, after averaging over $\widehat\Theta$. 
    \begin{align*}
        \widehat\eta_0 + \widehat\eta_1 + \EE_{\widehat \Theta}[\widehat\epsilon + \widehat\delta] \geq 1.
    \end{align*}

    Next, introduce the following transcript distributions:
    \begin{itemize}
        \item Let $\rho(b;\Theta,J)$ be the distribution of the transcript of $\Pi_{i}(1^{\Lambda_s})$.
        \item Let $\sigma(b;J)$ be the distribution of the transcript of $\widehat\Pi_i(1^{\Lambda_s})$.
    \end{itemize}
     Then, by Theorem~\ref{thm:decoupling} and Remark~\ref{rmk:remark-o(1)},
\begin{equation*}
  \EE_J\Biggl[\Bigl\Vert\EE_{\Theta}\bigl[\rho(b;\Theta,J)\bigr]-\sigma(b;J)\Bigr\Vert_1\Biggr]\leq o(1)\,.
\end{equation*}
    Thus,
\begin{equation*}
  \Pr_J\Biggl[\Bigl\Vert\EE_{\Theta}\bigl[\rho(b;\Theta,J)\bigr]-\sigma(b;J)\Bigr\Vert_1\leq 0.01\Biggr]\geq 1-o(1)\,.
\end{equation*}
    Let $\mathsf{Adv}_{\mathsf{hide}}$ and $\mathsf{Adv}_{\mathsf{bind}}$ be the advantage the adversary $\mathcal A$ can achieve for the commitment scheme $\Pi_i(1^{\Lambda_s})$. The previous equation immediately yields
    $$\Pr_J\left[\EE_{\Theta} \mathsf{Adv}_{\mathsf{hide}} \geq \EE_{\Theta, \widehat \Theta} \widehat\epsilon - 0.01 \right] \geq 1 - o(1)\,.$$
    Also, the receiver's strategy does not change between $\Pi_{i}(1^{\Lambda_s})$ and $\widehat\Pi_i(1^{\Lambda_s})$.
    Note that the final output can also be considered as a part of the transcript, and since the transcript distributions are close even if considering the output of the honest receiver,
    $$\Pr_{J}\left[\EE_{\Theta}\eta_b\geq \widehat \eta_b-0.01\right]\geq 1-o(1)\,.$$
    For the binding advantage, consider the behavior of the malicious committer constructed in Lemma~\ref{lem:dichotomy-hiding-binding} in the sum-binding game of $\Pi_{i}(1^{\Lambda_s})$.
    Since the receiver's strategies are exactly the same in $\Pi_{i}(1^{\Lambda_s})$ and $\widehat\Pi$ except that in $\widehat\Pi_i(1^\Lambda_s)$ the $\widehat\Theta$ is self-sampled, we have that for every $J,\widehat\Theta$,
    $$\EE_{\Theta}\mathsf{Adv}_{\mathsf{bind}}=\delta\,.$$
    Thus, for any sufficiently large $s$,
\begin{equation*}
  \Pr_{J}\Bigl[\EE_{\Theta,\widehat\Theta}\bigl[\eta_0+\eta_1+\mathsf{Adv}_{\mathsf{hide}}+\mathsf{Adv}_{\mathsf{bind}}\bigr]\geq 0.97\Bigr]\geq 0.99\,.
\end{equation*}

    Then according to the Markov's inequality, 
    $$\Pr_{J,\Theta,\widehat\Theta}[\eta_0+\eta_1+\mathsf{Adv}_{\mathsf{hide}} + \mathsf{Adv}_{\mathsf{bind}}\geq 0.2]\geq 0.2\,,$$
    Moreover, since the intervals $(C_{m-1},C_m]$ are disjoint, with probability $1$ over $(U,f,\Help)$, as long as $\Pi_i(1^\lambda)$ is 0.05-correct, there exists an adversary that achieves 0.05 advantage infinitely often on either binding or hiding experiments.
    This attack is uniform: the adversary simply performs queries to $\Help$ according to some efficiently computable padding rule.

    Since the number of all protocols are countable, we obtain that with probability $1$ over $(U,f,\Help)$, there does not exist a bit commitment scheme under the oracle $(U,f,\Help)$.
\end{proof}

\bibliography{main}
\appendix
\section{Useful Theorems to Eliminate the Phases}\label{sec:phase-concentration}
\newcommand{\type}{\operatorname{type}}

Here we show how to simulate the random phase states with a sampler.

Theorem~\ref{thm:phase-concentration} is directly related to the following observation: superpositions of states with a random phase can be simulated using samples of the components.
To our knowledge, this observation first appeared in \cite[Lemma 5.5]{Zhandry24a}.
However, Theorem~\ref{thm:phase-concentration} mainly concludes that this simulation is $L^2$-concentrated with respect to any POVM measurement, which is new, and may be of independent interest.

Our proof method is to expand in the type-vector basis, which mainly follows \cite[Theorem 5.4]{AG25}, and also its subsequent work \cite[Lemma 4.3]{CountCrypt}.

Due to the construction of the reflection-to-state simulator (Theorem~\ref{thm:JLS-simulation}), Theorem~\ref{thm:phase-concentration} also considers the situation where there is a significant magnitude on one of the chosen basis.

\begin{theorem}[Phase concentration]\label{thm:phase-concentration}
    Let $N\in \mathbb N_+$.
    Given a uniformly random function $f:[N]\to\{0,1\}$ and any set of pairwise orthogonal states $\{\ket{\phi_k}\}_{k\in \{0\}\cup [N]}$, define the corresponding phase state to be
    $$\ket{\phi^f}:=\sqrt{p_0}\ket{\phi_0}+\sum_{k\in [N]} (-1)^{f(k)}\sqrt{p_k}\ket{\phi_k}\,,$$
    where $p_0,p_1,\cdots,p_N$ are fixed nonnegative constants that add up to $1$.

    For any quantum circuit $\mathcal A$ that uses $t$ copies of $\ket{\phi^f}$ as input and outputs a classical bit, for any Hermitian $M$ independent of $f$ such that $0\preceq M\preceq I$,\footnote{
    Recall our convention that $[N]=\{1,2,\cdots, N\}$ does not contain $0$.
    Thus, $\ket{\phi^f}$ can have a large magnitude on $\ket{\phi_0}$ and still a small variance.
    }
        $$\Var\left(\bra{\phi^f}^{\otimes t}M\ket{\phi^f}^{\otimes t}\right)\leq 4t\max_{k\in [N]}p_k\,.$$
    Moreover, for any POVM measurement $\{M_i\}$, denote the output distribution of this measurement acting on $\ket{\phi^f}^{\otimes t}$ as $\mathcal D$.
    Then,
    $$\EE \big[||\mathcal D-\EE \mathcal D||_2^2\big]\leq 4t\max_{k\in [N]}p_k\,.$$
\end{theorem}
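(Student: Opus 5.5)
The plan is to view $F(f):=\bra{\phi^f}^{\otimes t}M\ket{\phi^f}^{\otimes t}$ as a function of the $N$ independent bits $f(1),\dots,f(N)$ and apply the Efron--Stein inequality (Lemma~\ref{lem:Efron-stein-original}). The second claim then follows by applying the same argument to the vector $(\bra{\phi^f}^{\otimes t}M_i\ket{\phi^f}^{\otimes t})_i$ of outcome probabilities: Efron--Stein holds coordinatewise, so it holds for squared $\ell_2$-norms after summing over $i$.

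First I will set up the single-bit resampling. Let $\Pi_j=\ketbra{\phi_j}{\phi_j}$ and $Z_j:=I-2\Pi_j$ on one copy. Flipping $f(j)$ maps $\ket{\psi}:=\ket{\phi^f}^{\otimes t}$ to $W_j\ket{\psi}$ with $W_j:=Z_j^{\otimes t}$. Because the $\ket{\phi_k}$ are pairwise orthogonal, all the $\Pi_j^{(i)}$ commute. Let $Q_j$ be the projector onto the basis tensors in which $\ket{\phi_j}$ occurs an odd number of times. Then $W_j=I-2Q_j$, and we can split $\ket{\psi}=\ket{a_j}+\ket{b_j}$ with $\ket{b_j}=Q_j\ket{\psi}$. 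A direct computation gives
\begin{equation*}
F(f)-F(f\oplus e_j)=4\,\mathrm{Re}\bra{a_j}M\ket{b_j}.
\end{equation*}
Resampling $f(j)$ flips it with probability $1/2$, so Efron--Stein yields
\begin{equation*}
\Var F\le 4\sum_{j\in[N]}\EE\bigl|\bra{a_j}M\ket{b_j}\bigr|^2 .
\end{equation*}
Since odd parity forces at least one occurrence, the commuting projectors satisfy $Q_j\preceq\sum_{i=1}^t\Pi_j^{(i)}$. Hence $\|\ket{b_j}\|^2\le t\,p_j\le t\max_k p_k$. The index $0$ carries no random sign, so $p_0$ never enters, as the footnote anticipates.

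The main obstacle is the summation over $j$. The naive bound $|\bra{a_j}M\ket{b_j}|\le\|\ket{b_j}\|$ only gives $\Var F\le 4t\sum_j p_j$, which loses the factor $\max_k p_k$. Instead I will write
\begin{equation*}
\bra{a_j}M\ket{b_j}=\bra{b_j}Q_j M\ket{a_j}^{*}
\end{equation*}
and use Cauchy--Schwarz to get $|\bra{a_j}M\ket{b_j}|^2\le\|\ket{b_j}\|^2\,\|Q_jM\ket{a_j}\|^2$. It then remains to show that $\sum_j\|Q_jM\ket{a_j}\|^2=O(1)$. Writing $\ket{a_j}=\ket{\psi}-\ket{b_j}$, this reduces to bounding
\begin{equation*}
\sum_j\|Q_jM\ket{\psi}\|^2+\sum_j\|\ket{b_j}\|^2 .
\end{equation*}
Each $\ket{\phi_j}$-occurrence is counted by one copy $i$, so both sums are controlled by $\sum_j\Pi_j^{(i)}\preceq I$. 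This naively costs an extra factor of $t$, giving the multiplicity bound $\sum_jQ_j\preceq tI$. I expect the real work to be removing that factor. My first attempt will be to expand $F$ in the type-vector basis as in \cite[Theorem 5.4]{AG25}, where $F$ is a sum over pairs of type vectors whose nonzero-index parity patterns coincide up to the sign flips. I would then compute the Fourier coefficients $\hat F(S)$, $S\subseteq[N]$, directly and bound $\Var F=\sum_{S\neq\emptyset}\hat F(S)^2$ by $4t\max_kp_k$. The argument rests on two facts: $\hat F(S)$ only involves cross terms between type vectors differing in parity exactly on $S$, and the total weight of such terms over all $S$ is at most the expected number of nonzero indices drawn, weighted by $\max_k p_k$.
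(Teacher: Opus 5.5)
Your setup is sound. Flipping $f(j)$ acts as $I-2Q_j$ on $\ket{\phi^f}^{\otimes t}$, so $F(f)-F(f\oplus e_j)=4\,\mathrm{Re}\bra{a_j}M\ket{b_j}$, and $\|\ket{b_j}\|^2\le tp_j$. But the argument stops exactly where the content of the theorem lies, and the step you propose there cannot work. The Cauchy--Schwarz bound $|\bra{a_j}M\ket{b_j}|^2\le\|\ket{b_j}\|^2\,\|Q_jM\ket{a_j}\|^2$ discards how $\ket{b_j}$ sits relative to $M$, and the resulting requirement $\sum_j\|Q_jM\ket{a_j}\|^2=O(1)$ is false. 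Take $p_0=1-1/t$, $p_k=1/(tN)$ for $k\in[N]$ with $N\ge t$, and $M=\ketbra{\chi}{\chi}$ with $\ket{\chi}=(\ket{\phi_0}^{\otimes t}+\ket{\phi_1}\otimes\cdots\otimes\ket{\phi_t})/\sqrt2$. For every $j\le t$ and every $f$ one gets $Q_jM\ket{a_j}=\tfrac12 p_0^{t/2}\ket{\phi_1}\otimes\cdots\otimes\ket{\phi_t}$, and $Q_j\ket{\chi}=0$ for $j>t$. Hence $\sum_j\|Q_jM\ket{a_j}\|^2=tp_0^t/4\approx t/(4e)$, while $\|\ket{b_j}\|^2\approx t\max_kp_k$. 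Your chain therefore gives only $\Theta(t)\cdot t\max_kp_k$, although here $\Var F=p_0^t\prod_{k\le t}p_k$ is negligible. The later split $\ket{a_j}=\ket{\psi}-\ket{b_j}$ loses the factor again, since $\sum_j\|\ket{b_j}\|^2\approx t(1-p_0)$. Structurally, the Efron--Stein right-hand side equals the total influence $\sum_S|S|\hat F(S)^2$. Every coordinate of $S$ is charged for $\hat F(S)$, and $\hat F$ can sit on sets of size up to $2t$; in the example the total influence is exactly $t\Var F$. So this route asks you to prove something stronger than the theorem, and you give no argument for it.

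Your fallback names the right decomposition, which is the paper's: $\ket{\phi^f}^{\otimes t}=\sum_\nu(-1)^{f\cdot\nu}\sqrt{\mu_\nu}\ket{\phi_\nu}$ over parity vectors $\nu\in\{0,1\}^N$, so $\hat F(\alpha)=\sum_{\nu\oplus\nu'=\alpha}\sqrt{\mu_\nu\mu_{\nu'}}M_{\nu\nu'}$. But your ``second fact'' does not control a coherent sum of cross terms, and the decisive steps are missing. The paper proceeds as follows.
\begin{itemize}
\item For each $\alpha\neq0$ it fixes one witness $i=i_\alpha$ with $\alpha_i=1$. This charges $\alpha$ to a single coordinate, which is exactly what Efron--Stein cannot do.
\item It uses Hermiticity to write $\hat F(\alpha)=B_\alpha+\overline{B_\alpha}$, where $B_\alpha$ sums over pairs with $\nu_i=1$, $\nu'_i=0$.
\item It applies Cauchy--Schwarz as $|B_\alpha|^2\le\bigl(\sum_{\nu:\nu_i=1}\mu_\nu\bigr)\bigl(\sum_{\nu\oplus\nu'=\alpha,\,\nu_i=1}\mu_{\nu'}|M_{\nu\nu'}|^2\bigr)$. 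The first factor is the probability that $i$ occurs an odd number of times, which is at most $tp_i$.
\item Each ordered pair $(\nu,\nu')$ appears for at most one $\alpha$. So the second factors sum to at most $\sum_{\nu'}\mu_{\nu'}\sum_\nu|M_{\nu\nu'}|^2\le\sum_{\nu'}\mu_{\nu'}(M^2)_{\nu'\nu'}\le\sum_{\nu'}\mu_{\nu'}M_{\nu'\nu'}$, using $M^2\preceq M$.
\end{itemize}
That last quantity is also what settles the POVM claim, which your coordinatewise summation does not. You need $\Var\bigl(\bra{\phi^f}^{\otimes t}M_i\ket{\phi^f}^{\otimes t}\bigr)\le 4t\max_kp_k\sum_\nu\mu_\nu(M_i)_{\nu\nu}$, which sums to $4t\max_kp_k$ because $\sum_iM_i=I$. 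A bound of $4t\max_kp_k$ per outcome would grow with the number of outcomes.
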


\begin{proof}
    With abuse of notation we identify $f$ with its truth table vector in $\{0,1\}^{N}$.
    
    For $\vec k\in (\{0\}\cup [N])^t$, define the type of $\vec k$, denoted $\type(\vec{k})$, to be as follows: $T=\type(\vec{k})$ is a vector in $\mathbb N^{N+1}$, and for each $i\in \{0\}\cup [N]$, the $i$-th entry $T_i$ equals the number of entries in $\vec k$ that equals $i$.
    Let $T_+$ be the vector in $\mathbb N^N$ that consists of all entries of $T$ but $T_0$.
    Define the set of all types of a vector in $(\{0\}\cup [N])^t$ as $\mathcal T_{N,t}$.
    Then
    \begin{align*}
        \ket{\phi^f}^{\otimes t}={}& \sum_{k_1,k_2,\cdots,k_t}\sqrt{p_{k_1}p_{k_2}\cdots p_{k_t}}(-1)^{\sum_{j=1}^t f(k_j)}\bigotimes_{j=1}^t\ket{\phi_{k_j}}\\
        ={}&\sum_{T\in \mathcal T_{N,t}}\sum_{\type(\vec k)=T}\sqrt{p_0^{T_0}p_1^{T_1}\cdots p_N^{T_N}}(-1)^{\sum_{j=1}^t f(k_j)}\bigotimes_{j=1}^t\ket{\phi_{k_j}}\\
        ={}&\sum_{T\in \mathcal T_{N,t}}(-1)^{f\cdot T_+}\sqrt{p_0^{T_0}p_1^{T_1}\cdots p_N^{T_N}}\sum_{\type(\vec k)=T}\bigotimes_{j=1}^t\ket{\phi_{k_j}}\,.
    \end{align*}
    Since $(-1)^{f\cdot T_+}$ is only determined by the parity of each entry in $T_+$, we group them by the parity: 
    $$\ket{\phi^f}^{\otimes t}=\sum_{\nu\in \{0,1\}^N}(-1)^{f\cdot \nu}\sum_{\substack{T\in \mathcal T_{N,t}\\ T_+\bmod 2=\nu}}\sqrt{p_0^{T_0}p_1^{T_1}\cdots p_N^{T_N}}\sum_{\type(\vec k)=T}\bigotimes_{j=1}^t\ket{\phi_{k_j}}\,.$$
    
    For any $\nu\in \{0,1\}^N$, define
    $$\mu_\nu=\sum_{\substack{T\in \mathcal T_{N,t}\\ T_+\bmod 2=\nu}}p_0^{T_0}p_1^{T_1}\cdots p_N^{T_N}\sum_{\type(\vec k)=T}1\,.$$ 
    Clearly $\mu_\nu$ is the probability of the following: sampling a vector $\vec k\in (\{0\}\cup [N])^t$ with each entry getting $i$ with independent probability $p_i$, and $\type(\vec k)_+\bmod 2=\nu$.
    If $\mu_\nu\neq 0$, we define
    $$\ket{\phi_\nu}=\frac{1}{\sqrt{\mu_\nu}}\sum_{\substack{T\in \mathcal T_{N,t}\\ T_+\bmod 2=\nu}}\sqrt{p_0^{T_0}p_1^{T_1}\cdots p_N^{T_N}}\sum_{\type(\vec k)=T}\bigotimes_{j=1}^t\ket{\phi_{k_j}}\,.$$
    One can verify that $\ket{\phi_\nu}$ is a well-defined quantum state, and for different $\nu,\nu'$, $\ket{\phi_\nu}$ and $\ket{\phi_{\nu'}}$ are orthogonal.
    
    We can write
    $$\ket{\phi^f}^{\otimes t}=\sum_{\nu} (-1)^{f\cdot \nu}\sqrt{\mu_\nu} \ket{\phi_\nu}\,.$$

    Now, consider any Hermitian $M$ such that $0\preceq M\preceq I$.
    Then
    \begin{align*}
        \bra{\phi^f}^{\otimes t}M\ket{\phi^f}^{\otimes t}={} & \sum_{\nu,\nu'} (-1)^{f\cdot (\nu\oplus \nu')}\sqrt{\mu_\nu\mu_{\nu'}} \braket{\phi_{\nu'}|M|\phi_\nu}\,.
    \end{align*}

    Taking the expectation of $f$, each summand is nonzero only if $\nu\oplus \nu'= 0$.
    Therefore,
    \begin{align*}
        &\Var\left(\bra{\phi^f}^{\otimes t}M\ket{\phi^f}^{\otimes t}\right)\\
        ={}& \EE_f \left[\left|\bra{\phi^f}^{\otimes t}M\ket{\phi^f}^{\otimes t}-\EE_f [\bra{\phi^f}^{\otimes t}M\ket{\phi^f}^{\otimes t}]\right|^2\right]\\
        ={}& \EE_f\left[\left|\sum_{\nu\oplus \nu'\neq 0}(-1)^{f\cdot(\nu\oplus \nu')}\sqrt{\mu_\nu\mu_{\nu'}}\braket{\phi_{\nu'}|M|\phi_\nu}\right|^2\right]\\
        ={}& \EE_f\left[\sum_{\xi\oplus \xi'\neq 0,\nu\oplus \nu'\neq 0}(-1)^{f\cdot (\xi\oplus \xi'\oplus \nu\oplus \nu')} \sqrt{\mu_\xi\mu_{\xi'}\mu_\nu\mu_{\nu'}}\overline{\braket{\phi_{\xi'}|M|\phi_\xi}}\braket{\phi_{\nu'}|M|\phi_\nu}\right]\\
        ={}& \sum_{\alpha\neq 0}\sum_{\substack{\xi\oplus \xi'=\alpha\\\nu\oplus \nu'=\alpha}}\sqrt{\mu_\xi\mu_{\xi'}\mu_\nu\mu_{\nu'}}\overline{\braket{\phi_{\xi'}|M|\phi_\xi}}\braket{\phi_{\nu'}|M|\phi_\nu}\\
        ={}& \sum_{\alpha\neq 0}\left|\sum_{\nu\oplus \nu'=\alpha}\sqrt{\mu_\nu\mu_{\nu'}}\braket{\phi_{\nu'}|M|\phi_\nu}\right|^2\,.
    \end{align*}

    Now, define
    $$M_{\nu\nu'}=\braket{\phi_{\nu'}|M|\phi_\nu}\,.$$

    The rest of this proof is to bound $$\Var\left(\bra{\phi^f}^{\otimes t}M\ket{\phi^f}^{\otimes t}\right)=\sum_{\alpha\neq 0}\left|\sum_{\nu\oplus \nu'=\alpha}\sqrt{\mu_\nu\mu_{\nu'}}M_{\nu\nu'}\right|^2\,.$$

    Note that $\alpha$ is a vector in $\{0,1\}^N$.
    For each $\alpha\neq 0$, choose a fixed $i=i_\alpha$ such that $\alpha_i\neq 0$.
    Define
    $$B_\alpha=\sum_{\substack{\nu\oplus \nu'=\alpha\\ \nu_i=1,\nu_i'=0}}\sqrt{\mu_\nu\mu_{\nu'}}M_{\nu\nu'}\,.$$
    Then
    $$\sum_{\alpha\neq 0}\left|\sum_{\nu\oplus \nu'=\alpha}\sqrt{\mu_\nu\mu_{\nu'}}M_{\nu\nu'}\right|^2=\sum_{\alpha\neq 0}|B_\alpha+\overline{B_\alpha}|^2\leq 4\sum_{\alpha\neq 0}|B_\alpha|^2\,.$$
    By Cauchy-Schwarz,
    $$|B_\alpha|^2\leq \left(\sum_{\substack{\nu\oplus \nu'=\alpha\\ \nu_i=1,\nu_i'=0}}\mu_\nu\right)\left(\sum_{\substack{\nu\oplus \nu'=\alpha\\ \nu_i=1,\nu_i'=0}}\mu_{\nu'}|M_{\nu\nu'}|^2\right)\,.$$
    
    Since $\nu'$ is determined by $\nu$ following this equation $\nu\oplus \nu'=\alpha$, we have
    $$\sum_{\substack{\nu\oplus \nu'=\alpha\\ \nu_i=1,\nu_i'=0}}\mu_\nu\leq \sum_{\nu:\nu_i=1}\mu_\nu\,.$$
    The right hand side of the inequality above is the probability that the number $i\in [N]$ occurs odd times in a vector $\vec k\in (\{0\}\cup[N])^t$.
    This means $i$ occurs at least once, hence
    $$\sum_{\nu:\nu_i=1}\mu_\nu\leq 1-\left(1-p_i\right)^t\leq tp_i\leq t\max_{k\in [N]}p_k\,.$$

    Since $M$ is an Hermitian matrix satisfying $0\preceq M\preceq I$, we have $M^2\preceq M$, which means
    $$M_{\nu\nu}\geq (M^2)_{\nu\nu}$$
    By the pairwise orthogonality of $\ket{\phi_\nu}$,
    $$(M^2)_{\nu\nu}\geq \sum_{\nu'}M_{\nu\nu'}M_{\nu'\nu}=\sum_{\nu'}|M_{\nu\nu'}|^2$$
    Therefore, for all $\nu\in \mathcal \nu_{N,t}$,
    $$\sum_{\nu'} |M_{\nu\nu'}|^2\leq  M_{\nu\nu}\,.$$
    
    Hence,
    \begin{align*}
        \Var\left(\bra{\phi^f}^{\otimes t}M\ket{\phi^f}^{\otimes t}\right)
        \leq{}& 4\sum_{\alpha\neq 0}\left(\sum_{\substack{\nu\oplus \nu'=\alpha\\ \nu_i=1,\nu_i'=0}}\mu_\nu\right)\left(\sum_{\substack{\nu\oplus \nu'=\alpha\\ \nu_i=1,\nu_i'=0}}\mu_{\nu'}|M_{\nu\nu'}|^2\right)\\
        \leq{}& \left(4t\max_{k\in [N]} p_k\right)\sum_{\alpha\neq 0}\sum_{\nu\oplus \nu'=\alpha}\mu_{\nu'}|M_{\nu\nu'}|^2\\
        \leq{}& \left(4t\max_{k\in [N]} p_k\right)\sum_{\nu\neq \nu'}\mu_{\nu'}|M_{\nu\nu'}|^2\\
        \leq{}& \left(4t\max_{k\in [N]} p_k\right)\sum_{\nu'}\mu_{\nu'}\sum_{\nu:\nu\neq \nu'}|M_{\nu\nu'}|^2\\
        \leq{}&\left(4t\max_{k\in [N]} p_k\right)\sum_{\nu}\mu_\nu M_{\nu\nu}
    \end{align*}

    Because $\sum_\nu \mu_\nu M_{\nu\nu}$ is nothing more than the weighted average of $M_{\nu\nu}\leq 1$, we conclude
    $$\Var\left(\bra{\phi^f}^{\otimes t}M\ket{\phi^f}^{\otimes t}\right)\leq 4t\max_{k\in [N]} p_k\,.$$

    Now, for any POVM measurement $\{M_i\}$, observe that
    $$\EE [||\mathcal D-\EE \mathcal D||_2^2]=\sum_{i}\Var\left(\bra{\phi^f}^{\otimes t}M_i\ket{\phi^f}^{\otimes t}\right)\leq \left(4t\max_{k\in [N]} p_k\right)\sum_i\sum_{\nu}\mu_\nu(M_i)_{\nu\nu}\,,$$
    we have $\sum_i M_i=I$, so $\sum_i (M_i)_{\nu\nu}=1$, hence
    $$\EE [||\mathcal D-\EE \mathcal D||_2^2]\leq 4t\max_{k\in [N]} p_k\,.$$
\end{proof}

Next, we discuss the LOCC decoupling of the phases which we consider in Section~\ref{sec:commit-separation}.
Intuitively, for any LOCC experiment (i.e.\ QCCC protocol), classical messages in it must be obtained by a fixed local measurement of the state in each party that may depend on previous messages.
But we have already seen by Theorem~\ref{thm:phase-concentration} that a fixed measurement cannot help much in extracting the phases.
Therefore, the outcome distribution of this LOCC experiment should not be detectable between giving two parties the same phases and giving different phases.

The proof we employ in Theorem~\ref{thm:LOCC-decoupling} consists of two parts. First, we replace the $\pm1$ phases with continuous phases\footnote{Note that the average of the continuous phase states is equal to the mixture of type states, as observed in~\cite{AG25}.}, and second, we use a partial transpose method to derive a bound from the collision probability of the type vectors.
The former part is well-known in the literature.
The idea of the latter part appeared in~\cite{AGL24}, the subsequent works~\cite{Goldin-Zhandry,AGL25}, and also in~\cite{Har23}.
However, their results mainly apply to the LOCC indistinguishability of Haar random states, not to the (non-uniform) mixture of type states, and thus are incomparable to ours.
Also, our proof might be conceptually simpler, as it replaces the Kneser graph argument in~\cite{AGL24}~and~\cite{Goldin-Zhandry} by a straightforward combinatorial proof. 
We believe that our proof can also be adapted to prove the LOCC indistinguishability of Haar random states, though it is beyond the scope of this paper%
\footnote{Also, note that the transcript of an LOCC experiment corresponds to the outcome of a local POVM measurement, but not the other way: a local POVM measurement may not be implementable by any LOCC experiment; see~\cite{nonlocality-without-entanglement}.}.

\begin{theorem}\label{thm:LOCC-decoupling}
    Consider the state $\ket{\phi^f}$ defined in Theorem~\ref{thm:phase-concentration}.
    For any fixed local POVM measurement $M=\{M_i\otimes M'_i\}_{i\in \mathcal I}$, denote
    \begin{itemize}
        \item $\rho$ to be the output distribution when the input state is $$\EE_f \left[\ketbra{\phi^f}{\phi^f}^{\otimes t}\otimes \ketbra{\phi^f}{\phi^f}^{\otimes t}\right]\,,$$
        \item $\bar\rho$ to be the output distribution when the input state is $$\EE_{\hat f,f} \left[\ketbra{\phi^{\hat f}}{\phi^{\hat f}}^{\otimes t}\otimes \ketbra{\phi^f}{\phi^f}^{\otimes t}\right]\,.$$
    \end{itemize}
    Then $$\left\Vert \rho-\bar\rho\right\Vert_1\leq (2+8\sqrt{2})t\sqrt{\sum_{k\in [N]}p_k^2}+2t^2\sum_{k\in [N]}p_k^2\,.$$
    Nevertheless,
    $$\left\Vert \rho-\bar\rho\right\Vert_1\leq O\left(t\sqrt{\sum_{k\in [N]}p_k^2}\right)\,.$$
\end{theorem}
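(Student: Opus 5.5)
Write $\sigma:=\EE_f[\ketbra{\phi^f}{\phi^f}^{\otimes 2t}]$ for the coupled state and $\bar\sigma:=\EE_{\hat f,f}[\ketbra{\phi^{\hat f}}{\phi^{\hat f}}^{\otimes t}\otimes\ketbra{\phi^f}{\phi^f}^{\otimes t}]$ for the decoupled one. The plan has two stages, as announced before the statement. First, replace the $\pm1$ phases by continuous phases. Second, compare the two continuous-phase mixtures through a partial transpose, using the fact that the POVM is local.

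\textbf{Stage 1: continuous phases.} Write $\ket{\phi^\theta}$ for the analogous state with independent uniform phases $e^{i\theta_k}$ on $[N]$, keeping the amplitude $\sqrt{p_0}$ on $\ket{\phi_0}$. Averaging over $\theta$ kills every cross term between distinct types. Hence
\[
\tau_{s}:=\EE_\theta\ketbra{\phi^\theta}{\phi^\theta}^{\otimes s}=\sum_{T}w_T\ketbra{\mathrm{type}_T}{\mathrm{type}_T},
\]
a mixture of normalized type states in which the weight $w_T$ is the multinomial probability of type $T$.

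The $\pm1$ average differs from the continuous one only through pairs $(T,T')$ with $T_+\neq T'_+$ but $T_+\equiv T'_+\pmod 2$. Such a pair requires some index $k\in[N]$ to be hit at least twice among the $s$ draws. Following the same type-vector expansion as in Theorem~\ref{thm:phase-concentration}, I would restrict both ensembles to the \emph{collision-free} part, in which every $k\in[N]$ appears at most once. On that subspace the two ensembles coincide. Gentle measurement (Lemma~\ref{lem:gentle-measurement}) then costs $O(\sqrt{\Pr[\text{collision}]})$, and
\[
\Pr[\text{collision}]\le\binom{s}{2}\sum_{k\in[N]}p_k^2 .
\]
Taking $s=2t$ and $s=t$ gives the $O\bigl(t\sqrt{\sum_k p_k^2}\bigr)$ terms. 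Crucially, this is done separately on $\sigma$ and on $\bar\sigma$: the decoupled state uses independent $\hat f,f$, which become independent $\hat\theta,\theta$. So it suffices to bound the local-POVM distance between $\tau_{2t}$ and $\tau_t\otimes\tau_t$.

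\textbf{Stage 2: partial transpose.} For a local effect $M_i\otimes M'_i$, the partial transpose on $\mathsf B$ gives
\[
\Tr\bigl((M_i\otimes M'_i)X\bigr)=\Tr\bigl((M_i\otimes M_i'^{\mathsf T})X^{\Gamma}\bigr).
\]
Since $\sum_i M_i\otimes M_i'^{\mathsf T}$ is dominated by identity-type bounds, one obtains
\[
\sum_i\bigl|\Tr\bigl((M_i\otimes M_i')(\tau_{2t}-\tau_t\otimes\tau_t)\bigr)\bigr|\le\|(\tau_{2t}-\tau_t\otimes\tau_t)^\Gamma\|_{\mathrm{op}}\cdot(\text{normalization}).
\]
The route I intend is different from bounding this operator norm directly, and goes in two steps.
\begin{enumerate}
\item Decompose the $2t$-copy collision-free type state. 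Splitting the $2t$ distinct indices into the $\mathsf A$-half and the $\mathsf B$-half shows that $\tau_{2t}$ restricted to collision-free types equals $\tau_t\otimes\tau_t$ (conditioned on the two halves having disjoint supports), plus off-diagonal terms. These off-diagonal terms swap indices between the halves, and they are exactly the entangled part.
\item Bound the swapped terms after partial transpose. The terms swapping $j$ elements become, after partial transpose, operators of the form $\ketbra{\cdot}{\cdot}$ whose coefficients are of order $\prod p$. Summing over swap patterns and applying Cauchy--Schwarz, I expect their total contribution to be bounded by $t\sqrt{\sum p_k^2}+t^2\sum p_k^2$.
\end{enumerate}
The disjointness conditioning costs another collision probability, $\le t^2\sum p_k^2$; this is the source of the additive $2t^2\sum p_k^2$ term. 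The second displayed bound follows because $t^2\sum p_k^2\le t\sqrt{\sum p_k^2}$ whenever that quantity is at most $1$, and otherwise the bound is trivial since the $\ell_1$ distance is at most $2$.

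\textbf{Main obstacle.} The hard step is Stage 2: controlling the partially transposed off-diagonal (swap) terms. With nonuniform $p_k$ and the special index $0$ (which may appear many times), the clean Haar or Kneser-graph symmetry of \cite{AGL24} is unavailable. I would first attempt a direct combinatorial bound: each swapped term corresponds to a pair of $\mathsf A$/$\mathsf B$ index tuples sharing a moved element. Its partially transposed weight is then bounded by $\sqrt{\mu\mu'}\,p_k$, and Cauchy--Schwarz over $k$ should produce $\sqrt{\sum_k p_k^2}$ with a factor of $t$ for the choice of position.
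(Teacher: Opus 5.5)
Your Stage 1 and the first step of Stage 2 match the paper's proof. Projecting onto collision-free types (where the $\pm1$ average and the type mixture coincide) and applying gentle measurement with the birthday bound, once for $2t$ copies and twice for $t$ copies, gives the $8\sqrt2\,t\sqrt{\sum_k p_k^2}$ term. The expansion $\ket{\phi_S}=\sum_{T+T'=S}\sqrt{\mu_T\mu_{T'}/\mu_S}\,\ket{\phi_T}\ket{\phi_{T'}}$ is also exactly what the paper uses.

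The gap is the second step of Stage 2. This is the only place where locality enters, and you leave it as an expectation. Two things go wrong there.

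First, the duality is the wrong way round. Each $M_i\otimes (M_i')^{\mathsf T}$ is PSD and they sum to $I$, so every signed combination $\sum_i d_i M_i\otimes (M_i')^{\mathsf T}$ has operator norm at most $1$. The quantity to control is therefore the \emph{trace} norm $\|\Delta\|_1$ of the partially transposed off-diagonal part. An operator-norm bound on $\Delta$ would have to be paired with a dimension-dependent factor.

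Second, $\|\Delta\|_1$ cannot be bounded term by term, i.e.\ by bounding each swapped term's weight and then applying Cauchy--Schwarz. Its smallness comes entirely from coherence. Already for $t=1$ one gets $\Delta=\sum_{a\neq b}p_ap_b\ketbra{aa}{bb}$ with $a,b\in\{0\}\cup[N]$. The term weights sum to $1-\sum_a p_a^2=\Theta(1)$. Even the cross terms with index $0$ alone give $2p_0\sum_{a\ge1}p_a$, which for uniform $p_a$ exceeds $2p_0\sqrt{\sum_{a\ge1}p_a^2}$ by a factor $\sqrt N$. Yet the true trace norm is $O\bigl(\sum_{a\ge1}p_a^2+p_0\sqrt{\sum_{a\ge1}p_a^2}\bigr)$.

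The missing idea is the paper's block decomposition. Note that $T+T'=U+U'$ iff $T-U'=U-T'=:\delta$. Use the orthonormal vectors $\ket{X}=\ket{\phi_T}\overline{\ket{\phi_{U'}}}$, indexed by $X=(T,U')\in\mathcal T_\delta$. Then $\Delta$ splits into mutually orthogonal blocks $\Delta_\delta=\sum_{X\neq Y\in\mathcal T_\delta}\sqrt{\mu_X\mu_Y}\ketbra{X}{Y}$, each a rank-one matrix minus its diagonal, and $\|\Delta\|_1=\sum_\delta\|\Delta_\delta\|_1$.

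The naive bound $2\sum_{X\in\mathcal T_\delta}\mu_X$ is still useless. Each block contains at most one pair $Z_\delta$ with no common nonzero index, and these pairs carry total weight up to $1$, because index $0$ may have $p_0=\Theta(1)$. So one splits off $Z_\delta$:
\begin{itemize}
\item The collision part has trace norm at most $2\sum_{X\in\mathcal T^{\mathrm{col}}_\delta}\mu_X$, which sums over $\delta$ to at most $2t^2\sum_k p_k^2$.
\item The cross terms form a rank-two swap between $\ket{Z_\delta}$ and $\sum_{X\in\mathcal T^{\mathrm{col}}_\delta}\sqrt{\mu_X}\ket{X}$, with trace norm $2\sqrt{\mu_{Z_\delta}\sum_{X\in\mathcal T^{\mathrm{col}}_\delta}\mu_X}$. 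Cauchy--Schwarz over $\delta$, with $\sum_\delta\mu_{Z_\delta}\le 1$, gives $2t\sqrt{\sum_k p_k^2}$.
\end{itemize}

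So the additive $2t^2\sum_k p_k^2$ term comes from the collision--collision part of these blocks, not from a disjointness conditioning. In fact, restricting $\tau_{2t}$ to collision-free types and conditioning on disjoint halves is unnecessary. The diagonal part $T=U$, $T'=U'$ of the expansion is already exactly $\tau_t\otimes\tau_t$, because $\mu_S\cdot\mu_T\mu_{T'}/\mu_S=\mu_T\mu_{T'}$. Your extra conditioning would only add cost and would not reproduce the stated constants.
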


\begin{proof}
    The proof can be divided into two parts.
    In the first part, we prove that $t$ copies of a random $(\pm 1)$-signed state is close to the mixtures of its type state, which is defined below.

    For any type $T$ over $[N]\cup \{0\}$, define
    $$|T|=\sum_{\vec k: \type(\vec k)=T}1\,.$$
    Define the distribution $\mathcal D_t$ over types as
    $$\Pr_{\mathcal D_t}[T]=\mu_T:=p_0^{T_0}p_1^{T_1}\cdots p_N^{T_N}|T|\,.$$
    Define the type states
    $$\ket{\phi_T}=\frac{1}{\sqrt{|T|}}\sum_{\vec k:\type(\vec k)=T}\bigotimes_{j=1}^t \ket{\phi_{k_j}}\,.$$
    We have
    $$\ket{\phi^f}^{\otimes t}=\sum_{T}(-1)^{f\cdot T_+}\sqrt{\mu_T}\ket{\phi_T}\,.$$

    It is easy to compute that
    $$\EE_f\ketbra{\phi^f}{\phi^f}^{\otimes t}-\EE_{T\gets \mathcal D_t}\ketbra{\phi_T}{\phi_T}=\sum_{\substack{T\neq T'\\ (T_+-T'_+)\equiv 0\pmod 2}}\sqrt{\mu_T\mu_{T'}}\ketbra{\phi_{T'}}{\phi_T}\,.$$

     For the condition $$T\neq T',\quad (T_+-T_+')\equiv 0\pmod 2$$ to be satisfied, there must exist $k\neq 0$ such that $T_k-T_k'\geq 2$ or $T_k-T_{k}'\leq -2$.
    This means either $T_k\geq 2$ or $T_k'\geq 2$.
    Now, define
    $$\Pi=\sum_{T:\forall k\neq 0,T_k\leq 1}\ketbra{\phi_T}{\phi_T}\,.$$

    $\Pi$ is a projector, and $$\Pi\left(\EE_f\ketbra{\phi^f}{\phi^f}^{\otimes t}-\EE_{T\gets \mathcal D_t}\ketbra{\phi_T}{\phi_T}\right)\Pi=0\,.$$
    For convenience, we define
    $$\varsigma=\EE_f\ketbra{\phi^f}{\phi^f}^{\otimes t}\quad \sigma=\EE_{T\gets \mathcal D_t}\ketbra{\phi_T}{\phi_T}\,.$$
    Since $\Pi(\varsigma-\sigma)\Pi=0$, we define the normalized state
    $$\sigma':=\frac{\Pi\varsigma\Pi}{\Tr(\Pi\varsigma)}=\frac{\Pi\sigma\Pi}{\Tr(\Pi\sigma)}\,.$$
    By the gentle measurement lemma~\ref{lem:gentle-measurement}, we have
    $$\left\Vert\varsigma-\sigma\right\Vert_1\leq \left\Vert\varsigma-\sigma'\right\Vert_1+\left\Vert\sigma-\sigma'\right\Vert_1\leq 2\sqrt{\epsilon_\varsigma}+2\sqrt{\epsilon_\sigma}\,,$$
    where $\epsilon_\varsigma,\epsilon_\sigma$ are the possibility of getting the reject outcome when measuring $\Pi$ on $\varsigma,\sigma$, respectively.
    Since $\Pi\varsigma\Pi=\Pi\sigma\Pi$, we have
    $$\epsilon_\varsigma=\epsilon_\sigma=\Tr((1-\Pi)\sigma)=\EE_{T\gets \mathcal D_t}[\mathbf{1}_{\exists k\neq 0,T_k\geq 2}]\,.$$
    We conclude by the birthday paradox that
    $$\epsilon_\varsigma=\epsilon_\sigma\leq \frac{t(t-1)}{2}\sum_{k\in [N]}p_k^2\,.$$
    Thus,
    $$\left\Vert\EE_f\ketbra{\phi^f}{\phi^f}^{\otimes t}-\EE_{T\gets \mathcal D_t}\ketbra{\phi_T}{\phi_T}\right\Vert_1\leq 2\sqrt{2}t\sqrt{\sum_{k\in [N]}p_k^2}\,.$$

    In the second part, we compare the LOCC experiment output when the input states are respectively $$\sigma_{2t}=\EE_{S\gets \mathcal D_{2t}}\ketbra{\phi_S}{\phi_S}$$
    and
    $$\sigma_t\otimes \sigma_t=\EE_{T,T'\gets \mathcal D_t} \ketbra{\phi_T}{\phi_T}\otimes \ketbra{\phi_{T'}}{\phi_{T'}}\,.$$

    We have
    \begin{align*}
        \ket{\phi_S}={} & \frac{1}{\sqrt{|S|}}\sum_{\vec k_1,\vec k_2:\type(\vec k_1)+\type(\vec k_2)=S}\bigotimes_{j=1}^t\ket{\phi_{k_{1,j}}}\otimes \bigotimes_{j=1}^t\ket{\phi_{k_2,j}}\\
        ={} & \frac{1}{\sqrt{|S|}}\sum_{T+T'=S}\left(\sum_{\vec k_1:\type(\vec k_1)=T}\bigotimes_{j=1}^t \ket{\phi_{k_{1,j}}}\right)\otimes \left(\sum_{\vec k_2:\type(\vec k_2)=T'}\bigotimes_{j=1}^t \ket{\phi_{k_{2,j}}}\right)\\
        ={} & \sum_{T+T'=S}\sqrt{\frac{|T||T'|}{|S|}}\ket{\phi_T}\ket{\phi_{T'}}\\
        ={} & \sum_{T+T'=S}\sqrt{\frac{\mu_T\mu_{T'}}{\mu_S}}\ket{\phi_T}\ket{\phi_{T'}}\,,
    \end{align*}
    where
    \begin{align*}
    \frac{\mu_T\mu_{T'}}{\mu_S}
    =\frac{\bigl(\prod_i p_i^{T_i+T'_i}\bigr)|T|\,|T'|}
    {\bigl(\prod_i p_i^{S_i}\bigr)|S|}
    =\frac{|T|\,|T'|}{|S|}.
\end{align*}

    We explicitly compute
    \begin{align*}
        \sigma_{2t}={} & \sum_{S}\mu_S\ketbra{\phi_S}{\phi_S}\\
        ={} & \sum_{S}\mu_S\left(\sum_{T+T'=S}\sqrt{\frac{\mu_T\mu_{T'}}{\mu_S}}\ket{\phi_T}\ket{\phi_{T'}}\right)\left(\sum_{U+U'=S}\sqrt{\frac{\mu_U\mu_{U'}}{\mu_S}}\bra{\phi_U}\bra{\phi_{U'}}\right)\\
        ={} & \sum_{\substack{T,T',U,U'\\ T+T'=U+U'}}\sqrt{\mu_T\mu_{T'}\mu_U\mu_{U'}}\ketbra{\phi_T}{\phi_U}\otimes \ketbra{\phi_{T'}}{\phi_{U'}}\,.
    \end{align*}

    Therefore, for any observable $M_i\otimes M_i'$ in the local POVM measurement $M$, noticing that $M_i$ and $M_i'$ are both Hermitian matrices, we have
    \begin{align*}
        &
        \Tr(M_i\otimes M_i'(\sigma_{2t}-\sigma_t\otimes \sigma_t))
        \\
        ={} & 
        \sum_{\substack{T+T'=U+U'\\ T\neq U,T'\neq U'}}\sqrt{\mu_T\mu_{T'}\mu_U\mu_{U'}}\braket{\phi_U|M_i|\phi_T}\braket{\phi_{U'}|M_i'|\phi_{T'}}
        \\
        ={} & \sum_{\substack{T+T'=U+U'\\ T\neq U,T'\neq U'}}\sqrt{\mu_T\mu_{T'}\mu_U\mu_{U'}}\braket{\phi_U|M_i|\phi_T}\overline{\braket{\phi_{T'}|M_i'|\phi_{U'}}}\\
        ={} & \Tr\left(M_i\otimes \overline{M_i'}\sum_{\substack{T+T'=U+U'\\ T\neq U,T'\neq U'}}\sqrt{\mu_T\mu_{T'}\mu_U\mu_{U'}}\ket{\phi_T}\bra{\phi_U}\otimes \overline{\ket{\phi_{U'}}\bra{\phi_{T'}}}\right)\,.
    \end{align*}

    Choose $d_i\in \{1,-1\}$ such that $$\sum_{i\in\mathcal I}d_i\Tr(M_i\otimes M_i'(\sigma_{2t}-\sigma_t\otimes \sigma_t))=\sum_{i\in \mathcal I}|\Tr(M_i\otimes M_i'(\sigma_{2t}-\sigma_t\otimes \sigma_t))|\,.$$
    Then $\sum_i M_i\otimes \overline{M_i'}=I$, and each term $M_i\otimes \overline{M_i'}$ are also positive semidefinite, which means $-I\preceq\sum d_iM_i\otimes \overline{M_i'}\preceq I$, so
    \begin{align*}
        & \sum_{i\in \mathcal I}|\Tr(M_i\otimes M_i'(\sigma_{2t}-\sigma_t\otimes \sigma_t))\\
        ={} & \Tr\left(\sum_{i\in\mathcal I}d_iM_i\otimes \overline{M_i'}\sum_{\substack{T+T'=U+U'\\ T\neq U,T'\neq U'}}\sqrt{\mu_T\mu_{T'}\mu_U\mu_{U'}}\ket{\phi_T}\bra{\phi_U}\otimes \overline{\ket{\phi_{U'}}\bra{\phi_{T'}}}\right)\\
        \leq {}& \left\Vert \sum_{\substack{T+T'=U+U'\\ T\neq U,T'\neq U'}}\sqrt{\mu_T\mu_{T'}\mu_U\mu_{U'}}\ket{\phi_T}\bra{\phi_U}\otimes \overline{\ket{\phi_{U'}}\bra{\phi_{T'}}}\right\Vert_1\,.
    \end{align*}

    Now, let
    $$\Delta=\sum_{\substack{T+T'=U+U'\\ T\neq U,T'\neq U'}}\sqrt{\mu_T\mu_{T'}\mu_U\mu_{U'}}\ket{\phi_T}\bra{\phi_U}\otimes \overline{\ket{\phi_{U'}}\bra{\phi_{T'}}}\,.$$

    Note that $T+T'=U+U'$ if and only if $T-U'=U-T'=:\delta$.
    Define the set
    $$\mathcal T_\delta=\{(T,U):T-U=\delta\}\,.$$
    To simplify, for every element $X=(T,U)\in \mathcal T_\delta$, define a new basis vector
    $$\ket{X}=\ket{\phi_T}\overline{\ket{\phi_U}}\,,$$
    and
    $$\mu_X=\mu_T\mu_U\,.$$
    Then
    \begin{align*}
        \Delta={}&\sum_{\delta}\sum_{\substack{X,Y\in \mathcal T_\delta\\ X\neq Y}}\sqrt{\mu_X\mu_Y}\ketbra{X}{Y}\,.
    \end{align*}
    Thus, let
    $$\Delta_\delta=\sum_{\substack{X,Y\in \mathcal T_\delta\\ X\neq Y}}\sqrt{\mu_X\mu_Y}\ketbra{X}{Y}\,,$$
    we have
    $$||\Delta||_1=\sum_\delta ||\Delta_\delta||_1\,.$$

    It remains to calculate the $1$-norm for each $\Delta_\delta$.
    Define
    $$\mathcal T_{\delta}^{\text{col}}=\{(T,U)\in \mathcal T_\delta: \exists i\in [N], T_i\geq 1, U_i\geq 1\}\,.$$
    Roughly speaking, each element $\mathcal T_\delta^{\text{col}}$ consists of two types that contain at least one collision between them for some non-zero index.
    It is easy to see that for each $\delta$, there is at most one element in $\mathcal T_\delta\backslash \mathcal T_\delta^{\text{col}}$, and $$\sum_{\delta}\sum_{X\in \mathcal T_\delta^{\text{col}}}\mu_X\leq t^2\sum_{i\in [N]}p_i^2\,.$$

    Let
    \begin{align*}
        \Delta_\delta^{\text{col}}={} & \sum_{\substack{X,Y\in \mathcal T_\delta^{\text{col}}\\ X\neq Y}}\sqrt{\mu_X\mu_Y}\ketbra{X}{Y}\\
        ={} & \sum_{X,Y\in \mathcal T_\delta^{\text{col}}}\sqrt{\mu_X\mu_Y}\ketbra{X}{Y}-\sum_{X\in \mathcal T_\delta^{\text{col}}}\mu_X\ketbra{X}{X}\\
        ={} & \sum_{X\in \mathcal T_\delta^{\text{col}}}\sqrt{\mu_X}\ket{X}\sum_{X\in \mathcal T_\delta^{\text{col}}}\sqrt{\mu_X}\bra{X}-\sum_{X\in \mathcal T_\delta^{\text{col}}}\mu_X\ketbra{X}{X}\,.
    \end{align*}
    Notice that in the last line, the first term is a rank $1$ matrix; the second term is a diagonal matrix.
    Then
    \begin{align*}
        ||\Delta_\delta^{\text{col}}||_1\leq {} & \left\Vert\sum_{X\in \mathcal T_\delta^{\text{col}}}\sqrt{\mu_X}\ket{X}\sum_{X\in \mathcal T_\delta^{\text{col}}}\sqrt{\mu_X}\bra{X}\right\Vert_1+\left\Vert\sum_{X\in \mathcal T_\delta^{\text{col}}}\mu_X\ketbra{X}{X}\right\Vert_1\\
        \leq{} & 2\sum_{X\in \mathcal T_\delta^{\text{col}}}\mu_X\,.
    \end{align*}

    For those $\delta$ that $\mathcal T_\delta=\mathcal T_\delta^{\text{col}}$, this suffices to give a bound; however, given $\delta$, if there exists $Z_\delta\in \mathcal T_\delta\backslash \mathcal T_\delta^{\text{col}}$,
    $$\Delta_\delta-\Delta_\delta^{\text{col}}=\sqrt{\mu_{Z_\delta}}\sum_{X\in \mathcal T_\delta^{\text{col}}} \sqrt{\mu_X}(\ketbra{X}{Z_\delta}+\ketbra{Z_\delta}{X})\,.$$
    This is a scalar multiple of a matrix that swaps $\ket{Z_\delta}$ and $\sum_{X\in \mathcal T_\delta^{\text{col}}} \sqrt{\mu_X}\ket{X}$.
    Therefore,
    $$||\Delta_\delta-\Delta_\delta^{\text{col}}||_1=2\sqrt{\mu_{Z_\delta}\sum_{X\in \mathcal T_\delta^{\text{col}}}\mu_X}\,.$$

    Hence,
    \begin{align*}
        ||\Delta||_1\leq{} & \sum_{\delta}||\Delta_\delta^{\text{col}}||_1+\sum_{\delta}||\Delta_\delta-\Delta_\delta^{\text{col}}||_1\\
        \leq{} & 2\sum_{\delta}\sum_{X\in \mathcal T_\delta^{\text{col}}}\mu_X+2\sum_{\delta:\exists Z_\delta} \sqrt{\mu_{Z_\delta}\sum_{X\in \mathcal T_\delta^{\text{col}}}\mu_X}\\
        \leq{} & 2\sum_{\delta}\sum_{X\in \mathcal T_\delta^{\text{col}}}\mu_X+2\sqrt{\sum_{\delta:\exists Z_\delta}\mu_{Z_\delta}}\sqrt{\sum_{\delta}\sum_{X\in \mathcal T_\delta^{\text{col}}}\mu_X}\\
        \leq{} & 2t^2\sum_{i\in [N]}p_i^2+2\sqrt{t^2\sum_{i\in [N]}p_i^2}\,.
    \end{align*}

    And finally, combining the cost of simulating the $\pm 1$ phase state with the type vectors and the $1$-norm bound for the partial transpose of the difference, we get the desired inequality.
\end{proof}

Interestingly, for some arbitrary POVM $\{M_i\}$ and the corresponding output distribution $\mathcal D$ described in Theorem~\ref{thm:phase-concentration}, applying Theorem~\ref{thm:LOCC-decoupling} to $\{M_i\otimes M_j\}_{i,j}$ yields that the $L^2$ variance of $\mathcal D$ is exponentially small if $\max p_k$ is exponentially small.
This does not recover the explicit expression of Theorem~\ref{thm:phase-concentration}, though.

Also, note that the distribution is not $L^1$ concentrated: an example could be the optimal state-estimation POVM or its discretized version; see~\cite{optimal-state-estimation}.

Now we have the concentration theorem~\ref{thm:phase-concentration}.
In order to show that the reflection unitaries in both of our separating oracles can be simulated by a CPTP oracle that samples $(x,g(x))$ for uniformly random $x$, we also need to find a way to simulate the expectation.
Fortunately, this was well studied in literature.
We directly adapt the simulation process in \cite[Lemma 4.3]{CountCrypt}, which in turn follows from \cite[Lemma 4.4]{Goldin-Zhandry}.
We make minor modification to adapt our setup of $\pm1$ phases (instead of $\omega_{2^n}$ phases used in~\cite{CountCrypt}).

We consider the oracle $U_{g_n,R_n}$ in our separation between OWFs and EV-OWPuzzs as an example; the other oracle can be simulated similarly.
Recall Lemma~\ref{thm:JLS-simulation} that any oracle queries to $U_{g_n,R_n}$ can be simulated by querying the state
$$\ket{\phi_{g_n,R_n}^{-}}:=\frac{1}{\sqrt{2}}\ket{0}-\frac{1}{\sqrt{2^{n+1}}}\sum_{x\in \{0,1\}^n}(-1)^{R_n(x)}\ket{1}\ket{x}\ket{g_n(x)}\,.$$
We have the following theorem.

\begin{theorem}\label{thm:GMMY-simulation}
    Let $\mathsf{Samp}$ be a CPTP oracle that acts as follows: it samples $x\gets \{0,1\}^n$, and output $(x,g_n(x))$.
    Then, there exists an efficient oracle algorithm $\mathrm{Sim}^{(\cdot)}(1^t)$ such that for all $t$, we have
    $$\left\Vert\EE_{R_n}\left[\ketbra{\phi_{g_n,R_n}^{-}}{\phi_{g_n,R_n}^-}^{\otimes t}\right]-\mathrm{Sim}^{\mathsf{Samp}}(1^t)\right\Vert_1\leq O\left(\frac{t}{\sqrt{2^n}}\right)\,.$$
\end{theorem}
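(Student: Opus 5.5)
We apply the type-state machinery of Theorem~\ref{thm:LOCC-decoupling} to the specific state $\ket{\phi^-_{g_n,R_n}}$. Take $N=2^n$, $p_0=\frac12$, $\ket{\phi_0}=\ket{0}$, and, for each $x$, $p_x=2^{-n-1}$ with $\ket{\phi_x}=-\ket{1}\ket{x}\ket{g_n(x)}$; these components are pairwise orthogonal. The first part of the proof of Theorem~\ref{thm:LOCC-decoupling} already gives
\[
\Bigl\|\EE_{R_n}\ketbra{\phi^-_{g_n,R_n}}{\phi^-_{g_n,R_n}}^{\otimes t}-\EE_{T\gets\mathcal D_t}\ketbra{\phi_T}{\phi_T}\Bigr\|_1\le 2\sqrt2\,t\sqrt{\textstyle\sum_x p_x^2}=O\bigl(t2^{-n/2}\bigr).
\]
So it suffices to construct $\mathrm{Sim}^{\mathsf{Samp}}(1^t)$ whose output is $O(t2^{-n/2})$-close to the type mixture $\sigma=\EE_{T\gets\mathcal D_t}\ketbra{\phi_T}{\phi_T}$. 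Equivalently, we may target the restriction of $\sigma$ to collision-free types, since types with some $T_x\ge2$ carry mass at most $\binom t2\sum_x p_x^2=O(t^2 2^{-n})$.

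\emph{The simulator.} First sample $T_0\sim\mathrm{Bin}(t,\frac12)$ and set $r=t-T_0$. Call $\mathsf{Samp}$ $r$ times to obtain $(x_1,g_n(x_1)),\dots,(x_r,g_n(x_r))$. If two of the $x_i$ coincide, output an arbitrary fixed state; this happens with probability $O(t^2 2^{-n})$. Otherwise the sampled multiset is a collision-free type $T$, and its law is exactly $\mathcal D_t$ conditioned on being collision-free, up to the collision event. We then prepare
\[
\ket{\phi_T}=\tbinom{t}{r}^{-1/2}\sum_{\text{arrangements}}\bigotimes_j\ket{\phi_{k_j}},
\]
which is the uniform superposition over all placements of the $T_0$ zero components and the $r$ distinct labelled components among the $t$ registers, with signs $(-1)^r$ overall. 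This is a global phase and can be dropped.

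\emph{Preparing $\ket{\phi_T}$ efficiently.} First prepare a Dicke-type uniform superposition over subsets $A\subseteq[t]$ of size $r$ together with a uniformly random bijection $\pi:[r]\to A$. Then write $\ket{1}\ket{x_i}\ket{g_n(x_i)}$ into register $\pi(i)$ and $\ket0$ elsewhere. The auxiliary registers holding $(A,\pi)$ must be uncomputed. Because the $x_i$ are distinct, $(A,\pi)$ can be read back coherently from the filled registers: the occupied positions give $A$, and the labels $x_i$ give $\pi$. Uncomputing therefore leaves exactly the symmetrized state. We would implement this by sorting-network symmetrization, which is standard (cf.\ \cite[Lemma 4.4]{Goldin-Zhandry}, \cite[Lemma 4.3]{CountCrypt}) and runs in $\poly(t,n)$ time.

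\emph{Error accounting and main obstacle.} The total error is the sum of three terms. Signed phases versus type mixture contributes $O(t2^{-n/2})$. The collision event contributes $O(t^2 2^{-n})$, which is at most $O(t2^{-n/2})$ whenever $t\le 2^{n/2}$; for larger $t$ the claimed bound exceeds $2$ and is trivial. Sampling versus $\mathcal D_t$ on collision-free types is exact, because $\mathcal D_t$ restricted to such types factors as a binomial count of zeros times i.i.d.\ uniform labels. The main obstacle is the $\pm1$-phase issue: unlike continuous phases, averaging signs does not diagonalize in the type basis. This is exactly what the projector/gentle-measurement step of Theorem~\ref{thm:LOCC-decoupling} handles, so we reuse it verbatim rather than reproving it.
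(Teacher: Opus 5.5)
Your proposal is correct and follows essentially the same route as the paper: sample a $\mathrm{Bin}(t,\tfrac12)$ number of nonzero components, fill them with $\mathsf{Samp}$ outputs, symmetrize into the type state $\ket{\phi_T}$, and charge the remaining error to the projector/gentle-measurement step in the first part of Theorem~\ref{thm:LOCC-decoupling}. Discarding $x$-collisions and symmetrizing via an injective placement map is a welcome refinement: it makes the uncomputation of the permutation register well defined, which the paper's $S_t$-based sketch glosses over when $\vec k$ has repeated entries, and its $O(t^2 2^{-n})$ cost is absorbed. One small slip: the normalization of $\ket{\phi_T}$ should be $(t!/T_0!)^{-1/2}$, not $\binom{t}{r}^{-1/2}$.
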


\begin{proof}
    The simulation algorithm $\mathrm{Sim}(1^t)$ will be defined as follows.
    \begin{enumerate}
        \item Sample $\eta\gets B(t,1/2)$ where $B(t,1/2)$ is the binomial distribution.
        That is,
        $$\Pr[B(t,1/2)=\eta]=\frac{1}{2^t}\binom{t}{\eta}\,.$$
        \item Query $\mathsf{Samp}$ a total of $\eta$ times, receiving outputs $(x_1,g_n(x_1)),(x_2,g_n(x_2)),\cdots,(x_\eta,g_n(x_\eta))$.
        \item Let $k_i=1||x_i$ for $i\leq \eta$, and $k_i=0$ for $i>\eta$.
        Define $$\ket{\phi_k}=\begin{cases}
            \ket{g_n(k_{[2:n+1]})} & k_1\neq 0\\
            \ket{0} & k_1=0
        \end{cases}\,.$$
        \item Generate the state
        $$\left(\frac{1}{\sqrt{t!}}\sum_{\pi\in S_t}\ket{\pi}_{\mathbf A}\right)\otimes \left(\bigotimes_{j=1}^t \ket{k_j}_{\mathbf{C_j}}\ket{\phi_{k_j}}_{\mathbf{D_j}}\right)\,.$$
        This can be done via an efficient QFT on $S_t$.
        \item Controlled by the register $\mathbf A$, apply the corresponding permutation on the registers $\mathbf{C_j}$ and $\mathbf{D_j}$.
        \item Uncompute the register $\mathbf A$.
        This can be done by comparing the values in registers $\mathbf {C_j}$ and $\vec k$.
        After that, the state on $\mathbf{C_j},\mathbf{D_j}$ registers is proportional to
        $$\sum_{\vec k':\type(\vec k')=\type(\vec k)}\bigotimes_{j=1}^t \ket{k_j'}\ket{\phi_{k_j'}}\,.$$
        The simulator $\mathrm{Sim}^{\mathsf{Samp}}(1^t)$ outputs the $\mathbf{C_j},\mathbf{D_j}$ registers.
    \end{enumerate}

    It is clear that the output of $\mathrm{Sim}^{\mathsf{Samp}}(1^t)$ is only related to the type of $\vec k$.
    For a type $T$, define
    $$|T|=\sum_{\vec k:\type(\vec k)=T}1\,.$$
    Define the distribution $\mathcal D$ over types as
    $$\Pr_{\mathcal D}[T]=\mu_T:=\frac{|T|}{2^{T_0}(2^{n+1})^{t-T_0}}\,,$$
    and
    $$\ket{\phi_T}=\frac{1}{\sqrt{|T|}}\sum_{\vec k:\type(\vec k)=T}\bigotimes_{j=1}^t \ket{k_j}\ket{\phi_{k_j}}\,.$$
    Then clearly
    $$\mathrm{Sim}^{\mathsf{Samp}}(1^t)=\EE_{T\gets \mathcal D}[\ketbra{\phi_T}{\phi_T}]\,.$$
    And the proof that $\ketbra{\phi_{g_n,R_n}^-}{\phi_{g_n,R_n}^-}^{\otimes t}$ is close to $\mathrm{Sim}^{\mathsf{Samp}}(1^t)$ in trace distance is just a recurrence of the proof in Theorem~\ref{thm:LOCC-decoupling}.
\end{proof}

\section{Proof of the Hilbert-valued Efron--Stein Inequality}
\label{sec:Appendix-efron-stein}
In this section, we present another technical contribution of this paper: the Hilbert-valued Efron--Stein inequality~\ref{thm:efron-stein}.
It is a straightforward generalization of the (real-valued) Efron--Stein inequality in Lemma~\ref{lem:Efron-stein-original}.

It is crucial that in Theorem~\ref{thm:efron-stein}, the range of $f$ is in a Hilbert space.
In fact, the proof uses the Parseval's indentity (and essentially, the Pythagorean theorem) in a Hilbert space, which cannot be generalized to an arbitrary Banach space.
Since the only $L^p$ space with an inner product is $L^2$, the Efron--Stein inequality~\ref{thm:efron-stein} yields an $L^2$ concentration, which is the fundamental reason that our concentration theorem does not contradict to the existence of EV-OWPuzz.

\begin{theorem}[Hilbert-valued Efron--Stein Inequality]\label{thm:efron-stein}
    Let $X_1,\cdots,X_n$ be mutually independent random variables taking values in the measurable space $\mathcal X$, and let $H$ be a real separable Hilbert space.
    Let $f:\mathcal X^n\to H$ be a measurable, square-integrable function.
    Define
    $$X=(X_1,\cdots,X_n)\,,\qquad Z=f(X)\,.$$
    For each $i$, let $X_i'$ be an independent copy of $X_i$, independent of all other random variables, and define
    $$X^{(i)}=(X_1,\cdots,X_{i-1},X_i',X_{i+1},\cdots,X_n)\,,
    \qquad
    Z^{(i)}=f(X^{(i)})\,.$$
    Then
    $$\Var(Z):=\EE||Z-\EE Z||_H^2\leq \frac{1}{2}\sum_{i=1}^n \EE||Z-Z^{(i)}||_H^2\,.$$
\end{theorem}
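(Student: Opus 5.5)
We prove the Hilbert-valued Efron--Stein inequality by the standard martingale-difference (Doob) decomposition. The proof uses only orthogonality in $H$ and conditional Jensen applied to the convex map $\|\cdot\|_H^2$. Separability of $H$ ensures that Bochner conditional expectations exist and behave as in the scalar case. Throughout we write $\EE_i$ for conditional expectation given $X_1,\ldots,X_i$, and $\EE^{(i)}$ for conditional expectation given all variables except $X_i$.

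\emph{Step 1: telescoping decomposition.} Define
\[
\Delta_i := \EE_i Z - \EE_{i-1}Z, \qquad\text{so that}\qquad Z-\EE Z=\sum_{i=1}^n \Delta_i .
\]
For $i<j$, $\Delta_i$ is measurable with respect to $X_1,\ldots,X_i$, and $\EE_{j-1}\Delta_j=0$. Pulling out the measurable factor and using the tower property gives $\EE\langle\Delta_i,\Delta_j\rangle_H=0$. This is the Pythagorean step, and it is the only place where the inner product is used. It yields
\[
\Var(Z)=\sum_i \EE\|\Delta_i\|_H^2 .
\]
To justify pulling a measurable $H$-valued factor inside the inner product, expand in an orthonormal basis of $H$ (available by separability) and use dominated convergence together with square-integrability. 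Equivalently, apply the real-valued statement coordinatewise and sum.

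\emph{Step 2: bound each increment.} By independence, $\EE_{i-1}Z=\EE_i\bigl[\EE^{(i)}Z\bigr]$, since integrating out $X_i$ commutes with conditioning on $X_1,\ldots,X_i$ when $X_i$ is independent of the rest. Hence
\[
\Delta_i=\EE_i\bigl[Z-\EE^{(i)}Z\bigr].
\]
Conditional Jensen for $\|\cdot\|_H^2$ then gives
\[
\EE\|\Delta_i\|_H^2\le \EE\bigl\|Z-\EE^{(i)}Z\bigr\|_H^2 .
\]

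\emph{Step 3: identify the right-hand side.} Conditioned on $X^{(-i)}$, the vectors $Z$ and $Z^{(i)}$ are i.i.d. For i.i.d.\ $H$-valued $Y,Y'$ one has
\[
\EE\|Y-Y'\|^2=2\,\EE\|Y-\EE Y\|^2,
\]
because the cross term vanishes by independence. Taking the outer expectation,
\[
\EE\bigl\|Z-\EE^{(i)}Z\bigr\|_H^2=\tfrac12\,\EE\bigl\|Z-Z^{(i)}\bigr\|_H^2 .
\]
Summing over $i$ yields the claim.

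\emph{Main obstacle.} The delicate point is measure-theoretic rather than conceptual. We must verify that conditional expectation of $H$-valued random variables satisfies the tower property, commutes with the inner product against measurable factors, and obeys Jensen. The cleanest route is to reduce to coordinates: fix an orthonormal basis $\{e_k\}$ of $H$ and apply the real identities to $\langle Z,e_k\rangle$. Both sides of every identity above are then sums over $k$ of the corresponding scalar quantities, and we sum using monotone convergence. With this reduction, the theorem follows directly from the real-valued Lemma~\ref{lem:Efron-stein-original} applied to each coordinate $\langle Z,e_k\rangle$, summed over $k$. That gives a second, shorter proof, which we would present as the final argument.
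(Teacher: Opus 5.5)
Your proposal is correct, and the coordinatewise argument you settle on as the final proof is exactly the paper's: apply the real-valued Efron--Stein inequality (Lemma~\ref{lem:Efron-stein-original}) to $\langle f(\cdot),e_k\rangle_H$ for an orthonormal basis $\{e_k\}$, sum over $k$ (swapping sum and expectation by nonnegativity), and conclude with Parseval's identity. Your martingale-difference derivation is also valid: it reruns the standard scalar proof directly in $H$ using orthogonal increments, conditional Jensen for $\|\cdot\|_H^2$, and the i.i.d.\ symmetrization identity, so it is self-contained but unnecessary once the scalar lemma is available.
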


\begin{proof}
    Let $\{e_1,e_2,\cdots\}$ be a set of orthonormal basis of the Hilbert space $H$.
    Consider the function
    $$f_k(\cdot)=\left\langle f(\cdot),e_k \right\rangle_H$$
    By the vanilla Efron--Stein inequality~\ref{lem:Efron-stein-original},
    $$\Var(f_k(X))=\EE\left[|f_k(X)-\EE f_k(X)|^2\right]\leq \frac{1}{2}\sum_{i=1}^n \EE[(f_k(X)-f_k(X^{(i)}))^2]\,.$$
    By the linearity of inner product,
    $$\EE\left[\left\langle Z-\EE Z,e_k \right\rangle_H^2\right]\leq \frac{1}{2}\sum_{i=1}^n \EE\left[\langle Z-Z^{(i)},e_k\rangle^2\right]\,.$$
    Take the summation over $k$ on both sides.
    Note that both sides are nonnegative, so one can switch the order of expectation and summation:
    $$\EE\left[\sum_k\left\langle Z-\EE Z,e_k \right\rangle_H^2\right]\leq \frac{1}{2}\sum_{i=1}^n \EE\left[\sum_k\langle Z-Z^{(i)},e_k\rangle^2\right]\,.$$
    Parseval's identity yields
    $$\EE||Z-\EE Z||_H^2\leq \frac{1}{2}\sum_{i=1}^n \EE||Z-Z^{(i)}||_H^2\,.$$
\end{proof}

\section{Constructing QCCC Commitment from EV-OWPuzz with Nearly Uniform Puzzle Distribution}\label{sec:appendix-uniform}
\paragraph{Overview of this section.}
To prove the separation of EV-OWPuzz and QCCC commitment, we make essential use of the fact that the proposed EV-OWPuzz verifier can not only verify the ``hard" puzzles that are output by the sampler, but also a large fraction of other puzzles, where the puzzles might be easy to be inverted.
As sketched in the technical overview section (Section~\ref{sec:tech-overview-commitment}), when all puzzles are ``hard" and get sampled (nearly) uniformly from the EV-OWPuzz sampler, it is possible to construct a QCCC commitment scheme from it.
We formalize this intuition by presenting Theorem~\ref{thm:uniform-puzzle-to-commitment}. To our knowledge, this is the first construction of a statistically hiding QCCC commitment from a well-founded primitive.\footnote{Note that the flavor transition in~\cite{Yan22,HMY23} requires a quantum communication channel, and the classical flavor transition usually requires the security of the NOVY protocol as a prerequisite~\cite{NOVY,SHcommit}. Also, we give the first proof of a statistically hiding classical commitment with post-quantum security, using a fully black-box construction from a post-quantum one-way permutation. A construction of a statistically hiding QCCC commitment is known assuming stronger assumptions, e.g.\ hardness of certain lattice problems~\cite{Unr16}.}

The protocol in Theorem~\ref{thm:uniform-puzzle-to-commitment} is almost identical to the NOVY commitment protocol~\cite{NOVY}, and the proof of security largely originates from the same idea. We briefly describe their security reduction as follows.

Suppose for contraposition that the NOVY commitment is not binding, i.e.\ there is a malicious committer that can open both $b=0$ and $b=1$. To invert a value $s$, the adversary simulates the interaction between the malicious committer and the honest receiver in the NOVY commitment. \cite[Claim~1]{NOVY} proves that at each round of the interactive hashing, the committer's answer agrees with a selected typical puzzle with roughly half probability. Thus, at each round, the adversary repeatedly samples the randomness, until the committer's output agrees with the value $s$ we want to invert. Since at the opening stage the committer can open both $b=0$ and $b=1$ with non-negligible probability, it must invert $s$ with non-negligible probability.

The main hardness in the proof of Theorem~\ref{thm:uniform-puzzle-to-commitment} is that the technique in~\cite{NOVY} (fixing the randomness and resamples) no longer applies in the quantum case. This was observed in~\cite{DMS00,Yan22}, both of which bypass this issue by diverting to a commitment scheme with quantum communication.

We resolve the issue as follows. Asking the committer to output the required bit is essentially postselecting a measurement outcome. We prove a quantum version of~\cite[Claim~1]{NOVY}, which allows us to simulate the postselection by a quantum singular value transformation algorithm (Lemma~\ref{lem:QSVT}). The rest largely follows of the proof in~\cite{NOVY}.

\begin{lemma}[Quantum singular value transformation (QSVT), \cite{GSLW-QSVT}]\label{lem:QSVT}
    Let $K$ be a matrix such that $K$ has the following implementation
    $$K=(\bra{1}\otimes I) C (I\otimes \ket{0^\eta})\,,$$
    where $C$ is a unitary and $\eta$ is the number of ancillas. Write the singular value decomposition of $K$ as $K=U\Sigma V^\dagger$. Then, for any real odd polynomial $p$ of degree $d$ such that
    $$\forall x\in [-1,1],\ |p(x)|\leq 1\,,$$
    there exists an implementation of $Up(\Sigma)V^\dagger$, which makes $O(d)$ queries to $C$ and $C^\dagger$ and $\poly(d,\eta)$ other gates.
\end{lemma}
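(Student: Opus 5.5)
The plan is to follow the standard argument of~\cite{GSLW-QSVT}, since the statement is a direct instance of their QSVT theorem specialized to odd real polynomials. I will read $K$ as a projected unitary encoding $K=\widetilde\Pi\, C\,\Pi$ with input projector $\Pi=I\otimes\ketbra{0^\eta}{0^\eta}$ and output projector $\widetilde\Pi=\ketbra{1}{1}\otimes I$. The target is a circuit $C_\Phi$, built from $d$ alternating uses of $C$ and $C^\dagger$ interleaved with projector-controlled phase rotations $e^{i\varphi_j(2\Pi-I)}$ and $e^{i\varphi_j(2\widetilde\Pi-I)}$, such that $\widetilde\Pi\, C_\Phi\, \Pi = U p(\Sigma) V^\dagger$. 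Each projector-controlled phase costs $\poly(\eta)$ gates, since it is a multi-controlled phase on the $\eta$ ancillas (respectively on one qubit). This immediately gives the claimed $O(d)$ queries and $\poly(d,\eta)$ gates.

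\textbf{Step 1: reduce to two-dimensional blocks.} Take the singular value decomposition $K=\sum_i \sigma_i \ket{u_i}\bra{v_i}$. By Jordan's lemma applied to the pair of projectors $\Pi$ and $C^\dagger\widetilde\Pi C$, the space decomposes into invariant subspaces of dimension at most two. On the block spanned by $\ket{v_i}$ and its image, $C$ acts as a $2\times2$ rotation $W(\sigma_i)$ with off-diagonal entries $\sqrt{1-\sigma_i^2}$, relative to the bases adapted to $\Pi$ and $\widetilde\Pi$. The projector phases act as diagonal $e^{\pm i\varphi}$ on each block. Consequently the alternating product acts blockwise as a quantum signal processing sequence in the single variable $\sigma_i$. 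Its top-left entry is $P(\sigma_i)$ for some polynomial $P$ determined by the phase sequence $\Phi$, and it maps $\ket{v_i}\mapsto P(\sigma_i)\ket{u_i}$ because $d$ is odd. Odd parity of $P$ is exactly what ensures the right and left singular vectors pair up correctly, and that the kernel directions (where $\sigma=0$) are sent to $0$.

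\textbf{Step 2: realize the polynomial.} I will invoke the QSP characterization: a complex polynomial $P$ of degree $d$ and parity $d \bmod 2$ is achievable by some phase sequence iff there is $Q$ of degree $d-1$ and opposite parity with $|P(x)|^2+(1-x^2)|Q(x)|^2=1$ on $[-1,1]$. Given the real odd $p$ with $|p|\le1$, I do not try to find $Q$ for $p$ itself. Instead I choose $P$ with $\operatorname{Re}P=p$: the function $1-p(x)^2\ge0$ is a nonnegative polynomial, so Fejér--Riesz-type factorization yields the imaginary part and $Q$ completing the unit-modulus condition. I then obtain $p(\Sigma)$ as the average of the circuits for $\Phi$ and for the conjugated phases $-\Phi$, which implement $P$ and $\overline{P}$ respectively. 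This linear combination uses one extra ancilla with Hadamards, as a linear combination of unitaries, and costs only a constant factor more queries.

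\textbf{Main obstacle.} I expect the existence of the complementary polynomials (the Fejér--Riesz/sum-of-squares factorization together with the parity bookkeeping) to be the hard step; everything else is routine block algebra. If needed, I would simply cite~\cite[Corollary 8 and Theorem 17]{GSLW-QSVT} for the phase existence and the singular value transformation, rather than reprove the factorization.
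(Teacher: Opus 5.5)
Your proposal is correct and follows essentially the paper's route: the paper gives no proof of its own and simply invokes the real-polynomial singular value transformation of \cite{GSLW-QSVT}, which is the QSP construction with averaging over the phase sequences $\Phi$ and $-\Phi$ that you outline. One small refinement concerns the averaging step: a generic LCU over controlled-$U_\Phi$ and controlled-$U_{-\Phi}$ would need controlled access to $C$, which the lemma does not grant, so instead control only the projector-phase rotations on the extra qubit, which works because both circuits use the identical sequence of $C$ and $C^\dagger$ and keeps the query count at $d$.
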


\begin{theorem}\label{thm:uniform-puzzle-to-commitment}
    Let $(\mathsf{Samp},\mathsf{Ver})$ be an EV-OWPuzz. Suppose the marginal distribution of the puzzle is $\epsilon$-close to the uniform distribution, where $\epsilon=\epsilon_\lambda$ is negligible in $\lambda$.
    Then, there is a statistically $2\epsilon$-hiding and computationally sum-binding QCCC bit commitment scheme.
\end{theorem}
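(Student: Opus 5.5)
We instantiate the NOVY protocol with the puzzle sampler in place of a one-way permutation. Let $s\in\{0,1\}^\ell$ denote puzzles. In the commit stage, the committer runs $(s,k)\gets\Samp(1^\lambda)$ and the two parties run the $\ell-1$ rounds of interactive hashing. In round $j$ the receiver sends a uniformly random $h_j\in\{0,1\}^\ell$, linearly independent of $h_1,\dots,h_{j-1}$, and the committer replies $c_j=\langle h_j,s\rangle$. Afterwards exactly two strings $s_0<s_1$ are consistent with the answers, and the committer knows $d$ with $s=s_d$. It sends $e=b\oplus d$. To open, it reveals $(b,k)$. The receiver recomputes $s_{b\oplus e}$, accepts iff $\Ver(s_{b\oplus e},k)=1$, and outputs $b$. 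Correctness follows from the correctness of the EV-OWPuzz.

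For \emph{statistical hiding}, first replace $s$ by a truly uniform string, which costs at most $\epsilon$ in statistical distance. When $s$ is uniform, conditioned on any receiver view (the $h_j$'s and the $c_j$'s), $s$ is uniform over the two consistent strings, no matter how an unbounded receiver chooses the $h_j$'s. So $d$ is a uniform bit independent of the view, and $e$ is uniform for both values of $b$. The two transcript distributions are therefore each $\epsilon$-close to the same distribution, which gives $2\epsilon$-hiding.

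For \emph{computational sum-binding}, suppose a QPT committer $C'$ wins with probability $(1+\delta)/2$ for non-negligible $\delta$. By the sum-binding game, this means that with non-negligible probability over the commit transcript, $C'$ can open both $b=0$ and $b=1$. Equivalently, it produces valid keys for both $s_0$ and $s_1$ with non-negligible total advantage. We build an inverter $\mathcal B$ that, given an honest puzzle $s^\star$, embeds $s^\star$ as one of the two survivors.

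$\mathcal B$ simulates the receiver, but in each round it wants the committer's answer $c_j$ to agree with $\langle h_j,s^\star\rangle$. Classically, NOVY rewinds the committer with fresh randomness. Here we will postselect. We prove a quantum analogue of \cite[Claim~1]{NOVY}: for a ``typical'' target, the committer's answer agrees with it with probability roughly $1/2$, amortized over the rounds. Since the committer's internal state cannot be rewound, we purify $C'$ and view the round-$j$ answer as a measurement on its current state. The map ``$C'$ answers $\langle h_j,s^\star\rangle$'' is a block-encoding $K=(\bra{1}\otimes I)C(I\otimes\ket{0})$. By Lemma~\ref{lem:QSVT}, applying an odd polynomial approximating the sign function amplifies the singular values above a threshold $\gamma$ to nearly $1$. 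This implements the postselection with $O(\gamma^{-1}\log(1/\epsilon'))$ queries, coherently on the committer's state, instead of by rewinding. After the last round, $s^\star$ equals $s_{d'}$ for some $d'$, and $\mathcal B$ runs $C'$'s opening for the bit $d'\oplus e$ and outputs the revealed key.

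The analysis follows NOVY. Using uniformity of $s^\star$, whose distribution is $\epsilon$-close to uniform, together with the pairwise-independent structure of the $h_j$'s, the committer's answer-consistent set keeps the required density. The postselected state is therefore close to the state conditioned on the target outcome, and the success probability of $\mathcal B$ is at least $\poly(\delta/\lambda)$.

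\textbf{Main obstacle.} The hardest step is the quantum version of \cite[Claim~1]{NOVY}. Singular values of $K$ below the threshold $\gamma$ get suppressed, and we must bound the weight lost there. We must also show that the errors from approximate postselection do not compound badly over $\ell$ rounds. We would handle the first point with a Markov-type argument over the random $h_j$, and the second with a hybrid argument using Lemma~\ref{lem:gentle-measurement}.
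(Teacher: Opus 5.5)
Your construction and hiding argument are fine (have the committer abort if the $h_j$ are linearly dependent). Your plan for binding is the paper's plan: purify the committer, replace NOVY's rewinding by QSVT-based postselection, and prove an operator analogue of NOVY's Claim~1. The gap is in the step you delegate to NOVY. Your $\mathcal B$ postselects in \emph{every} round, and you assert the result is close to the honestly conditioned state. That is false, and the loss sits in the final rounds.

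The Claim~1 analogue that actually holds is an averaged operator bound. Purify the round-$i$ challenge inside the block-encoded operator. This is necessary: with a fixed classical $h_j$, a committer whose answer is a function of $h_j$ gives singular values in $\{0,1\}$, and nothing can be forced. Let $A_{\tau,s}$ be the resulting ``answer consistent with $s$'' operator and $D=A_{\tau,s}-I/2$. Then $\mathbb{E}_{s\in V}D^2\preceq 2^{-(m-i+2)}I$, where $|V|=2^{m-i+1}$ is the current consistent set. So the best guarantee for the relative error of round $i$ is about $2^{-(m-i)/2}$. Your hybrid over rounds then gives total error $\sum_i 2^{-(m-i)/2}=\Theta(1)$, dominated by the last few rounds, where $|V|$ is constant and there is no concentration at all.

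This is not an artifact of the analysis. Take the honest committer: on the branch where its own puzzle equals $s^\star$, $A=I$ rather than $I/2$. True conditioning doubles that branch's weight each round, ending near $1/2$. Your sign-threshold map sends both singular values $1$ and $1/\sqrt2$ to $1$, so that weight stays exponentially small. A constant-size error can thus erase an ideal success of $1/2$, and a cheating committer exceeds that only by $\delta/2$. Neither a Markov bound on sub-threshold weight nor gentle measurement addresses this.

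You need a separate treatment of the last $O(\log(1/\delta))$ rounds. NOVY guesses there; the paper notes naive guessing is not enough for the quantum $\ell_2$ analysis and does three things instead.
\begin{enumerate}
\item It compares against a linear target rather than a normalized conditional state: the unnormalized consistent branch scaled by a deterministic factor, $\gamma_i\ket{\psi_{s,i}}$. The remaining operations in each hybrid are then contractions, and $\mathbb{E}_s$ of the final opening success is exactly $2q\gamma_{m-1}^2/2^m$.
\item Up to a randomly guessed round $T$, it applies QSVT to the round operator $G$ with $f_1(x)=\sqrt2(\tfrac32x-x^3)$. This approximates multiplication by $\sqrt2$ with error $-\sqrt2\,GD$, controlled by $\mathbb{E}_sAD^2$.
\item After $T$, it uses $f_2(x)=\tfrac32x-\tfrac12x^3$, a partial renormalization by $5/4$ with error $\tfrac14\mathbb{E}_sAD^2$.
\end{enumerate}
The signal amplitude then grows like $(5/4)^{m-1-T}\sqrt{2q}$, while the error is $(5/4)^{m-1-T}+1/\sqrt2$, both divided by $\sqrt{2^{m-T}}$. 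Taking $T=m-1-\log_{5/4}(6/\delta)$ gives advantage $\Omega(\delta^{\log_{5/4}2})$, and this $T$ is guessed with probability $1/m$. Without a device of this kind for the final rounds, your reduction does not go through.
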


\begin{proof}
    Without loss of generality, we assume that the puzzle length is $m=m(\lambda)$, and the key length is $\lambda$.

    \emph{Construction.} The honest protocol is as follows. To commit a bit $b$:
\begin{enumerate}
    \item The committer samples \((s,k)\leftarrow\mathsf{Samp}(1^\lambda)\).
    \item Then, the committer and the receiver interact $m-1$ rounds. In the $i$-th round, the receiver samples a uniform \(r_i\in\{0,1\}^{m-i}\) and sends $r_i$ to the committer. The committer sets $h_i=0^{i-1}1r_i$, computes \(y_i=(h_i\cdot s )\bmod 2\), and sends $y_i$ to the receiver.\footnote{This is exactly the interactive hashing scheme used in the construction of~\cite{NOVY}.}
   \item The set of equations \(\{h_i\cdot z=y_i\}\) on the vector space $\mathbb F_2^n$ has exactly two solutions \(z_0<z_1\) (ordered lexicographically). Let \(s=z_d\). The committer sends \(c=b\oplus d\).
   \item At the opening stage, the committer sends \((b,d,s,k)\). The receiver checks \(c=b\oplus d\), \(s=z_d\), and accepts exactly when \(\mathsf{Ver}(s,k)\) accepts.
\end{enumerate}

    The correctness of this protocol follows directly from the correctness of the EV-OWPuzz. If $\epsilon=0$, hiding follows directly from~\cite[Lemma~1]{NOVY}. When $\epsilon>0$, simply notice that changing the true puzzle distribution to the uniform distribution changes each of the two commit stage transcript by at most $\epsilon$.
    
    \emph{Binding.} Suppose there is a quantum malicious committer that wins the sum-binding game with advantage at least $\delta$. We will gradually describe how to build the EV-OWPuzz adversary that breaks the security with advantage $\Omega(\poly(\delta)/m-\epsilon-\negl(\lambda))$ based on this malicious committer, which is a black-box security reduction.

    The general idea to build the EV-OWPuzz adversary is to simulate the malicious committer and somehow ``force" its outputs at step (c) to contain $s$, where $s$ is the puzzle one would like to invert. We describe how to ``force" the malicious committer as follows.

    We select $i\in [m-1]$ and focus on the $i$-th round of step (b).
    Conditioned on the previous transcript, the remaining puzzle candidates form an affine subspace $V\subset \mathbb F_2^m$, whose final $m-i+1$ coordinates are mutually different. For simplicity, we represent elements in $V$ by their final $m-i+1$ coordinates, and omit the dependency on $i$ and the previous transcript when it is clear from context.

    Without loss of generality, we may assume the committer's private state is always pure, and, conditioned on the previous transcript $\tau$ and the challenge $r=r_i$, we assume the malicious committer uses $\eta$ qubits of ancillas that initialized to $\ket0$, performs some circuit $C_r$, and measures the first qubit as its answer. In other words, the malicious committer applies
    $$K_{\tau,r,b}=(\bra{b}\otimes I)C_r(I\otimes \ket{0^\eta})$$
    to its private state, when it gets the challenge $r$ and answers $b$.
    Define $E_{\tau,r,b}=K_{\tau,r,b}^\dagger K_{\tau,r,b}$.
    
    For any $s\in V$, write $s=(a,z)$ where $a\in \{0,1\}$, $z\in \{0,1\}^{m-i}$. To keep $s$ as a remaining candidate puzzle after the $i$-th round, given a challenge $r\in \{0,1\}^{m-i}$, the committer has to \emph{postselect} the result $a+r\cdot z$, i.e. to perform the following operation
    \begin{equation}\label{eqa:postselection}
        \rho\mapsto \frac{K_{\tau,r,a+r\cdot z}\rho K_{\tau,r,a+r\cdot z}^\dagger}{\Tr(E_{\tau,r,a+r\cdot z}\rho)}\,.
    \end{equation}
    This is not efficiently implementable. However, we will prove that there exists a good approximation of this postselection, for a typical choice of $s=(a,z)$ and after purification by the challenge $r$.

    Define
    $$A_{\tau,s}=\EE_{r\gets \{0,1\}^{m-i}}\left[E_{\tau,r,a+r\cdot z}\right],\quad D_{\tau,s}=A_{\tau,s}-\frac{I}{2}\,.$$
    We prove the following analog of \cite[Claim~1]{NOVY}.
    \begin{lemma}\label{lem:resampler-variance}
        $$\EE_{s\gets V} (D_{\tau,s})^2\preceq \frac{I}{2^{m-i+2}}\,,\quad \EE_{s\gets V}A_{\tau,s}D_{\tau,s}^2\preceq \frac{I}{2^{m-i+3}}\,.$$
    \end{lemma}
    \begin{proof}
        By definition,
        \begin{align*}
            D_{\tau,s}={}&\frac{1}{2^{m-i}}\sum_{r\in \{0,1\}^{m-i}}\left(E_{\tau,r,a+r\cdot z}-\frac{I}{2}\right)\\
            ={}& \frac{1}{2^{m-i}}\sum_{r\in \{0,1\}^{m-i}}\frac{1}{2}\left(E_{\tau,r,a+r\cdot z}-E_{\tau,r,1+a+r\cdot z}\right)\\
            ={} & \frac{1}{2^{m-i+1}}\sum_{r\in \{0,1\}^{m-i}}(-1)^{a+r\cdot z}(E_{\tau,r,0}-E_{\tau,r,1})\,.
        \end{align*}
        Set $B_{\tau,r}=E_{\tau,r,0}-E_{\tau,r,1}$, then
        \begin{align*}
            \sum_{s}(D_{\tau,s})^2={} & \sum_{a,z}\frac{1}{2^{2(m-i+1)}}\sum_{r\in \{0,1\}^{m-i}}\sum_{r'\in \{0,1\}^{m-i}}(-1)^{(r\oplus r')\cdot z}B_{\tau,r}B_{\tau,r'}\\
            ={} & \frac{1}{2^{2(m-i)+1}}\sum_{r\in \{0,1\}^{m-i}}2^{m-i}(B_{\tau,r})^2\\
            ={}& \frac{1}{2^{m-i+1}}\sum_{r\in \{0,1\}^{m-i}}(B_{\tau,r})^2\,.
        \end{align*}
        Since $(B_{\tau,r})^2\preceq I$, we have
        $$\sum_{s}(D_{\tau,s})^2\preceq \frac{I}{2}\,.$$
        Notice that for $s'=(a\oplus 1,z)$, we have $A_{\tau,s'}=I-A_{\tau,s}$, $D_{\tau,s'}=-D_{\tau,s}$. Thus,
        \begin{align*}
            \sum_s A_{\tau,s}(D_{\tau,s})^2={}& \sum_{z}(A_{\tau,0,z}D_{\tau,0,z}^2+A_{\tau,1,z}D_{\tau,1,z}^2)\\
            ={} & \sum_z A_{\tau,0,z}D_{\tau,0,z}^2+(I-A_{\tau,0,z})D_{\tau,0,z}^2\\
            ={} & \sum_z D_{\tau,0,z}^2=\frac{1}{2}\sum_s D_{\tau,s}^2\preceq \frac{I}{4}\,.
        \end{align*}
        This concludes the proof of Lemma~\ref{lem:resampler-variance}.
    \end{proof}

    Intuitively, Lemma~\ref{lem:resampler-variance} states the following: for a typical puzzle $s$, there is about half chance that $s$ will be preserved in a single round. We interpret this in two extremes. In one extreme, under each challenge $r$ there is approximately half probability of the malicious committer outputting each answer. In this case, applying $K_{\tau,r,a+r\cdot z}$ shrinks the private state to a state with norm roughly $1/\sqrt{2}$, so postselection (renormalizing) requires a multiplicative factor $\sqrt{2}$. In another extreme, the committer outputs deterministically for each challenge, and there are around half of the challenges $r$ preserving the puzzle $s$. In the latter case, the private state does not need renormalization. However, the transcript (i.e.\ the challenges) requires postselection: regarding $r$ as randomly selecting from a uniform superposition of all challenges, this postselection proposes a multiplicative factor of $\sqrt{2}$ on the \emph{transcript} register, after such purification by the transcript (challenges). Thus, what is needed is the following: after purification of the committer's private state by the transcript, to postselect $s$ to be preserved in each round is to multiply by roughly $\sqrt{2}$ after the coherent action of $K_{\tau,r,a+r\cdot z}$.

    However, multiplication by $\sqrt{2}$ is not a physical action. Notice that the singular values are known to be concentrated around $1/\sqrt{2}$, we can apply a polynomial on the singular values, which maps $1/\sqrt{2}$ to $1$. This can be fulfilled by QSVT (Lemma~\ref{lem:QSVT}).

    We now begin to analyze the error of this QSVT compared to the ideal postselection (i.e.\ renormalization).
    For $s\in V$, write the sampling of $r$ and the measurement jointly as the following operator $G_{\tau,s,i}$:
    \begin{align*}
        G_{\tau,s,i}
        ={}& \frac{1}{\sqrt{2^{m-i}}}\sum_{r\in \{0,1\}^{m-i}}\ket{r}\otimes \ket{a+r\cdot z}\otimes K_{\tau,r,a+r\cdot z}\,.
    \end{align*}
    One may think of $G_{\tau,s,i}$ as recording the $i$-th round transcript $(r,a+r\cdot z)$ and coherently applying $K_{\tau,r,a+r\cdot z}$. Note that $G_{\tau,s,i}^\dagger G_{\tau,s,i}=A_{\tau,s}$.
    Define
    $$f_1(x)=\sqrt{2}\left(\frac{3}{2}x-x^3\right)\,.$$
    Conditioned on the previous transcript, we implement the QSVT on $G_{\tau,s,i}$ with $f_1$. Suppose $G_{\tau,s,i}=U\Sigma V^\dagger$ is the singular value decomposition of $G_{\tau,s,i}$. Since $G_{\tau,s,i}A_{\tau,s}=G_{\tau,s,i}G_{\tau,s,i}^\dagger G_{\tau,s,i}=U\Sigma^3V^\dagger$, the QSVT computes the following operator
    $$F_{1,\tau,s,i}=\sqrt{2}G_{\tau,s,i}\left(\frac{3}{2}I-A_{\tau,s}\right)\,.$$
    Thus, for any $s$, $F_{1,\tau,s,i}-\sqrt{2}G_{\tau,s,i}=-\sqrt{2}G_{\tau,s,i}D_{\tau,s}$, and
    \begin{equation}\label{eqa:sqrt2-bound}
        \EE_s (F_{1,\tau,s,i}-\sqrt{2}G_{\tau,s,i})^\dagger(F_{1,\tau,s,i}-\sqrt{2}G_{\tau,s,i})=2\EE_s A_{\tau,s}D_{\tau,s}^2\preceq \frac{I}{2^{m-i+2}}\,.
    \end{equation}
    
    We have now bounded the error introduced by replacing the ideal postselection by a QSVT algorithm. However, when $i$ is large (hence $m-i$ is small), the bound~\eqref{eqa:sqrt2-bound} may become too loose. \cite{NOVY} resolves this issue by random guessing (instead of repeatedly resampling) at the final logarithmically many rounds. We found out that a naive random guess is not sufficient. Instead, we do a ``mini" renormalization by simulating the multiplication of $5/4$ as follows. Define
    $$f_2(x)=\frac{3}{2}x-\frac{1}{2}x^3\,.$$
    Similar to above, we implement the QSVT on $G_{\tau,s,i}$ with the polynomial $f_2$ to compute the following operator
    $$F_{2,\tau,s,i}=\frac{5}{4}G_{\tau,s,i}\left(\frac{6}{5}I-\frac{2}{5}A_{\tau,s}\right)\,.$$
    Thus,
    $$F_{2,\tau,s,i}-\frac{5}{4}G_{\tau,s,i}=-\frac{1}{2}G_{\tau,s,i}D_{\tau,s}\,,$$
    and
    \begin{equation}\label{eqa:5/4-bound}
        \EE_s \left(F_{2,\tau,s,i}-\frac{5}{4}G_{\tau,s,i}\right)^\dagger \left(F_{2,\tau,s,i}-\frac{5}{4}G_{\tau,s,i}\right)=\frac{1}{4}\EE_s A_{\tau,s}D_{\tau,s}^2\preceq\frac{I}{2^{m-i+5}}\,.
    \end{equation}
    
    We now start to formally describe the adversary of the EV-OWPuzz that is built on the malicious committer. Define $F_{1,s,i}$ (resp.\ $F_{2,s,i}$) the operator that coherently applies $F_{1,\tau,s,i}$ (resp.\ $F_{2,\tau,s,i}$) to the committer's private state controlled on the previous transcript $\tau$.
    \begin{algorithm}[H]
        \caption{The EV-OWPuzz adversary}
        \label{alg:uniform-ev-owpuzz-reduction}
        \begin{algorithmic}
            \State \textbf{Input}: a puzzle $s\in \{0,1\}^m$
            \State Sample $T\gets \{1,2,\cdots,m-1\}$
            \State Prepare the malicious committer's private (pure) state $\ket{\phi_{s,0}}=\ket{\phi_0}$ before step (b)
            \For{$i=1,2,\cdots ,T$}
                \State Compute $\ket{\phi_{s,i}}:=F_{1,s,i}\ket{\phi_{s,i-1}}$
            \EndFor
            \For{$i=T+1,\cdots,m-1$}
                \State Compute $\ket{\phi_{s,i}}:=F_{2,s,i}\ket{\phi_{s,i-1}}$
            \EndFor
            \State Measure the transcript register of $\ket{\phi_{s,m-1}}$ to get $r_i,y_i,d$ and the committer's private state
            \State Simulate step (c) of the malicious committer to get $c$
            \State Simulate the opening of $b=c\oplus d$ to get $k$
            \State \textbf{Output} k
        \end{algorithmic}
    \end{algorithm}
    
    We start to show that Algorithm~\ref{alg:uniform-ev-owpuzz-reduction} succeeds in breaking the security of the EV-OWPuzz, as long as the malicious committer has non-negligible advantage in the sum-binding game. Since the puzzle distribution of $\mathsf{Samp}$ is nearly uniform, we only need to prove that Algorithm~\ref{alg:uniform-ev-owpuzz-reduction} finds a key for a uniformly random puzzle $s$.

    \begin{lemma}\label{lem:uniform-adv-prob}
        Denote $\mathrm{adv}_{s,T}$ the probability that the output $k$ of Algorithm~\ref{alg:uniform-ev-owpuzz-reduction} passes the verification when given the input $s$ and the sampled $T$. Let $q$ be the probability that the malicious committer wins the sum-binding game (Definition~\ref{def:QCCC-commitment}). Then
        $$\EE_{s\gets \{0,1\}^m} \mathrm{adv}_{s,T}\geq \frac{\left(\max\left\{0,(5/4)^{m-1-T}(\sqrt{2q}-1)-1/\sqrt{2}\right\}\right)^2}{2^{m-T}}\,.$$
    \end{lemma}
    \begin{proof}
        Fix $i\in [m-1]$, let $\tau=(r_1,y_1,r_2,y_2,\cdots,r_{i},y_{i})$ be the transcript after the $i$-th round, and let $\ket{\psi_\tau}$ be the private state of the malicious committer after the $i$-th round (with some appropriate global phase), conditioned on the previous transcript $\tau$. Let $p_\tau$ be the probability of getting the transcript $\tau$.

        We first argue about the vanilla execution of the interaction between the malicious committer and an honest receiver. To invert a puzzle $s$, we reject all branches that produce a transcript inconsistent with $s$ (but not postselect or renormalize). Purified by the previous transcript, the committer's private state should be
        $$\ket{\psi_{s,i}}:=\sum_{\tau:s\in V_\tau}\ket{\tau}\otimes \sqrt{p_\tau}\ket{\psi_\tau}\,,$$
        where $V_\tau$ be the (affine) subspace of all puzzles consistent with $\tau$.
        Averaging over $s$ yields
        \begin{align*}
            \EE_s\|\ket{\psi_{s,i}}\|^2={}&\frac{1}{2^m}\sum_s \sum_{\tau:s\in V_\tau}p_\tau\\
            ={}& \frac{1}{2^m}\sum_\tau \sum_{s\in V_\tau}p_\tau\\
            ={} & \frac{1}{2^{i}}\sum_\tau p_\tau=\frac{1}{2^i}\,.
        \end{align*}
        
        We denote $G_{s,i}$ the operator of coherently applying $G_{\tau,s,i}$ to the committer's private state controlled on the previous transcript $\tau$. Note that $G_{s,i}\ket{\psi_{s,{i-1}}}$ is exactly $\ket{\psi_{s,i}}$.

        Next, set
        $$u_i=\begin{cases}
            \sqrt{2} & i\leq T\\
            5/4 & i>T
        \end{cases}\,,\quad F_{\tau,s,i}=\begin{cases}
            F_{1,\tau,s,i} & i\leq T\\
            F_{2,\tau,s,i} & i>T
        \end{cases}\,.$$
        Define $\gamma_i=u_1u_2\cdots u_i$, and
        $$\widetilde G_{s,i}=u_iG_{s,i}\,.$$
        Then, by Equation~\eqref{eqa:sqrt2-bound} and Equation~\eqref{eqa:5/4-bound},
        \begin{align*}
            &\EE_s \left\|(F_{s,i}-\widetilde G_{s,i})\ket{\psi_{s,i-1}}\right\|^2\\
            ={} & \EE_s\left\|\sum_{\tau:s\in V_\tau}\ket{\tau}\otimes \sqrt{p_\tau}(F_{\tau,s,i}-\widetilde G_{\tau,s,i})\ket{\psi_\tau}\right\|^2\\
            \leq {} & \frac{1}{2^m}\sum_{s\in \{0,1\}^m} \sum_{\tau:s\in V_\tau} p_\tau\braket{\psi_\tau|(F_{\tau,s,i}-\widetilde G_{\tau,s,i})^\dagger (F_{\tau,s,i}-\widetilde G_{\tau,s,i})|\psi_\tau} \\
            ={} & \frac{2^{m-i+1}}{2^m}\sum_{\tau}p_\tau \braket{\psi_\tau|\EE_{s\in V_\tau}(F_{\tau,s,i}-\widetilde G_{\tau,s,i})^\dagger (F_{\tau,s,i}-\widetilde G_{\tau,s,i})|\psi_\tau}\\
            \leq{} & \begin{cases}
                2^{-(m+1)} & i\leq T\\
                2^{-(m+4)} & i>T
            \end{cases}\,.
        \end{align*}
        and hence,
        \begin{equation}\label{eqa:intermediate-hybrid}
            \EE_s \left\|(F_{s,i}-\widetilde G_{s,i})\gamma_{i-1}\ket{\psi_{s,i-1}}\right\|^2\leq \begin{cases}
            \gamma_{i-1}^22^{-(m+1)} & i\leq T\\
            \gamma_{i-1}^2 2^{-(m+4)} & i>T
        \end{cases}\,.
        \end{equation}

        Now we construct a hybrid as follows. We compare the action of $F_{s,m-1}F_{s,m-2}\cdots F_{s,1}$ with $\widetilde G_{s,m-1}\widetilde G_{s,m-2}\cdots \widetilde G_{s,1}$ on the initial state $\ket{\phi_0}$. The intermediate hybrid would be
        $$F_{s,m-1}\cdots F_{s,i+1}(F_{s,i}-\widetilde G_{s,i})\widetilde G_{s,i-1}\cdots \widetilde G_{s,1}\,.$$
        Thus, the left part is a physical operation (a contraction), and the right part would be
        $$\widetilde G_{s,i-1}\cdots \widetilde G_{s,1}\ket{\phi_0}=\gamma_{i-1}\ket{\psi_{s,i-1}}\,.$$
        Notice that the root-mean-square of a norm is still a norm, Equation~\eqref{eqa:intermediate-hybrid} yields
        \begin{align}
            & \sqrt{\EE_s\|\ket{\phi_{s,m-1}}-\gamma_{m-1}\ket{\psi_{s,m-1}}\|^2}\notag\\
            \leq{} & \sum_{i=1}^{m-1} \sqrt{\EE_s \left\|(F_{s,i}-\widetilde G_{s,i})\gamma_{i-1}\ket{\psi_{s,i-1}}\right\|^2}\notag\\
            \leq{} &  \frac{1}{\sqrt{2^{m+1}}}\sum_{i=1}^T(\sqrt{2})^{i-1}+\sqrt{\frac{2^T}{2^{m+4}}}\sum_{i=1}^{m-1-T}\left(\frac{5}{4}\right)^{i-1}\notag \\
            \leq{} & \frac{1+1/\sqrt{2}}{\sqrt{2^{m-T}}}+\frac{(5/4)^{m-1-T}-1}{\sqrt{2^{m-T}}}\label{eqa:uniform-error-term}
        \end{align}

        Next, we use that the malicious committer wins the sum-binding game with probability $q$. Let $\{H_{\tau,0},H_{\tau,1}\}$ correspond to the circuit used to determine $c$, and let $M_{\tau,c,b}$ correspond to the successful opening of the bit $b$. Then,
        $$q=\frac{1}{2}\sum_{b\in \{0,1\}}\sum_\tau p_\tau \sum_{c\in \{0,1\}}\left\|M_{\tau,c,b}H_{\tau,c}\ket{\psi_{\tau}}\right\|^2$$
        Thus, defining $M_{c,b}$ and $H_b$ the coherent version of $M_{\tau,c,b}$ and $H_{\tau,b}$ respectively, we have
        \begin{align*}
            & \EE_{s\gets\{0,1\}^m} \sum_{c}\left\|M_{c,c\oplus d}H_{c}\gamma_{m-1}\ket{\psi_{s,m-1}}\right\|^2\\
            ={} &\frac{1}{2^m}\sum_{s\in \{0,1\}^m}\sum_{\tau:s\in V_\tau}p_\tau\sum_{c}\left\|M_{\tau,c,c\oplus d}H_{\tau,c}\gamma_{m-1}\ket{\psi_{\tau}}\right\|^2\\={}&\frac{2q\gamma_{m-1}^2}{2^m}=\frac{1}{2^{m-T}}\left(\frac{5}{4}\right)^{2(m-1-T)}2q\,,
        \end{align*}
        and after averaging over $s$, the probability that Algorithm~\ref{alg:uniform-ev-owpuzz-reduction} finds a key that passes the verification is
        \begin{align*}
            \sqrt{\EE_{s}\mathrm{adv}_{s,T}}={}& \sqrt{\EE_{s\gets \{0,1\}^m}\sum_{c\in \{0,1\}}\left\|M_{c,c\oplus d}H_{c}\ket{\phi_{s,m-1}}\right\|^2}\\
            \geq{}& \sqrt{\EE_{s} \sum_{c}\left\|M_{c,c\oplus d}H_{c}\gamma_{m-1}\ket{\psi_{s,m-1}}\right\|^2}-\sqrt{\EE_{s}\sum_c\|M_{c,c\oplus d}H_c(\ket{\phi_{s,m-1}}-\gamma_{m-1}\ket{\psi_{s,m-1}})\|^2}\\
            \geq{} & \sqrt{\EE_{s} \sum_{c}\left\|M_{c,c\oplus d}H_{c}\gamma_{m-1}\ket{\psi_{s,m-1}}\right\|^2}-\sqrt{\EE_{s}\|\ket{\phi_{s,m-1}}-\gamma_{m-1}\ket{\psi_{s,m-1}}\|^2}\\
            \geq{}&  \frac{(5/4)^{m-1-T}(\sqrt{2q}-1)-1/\sqrt{2}}{\sqrt{2^{m-T}}}\,.
        \end{align*}
        This concludes the proof of Lemma~\ref{lem:uniform-adv-prob}.
    \end{proof}

    Suppose the malicious committer wins the sum-binding game with probability $(1+\delta)/2$, where $0\leq \delta \leq 1$. Then $\sqrt{2q}-1\geq \delta/3$. Choosing $$T=m-1-\log_{5/4}\frac{6}{\delta}\,,$$ we have
    $$\EE_{s} \mathrm{adv}_{s,T}\geq \frac{(6/\delta)\cdot(\delta/3)-1/\sqrt{2}}{2\cdot (6/\delta)^{\log_{5/4}2}}\geq 0.002\,\delta^{3.106\dots}\,.$$
    If $\delta$ is non-negligible, Algorithm~\ref{alg:uniform-ev-owpuzz-reduction} samples such $T$ with at least $1/m$ probability, which is an inverse-polynomial.
    The average succeeding probability of Algorithm~\ref{alg:uniform-ev-owpuzz-reduction} towards a uniformly random puzzle is thus non-negligible. And finally, since the puzzle distribution is $\epsilon$-close from uniform, Algorithm~\ref{alg:uniform-ev-owpuzz-reduction} succeed in breaking the security of the EV-OWPuzz.
\end{proof}

\end{document}